\def\gapCVP{\mathrm{CVP}}
\def\coGapCVP{\mathrm{coCVP}}
\def\gapSVP{\mathrm{SVP}}
\def\BDD{\mathrm{BDD}}
\def\LWE{\mathrm{LWE}}
\def\SIS{\mathrm{SIS}}
\def\sDGS{\mathrm{DGS}}
\renewcommand{\SVP}{\gapSVP}
\renewcommand{\CVP}{\gapCVP}
\renewcommand{\GapSVP}{\gapSVP}
\renewcommand{\GapCVP}{\gapCVP}
\def\ball{\mathcal{B}}
\newcommand{\Vcap}{\ensuremath{V^{\textrm{cap}}}}
\newcommand{\Vball}{\ensuremath{V^{\textrm{ball}}}}
\newcommand{\coAM}{\cc{coAM}}
\newcommand{\coMA}{\cc{coMA}}
\newcommand{\coAMTIME}{\cc{coAMTIME}}
\newcommand{\Eclass}{\cc{E}}
\newcommand{\Par}{\mathcal{P}}
\DeclarePairedDelimiter\inner{\langle}{\rangle}
\DeclareMathOperator*{\expect}{\mathbb{E}}
\newcommand{\intd}{\,{\rm d}}
\newcommand{\basis}{\mathbf{B}}
\title{Lattice Problems Beyond Polynomial Time}
\author{
Divesh Aggarwal\\National University of Singapore\\ \texttt{divesh@comp.nus.edu.sg} \and 
Huck Bennett\\Oregon State University\\ \texttt{huck.bennett@oregonstate.edu} \and 
Zvika Brakerski\\Weizmann Institute of Science\\ \texttt{zvika.brakerski@weizmann.ac.il}\and 
Alexander Golovnev\\Georgetown University\\\texttt{alexgolovnev@gmail.com} \and 
Rajendra Kumar\\ Weizmann Institute of Science \\ \texttt{rjndr2503@gmail.com} \and 
Zeyong Li\\ National University of Singapore\\ \texttt{li.zeyong@u.nus.edu} \and 
Spencer Peters\\Cornell University\\ \texttt{sp2473@cornell.edu}  \and 
Noah Stephens-Davidowitz\\Cornell University \\ \texttt{noahsd@gmail.com}   \and 
Vinod Vaikuntanathan\\MIT \\ \texttt{vinodv@csail.mit.edu} 
}
\date{\today}
\begin{document}

\pagenumbering{roman}

\maketitle

\begin{abstract}
We study the complexity of lattice problems in a world where algorithms, reductions, and protocols can run in superpolynomial time. Specifically, we revisit four foundational results in this context---two protocols and two worst-case to average-case reductions. We show how to improve the approximation factor in each result by a factor of roughly $\sqrt{n/\log n}$ when running the protocol or reduction in $2^{\eps n}$ time instead of polynomial time, and we show
a novel protocol with no polynomial-time analog.
Our results are as follows.
\begin{enumerate}
    \item We show a worst-case to average-case reduction proving that secret-key cryptography (specifically, collision-resistant hash functions) exists if the (decision version of the) Shortest Vector Problem (SVP) cannot be approximated to within a factor of $\widetilde{O}(\sqrt{n})$ in $2^{\eps n}$ time for any constant $\eps > 0$. This extends to our setting Ajtai's celebrated polynomial-time reduction for the Short Integer Solutions problem (SIS) [STOC, 1996], which showed (after improvements by Micciancio and Regev [FOCS, 2004; and SIAM J. Computing, 2007]) that secret-key cryptography exists if SVP cannot be approximated to within a factor of $\widetilde{O}(n)$ in polynomial time.
    \item We show another worst-case to average-case reduction proving that \emph{public-key} cryptography exists if SVP cannot be approximated to within a factor of $\widetilde{O}(n)$ in $2^{\eps n}$ time.
    This extends Regev's celebrated polynomial-time reduction for the Learning with Errors problem (LWE) [STOC, 2005; and J. ACM, 2009], which achieved an approximation factor of $\widetilde{O}(n^{1.5})$. In fact, Regev's reduction is quantum, but we prove our result under a classical reduction, generalizing Peikert's polynomial-time classical reduction [STOC, 2009], which achieved an approximation factor of $\widetilde{O}(n^2)$.
        \item We show that the (decision version of the) Closest Vector Problem (CVP) with a constant approximation factor has a $\mathsf{coAM}$ protocol with a $2^{\eps n}$-time verifier. This generalizes the celebrated polynomial-time protocol due to Goldreich and Goldwasser [STOC 1998; and J. Comp. Syst. Sci., 2000]. It follows that the recent series of $2^{\eps n}$-time and even $2^{(1-\eps)n}$-time hardness results for CVP cannot be extended to large constant approximation factors $\gamma$ unless AMETH is false. We also rule out $2^{(1-\eps)n}$-time lower bounds for any constant approximation factor $\gamma > \sqrt{2}$, under plausible complexity-theoretic assumptions. (These results also extend to arbitrary norms, with different constants.)
        \item We show that $O(\sqrt{\log n})$-approximate SVP has a $\mathsf{coNTIME}$ protocol with a $2^{\eps n}$-time verifier.  Here, the analogous (also celebrated!) polynomial-time result is due to Aharonov and Regev [FOCS, 2005; and J. ACM, 2005], who showed a polynomial-time protocol achieving an approximation factor of $\sqrt{n}$. This result implies similar barriers to hardness, with a larger approximation factor under a weaker complexity-theoretic conjectures (as does the next result).
        \item Finally, we give a novel $\mathsf{coMA}$ protocol for constant-factor-approximate CVP with a $2^{\eps n}$-time verifier. Unlike our other results, this protocol has no known analog in the polynomial-time regime.
    \end{enumerate}
All of the results described above are special cases of more general theorems that achieve time-approximation factor tradeoffs. In particular, the tradeoffs for the first four results smoothly interpolate from the polynomial-time results in prior work to our new results in the exponential-time world.
\end{abstract}

\thispagestyle{empty}
\newpage
\setcounter{tocdepth}{2}
\tableofcontents
\newpage
\pagenumbering{arabic}

\section{Introduction}

A lattice $\lat \subset \R^n$ is the set of all integer linear combinations of linearly independent basis vectors $\vec{b}_1,\ldots, \vec{b}_n \in \R^n$,
\[
    \lat = \lat(\vec{b}_1,\ldots, \vec{b}_n) = \{ z_1 \vec{b}_1 + \cdots + z_n \vec{b}_n \ : \ z_i \in \Z\}
    \; .
\]

 The most important computational problem associated with lattices is the $\gamma$-approximate Shortest Vector Problem ($\gamma$-$\SVP$), which is parameterized by an approximation factor $\gamma \geq 1$. Given a basis for a lattice $\lat \subset \R^n$,  $\gamma$-SVP asks us to approximate the length of the shortest non-zero vector in the lattice up to a factor of $\gamma$. The second most important problem is the $\gamma$-approximate Closest Vector Problem ($\gamma$-$\CVP$), in which we are additionally given a target point $\vec{t} \in \Q^n$, and the goal is to approximate the minimal distance between $\vec{t}$ and any lattice point, again up to a factor of $\gamma$.\footnote{These problems are sometimes referred to as $\gamma$-$\mathrm{GapSVP}$ and $\gamma$-$\mathrm{GapCVP}$, when one wishes to distinguish them from the associated search problems. In this paper, we are only interested in the decision problems and we will therefore refer to these problems simply as $\gamma$-SVP and $\gamma$-CVP, as is common in the complexity literature.} Here, we define length and distance in terms of the $\ell_2$ norm (though, in the sequel, we sometimes work with arbitrary norms).

These two problems are closely related. In particular, $\CVP$ is known to be at least as hard as $\SVP$ in quite a strong sense, as there is a simple efficient reduction~\cite{GMSS99} from $\SVP$ to $\CVP$ that preserves the approximation factor $\gamma$ and rank $n$~ (as well as the norm). Moreover, historically, it has been much easier to find algorithms for $\SVP$ than for $\CVP$ and much easier to prove hardness results for $\CVP$.

Both $\SVP$ and $\CVP$ have garnered much attention over the past twenty-five years or so, after Ajtai proved two tantalizing results. First, he constructed a cryptographic (collision-resistant) hash function and proved that it is secure if $\gamma$-$\SVP$ is hard for some approximation factor $\gamma = \poly(n)$~\cite{ajtaiGeneratingHardInstances1996,GGHCollisionFreeHashingLattice2011}. This in particular implies that secret-key encryption exists under this assumption. To prove his result, he showed the first worst-case to average-case reduction in this context. Specifically, he showed that a certain \emph{average-case} lattice problem called the Short Integer Solutions problem ($\SIS$, corresponding to the problem of breaking his hash function) was as hard as $\gamma$-$\SVP$, a \emph{worst-case} problem. Second, Ajtai proved the NP-hardness of exact $\SVP$, i.e., $\gamma$-$\SVP$ with $\gamma = 1$ (under a randomized reduction)~\cite{ajtaiShortestVectorProblem1998}, answering a long-standing open question posed by van Emde Boas~\cite{vanemdeboasAnotherNPCompleteProblem1981}.

Ajtai's two breakthrough papers led to \emph{many} follow-ups. In particular, there followed a sequence of works showing the hardness of $\gamma$-$\SVP$ for progressively larger approximation factors $\gamma$~\cite{caiApproximatingSVPFactor1999,Mic01svp,khotHardnessApproximatingShortest2005,havivTensorbasedHardnessShortest2012}, leading to the current state of the art: NP-hardness (under randomized reductions) for any constant $\gamma$ and hardness for $\gamma = n^{c/\log \log n}$ under the assumption that $\mathsf{NP} \not \subset \mathsf{RTIME}[2^{n^{o(1)}}]$. A different, but related, line of work showed hardness of $\gamma$-$\CVP$ for progressively larger approximation factors $\gamma$, culminating in NP-hardness for $\gamma = n^{c/\log \log n}$~\cite{dinurApproximatingCVPAlmostpolynomial2003}.

A separate line of work improved upon Ajtai's worst-case to average-case reduction. Micciancio and Regev showed  that Ajtai's hash function is secure if $\widetilde{O}(n)$-$\SVP$ is hard~\cite{MR04}, improving on Ajtai's large polynomial approximation factor.  
Regev also improved on Ajtai's results in another (very exciting!) direction, showing a \emph{public-key} encryption scheme that is secure under the assumption that $\widetilde{O}(n^{1.5})$-$\SVP$ is hard for a \emph{quantum} computer~\cite{regevLatticesLearningErrors2009}. To do so, Regev defined an average-case lattice problem called Learning with Errors ($\LWE$), constructed a public-key encryption scheme whose security is (essentially) equivalent to the hardness of $\LWE$, and showed a \emph{quantum} worst-case to average-case reduction for $\LWE$. Peikert later showed how to prove \emph{classical} hardness of $\LWE$ in a different parameter regime, showing that secure public-key encryption exists if $\widetilde{O}(n^2)$-$\SVP$ is hard, even for a \emph{classical} computer~\cite{peikertPublickeyCryptosystemsWorstcase2009}. (The ideas in these works have since been extended to design \emph{many} new and exciting cryptographic primitives. See~\cite{peikertDecadeLatticeCryptography2016} for a survey.)

One might even hope that continued work in this area would lead to one of the holy grails of cryptography: a cryptographic construction whose security can be based on the (minimal) assumption that $\mathsf{NP} \not\subseteq \mathsf{BPP}$. Indeed, in order to do so, one would simply need to decrease the approximation factor achieved by one of these worst-case to average-case reductions and increase the approximation factor achieved by the hardness results until they meet! However, two seminal works showed that this was unlikely. First, Goldreich and Goldwasser showed a $\mathsf{coAM}$ protocol for $\sqrt{n/\log n}$-$\CVP$, and therefore also for $\sqrt{n/\log n}$-$\SVP$~\cite{GG00}.  Second, Aharonov and Regev showed a $\mathsf{coNP}$ protocol for $\sqrt{n}$-$\CVP$ (and therefore also for $\sqrt{n}$-$\SVP$)~\cite{aharonovLatticeProblemsNP2005}. These results are commonly interpreted as barriers to proving hardness, since they imply that if $\sqrt{n/\log n}$-$\SVP$ (or even $\sqrt{n/\log n}$-$\CVP$) is NP-hard, then the polynomial hierarchy would collapse to the second level, and that the hierarchy would collapse to the first level for $\gamma = \sqrt{n}$. It seems very unlikely that we will be able to build cryptography from the assumption that $\gamma$-$\SVP$ is hard for some $\gamma = o(\sqrt{n})$, and so these results are typically interpreted as ruling out achieving such a ``holy grail'' result via this approach.

Indeed, the state of the art has been stagnant for over a decade now (in spite of much effort), in the sense that no improvement has been made to the approximation factors achieved by (1) (Micciancio and Regev's improvement to) Ajtai's worst-case to average-case reduction; (2) Regev's worst-case to average-case quantum reduction for public-key encryption or Peikert's classical reduction; (3) the best known hardness results for $\SVP$ (or $\CVP$); (4) Goldreich and Goldwasser's $\mathsf{coAM}$ protocol; \emph{or} (5) Aharonov and Regev's $\mathsf{coNP}$ protocol. (Of course, much progress has been made in other directions!)

However, all of the above results operate in the polynomial-time regime, showing hardness against polynomial-time algorithms and protocols that run in polynomial time (formally, protocols with polynomially bounded communication and polynomial-time verifiers). It is of course conventional (and convenient) to work in this polynomial-time setting, but as our understanding of computational lattice problems and lattice-based cryptography has improved over the past decade, the distinction between polynomial and superpolynomial time has begun to seem less relevant. Indeed, the fastest algorithms for $\gamma$-$\SVP$ run in time that is \emph{exponential} in $n$, even for $\gamma = \poly(n)$, and it is widely believed that no $2^{o(n)}$-time algorithm is possible for $\gamma = \poly(n)$. This belief plays a key role in the study of lattice-based cryptography.

In particular, descendants of Regev's original public-key encryption scheme are nearing widespread use in practice. One such scheme was even recently standardized by NIST~\cite{ABD+CRYSTALSKyberVersion022021,NIST2022}, with the goal of using this scheme as a replacement for the number-theoretic cryptography that is currently used for nearly all secure communication.\footnote{The number-theoretic cryptography that is currently in use is known to be broken by a sufficiently large quantum computer. In contrast, lattice-based cryptography is thought to be secure not only against classical computers, but also against quantum computers, which is why it has been standardized. See~\cite{NIST2022} for more discussion.}  In practice, these schemes rely for their security not only on the polynomial-time hardness of $\SVP$, but on \emph{very} precise assumptions about the hardness of $\gamma$-$\SVP$ as a function of $\gamma$. (E.g., the authors of~\cite{ABD+CRYSTALSKyberVersion022021} rely on sophisticated simulators that attempt to predict the optimal behavior of heuristic $\gamma$-$\SVP$ algorithms, which roughly tell us that $n^k$-$\SVP$ cannot be solved in time much better than $2^{0.29 n/(2k+1)}$ for constant $k \geq 0$.)

Therefore, we are now more interested in the \emph{fine-grained, superpolynomial} complexity of $\gamma$-$\SVP$ and $\gamma$-$\CVP$. I.e., we are not just interested in what is possible in polynomial time, but rather we are interested in precisely what is possible with different superpolynomial running times, with a particular emphasis on algorithms that run in $2^{Cn}$ time for different constants $C$. And, the specific approximation factor really matters quite a bit, as the running time $2^{C_\gamma n}$ of the best known $\gamma$-$\SVP$ algorithms  for polynomial approximation factors $\gamma = \poly(n)$ depends quite a bit on the specific polynomial $\gamma$. (This is true both for heuristic algorithms and those with proven correctness. E.g., the best known proven running time for approximation factor $n^c$ is roughly $2^{O(n/(c+1))}$ for constant $c \geq 0$. See~\cite{ALSTimeAlgorithmSqrtn2021} for the current state of the art.)

Indeed, a recent line of work has extended \emph{some} of the seminal polynomial-time results described above to the fine-grained superpolynomial setting~\cite{bennettQuantitativeHardnessCVP2017,ASGapETHHardness2018,aggarwalNoteConcreteHardness2021,ABGSFinegrainedHardnessCVP2021,bennettImprovedHardnessBDD2022}. Specifically, these works show exponential-time lower bounds for $\SVP$ and $\CVP$, both in their exact versions with $\gamma = 1$ and for small constant approximation factors $\gamma = 1+\eps$ (under suitable variants of the Exponential Time Hypothesis). These results can be viewed as fine-grained generalizations of Ajtai's original hardness result for $\SVP$ (or, perhaps, of the subsequent results that showed hardness for small approximation factors, such as~\cite{caiApproximatingSVPFactor1999,Mic01svp}), and they provide theoretical evidence in favor of the important cryptographic assumption that (suitable) lattice-based cryptography cannot be broken in $2^{o(n)}$ time.

However, there are no known non-trivial generalizations of the other major results listed above to the regime of superpolynomial running times. For example, (in spite of much effort) it is not known how to extend the above fine-grained hardness results to show exponential-time lower bounds for approximation factors $\gamma$ substantially larger than one---say, e.g., large constants $\gamma$ (let alone the polynomial approximation factors that are relevant to cryptography)---in analogy with the celebrated hardness of approximation results that are known against polynomial-time algorithms. And, prior to this work, it was also not known how to extend the worst-case to average-case reductions and protocols mentioned above to the superpolynomial setting in a non-trivial way (i.e., in a way that improves upon the approximation factor).

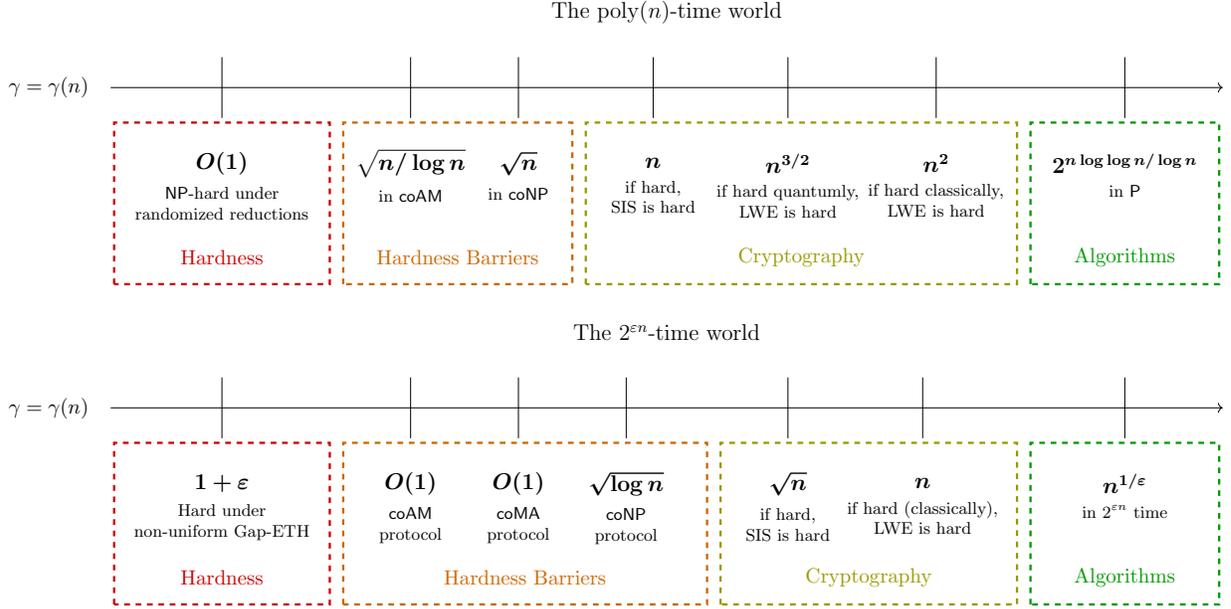
\begin{figure}[t]
    \centering
    {\renewcommand{\arraystretch}{3}
    \begin{tabular}{c}
\resizebox{\textwidth}{!}{
\begin{tikzpicture}

\pgfmathsetmacro{\boxheight}{3}
\pgfmathsetmacro{\boxspacing}{0.25}

\pgfmathsetmacro{\factorspacing}{3}
\pgfmathsetmacro{\smallsidespacing}{1.25}
\pgfmathsetmacro{\largesidespacing}{1.75}
\pgfmathsetmacro{\factorheight}{\boxheight - 0.75}

\pgfmathsetmacro{\labheight}{0.5}

\pgfmathsetmacro{\hardnessboxwidth}{2*2 + (1 - 1)*\factorspacing}
\pgfmathsetmacro{\protocolsboxwidth}{2*\smallsidespacing + 1.75}
\pgfmathsetmacro{\cryptographyboxwidth}{2*\smallsidespacing + 5.5}
\pgfmathsetmacro{\algorithmsboxwidth}{2*\largesidespacing + (1 - 1)*\factorspacing}

\pgfmathsetmacro{\hardnessshift}{0}
\pgfmathsetmacro{\protocolsshift}{\hardnessshift + \hardnessboxwidth + \boxspacing}
\pgfmathsetmacro{\cryptographyshift}{\protocolsshift + \protocolsboxwidth + \boxspacing}
\pgfmathsetmacro{\algorithmsshift}{\cryptographyshift + \cryptographyboxwidth + \boxspacing}

\pgfmathsetmacro{\totalwidth}{\hardnessboxwidth+\protocolsboxwidth+\cryptographyboxwidth+\algorithmsboxwidth+3*\boxspacing}

\definecolor{darkred}{rgb}{0.8,0,0}
\definecolor{darkorange}{RGB}{204,102,0}
\definecolor{darkyellow}{RGB}{153,153,0}
\definecolor{darkgreen}{rgb}{0,0.6,0}

\begin{scope}[darkred]
\draw[dashed,very thick] (0, 0) rectangle (\hardnessboxwidth, \boxheight);
\node at (0.5*\hardnessboxwidth, \labheight) (hardness-box) {Hardness};

\begin{scope}[black]
\footnotesize
\node (NP-hardness-factor) at (0.5*\hardnessboxwidth, \factorheight) {\large $\bm{O(1)}$};
\node[below = 0 of NP-hardness-factor, align=center] (NP-hardness-label) {$\NP$-hard under\\randomized reductions};
\end{scope}
\end{scope}

\begin{scope}[shift = {(\protocolsshift, 0)}, darkorange]
\draw[dashed,very thick] (0, 0) rectangle (\protocolsboxwidth, \boxheight);
\node at (0.5*\protocolsboxwidth, \labheight) {Hardness Barriers};
\pgfmathsetmacro{\smallshift}{1}

\begin{scope}[black]
\footnotesize
\node (coAM-factor) at (\smallsidespacing, \factorheight) {\large $\bm{\sqrt{n/\log n}}$};
\node[below = 0 of coAM-factor] (coAM-label) {in $\coAM$};

\node (coNP-factor) at (\smallsidespacing + 2, \factorheight) {\large $\bm{\sqrt{n}}$};
\node[below = 0 of coNP-factor] (coNP-label) {in $\coNP$};
\end{scope}
\end{scope}

\begin{scope}[shift = {(\cryptographyshift, 0)}, darkyellow]
\draw[dashed,very thick] (0, 0) rectangle (\cryptographyboxwidth, \boxheight);
\node at (0.5*\cryptographyboxwidth, \labheight) {Cryptography};

\begin{scope}[black]
\footnotesize
\node (SIS-factor) at (\smallsidespacing , \factorheight) {\large $\bm{n}$};
\node[below = 0 of SIS-factor,align=center] (SIS-label) {if hard,\\ $\SIS$ is hard};

\node (qLWE-factor) at (\smallsidespacing + 2.5, \factorheight) {\large $\bm{n^{3/2}}$};
\node[below = 0 of qLWE-factor,align=center] (qLWE-label) {if hard quantumly,\\ $\LWE$ is hard};

\node (cLWE-factor) at (\smallsidespacing + 5.25, \factorheight) {\large $\bm{n^2}$};
\node[below = 0 of cLWE-factor,align=center] (cLWE-label) {if hard classically,\\ $\LWE$ is hard};
\end{scope}
\end{scope}

\begin{scope}[shift = {(\algorithmsshift, 0)}, darkgreen]
\draw[dashed,very thick] (0, 0) rectangle (\algorithmsboxwidth, \boxheight);
\node at (0.5*\algorithmsboxwidth, \labheight) {Algorithms};

\begin{scope}[black]
\footnotesize
\node (BKZ-factor) at (0.5*\algorithmsboxwidth, \factorheight) 
{\large $\bm{2^{n \log \log n/\log n}}$};
\node[below = 0cm of BKZ-factor,align=center] (BKZ-label) {in $\P$};
\end{scope}
\end{scope}

\foreach \factor in {NP-hardness-factor, coAM-factor, coNP-factor, SIS-factor, qLWE-factor, cLWE-factor, BKZ-factor} {
\node (above\factor) at ($(\factor)+(0,0.7)$) {};
\node (wayabove\factor) at ($(\factor)+(0,2.1)$) {};
\draw (above\factor) -- (wayabove\factor);
}

\node (left-of-arrow) at (-0.2, \factorheight+1.4) {};
\node (right-of-arrow) at (\totalwidth + 0.2, \factorheight+1.4) {};
\draw[->] (left-of-arrow) -- (right-of-arrow);
\node[left = 0cm of left-of-arrow] (arrow-label) {$\gamma = \gamma(n)$};

\node at (0.5*\totalwidth, \factorheight+2.8) {\large The $\poly(n)$-time world};
\end{tikzpicture}}\\
\resizebox{\textwidth}{!}{
\begin{tikzpicture}

\pgfmathsetmacro{\boxheight}{3}
\pgfmathsetmacro{\boxspacing}{0.25}

\pgfmathsetmacro{\factorspacing}{3}
\pgfmathsetmacro{\smallsidespacing}{1.25}
\pgfmathsetmacro{\largesidespacing}{1.75}
\pgfmathsetmacro{\factorheight}{\boxheight - 0.75}

\pgfmathsetmacro{\labheight}{0.5}

\pgfmathsetmacro{\hardnessboxwidth}{2*2 + (1 - 1)*\factorspacing}
\pgfmathsetmacro{\protocolsboxwidth}{2*\smallsidespacing + 4.25}
\pgfmathsetmacro{\cryptographyboxwidth}{2*\smallsidespacing + 3}
\pgfmathsetmacro{\algorithmsboxwidth}{2*\largesidespacing + (1 - 1)*\factorspacing}

\pgfmathsetmacro{\hardnessshift}{0}
\pgfmathsetmacro{\protocolsshift}{\hardnessshift + \hardnessboxwidth + \boxspacing}
\pgfmathsetmacro{\cryptographyshift}{\protocolsshift + \protocolsboxwidth + \boxspacing}
\pgfmathsetmacro{\algorithmsshift}{\cryptographyshift + \cryptographyboxwidth + \boxspacing}

\pgfmathsetmacro{\totalwidth}{\hardnessboxwidth+\protocolsboxwidth+\cryptographyboxwidth+\algorithmsboxwidth+3*\boxspacing}

\definecolor{darkred}{rgb}{0.8,0,0}
\definecolor{darkorange}{RGB}{204,102,0}
\definecolor{darkyellow}{RGB}{153,153,0}
\definecolor{darkgreen}{rgb}{0,0.6,0}

\begin{scope}[darkred]
\draw[dashed,very thick] (0, 0) rectangle (\hardnessboxwidth, \boxheight);
\node at (0.5*\hardnessboxwidth, \labheight) (hardness-box) {Hardness};

\begin{scope}[black]
\footnotesize
\node (NP-hardness-factor) at (0.5*\hardnessboxwidth, \factorheight) {\large $\bm{1 + \eps}$};
\node[below = 0 of NP-hardness-factor, align=center] (NP-hardness-label) {Hard under\\non-uniform Gap-ETH};
\end{scope}
\end{scope}

\begin{scope}[shift = {(\protocolsshift, 0)}, darkorange]
\draw[dashed,very thick] (0, 0) rectangle (\protocolsboxwidth, \boxheight);
\node at (0.5*\protocolsboxwidth, \labheight) {Hardness Barriers};
\pgfmathsetmacro{\smallshift}{1}

\begin{scope}[black]
\footnotesize
\node (coAM-factor) at (\smallsidespacing, \factorheight) {\large $\bm{O(1)}$};
\node[below = 0 of coAM-factor,align=center] (coAM-label) {$\coAM$\\protocol};

\node (coMA-factor) at (\smallsidespacing + 2, \factorheight) {\large $\bm{O(1)}$};
\node[below = 0 of coMA-factor,align=center] (coMA-label) {$\cc{coMA}$\\protocol};

\node (coNP-factor) at (\smallsidespacing + 4, \factorheight) {\large $\bm{\sqrt{\log n}}$};
\node[below = 0 of coNP-factor,align=center] (coNP-label) {$\coNP$\\protocol};
\end{scope}
\end{scope}

\begin{scope}[shift = {(\cryptographyshift, 0)}, darkyellow]
\draw[dashed,very thick] (0, 0) rectangle (\cryptographyboxwidth, \boxheight);
\node at (0.5*\cryptographyboxwidth, \labheight) {Cryptography};

\begin{scope}[black]
\footnotesize
\node (SIS-factor) at (\smallsidespacing , \factorheight) {\large $\bm{\sqrt{n}}$};
\node[below = 0 of SIS-factor,align=center] (SIS-label) {if hard,\\ $\SIS$ is hard};

\node (cLWE-factor) at (\smallsidespacing + 2.5, \factorheight) {\large $\bm{n}$};
\node[below = 0 of cLWE-factor,align=center] (cLWE-label) {if hard (classically),\\ $\LWE$ is hard};
\end{scope}
\end{scope}

\begin{scope}[shift = {(\algorithmsshift, 0)}, darkgreen]
\draw[dashed,very thick] (0, 0) rectangle (\algorithmsboxwidth, \boxheight);
\node at (0.5*\algorithmsboxwidth, \labheight) {Algorithms};

\begin{scope}[black]
\footnotesize
\node (BKZ-factor) at (0.5*\algorithmsboxwidth, \factorheight) 
{\large $\bm{n^{1/\eps}}$};
\node[below = 0cm of BKZ-factor,align=center] (BKZ-label) {in $2^{\eps n}$ time};
\end{scope}
\end{scope}

\foreach \factor in {NP-hardness-factor, coAM-factor, coMA-factor, coNP-factor, SIS-factor, cLWE-factor, BKZ-factor} {
\node (above\factor) at ($(\factor)+(0,0.7)$) {};
\node (wayabove\factor) at ($(\factor)+(0,2.1)$) {};
\draw (above\factor) -- (wayabove\factor);
}

\node (left-of-arrow) at (-0.2, \factorheight+1.4) {};
\node (right-of-arrow) at (\totalwidth + 0.2, \factorheight+1.4) {};
\draw[->] (left-of-arrow) -- (right-of-arrow);
\node[left = 0cm of left-of-arrow] (arrow-label) {$\gamma = \gamma(n)$};

\node at (0.5*\totalwidth, \factorheight+2.8) {\large The $2^{\eps n}$-time world};
\end{tikzpicture}}
    \end{tabular}}
    \caption{\small This figure shows the current state of the art of the complexity of $\gamma$-$\SVP$ for different approximation factors $\gamma$ in two different regimes. The top row shows polynomial-time results (polynomial-time hardness, protocols, worst-case to average-case reductions, and algorithms, respectively). The bottom row shows $2^{\eps n}$-time results. Note that the scales are rather extreme, and are certainly not the same in the two rows. The hardness barriers and cryptography results in the bottom row are the five new results in this paper. We have omitted some constants for simplicity. (This figure is based on a similar one appearing in~\cite{bennett/svp-survey22}.) \label{fig:svp-complexity}}
\end{figure}

\subsection{Our results}

At a high level, our results can be stated quite succinctly. We generalize to the superpolynomial setting (1) Ajtai's worst-case to average-case reduction for secret-key cryptography; (2) Regev's worst-case to average-case quantum reduction for public-key cryptography and Peikert's classical version; (3) Goldreich and Goldwasser's $\mathsf{coAM}$ protocol; and (4) Aharonov and Regev's $\mathsf{coNP}$ protocol. In all of these results, in the important special case when the reductions or protocols are allowed to run in $2^{\eps n}$ time, we improve upon the polynomial-time approximation factor by a factor of roughly $\sqrt{n/\log n}$ (and a factor of $\widetilde{O}(n)$ for Peikert's classical worst-case to average-case reduction). We also show a novel $\mathsf{coMA}$ protocol that has no known analog in the polynomial-time regime. 

See \cref{fig:svp-complexity} for a diagram showing the current state of the art for both the polynomial-time regime and the $2^{\eps n}$-time regime for arbitrarily small constants $\eps > 0$. Below, we describe the results in more detail and explain their significance. We describe the protocols first, as our worst-case to average-case reductions are best viewed in the context of our protocols.

\subsubsection{Protocols for lattice problems}

\paragraph{A $\mathsf{coAM}$ protocol.} Our first main result is a generalization of Goldreich and Goldwasser's $\mathsf{coAM}$ protocol, as follows.

\begin{theorem}[Informal, see \cref{sec:coAM}]
    \label{thm:coAM_intro}
    For every $\gamma = \gamma(n) \geq 1$, there is a $\coAM$ protocol for $\gamma$-$\GapCVP$ running in time $2^{O(n/\gamma^2)}$.
    
    Furthermore, for every constant $\eps > 0$, there exists a $\delta > 0$ such that there is a two-round private-coin (honest-verifier perfect zero knowledge) protocol for $(\sqrt{2} + \eps)$-$\coGapCVP$ running in time $2^{(1/2-\delta) n}$.
\end{theorem}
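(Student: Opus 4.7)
The plan is to generalize the Goldreich--Goldwasser $\coAM$ protocol~\cite{GG00} by parameterizing the verifier's sampling distribution. The verifier samples a uniform bit $b \in \{0,1\}$ and a Gaussian vector $x$ of width $s$, then sends $w := bt + x \bmod \lat$ (represented in the input basis), and the prover responds with $b'$; the verifier accepts iff $b = b'$. The key technical quantity is the statistical distance between $\mu_0 := x \bmod \lat$ and $\mu_1 := (x+t) \bmod \lat$ as distributions on the torus $\R^n/\lat$: we want it close to $1$ in the YES case of $\coGapCVP$ ($\dist(t,\lat) > \gamma d$) and close to $0$ in the NO case ($\dist(t,\lat) \leq d$). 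Standard parallel-repetition amplification then converts an inverse-polynomial gap into a genuine $\coAM$ protocol with constant completeness-soundness gap.

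To tie the width $s$ to the approximation factor $\gamma$ and to the verifier's runtime, I would apply Banaszczyk-style Gaussian-mass bounds. Soundness forces $s$ to be at least a constant multiple of $d$, so that shifting the Gaussian by the short representative of $t \bmod \lat$ barely changes it. Completeness demands both $s \lesssim \gamma d$ (so the two concentrated Gaussians are nearly disjoint on the torus) and $s$ below the smoothing parameter $\eta_\eps(\lat)$ (so that $\mu_0$ is not already uniform on the torus). For generic lattices $\eta_\eps(\lat)$ is too large for a polynomial-time sampler to reach widths corresponding to $\gamma \ll \sqrt{n/\log n}$, which is exactly where GG bottoms out. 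Superpolynomial discrete Gaussian sampling algorithms, such as the one of Aggarwal--Dadush--Regev--Stephens-Davidowitz and its extensions, allow sampling from $D_{\lat,s}$ at widths well below $\eta_\eps(\lat)$ at the cost of extra time, and matching their time-versus-width trade-off against the completeness condition yields the claimed $2^{O(n/\gamma^2)}$ verifier runtime.

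For the second, sharper statement, I would first observe that the protocol above is already honest-verifier perfect zero knowledge: a simulator samples $(b, x)$ from their publicly known distributions, sets $w := bt + x \bmod \lat$, and outputs $b' := b$, perfectly reproducing the honest verifier's view on YES instances. To go beyond the $2^{O(n)}$ runtime that the first part yields at $\gamma = \sqrt{2} + \eps$ and achieve the strictly sharper $2^{(1/2-\delta)n}$, I would tune $s$ to straddle a dimension-free threshold at which Gaussians centered at points of relative distance $\sqrt{2}s$ already have exponentially small overlap in $n$---a phenomenon tied to Rankin-type sphere-packing bounds---and invoke the DGS algorithm at exactly this width. The main obstacle I anticipate is establishing the combined tightness of both the constant $\sqrt{2}$ \emph{and} a strict $\delta > 0$: at first glance the narrow Gaussian needed to certify the $\sqrt{2}+\eps$ threshold appears to exhaust the entire $2^{n/2}$ DGS budget, so the argument must extract a small but explicit amount of slack, likely from a refined lattice-geometric bound showing near-disjointness at a slightly wider width, or from a DGS subroutine specialized to this narrow-Gaussian regime.
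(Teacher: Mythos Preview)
Your proposal misidentifies where the running time comes from and, as a result, introduces an obstacle that is not actually there while invoking machinery that does not help. In the paper's protocol (and in \cite{GG00}), the verifier samples a \emph{continuous} point uniformly from a Euclidean ball of radius $r=\gamma d/2$ around either $\vec0$ or $\vec t$, reduces modulo the lattice, and asks Merlin which. Sampling is trivial; the entire running time is the number $N$ of parallel repetitions, which must be roughly the inverse of the overlap probability $p$ in the NO case. The ball-intersection bound (\cref{lem:normalized-l2-ball-intersection-vol}) gives $p \gtrsim (1-1/\gamma^2)^{n/2}$, hence $N \approx (1-1/\gamma^2)^{-n/2} = 2^{O(n/\gamma^2)}$. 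For the second claim, plugging $\gamma=\sqrt 2+\eps$ into this same bound immediately yields $N \le 2^{(1/2-\delta)n}$ for some $\delta(\eps)>0$; no Rankin-type argument or DGS subroutine is needed.

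Your Gaussian variant has a genuine gap on the completeness side. With uniform balls, the supports of $\mu_0$ and $\mu_1$ on the torus are \emph{exactly} disjoint whenever $\dist(\vec t,\lat)>2r$, regardless of how the ball radius compares to $\eta_\eps(\lat)$ or to any other lattice parameter; this is why completeness is perfect and the smoothing parameter never enters. With continuous Gaussians, by contrast, if $s$ happens to exceed $\eta_\eps(\lat)$ then both $\mu_0$ and $\mu_1$ are nearly uniform on $\R^n/\lat$ and even an all-powerful Merlin cannot distinguish them---and in a generic $\gamma$-$\gapCVP$ instance there is no promised relationship between $d$ and $\eta_\eps(\lat)$, so you cannot rule this out. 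Your proposed fix, calling ADRS-style discrete Gaussian samplers, is a non-sequitur: those algorithms sample from $D_{\lat,s}$ over \emph{lattice points}, which is a different object, and in any case sampling a continuous Gaussian and reducing mod $\lat$ is already polynomial-time; the problem is not sampling cost but that the resulting distribution is useless above smoothing. The clean fix is precisely to switch from Gaussians to compactly supported balls, which is what the paper does.
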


See \cref{sec:coAM} for the precise result, which is also more general in that it also applies to arbitrary norms $\|\cdot \|_K$ (with different constants), just like the original theorem of \cite{GG00}.

This theorem is a strict generalization of the original polynomial-time result of Goldreich and Goldwasser~\cite{GG00}. And, just like~\cite{GG00} was viewed as a barrier to proving polynomial-time hardness results for approximation factors $\gamma \geq \sqrt{n/\log n}$, our result can be viewed as a barrier to proving superpolynomial hardness for smaller approximation factors $\gamma$. In particular, the theorem rules out the possibility of using a fine-grained reduction from $k$-$\mathrm{SAT}$ to prove, e.g., $2^{\Omega(n)}$ hardness for large constants $\gamma$ or $2^{(1-\eps)n}$-time hardness for any constant $\gamma > \sqrt{2}$ (assuming AMETH and IPSETH respectively, in a sense that is made precise in \cref{sec:limitations}).\footnote{It might seem strange that we describe a roughly $2^{n/2}$-time protocol as ruling out roughly $2^n$ hardness. This is because $k$-$\mathrm{coSAT}$ is known to have a roughly $2^{n/2}$-time two-round \emph{protocol}~\cite{W16} (and even an $\mathsf{MA}$ protocol), but is not known to have a $2^{(1-\eps)n}$-time \emph{algorithm} (for sufficiently large $k$). The assumption that $k$-$\mathrm{SAT}$ has no $2^{(1-\eps)n}$-time protocol for sufficiently large $k$ is called SETH, while the assumption that $k$-$\mathrm{coSAT}$ does not have a $2^{(1/2 - \eps)n}$-time two-round protocol for sufficiently large $k$ is called IPSETH. So, to prove $2^{(1-\eps) n}$-time hardness of $\gamma$-$\CVP$ under SETH, it would suffice to give a ($2^{\eps n}$-time, Turing) reduction from $k$-$\mathrm{SAT}$ on $n$ variables to $\gamma$-$\CVP$ on a lattice with rank $n + o(n)$. But, for constant $\gamma > \sqrt{2}$, such a reduction together with \cref{thm:coAM_intro} would imply a significantly faster protocol for $k$-$\mathrm{coSAT}$ than what is currently known, and would therefore violate IPSETH. See \cref{sec:prelims_ETH_stuff} for more discussion of fine-grained complexity and related hypotheses and \cref{sec:limitations} for formal proofs ruling out such reductions under various hypotheses.} We place a particular emphasis on the running time of $2^{(1-\eps)n}$ because (1) the fastest known algorithm for CVP runs in time $2^{n+o(n)}$; and (2) we know a $2^{(1-\eps)n}$-time lower bound for $(1+\eps')$-$\CVP$~\cite{bennettQuantitativeHardnessCVP2017,ABGSFinegrainedHardnessCVP2021} (under variants of SETH---though, admittedly, only in $\ell_p$ norms where $p$ is not an even integer, so not for the $\ell_2$ norm). Therefore, this protocol provides an explanation for why fine-grained hardness results for $\CVP$ are stuck at small constant approximation factors. See \cref{sec:limitations} for a precise discussion of these barriers to proving hardness and their relationship to known hardness results.

As we explain in more detail in \cref{sec:techniques_AM}, our protocol is a very simple and natural generalization of the original beautiful protocol due to Goldreich and Goldwasser. And, as we explain below, the same simple ideas behind this protocol are also used in our worst-case to average-case reduction for $\LWE$.

\paragraph{A co-non-deterministic
protocol.} Our second main result is a variant of Aharonov and Regev's $\mathsf{coNP}$ protocol for $\sqrt{n}$-$\CVP$, as follows.

\begin{theorem}[Informal, see \cref{thm:coNP}]
    \label{thm:coNP_intro}
    For every $\gamma = \gamma(n) \geq 1$, there is a co-non-deterministic protocol for $\gamma$-$\gapSVP$ that runs in time $n^{O( n/\gamma^2)}$. In particular, there is a $2^{\eps n}$-time protocol for $O_\eps(\sqrt{\log n})$-$\gapSVP$.
\end{theorem}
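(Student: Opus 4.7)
The plan is to generalize the Aharonov--Regev $\mathsf{coNP}$ protocol \cite{aharonovLatticeProblemsNP2005} by trading witness size and verifier running time for a smaller approximation factor. Recall that in their protocol the witness describes a $\lat$-periodic function $f\colon \R^n \to [0,1]$ built from a discrete Gaussian over the dual lattice $\lat^*$ at width $s \approx \sqrt{n}$. In the NO instance ($\lambda_1(\lat) \geq \sqrt{n}$), the prover exhibits such an $f$ satisfying $f(\vec 0) \approx 1$ and $f(\vec x) \le 1/2$ whenever $\vec x$ is $\Omega(1)$-far from $\lat$; in the YES instance ($\lambda_1(\lat) \leq 1$), the existence of a short vector $\vec v$ forces $f(\vec v) \approx f(\vec 0) \approx 1$ by $\lat$-periodicity, contradicting the verifier's test at random points of $\R^n / \lat$.

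To obtain approximation factor $\gamma$ with verifier time $n^{O(n/\gamma^2)}$, I would shrink the dual Gaussian width to $s \approx \sqrt{n}/\gamma$. The usual Banaszczyk-style smoothing argument then gives only $2^{-\Omega(n/\gamma^2)}$ error per sample rather than $2^{-\Omega(n)}$. I would compensate for this in two ways: (i) enlarge the witness to an explicit truncation of the dual Gaussian supported on $n^{O(n/\gamma^2)}$ dual vectors (together with their Gaussian weights), so that the verifier can evaluate $f$ and check its internal consistency in time $n^{O(n/\gamma^2)}$; and (ii) amplify soundness by $\Theta(n/\gamma^2)$ independent random queries, driving the error back below the YES/NO gap. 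Choosing the truncation radius of the dual-Gaussian support via standard Banaszczyk tail bounds at the looser parameter $s$ gives exactly an $n^{O(n/\gamma^2)}$-size Fourier representation, matching the stated running time.

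The main obstacle is the soundness analysis in the YES case: we must show that \emph{no} witness of this expanded size can simultaneously pass the verifier's evaluation test and be consistent with $f$ being $\lat$-periodic when $\lat$ has a vector of length at most $1$. The AR argument exploits a linear-algebraic / Fourier argument on $\lat^*$ together with Banaszczyk's inequality, and it is set up so that at parameter $s \approx \sqrt{n}$ the Fourier energy of any admissible $f$ is essentially concentrated on the origin, which is what forces $f(\vec v) \approx f(\vec 0)$. With our larger Fourier support, I expect one must redo this step by applying Banaszczyk's bound at the looser parameter $s \approx \sqrt{n}/\gamma$ and by rescaling the verifier's acceptance threshold as a function of $\gamma$ so that the YES/NO gap remains detectable; the overall tradeoff is then governed by the ratio between $s$ and the smoothing parameter of $\lat^*$. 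Once the general statement is established, specializing $\gamma$ via $n/\gamma^2 = \eps n/\log n$ gives $\gamma = O_\eps(\sqrt{\log n})$ with verifier time $n^{\eps n / \log n} = 2^{\eps n}$, which is the $2^{\eps n}$-time consequence stated in the theorem.
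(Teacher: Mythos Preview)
Your proposal has a genuine gap: it does not identify the mechanism that makes soundness work against a \emph{cheating} prover, and the mechanism you gesture at (``redo Banaszczyk at the looser parameter $s\approx\sqrt{n}/\gamma$'') cannot do the job. Changing the Gaussian width is just a rescaling of the lattice and does not by itself change the achievable approximation factor; more importantly, Banaszczyk's inequalities constrain the \emph{true} periodic Gaussian $f$, not an arbitrary function $f_W$ a dishonest prover hands you. The whole difficulty in Aharonov--Regev is that the verifier needs an efficiently checkable property of the witness $W$ that forces $f_W(\vec{u})$ to be large whenever $\vec{u}$ is short. AR use the second-moment (spectral norm) test $\|WW^T/N\|\le 1$, which yields $f_W(\vec{u})\ge 1-2\pi^2\|\vec{u}\|^2$ and hence the $O(\sqrt{n})$ barrier; your proposal gives no replacement for this test, and ``random queries'' or ``truncation with weights'' do nothing here, since a malicious prover can choose weights to make $f_W$ pass any evaluation-only test while still dipping near a short primal vector.

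What the paper actually does is extend the second-moment test to a \emph{higher-order moment test}: the verifier checks that all sample moments $\frac{1}{N}\sum_i w_{i,1}^{\alpha_1}\cdots w_{i,n}^{\alpha_n}$ with $\sum_j\alpha_j\le 2k$ are close to the corresponding continuous Gaussian moments. Via the degree-$2k$ Taylor expansion of $\cos$, this certifies $f_W(\vec{u})\gtrsim e^{-\pi\|\vec{u}\|^2}-(\|\vec{u}\|^2/k)^{k+1}$, so $f_W$ is provably large for $\|\vec{u}\|\lesssim\sqrt{k}$, yielding approximation factor $\approx\sqrt{n/k}$ in time $n^{O(k)}$ (there are $\approx n^{2k}$ moments to check). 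A second ingredient you are missing is that completeness of this moment test requires the \emph{discrete} Gaussian moments of $D_{\lat^*}$ to be close to the continuous ones; the paper enforces this by working with the variant $\mathrm{CVP}'$ where the NO case additionally promises $\lambda_1(\lat)>\sqrt{n}$, and then reduces $\gamma$-$\SVP$ to $\gamma$-$\mathrm{CVP}'$ via~\cite{GMSS99}. None of this is recoverable from your outline.
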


This result is almost a strict generalization of~\cite{aharonovLatticeProblemsNP2005}, except that Aharonov and Regev's protocol works for $\CVP$, while ours only works for $\SVP$. 

Again, this result can be viewed as a barrier to proving hardness of $\gamma$-$\SVP$ (assuming NETH; see \cref{sec:prelims_ETH_stuff,sec:limitations}). And, just like how~\cite{aharonovLatticeProblemsNP2005} gives a stronger barrier against proving polynomial-time hardness than~\cite{GG00} (collapse of the polynomial hierarchy to the first level, as opposed to the second) at the expense of a larger approximation factor $\gamma$, our \cref{thm:coNP_intro} gives a stronger barrier against proving superpolynomial hardness (formally, a barrier assuming NETH rather than AMETH) than \cref{thm:coAM_intro}, at the expense of a larger approximation factor. See \cref{sec:limitations}.

As we discuss more in \cref{sec:techniques_NP}, our protocol is broadly similar to the original protocol in \cite{aharonovLatticeProblemsNP2005}, but the details and the analysis are quite different---requiring in particular careful control over the higher moments of the discrete Gaussian distribution.

 We note that we originally arrived at this protocol in an attempt to solve a different (and rather maddening) open problem. In~\cite{aharonovLatticeProblemsNP2005}, Aharonov and Regev speculated that their protocol could be improved to achieve an approximation factor of $\sqrt{n/\log n}$ rather than $\sqrt{n}$, therefore matching in $\mathsf{coNP}$ the approximation factor achieved by~\cite{GG00} in $\mathsf{coAM}$. And, there is a certain sense in which they came tantalizingly close to achieving this (as we explain in \cref{sec:techniques_NP}). It has therefore been a long-standing open problem to close this $\sqrt{\log n}$ gap. 

We have \emph{not} successfully closed this gap between~\cite{aharonovLatticeProblemsNP2005} and~\cite{GG00}. Indeed, for all running times, the approximation factor in \cref{thm:coNP_intro} remains stubbornly larger than that in \cref{thm:coAM_intro} by a factor of $\sqrt{\log n}$, so that in some sense the gap persists even into the superpolynomial-time regime! But, we \emph{do} show that a suitable modification of the Aharonov and Regev $\mathsf{coNP}$ protocol can achieve approximation factors less than $\sqrt{n}$, at the expense of more running time. This in itself is already quite surprising, as the analysis in~\cite{aharonovLatticeProblemsNP2005} seems in some sense tailor-made for the approximation factor $\sqrt{n}$ and no lower. For example, prior to our work, it was not even clear how to  achieve an approximation factor of, say, $\sqrt{n}/10$ in co-non-deterministic time less than it takes to simply solve the problem deterministically. We show how to achieve, e.g., an approximation factor of $\sqrt{n}/C$ for any constant $C$ in polynomial time.

\paragraph{A $\mathsf{coMA}$ protocol.} Our third main result is a $\mathsf{coMA}$ protocol for $\CVP$, as follows.

\begin{theorem}[Informal; see \cref{thm:coMA}]
    \label{thm:coMA_intro}
  There is a $\mathsf{coMA}$ protocol for $\gamma$-$\gapCVP$ that runs in time $2^{O(n/\gamma)}$. In particular, there is a $2^{\eps n}$-time protocol for $O_\eps(1)$-$\gapCVP$.
\end{theorem}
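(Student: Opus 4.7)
My plan is to design a protocol in which Merlin sends a compact witness encoding short vectors of the dual lattice $\lat^*$, and then Arthur uses his random coins together with $2^{O(n/\gamma)}$ time to run a statistical test in the spirit of Aharonov--Regev that distinguishes the case where $\vec{t}$ lies within distance $d$ of $\lat$ from the case where it lies at distance at least $\gamma d$. Because Arthur has exponential time (rather than only polynomial time, as in Aharonov--Regev), the witness only needs to pinpoint a good dual basis rather than encode an entire approximating function; Arthur can enumerate the relevant short dual vectors himself, which is exactly the sort of operation unavailable to a polynomial-time verifier and which makes the protocol a genuinely exponential-time phenomenon.

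First, I would have Merlin send a basis $\basis^*$ of $\lat^*$ whose vectors are as short as possible, together with any additional hints needed to exploit it (for instance, a claimed smoothing parameter of $\lat^*$). Arthur verifies in polynomial time that $\basis^*$ really does span $\lat^*$, and then enumerates the set $S$ of all $\vec{y} \in \lat^*$ with $\length{\vec{y}} \leq R$ for a threshold $R \approx 1/d$. With $R$ tuned so that $|S| \leq 2^{O(n/\gamma)}$ (via a Minkowski/Banaszczyk-style packing estimate), this enumeration fits inside Arthur's time budget. Arthur then evaluates a weighted cosine sum of the form $f(\vec{t}) = \sum_{\vec{y} \in S} w(\vec{y}) \cos(2\pi \inner{\vec{y}, \vec{t}})$, possibly after applying a random shift or random subsampling drawn from his coins, and accepts iff $f(\vec{t})$ lies below a suitable threshold.

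For completeness (when $\vec{t}$ is at distance at least $\gamma d$ from $\lat$), I would invoke a Banaszczyk-style tail bound on the Gaussian mass of $\lat^*$ to show that $f(\vec{t})$ is small in expectation, and then use Arthur's randomness to amplify this expectation bound to a high-probability guarantee. For soundness (when the distance is at most $d$), every enumerated dual vector $\vec{y}$ with $\length{\vec{y}} \leq R$ has $\inner{\vec{y}, \vec{t}}$ within $Rd$ of an integer, so each cosine is close to $1$, and $f(\vec{t})$ exceeds the threshold no matter which $\basis^*$ Merlin chooses---i.e., soundness follows essentially for free from the short-vector structure, independent of Merlin's behavior.

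The main obstacle is squeezing out the quadratic improvement in the exponent, from the $n/\gamma^2$ of both Goldreich--Goldwasser and Aharonov--Regev down to the $n/\gamma$ required here. My hope is that this gain comes from combining two ingredients that neither the $\coAM$ protocol nor the $\mathsf{coNP}$ protocol alone can exploit: Merlin's ability to commit, once and for all, to a globally good dual basis, and Arthur's randomness to boost a single statistical test. I anticipate that the crux of the analysis will be a sharp moment or concentration bound (analogous to the one behind \cref{thm:coNP_intro}) showing that the weighted cosine sum over a $2^{O(n/\gamma)}$-sized enumeration is exponentially concentrated once the distance exceeds $\gamma d$, so that a single random draw separates the two cases with overwhelming probability over Arthur's coins.
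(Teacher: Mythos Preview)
Your proposal has a genuine gap, and it lies precisely at the point you flagged as ``the main obstacle.''

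First, your witness is essentially vacuous. A basis $\basis^*$ of $\lat^*$ is efficiently computable from the input basis $\basis$ (it is $(\basis^{-1})^T$), so Merlin's message conveys nothing Arthur cannot compute on his own. Even if you mean an \emph{especially short} dual basis, a $2^{O(n/\gamma)}$-time Arthur can already run block reduction with block size $\Theta(n/\gamma)$ and obtain one. If Merlin's message carries no information, then what you describe is not an $\mathsf{MA}$ protocol but simply a randomized algorithm, and a $2^{O(n/\gamma)}$-time algorithm for $\gamma$-$\CVP$ with constant $\gamma$ would be a dramatic breakthrough.

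Second, your soundness argument pins down the parameters in a way that cannot give exponent $n/\gamma$. You argue that when $\dist(\vec{t},\lat)\le d$, every enumerated dual vector $\vec{y}$ with $\|\vec{y}\|\le R$ has $\langle \vec{y},\vec{t}\rangle$ within $Rd$ of an integer; for the cosines to be near $1$ this forces $Rd=O(1)$. But then your completeness step (a Banaszczyk tail bound) needs the effective Gaussian width $\sim 1/R$ times $\dist(\vec{t},\lat)$ to be of order $\sqrt{n}$, i.e.\ $\gamma d\cdot R^{-1}\gtrsim\sqrt{n}$, which combined with $Rd=O(1)$ gives $\gamma\gtrsim\sqrt{n}$. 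You have reproduced the Aharonov--Regev regime, not improved it. The ``sharp concentration bound'' you hope for at the end cannot rescue this: the constraint $Rd=O(1)$ is dictated by your soundness mechanism, and it is exactly what caps the gain.

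What the paper does differently is to \emph{decouple soundness from any property of the witness}. Merlin sends discrete Gaussian samples $W=(\vec{w}_1,\ldots,\vec{w}_N)$ from $\lat^*$ (the Aharonov--Regev witness), and Arthur computes $f_W(\cdot)$ from them. Arthur then samples a random $\vec{v}$ uniformly from a ball of radius $r\approx\alpha\sqrt{n}$ and checks that $f_W(\vec{v})$ is large while $f_W(\vec{v}+\vec{t})$ is small. Completeness follows from the Aharonov--Regev analysis (honest $W$ makes $f_W\approx f$). Soundness, crucially, does \emph{not} rely on any structure of $W$: since $f_W$ is $\lat$-periodic for any dual-lattice $W$, if $\dist(\vec{t},\lat)\le d'$ then $f_W(\vec{v}+\vec{t})=f_W(\vec{v}+\vec{u})$ for some $\|\vec{u}\|\le d'$, and the Goldreich--Goldwasser ball-overlap argument shows that the distributions of $\vec{v}$ and $\vec{v}+\vec{u}$ over the ball coincide on a $(1-(d'/r)^2)^{n/2}$ fraction, so the two tests must be inconsistent with at least that probability regardless of $f_W$. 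Setting $r\approx\sqrt{\log N}$ (so completeness holds with witness size $N$) and $d'/r\approx 1/\sqrt{\gamma}$ yields running time $N\cdot 2^{O(n/\gamma)}$ with $N=2^{O(n/\gamma)}$, giving the claimed exponent. The missing idea in your plan is precisely this use of Arthur's randomness to run the ball-overlap test \emph{against} $f_W$, rather than trying to certify $f_W$ directly.
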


Unlike our other protocols, the protocol in \cref{thm:coMA_intro} has no known analog in prior work. Indeed, the result is only truly interesting for running times larger than roughly $2^{\sqrt{n}}$, since for smaller running times it is completely subsumed by \cite{aharonovLatticeProblemsNP2005}. It is therefore unsurprising that this result was not discovered by prior work that focused on the polynomial-time regime.

This protocol too can be viewed as partial progress towards improving the approximation factor achieved by~\cite{aharonovLatticeProblemsNP2005} by a factor of $\sqrt{\log n}$. In particular, notice that in the important special case of $2^{\eps n}$ running time, the approximation factor achieved in \cref{thm:coMA_intro} is better than that achieved by \cref{thm:coNP_intro} by a $\sqrt{\log n}$ factor. (Indeed, since the approximation factor is constant in this case, it is essentially the best that we can hope for.) So, in the $2^{\eps n}$-time world, there is no significant gap between the approximation factors that we know how to achieve in $\mathsf{coMA}$ and $\mathsf{coAM}$, in contrast to the polynomial-time world.

As a barrier to proving exponential-time hardness of lattice problems, the $\mathsf{coMA}$ protocol in \cref{thm:coMA_intro} lies between the co-non-deterministic protocol in \cref{thm:coNP_intro} and the $\mathsf{coAM}$ protocol in \cref{thm:coAM_intro}, since a co-non-deterministic protocol implies a $\mathsf{coMA}$ protocol, which implies a $\mathsf{coAM}$ protocol (though at the expense of a constant factor in the exponent of the running time; see \cref{sec:prelims_ETH_stuff}). In particular, for $2^{\eps n}$ running time, the approximation factor is (significantly) better than \cref{thm:coNP_intro} but (just slightly) worse than \cref{thm:coAM_intro}. But, the complexity-theoretic assumption needed to rule out hardness in this case (MAETH) is weaker than for \cref{thm:coAM_intro} (AMETH) but stronger than for \cref{thm:coNP_intro} (NETH).

In fact, our $\mathsf{coMA}$ protocol is perhaps best viewed as a ``mixture'' of the two beautiful protocols from~\cite{GG00} and~\cite{aharonovLatticeProblemsNP2005}. As we explain in \cref{sec:techniques_MA}, we think of this $\mathsf{coMA}$ protocol as taking the best parts from~\cite{GG00} and~\cite{aharonovLatticeProblemsNP2005}, and we therefore view the resulting ``hybrid'' protocol as quite natural and elegant.

\subsubsection{Worst-case to average-case reductions}

\paragraph{Worst-case to average-case reductions for $\SIS$.} Our fourth main result is a generalization beyond polynomial time of (Micciancio and Regev's version of) Ajtai's worst-case to average-case reduction, as follows.

\begin{theorem}[Informal; see \cref{thm:SIS}]
    \label{thm:SIS_intro}
    For any $\gamma = \gamma(n) \geq 1$, there is a reduction from $\gamma$-$\gapSVP$ to $\SIS$ that runs in time $2^{n^2 \cdot \polylog(n)/\gamma^2}$. In particular, (exponentially secure) secret-key cryptography exists if $\widetilde{O}(\sqrt{n})$-$\gapSVP$ is $2^{\Omega(n)}$ hard.
\end{theorem}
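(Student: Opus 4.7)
The approach is to push the Micciancio--Regev polynomial-time reduction from $\widetilde{O}(n)$-$\gapSVP$ to $\SIS$ into the exponential-time regime by strengthening its discrete Gaussian sampling subroutine. Recall that in the Micciancio--Regev framework one samples many vectors from $D_{\lat, s}$ at width $s$ slightly above the smoothing parameter $\eta_\eps(\lat)$, assembles them as columns of an $\SIS$ matrix, invokes the $\SIS$ oracle to obtain an unexpectedly short integer combination, and extracts a lattice vector of length $\widetilde{O}(s\sqrt{n})$. The best polynomial-time achievable width is $s = \widetilde{\Theta}(\sqrt{n})\cdot\lambda_1(\lat)$ (via Klein/GPV on a weakly reduced basis combined with Banaszczyk's bound on $\eta_\eps$), which caps the approximation factor at $\gamma = \widetilde{O}(n)$; to reach smaller $\gamma$ one must sample at smaller width $s = \widetilde{O}(\gamma/\sqrt{n})\cdot\lambda_1(\lat)$ by spending more time.

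The plan for the improved sampler is to preprocess the input basis with a block-reduction algorithm (BKZ or slide-reduction) of block size $\beta$, taking time $\mathrm{poly}(n)\cdot 2^{O(\beta)}$ by using a $2^{O(\beta)}$-time exact SVP solver as the inner block oracle, and then to feed the resulting basis into the standard Klein/GPV sampler. A BKZ-reduced or slide-reduced basis of block size $\beta$ has Gram--Schmidt norms bounded by $\beta^{O(n/\beta)}\cdot\lambda_1(\lat)$, so Klein/GPV then produces $D_{\lat, s}$ samples at the target width. Choosing $\beta$ to make the eventual approximation factor equal to $\gamma$ and tracking the costs should give a total running time of $2^{n^2\cdot\polylog(n)/\gamma^2}$, smoothly interpolating between the polynomial-time regime ($\gamma = \widetilde{\Theta}(n)$) and the exponential-time regime ($\gamma = \widetilde{\Theta}(\sqrt{n})$). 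The ``in particular'' clause then follows by instantiating Ajtai's $\SIS$-based collision-resistant hash family and noting that a $2^{\Omega(n)}$ lower bound on $\widetilde{O}(\sqrt{n})$-$\gapSVP$ survives the $2^{\widetilde{O}(n)}$ reduction overhead, yielding exponentially secure collision-resistant hashing and hence (exponentially secure) secret-key cryptography.

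I expect the main obstacle to be making the parameter tradeoff tight in the regime $\gamma\approx\sqrt{n}$, where the block size approaches $n$ and the standard Gama--Nguyen bounds become loose; this likely requires slide-reduction (which gives uniform control on all $\|\widetilde{\vec{b}}_i\|$ rather than only on $\|\vec{b}_1\|$) or, at the extreme end, a direct time-width tradeoff for exponential-time discrete Gaussian sampling in the Aggarwal--Dadush--Regev--Stephens-Davidowitz line of work. A secondary technical issue is tracking the length of the extracted lattice vector via sharp Banaszczyk-style tail bounds on $D_{\lat, s}$ to ensure that the final approximation factor is exactly $\gamma$ and not inflated by spurious $\sqrt{n}$ factors.
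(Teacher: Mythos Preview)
Your proposal misidentifies the bottleneck in the Micciancio--Regev reduction, and as a result the BKZ-preprocessing idea does not actually improve the approximation factor.

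In the MR04 reduction from $\gamma$-$\gapSVP$ to $\SIS$, the discrete Gaussian sampling is performed over the \emph{dual} lattice $\lat^*$, and the $\SIS$ oracle itself is used iteratively to drive the sampling width down to $s \approx \sqrt{m}\cdot\eta_\eps(\lat^*)$ --- this floor is reached regardless of the quality of the initial basis. Preprocessing with BKZ only changes the \emph{starting} width of the iteration, not the floor, so it is redundant. More fundamentally, your description (``extracts a lattice vector of length $\widetilde{O}(s\sqrt{n})$'') is the SIVP-style path, but MR04's tight $\widetilde{O}(n)$ factor for $\gapSVP$ does \emph{not} come from extracting a short vector. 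Instead, the dual Gaussian samples are used as the \emph{witness} for the Aharonov--Regev $\mathsf{coNP}$ verifier for $\gapCVP'$: one computes $f_W(\vec{t}) = \frac{1}{N}\sum_i \cos(2\pi\langle\vec{w}_i,\vec{t}\rangle)$ and checks whether it is small. The $\sqrt{n}$ loss in the approximation factor comes from the soundness analysis of that verifier (the second-order Taylor bound $f_W(\vec{u}) \geq 1 - 2\pi^2\|WW^T/N\|\cdot\|\vec{u}\|^2$), not from the sampling width.

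The paper's approach is therefore entirely different from yours: it leaves the $\SIS$-to-$\sDGS$ step essentially as in MR04/Micciancio--Peikert, and instead replaces the Aharonov--Regev verifier with a new $n^{O(k)}$-time verifier (and a related $\mathsf{coMA}$ verifier) that checks the first $2k$ sample moments of the witness against the continuous Gaussian moments. This higher-order moment check certifies that $f_W(\vec{u}) \gtrsim e^{-\pi\|\vec{u}\|^2}$ for $\|\vec{u}\| \lesssim \sqrt{k}$ (rather than only $\|\vec{u}\| \lesssim O(1)$), which is what buys the improved approximation factor $\gamma \approx \sqrt{n/k}$ at the cost of $n^{O(k)}$ verification time. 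The $\SIS$ oracle is then used to generate this (larger) witness. Your BKZ idea does not touch this verifier at all, and so cannot break the $\widetilde{O}(n)$ barrier.
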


This is a strict generalization of the previous state of the art, i.e., the main result in~\cite{MR04}, which only worked in the polynomial-time regime, i.e., for $\gamma = \widetilde{\Theta}(n)$. 
(In fact, our reduction is also a generalization of the reduction due to Micciancio and Peikert~\cite{micciancioHardnessSISLWE2013}, which itself generalizes~\cite{MR04} to more parameter regimes. Specifically, our result holds in the ``small modulus'' regime, like that of \cite{micciancioHardnessSISLWE2013}. But, in this high-level description where we have not even defined the modulus, we ignore this important distinction.)

We are particularly interested in the special case of our reduction for $\gamma = \widetilde{\Theta}(\sqrt{n})$. Indeed, as we mentioned earlier, it is widely believed that $\gamma$-$\SVP$ is $2^{\Omega(n)}$ hard for \emph{any} approximation factor $\gamma \leq \poly(n)$, and even stronger assumptions are commonly made in the literature on lattice-based cryptography (both in theoretical and practical work---and even in work outside of lattice-based cryptography~\cite{BSVHardnessAveragecaseSUM2021}). Therefore, we view the assumption that $\widetilde{O}(\sqrt{n})$-$\SVP$ is $2^{\Omega(n)}$ hard to be quite reasonable in this context. Indeed, if one assumes (as is common in the cryptographic literature) that the best known (heuristic) algorithms for $\gamma$-$\SVP$ are essentially optimal, then this result implies significantly better security for lattice-based cryptography than other worst-case to average-case reductions.

In fact, \cref{thm:SIS_intro} follows from an improvement to just one step in Micciancio and Regev's reduction. Specifically, to achieve the best possible approximation factor, Micciancio and Regev essentially used their $\SIS$ oracle to generate the witness used in Aharonov and Regev's $\mathsf{coNP}$ protocol.\footnote{There are simpler ways to use a $\SIS$ oracle to solve $\gapSVP$ that achieve a worse approximation factor---e.g., by using $\mathrm{SIVP}$ as an intermediate problem. But Micciancio and Regev's clever use of the~\cite{aharonovLatticeProblemsNP2005} protocol yields the $\widetilde{O}(n)$ approximation factor that has remained the state of the art since a preliminary version of~\cite{MR04} was published in 2004.} Our generalization of Aharonov and Regev's protocol uses (a larger version of) the same witness, so that we almost get our generalization of~\cite{MR04} for free once we have generalized~\cite{aharonovLatticeProblemsNP2005}. There are, however, many technical details to work out, as we describe in~\cref{sec:techniques_SIS}.

(To get the best approximation factor that we can, we actually use our $\mathsf{coMA}$ protocol in some parameter regimes and our co-non-deterministic protocol in others. This works similarly because the witness is the same for the two protocols.)

\paragraph{Worst-case to average-case reductions for $\LWE$.} Our fifth and final main result is a generalization of both Regev's quantum worst-case to average-case reduction for $\LWE$~\cite{regevLatticesLearningErrors2009} and Peikert's classical version~\cite{peikertPublickeyCryptosystemsWorstcase2009}. Since $\LWE$ comes with many parameters, in this high-level overview we simply present the special case of the result for the hardest choice of parameters that is known to imply public-key encryption.

\begin{theorem}[Informal; see \cref{thm:LWE_classical,thm:LWE_quantum}]
    \label{thm:LWE_intro}
  For any $\gamma = \gamma(n) \geq 1$, public-key encryption exists if $\gamma$-$\SVP$ is $2^{n^2 \polylog(n) /\gamma}$ hard for a classical computer or $2^{n^3 \polylog(n)/\gamma^2}$ hard for a quantum computer. In particular, (exponentially secure) public-key cryptography exists if $\widetilde{O}(n)$-$\gapSVP$ is $2^{\Omega(n)}$ hard, even for a classical computer.
\end{theorem}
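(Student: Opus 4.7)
The plan is to follow the high-level architecture of Regev's quantum reduction~\cite{regevLatticesLearningErrors2009} and Peikert's classical reduction~\cite{peikertPublickeyCryptosystemsWorstcase2009} from worst-case $\gamma$-$\gapSVP$ to average-case $\LWE$, modified to exploit a super-polynomial time budget via the same enlarged-ball ideas that drive our improvement to the Goldreich--Goldwasser protocol (\cref{thm:coAM_intro}). The public-key encryption claim then follows from the standard fact that $\LWE$ (with suitable parameters) implies public-key encryption~\cite{regevLatticesLearningErrors2009}, together with the observation that since our reduction is essentially dimension-preserving and runs in time $2^{n^2 \polylog(n)/\gamma}$, it carries $2^{\Omega(n)}$ hardness of $\gamma$-$\gapSVP$ through to $2^{\Omega(n)}$ security of the cryptosystem in the special case $\gamma = \widetilde{\Theta}(n)$.

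Both reductions iteratively use an $\LWE$ oracle to solve bounded-distance decoding (BDD) on the dual lattice $\lat^*$ and then, quantumly or via a large modulus, convert BDD solutions into samples from a discrete Gaussian on the primal lattice $\lat$ of progressively smaller width. The final $\gapSVP$ approximation factor is set almost entirely by the decoding radius of the $\LWE$-to-BDD step. Our task is therefore to construct, using the extra time budget, a BDD subroutine whose decoding radius is larger than what the $\LWE$ oracle natively handles, and then to plug it back into the iterative pipeline unchanged. Concretely, given a BDD instance $(\lat^*, \vec{t})$ at an enlarged target radius, we enumerate candidate ``coarse'' components of the error---points of a suitably chosen discretization of cardinality roughly $2^{\eps n}$---and for each candidate subtract it off before feeding the residual to the $\LWE$ oracle, which now faces a BDD instance within its native radius. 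Combined with the sharp Gaussian-concentration estimates used in \cref{thm:coNP_intro,thm:coMA_intro}, $2^{\eps n}$ enumeration time boosts the effective decoding radius by the factor needed to achieve the claimed time/approximation tradeoffs in both the quantum and classical variants; the extra factor of $\gamma$ in the classical running-time exponent accounts for Peikert's modulus-switching step, which replaces Regev's quantum step at that cost.

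The main obstacle will be the error analysis for the boosted BDD oracle: each of the $2^{\eps n}$ coarse guesses has some probability of being spuriously accepted by the $\LWE$ oracle as a valid BDD solution for the wrong target, so either the oracle's success probability must be amplified to $1 - 2^{-\eps n - \omega(1)}$ (driving up the reduction's running time in a controlled way), or the subroutine must output answers that can be verified before acceptance. A second delicate point is that the quantum or modulus-switching step requires the output of the BDD subroutine to be statistically close to a particular Gaussian-like distribution; the enumeration must therefore be designed so that the post-subtraction residual distribution composes cleanly with the next step's input requirements, which will likely require a refined smoothing-parameter analysis of the discrete Gaussian on the discretization sublattice and careful tracking of statistical distances across the polylogarithmically many iterations.
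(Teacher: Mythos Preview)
Your approach has a quantitative gap at the central step. You propose to boost the $\BDD$ decoding radius from the native $r$ to some larger $R$ by enumerating $2^{\eps n}$ coarse error candidates and subtracting each before calling the oracle. But this is a covering argument: to cover a ball of radius $R$ by balls of radius $r$ you need roughly $(R/r)^n$ centers, so $2^{\eps n}$ enumeration buys you only $R/r \leq 2^{\eps} \approx 1+\eps$. That constant-factor boost in radius does not translate into the $\sqrt{n/\log n}$-type savings in the approximation factor that the theorem claims. The appeal to ``Gaussian-concentration estimates'' does not rescue this; the volume lower bound is the real constraint. Your second obstacle (maintaining the Gaussian shape of the residual through the iteration) would indeed be delicate, but since the first step does not deliver the needed radius, the rest of the plan does not get off the ground.

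The paper spends the time budget in a different place. It leaves the $\BDD\to\LWE$ step (and hence the native $\BDD$ radius $\alpha' = \alpha/(4\sqrt{n})$) untouched, and instead generalizes the $\gapSVP\to\BDD$ reduction of Peikert and Lyubashevsky--Micciancio. That reduction is exactly the Goldreich--Goldwasser protocol with the $\BDD$ oracle playing Merlin: sample $\vec{s}$ uniformly from a ball of radius $r = \alpha'\gamma d$, ask the oracle to recover it, and repeat. The number of repetitions needed is governed by the ball-intersection volume $(1 - 1/(2\alpha'\gamma)^2)^{-n/2}$, and \emph{this} is where the $2^{\eps n}$ time enters: taking $\alpha'\gamma$ down to a constant (rather than $\sqrt{n/\log n}$) costs $2^{O(n)}$ repetitions but yields $\gamma \approx 1/\alpha' \approx \sqrt{n}/\alpha$. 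For the classical version there is a second place to spend time: replace LLL by block reduction with block size $\beta$ when generating the discrete-Gaussian samples needed by Peikert's step, which lets the modulus shrink to $q \approx \beta^{n/\beta}$ at cost $2^{O(\beta)}$. Because neither modification touches the iterative $\BDD\leftrightarrow\sDGS$ loop, your worries about preserving Gaussianity through the iteration do not arise.
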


Again, this is a strict generalization of the prior state of the art, which matched the above result for polynomial running time. And, again, we stress that $2^{\Omega(n)}$-hardness of $\widetilde{O}(n)$-$\gapSVP$ is a widely believed conjecture. Indeed, if one assumes (as is common in the cryptographic literature) that the best known (heuristic) algorithms for $\gamma$-$\SVP$ are essentially optimal, then this result implies significantly better security for lattice-based public-key cryptography than prior worst-case to average-case reductions.

In particular, notice that in the important special case of running time $2^{\eps n}$, our quantum reduction and classical reduction achieve essentially the same approximation factor. (Indeed, they differ by only a constant factor.) So, perhaps surprisingly, there is no real gap between classical and quantum reductions in the exponential-time regime, unlike in the polynomial-time regime. 

We note that behind this result is a new generalization of the polynomial-time reduction from $\gapSVP$ to the Bounded Distance Decoding problem ($\BDD$). This polynomial-time reduction was implicit in~\cite{peikertPublickeyCryptosystemsWorstcase2009} and made explicit in~\cite{conf/crypto/LyubashevskyM09}, and it can be viewed as a version of the~\cite{GG00} $\mathsf{coAM}$ protocol in which Merlin is simulated by a $\BDD$ oracle. We (of course!) generalize this by allowing the reduction to run in more time in order to achieve a better approximation factor, using the same ideas that we used to generalize the \cite{GG00} $\mathsf{coAM}$ protocol. (See \cref{sec:gapsvp-bdd}.) 

Furthermore, to obtain the best possible approximation factor in the classical result (and, in particular, an approximation factor that matches the quantum result in the $2^{\eps n}$-time setting), we also observe that Peikert's celebrated classical reduction from $\BDD$ to $\LWE$ can be made to work for a wider range of parameters if it is allowed to run in superpolynomial time. At a technical level, this involves combining basis reduction algorithms (e.g., from ~\cite{gamaFindingShortLattice2008}) with the discrete Gaussian sampling algorithm from~\cite{gentryTrapdoorsHardLattices2008,brakerskiClassicalHardnessLearning2013}. The resulting improved parameters results in a significant savings in the approximation factor, and even a small savings in the polynomial-time setting. (E.g., in the exponential-time setting, this saves us a factor of $\sqrt{n}$.)

Both of these observations follow relatively easily from combining known techniques. But, they might be of independent interest.

\subsection{Our techniques}
\label{sec:techniques}

\subsubsection{A \texorpdfstring{$\mathsf{coAM}$}{coAM} protocol}
\label{sec:techniques_AM}

At a high level, our $\mathsf{coAM}$ protocol uses the following very elegant idea due to Goldreich and Goldwasser~\cite{GG00}. Recall that our goal is to describe a protocol between all-powerful Merlin and computationally bounded Arthur in which Merlin (for whatever mysterious reason) wishes to convince Arthur that $\vec{t}$ is far from the lattice. In particular, if $\dist(\vec{t},\lat) > 2$ (the FAR case), Merlin should be able to convince Arthur that $\vec{t}$ is far from the lattice. On the other hand, if $\dist(\vec{t},\lat) \leq d$ (the CLOSE case, where $d < 2$ will depend on Arthur's running time), then even if all-powerful Merlin tries his best to convince Arthur that $\vec{t}$ is far from the lattice, Arthur should correctly determine that Merlin is trying to trick him with high probability.

To that end, consider the set
\[
    S_{\vec0} := \bigcup_{\vec{y} \in \lat} (\ball  + \vec{y} )
    \; ,
\]
which is the union of balls of radius $1$ centered around each lattice point, and the set
\[
    S_{\vec{t}} := \bigcup_{\vec{y} \in \lat} (\ball + \vec{y} - \vec{t}) = S_{\vec{0}} - \vec{t}
    \; ,
\]
which instead consists of balls centered around lattice points shifted by $\vec{t}$.
See \cref{figure:distributions}.

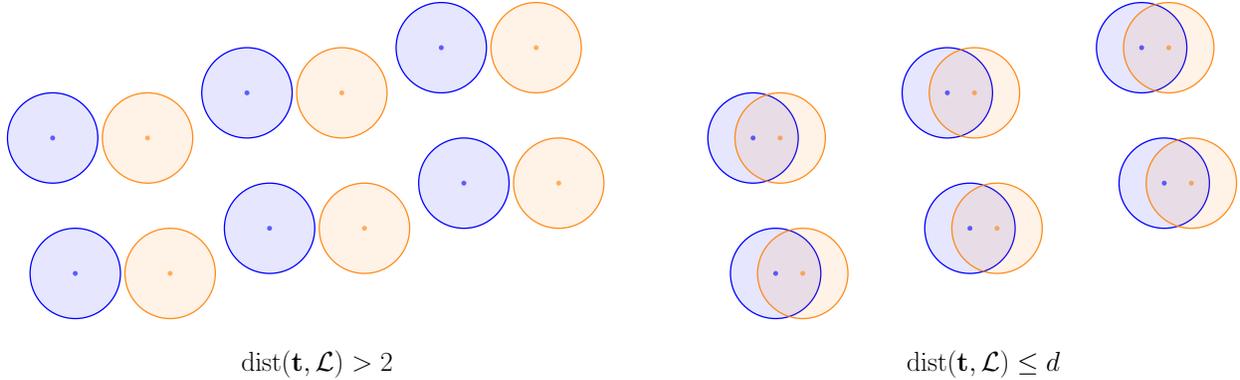
\begin{figure}
    \centering
    \resizebox{\textwidth}{!}{
    \begin{tikzpicture}[
    solidcircle/.style={circle, minimum size=2cm, draw=blue, thick, fill=blue, fill opacity=0.1},
    dashedcircle/.style={circle, minimum size=2cm, draw=orange, draw opacity=0.9, thick, fill=orange, fill opacity=0.1},
    point/.style={circle, fill=blue, scale=0.3, fill opacity=0.6},
    cross/.style={circle, fill=orange, scale=0.3, fill opacity=0.6}
    ]
    \coordinate (origin1) at (-2, 0);
    \coordinate (origin2) at (13.5, 0);
    \coordinate (shift1) at (2.1,0);
    \coordinate (shift2) at (0.6, 0);
    \coordinate (b1) at (4.3, 1);
    \coordinate (b2) at (-0.5, 3);

    \foreach\shift\origin\sometext in 
    {shift1/origin1/{\LARGE$\mathrm{dist}(\mathbf{t}, \mathcal{L}) > 2$}, 
    shift2/origin2/{\LARGE$\mathrm{dist}(\mathbf{t}, \mathcal{L}) \leq d$}}
    {
    \path let \p1 = (b1) in let \p2 = (\shift) in
    node at ($(\origin) + (\x1 + 0.5*\x2, -2)$) {\sometext};

    \foreach\a in {0, 1, 2}{
    \foreach\b in {0, 1}{
    \coordinate (current) at ($\a*(b1) + \b*(b2) + (\origin)$);
    \node[solidcircle] at (current)  {};
    \node[dashedcircle] at ($(current) + (\shift)$) {};
    \node[point] at (current) {};
    \node[cross] at ($(current) + (\shift)$) {};
    }}}

    \end{tikzpicture}
    }
    \caption{Comparison of the sets $S_{\vec{0}}$ and $S_{\vec{t}}$ in the FAR case and in the CLOSE case.
    }
    \label{figure:distributions}
\end{figure}

Notice that $\dist(\vec{t},\lat) > 2$ (i.e., the FAR case) if and only if $S_{\vec0}$ and $S_{\vec{t}}$ are disjoint (ignoring the distinction between open and closed balls). On the other hand, if $\dist(\vec{t},\lat) \leq d < 2$ (the CLOSE case), then the two sets must overlap, with more overlap if $d$ is smaller. Specifically, the intersection of the two sets will contain at least a
\[
    p_d \approx (1-d^2/4)^{n/2} \approx e^{-d^2 n}
\]
fraction of the total volume of $S_{\vec0}$. (See \cref{lem:normalized-l2-ball-intersection-vol} for the precise statement.)

So, Arthur first flips a coin. If it comes up heads, he samples a point $\vec{x} \sim S_{\vec0}$ uniformly at random from $S_{\vec0}$. Otherwise, he samples $\vec{x} \sim S_{\vec{t}}$.\footnote{In fact, there is no uniformly random distribution over $S_{\vec0}$ or $S_{\vec{t}}$, since they have infinite volume. In reality, we work with these sets \emph{reduced modulo the lattice}. But, in this high-level description, it is convenient to pretend to work with the sets themselves.} He then sends the result to Merlin. Arthur then simply asks Merlin ``was my coin heads or tails?'' In other words, Arthur asks whether $\vec{x}$ was sampled from $S_{\vec0}$ or $S_{\vec{t}}$. If we are in the FAR case where $\dist(\vec{t},\lat) > 2$, then Merlin (who, remember, is all powerful) will be able to unambiguously determine whether $\vec{x}$ was sampled from $S_{\vec0}$ or $S_{\vec{t}}$, since they are disjoint sets. On the other hand, if $\dist(\vec{t},\lat) \leq d$, then with probability at least $p_d$, $\vec{x}$ will lie in the intersection of the two sets. When this happens, even all-powerful Merlin can do no better than randomly guessing Arthur's coin. 

Arthur and Merlin can therefore play this game, say, $n/p_d$ times. If we are in the FAR case, then an honest Merlin will answer correctly every time, and Arthur will correctly conclude that $\vec{t}$ is far from the lattice. If we are in the CLOSE case, then no matter what Merlin does, he is likely to guess wrong at least once, in which case Arthur will correctly conclude that Merlin is trying to fool him.

This yields a \emph{private-coin} (honest-verifier perfect zero knowledge) protocol that runs in time roughly
$1/p_d \approx e^{d^2 n}$ for $\gamma$-$\CVP$ with $\gamma \approx 1/d$. Similarly to the polynomial-time setting, one can then use standard generic techniques to convert any private-coin protocol into a true public-coin, two-round protocol (i.e., a true $\mathsf{coAM}$ protocol), at the expense of increasing the constant in the exponent.

\subsubsection{A co-non-deterministic protocol}
\label{sec:techniques_NP}

Our co-non-deterministic protocol (as well as our $\mathsf{coMA}$ protocol) is based on the beautiful protocol of Aharonov and Regev~\cite{aharonovLatticeProblemsNP2005}. The key tools are the \emph{periodic Gaussian function} and the \emph{discrete Gaussian distribution}. For $\vec{x} \in \R^n$, we define
\[
    \rho(\vec{x}) := e^{-\pi \|\vec{x}\|^2}
    \; ,
\]
and for a lattice $\lat \subset \R^n$ and target vector $\vec{t} \in \R^n$, we extend this definition to the lattice coset $\lat - \vec{t}$ as
\[
    \rho(\lat - \vec{t}) := \sum_{\vec{y} \in \lat} \rho(\vec{y} - \vec{t})
    \; .
\]
We can then define the periodic Gaussian function as 
\[
    f(\vec{t}) := \frac{\rho(\lat - \vec{t})}{\rho(\lat)}
    \; .
\]
Very roughly speaking, we expect $f(\vec{t})$ to be a smooth approximation to the function $e^{-\pi \dist(\vec{t},\lat)^2}$, or at least to be relatively large when $\vec{t}$ is close to the lattice and relatively small when $\vec{t}$ is far from the lattice. See \cref{fig:periodic_gaussian}.

\begin{figure}
    \centering
    \includegraphics[width=0.4\textwidth]{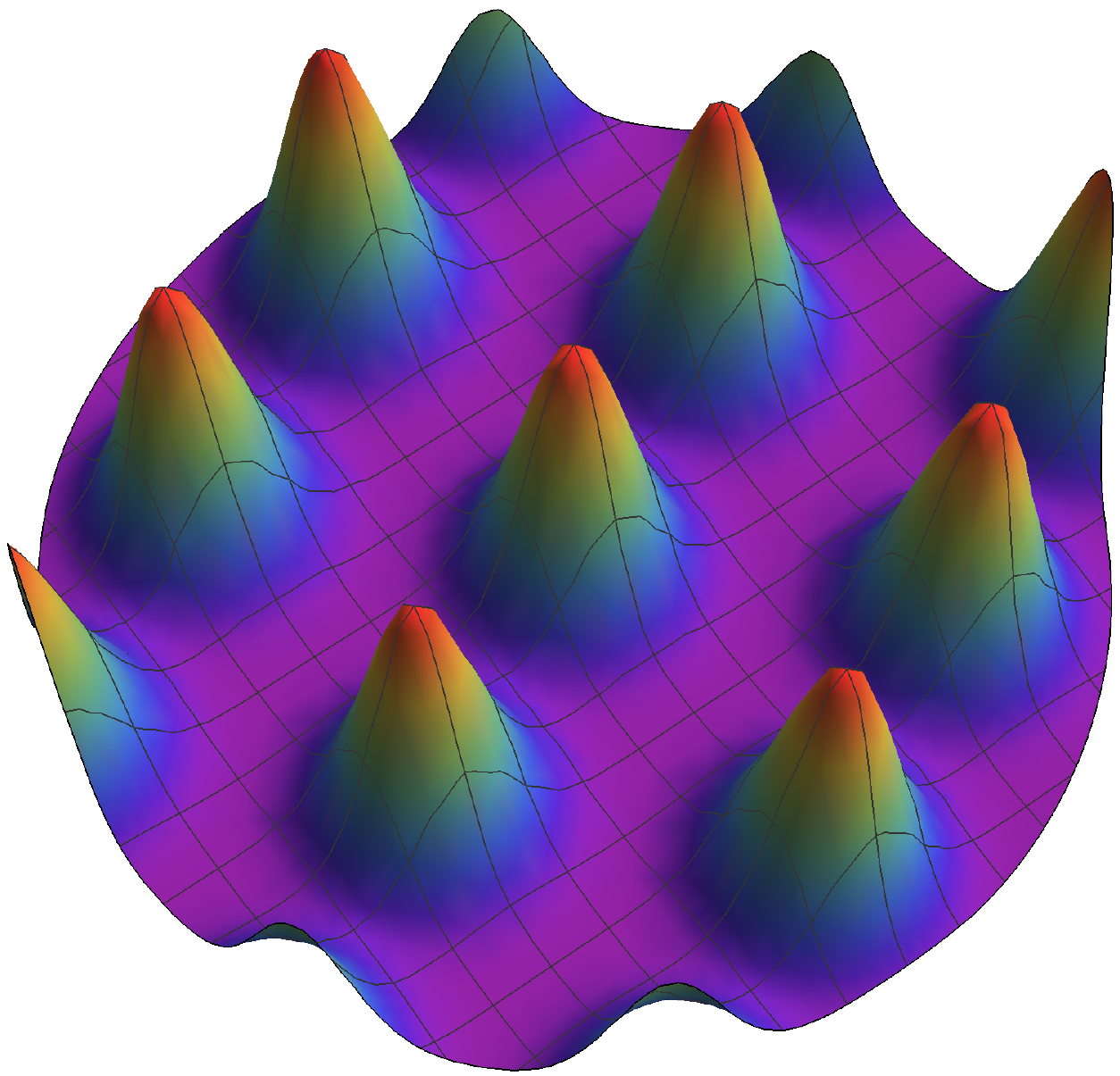} \qquad \qquad \includegraphics[width=0.4\textwidth]{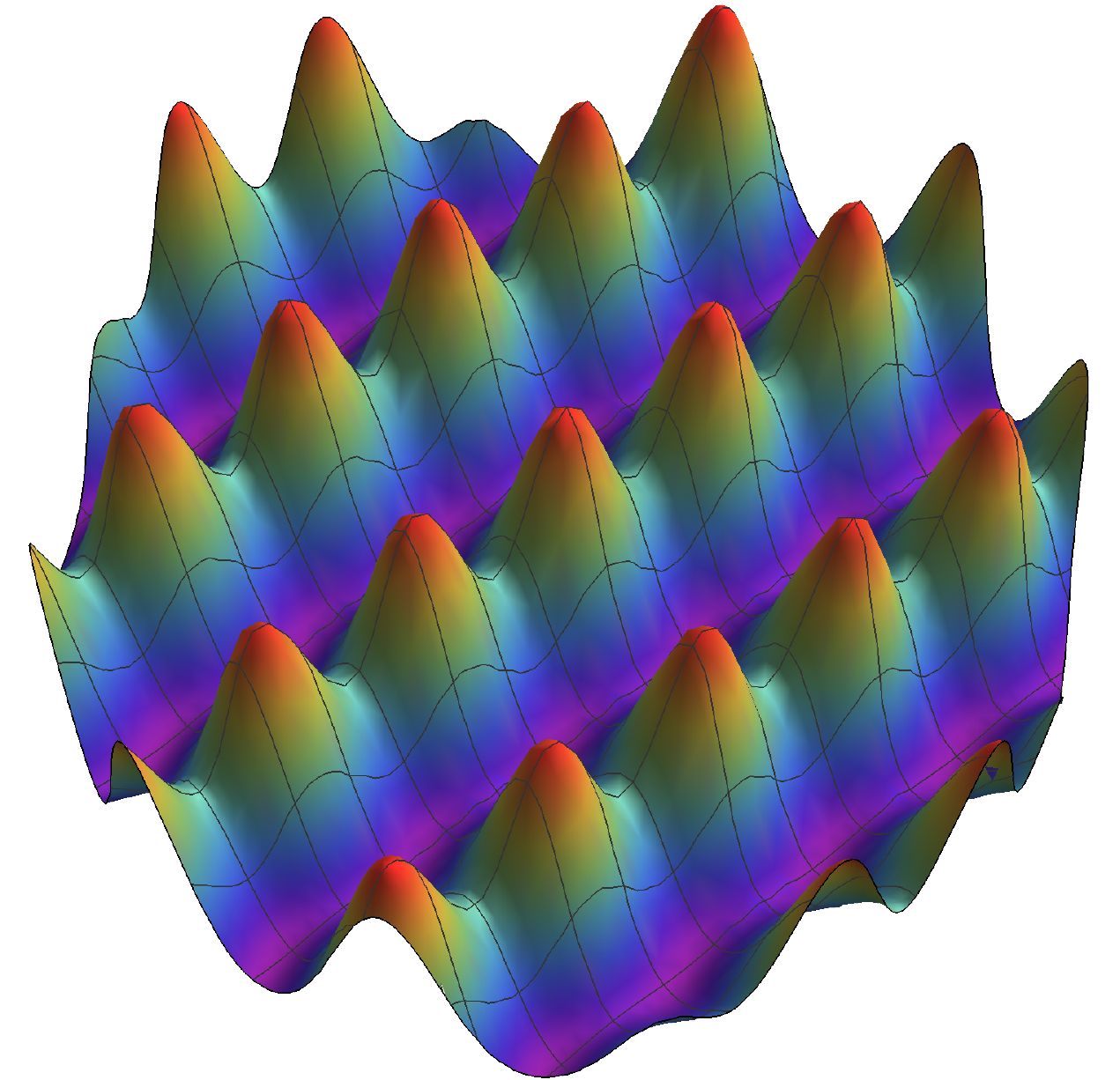}
    \caption{The periodic Gaussian function $f(\vec{t})$ for two different two-dimensional lattices $\lat$.}
    \label{fig:periodic_gaussian}
\end{figure}

Banaszczyk proved a number of important and beautiful results about the periodic Gaussian function~\cite{banaszczykNewBoundsTransference1993}. In particular, he showed that
\[
    e^{-\pi\, \dist(\vec{t},\lat)^2} \leq f(\vec{t}) \leq 1
     \; .
\]
So, if $\vec{t}$ is close to the lattice, then $f(\vec{t})$ cannot be too small. On the other hand, if $\dist(\vec{t},\lat) \geq \sqrt{n}$, then Banaszczyk proved that $f(\vec{t}) < 2^{-n}$. So, if we could somehow approximate $f(\vec{t})$ up to an additive error of $\delta \in (2^{-n},1)$, then we could distinguish between the case when $\dist(\vec{t},\lat) \lesssim \sqrt{\log(1/\delta)}$ and the case when $\dist(\vec{t}, \lat) \geq \sqrt{n}$, and therefore solve $\gamma$-$\CVP$ for $\gamma \approx \sqrt{n/\log(1/\delta)}$.

Of course, it is not immediately clear \emph{how} to approximate $f(\vec{t})$, even with additional help from an all-powerful prover. However, Aharonov and Regev observed that suitably chosen short vectors from the \emph{dual lattice} $\lat^*$ can be used for this purpose. Specifically, they recalled from the Poisson summation formula that
\begin{equation}
    \label{eq:PSF_intro}
    f(\vec{t}) = \expect_{\vec{w} \sim D_{\lat^*}}[\cos(2\pi \langle \vec{w},\vec{t} \rangle)]
    \; ,
\end{equation}
where $D_{\lat^*}$ is the \emph{discrete Gaussian distribution}, defined by
\[
   \Pr_{\vec{w} \sim D_{\lat^*}}[\vec{w} = \vec{z}] := \frac{\rho(\vec{z})}{\rho(\lat^*)}
   \; 
\]
for any $\vec{z} \in \lat^*$. So, Aharonov and Regev had the prover provide the verifier with $W := (\vec{w}_1,\ldots, \vec{w}_N)$ sampled independently from $D_{\lat^*}$. The verifier can then compute
\[
    f_W(\vec{t}) := \frac{1}{N} \sum_{i=1}^N \cos(2\pi \langle \vec{w}, \vec{t} \rangle)
    \; .
\]
I.e., $f_W$ is the sample approximation of \cref{eq:PSF_intro}.
By the Chernoff-Hoeffding bound, $f_W(\vec{t})$ will provide an approximation of $f(\vec{t})$ up to an error of roughly $1/\sqrt{N}$.
So, this \emph{almost} yields a roughly $N$-time non-deterministic protocol for distinguishing the FAR case when $\dist(\vec{t}, \lat) \geq \sqrt{n}$ from the CLOSE case when $\dist(\vec{t}, \lat) \lesssim \sqrt{\log N}$, i.e., a protocol for $\sqrt{n/\log N}$-$\CVP$.

The one (rather maddening) issue with this protocol is that it is not clear how to maintain soundness against a cheating prover in the case when $\vec{t}$ is close to the lattice. I.e., suppose that the prover provides vectors $W := (\vec{w}_1,\ldots, \vec{w}_N)$ that are \emph{not} sampled from the discrete Gaussian distribution. Then, $f_W(\vec{t})$ will presumably no longer be a good approximation to $f(\vec{t})$, and the verifier might therefore be fooled into thinking that $\vec{t}$ is far from the lattice when it is in fact quite close.

It seems that what we need is some sort of ``test of Gaussianity'' to ``check that $W$ looks like it was sampled from $D_{\lat^*}^N$.'' Or, more accurately, we need some efficiently testable set of properties that (1) are satisfied by honestly sampled vectors $W = (\vec{w}_1,\ldots, \vec{w}_N) \in \R^{n \times N}$ with high probability in the FAR case; and (2) are enough to imply that $f_W(\vec{t})$ is not too small in the CLOSE case when $\vec{t}$ is relatively close to $\lat$. One crucial observation is that, as long as the $\vec{w}_i$ are dual lattice vectors, then it suffices in the CLOSE case to consider $f_W(\vec{u})$ for $\vec{u}$ that are relatively short. This is because the function $f_W$ is \emph{periodic} over the lattice, so that $f_W(\vec{t}) = f_W(\vec{u})$ where $\vec{u} := \vec{t} - \vec{y}$ for $\vec{y} \in \lat$ a closest lattice vector to $\vec{t}$. (It is crucial to remember that $\vec{u}$ is only used for the analysis. In particular, the verifier cannot compute $\vec{u}$ efficiently.)

To create a sound protocol, Aharonov and Regev therefore studied the second-order Taylor series expansion of $f_W(\vec{u})$ around $\vec{u} = \vec0$, i.e.,
\begin{align*}
    f_W(\vec{u}) 
        &= 1- \frac{2\pi^2}{N} \cdot \sum_{i=1}^N \langle \vec{w}_i, \vec{u} \rangle^2 + \frac{2 \pi^4}{3N} \sum_{i=1}^N \langle \vec{w}_i, \vec{u} \rangle^4 - \cdots \\
        &\geq 1- \frac{2\pi^2}{N} \cdot \sum_{i=1}^N \langle \vec{w}_i, \vec{u} \rangle^2 \\
        &\geq 1 - 2\pi^2 \cdot \|WW^T/N\| \cdot \|\vec{u}\|^2
    \; ,
\end{align*}
where $WW^T/N = \frac{1}{N} \sum_i \vec{w}_i \vec{w}_i^T \in \R^{n \times n}$ and $\|WW^T/N\|$ is the spectral norm. In particular, $f_W(\vec{u})$ will be large for short $\vec{u}$, provided that $WW^T/N$ has small spectral norm. One can show (again using the Poisson summation formula) that an honestly sampled witness $W$ will satisfy, say, $\|W W^T/N\| \leq 1$ with high probability. And, the verifier can of course efficiently check this because the spectral norm is efficiently computable. So, Aharonov and Regev used this simple test as their ``test of Gaussianity.''

Putting everything together, we see that by checking that $W$ consists of dual vectors, that $\|WW^T/N\| \leq 1$, and that, say, $f_W(\vec{t}) < 1/2$, the verifier will always reject when $\dist(\vec{t},\lat)$ is smaller than some constant in the CLOSE case, regardless of $W$. And, it will accept (with high probability over the choice of witness $W$) when $W$ is sampled honestly and $\dist(\vec{t},\lat) \geq \sqrt{n}$ in the FAR case. This yields the final approximation factor of $O(\sqrt{n})$ achieved in~\cite{aharonovLatticeProblemsNP2005}.

Notice, however, that by using this spectral-norm-based ``test of Gaussianity,'' Aharonov and Regev only achieved an approximation factor of $O(\sqrt{n})$, rather than the approximation factor $O(\sqrt{n/\log N})$ that they would have gotten if they could have somehow guarantee that the $W$ were sampled honestly. In particular, when $N = \poly(n)$, this costs a factor of roughly $\sqrt{\log n}$ in the approximation factor. (At a technical level, this factor of $\sqrt{\log n}$ is lost because the second-order approximation $\cos(x) \approx 1-x^2/2$ is of course only accurate when $|x|$ is bounded by some small fixed constant.)

Fixing this (again, rather maddening) loss of a $\sqrt{\log n}$ factor has been a major open problem ever since. More generally, it is not at all clear how to achieve even a slightly better approximation factor using these ideas, even if we are willing to increase $N$ and the running time of our verifier substantially. It seems relatively clear that a more demanding ``test of Gaussianity'' is needed.

A natural idea would be to approximate $f_W$ via a higher-order Taylor series approximation,
\[
    f_W^{(k)}(\vec{u}) := 1-\sum_{j=1}^{k} \frac{(2\pi)^{2j}}{(2j)!} \sum_{i=1}^N \langle \vec{w}_i, \vec{u} \rangle^{2j}/N \; .
\] 
It is not hard to see that $f_W^{(k)}$ is quite close to $f_W$ provided that $\vec{u}$ is not too long. Specifically,
\[
    |f_W^{(k)}(\vec{u}) - f_W(\vec{u})| \lesssim \frac{1}{N} \cdot \sum_{i=1}^N(\langle \vec{w}_i, \vec{u} \rangle/k)^{2k}
    \; .
\]
We know that when $W$ is sampled honestly, this error cannot be much larger than roughly $(\|\vec{u}\|^2/k)^{k}$ (with high probability). Therefore, when $W$ is sampled honestly, it must be the case that the \emph{moments} $\sum_{i=1}^N \langle \vec{w}_i, \vec{u} \rangle^{2j}/N$ of $W$ have some property that guarantees that $f_W^{(k)}(\vec{u}) \gtrsim e^{-\pi \|\vec{u}\|^2} - (\|\vec{u}\|^2/k)^{k}$. If we could somehow identify and test this property efficiently for sufficiently large $k$, then we could use this as our ``test of Gaussianity,'' and we would be done.

However, we do not know how to test this property efficiently, and it seems quite hard to do so in general. Even just for $j = 2$, it is in general computationally hard even to approximate, say,
\[
    \max_{\|\vec{u}\| \leq d} \frac{1}{N} \cdot \sum_{i=1}^N \langle \vec{w}_i, \vec{u} \rangle^{2j}
    \; ,
\]
as this is exactly the matrix two-to-four norm~\cite{BBH+HypercontractivitySumofsquaresProofs2012}.
(Compare this with the case of $j = 1$, which yields the easy-to-compute spectral norm.) And, bounding the specific sum $f_W^{(k)}$ that interests us seems significantly more complicated than bounding an individual term---perhaps particularly because it is an alternating sum. It is therefore entirely unclear how to efficiently certify that the sum defining $f_W^{(k)}(\vec{u})$ is bounded whenever $\vec{u}$ is bounded.

We solve this problem by asking for additional properties of our lattice $\lat$ in the FAR case that allow us to make this problem tractable. Specifically, we require that in the FAR case, not only do we have $\dist(\vec{t},\lat) \geq \sqrt{n}$, but we also have that $\lat$ has no non-zero vectors shorter than $\sqrt{n}$. (Intuitively, this means that ``the Gaussian peaks of $f(\vec{t})$ are well separated,'' as in the left example in \cref{fig:periodic_gaussian}.) Micciancio and Regev~\cite{MR04} considered this more restrictive promise problem (for roughly the same reason) and observed that $\gamma$-$\SVP$ can be reduced to it. (It is this additional requirement in the FAR case that prevents us from obtaining a protocol for $\CVP$, rather than $\SVP$.)

Via Fourier-analytic techniques, we show that this new requirement implies that in the FAR case, the \emph{moments} of the discrete Gaussian $D_{\lat^*}$, 
\[\expect_{\vec{w} \sim D_{\lat^*}}[w_1^{j_1} \cdots w_n^{j_n}]
\; ,
\]are extremely close to the corresponding moments of the continuous Gaussian distribution as long as the $j_i$ are non-negative integers such that $\sum j_i$ is not too large. (See \cref{lem:smooth_Hermite_multi-dim}.) We then observe that these moments for $\sum j_i \leq 2k$ completely characterize $f_W^{(k)}(\vec{u})$. 

So, while in general it seems to be difficult to determine whether a given witness $W$ has the property that $f_W^{(k)}(\vec{u})$ is not too small for all sufficiently short $\vec{u}$, we show that in our special use case, it suffices for the verifier to check that the sample moments
\[
    \frac{1}{N} \sum_{i=1}^N w_{i,1}^{j_1} \cdots w_{i,n}^{j_n}
    \; 
\]
are close to some specific known values for $\sum j_i \leq 2k$.

There are roughly $n^{2k}$ such moments to check, and each can be checked in time essentially $N$. If these checks pass, then we can use $f_W(\vec{t})$ to distinguish the CLOSE case from the FAR case as long as in the close case we have
\[
    1/\sqrt{N} +  (\dist(\vec{t},\lat)^2/k)^{k} \lesssim e^{-\pi \dist(\vec{t},\lat)^2}
    \; .
\]
In particular, by setting $N = n^{O(k)}$, we will not be fooled in the CLOSE case as long as $\dist(\vec{t},\lat) \lesssim \sqrt{k}/10$, which gives our approximation factor of roughly $\sqrt{n/k}$ in time roughly $n^{O(k)}$.\footnote{This description might suggest that we can take $N \leq 2^{O(k)}$, yielding a $n^{O(k)}$-time protocol with $2^{O(k)}$-sized witness. However, in this informal discussion we are ignoring the error that we incur from the fact that the sample moments $\frac{1}{N} \sum_i w_{i,1}^{j_1} \cdots w_{i,n}^{j_n}$ will deviate from their expectation. After accounting for this, we are forced to take $N \geq n^{\Omega(k)}$.} See \cref{sec:coNP}.

\subsubsection{A \texorpdfstring{$\mathsf{coMA}$}{coMA} protocol}
\label{sec:techniques_MA}

Our $\mathsf{coMA}$ protocol combines some of the beautiful ideas from~\cite{aharonovLatticeProblemsNP2005} with some of the equally beautiful ideas from~\cite{GG00}.

Indeed, recall that \cite{aharonovLatticeProblemsNP2005} and our generalization show how to generate a witness $W$ of size roughly $N$ such that, if $W$ is sampled honestly, it can be used to distinguish the case when $\dist(\vec{t},\lat) \geq \sqrt{n}$ from the case when $\dist(\vec{t},\lat) \lesssim \sqrt{\log N}$. Specifically, there is a simple function $f_W(\vec{t})$ that is large in the CLOSE case but small in the FAR case, provided that the witness $W$ is generated honestly. The difficulty, in both the original Aharonov and Regev protocol and in our version described above, is in how to handle dishonestly generated $W$, in which case $f_W$ might not have this property and might therefore lead Arthur to incorrectly think that $\vec{t}$ is far from the lattice when in fact it is close.

On the other hand, \cite{GG00} and our generalize work by either sampling $\vec{x}$ from a ball around a lattice point or sampling $\vec{x}$ from a ball around (a lattice shift of) $\vec{t}$. Then, in the FAR case, a random vector $\vec{x}$ sampled from a ball around a lattice point will always be closer to $\lat$ than a random vector $\vec{x}$ sampled from a ball around $\vec{t}$. So, in the FAR case, an honest Merlin can determine whether $\vec{x}$ was sampled from one distribution or the other by checking whether $\dist(\vec{x}, \lat)$ is large or small. On the other hand, in the CLOSE case, there is some overlap between the distributions, so that no matter how Merlin behaves, he will not be able to consistently distinguish between the two cases. (Recall \cref{figure:distributions}.)

Our idea is therefore to have Arthur ``use $W$ to simulate Merlin's behavior in the $\mathsf{coAM}$ protocol.'' In particular, the witness for our protocol is exactly the same $W$ that we use as a witness in our co-non-deterministic protocol (and therefore simply a larger version of the original~\cite{aharonovLatticeProblemsNP2005} witness). However, Arthur's verification procedure is quite different (and, of course, it is now randomized, which is why we obtain a $\mathsf{coMA}$ protocol). To verify Merlin's claim that $\dist(\vec{t},\lat) \geq \sqrt{n}$, Arthur repeatedly samples $\vec{x}_0$ from a ball of radius $r$ around $\vec0$ and $\vec{x}_1$ from a ball of radius $r$ around $\vec{t}$, where $r$ is to be set later. Arthur then computes $f_W(\vec{x}_0)$ and $f_W(\vec{x}_1)$ and rejects (i.e., guesses that he is in the CLOSE case) unless $f_W(\vec{x}_0)$ is large and $f_W(\vec{x}_1)$ is small.

Note that, at least at a high level, the completeness of our protocol in the FAR case follows from the analysis of \cite{aharonovLatticeProblemsNP2005}. In particular, if $W$ is sampled honestly, then $f_W(\vec{x}_0)$ will be large as long as $ r \lesssim \sqrt{\log N}$, and $f_W(\vec{x}_1)$ will be small as long as $\dist(\vec{t},\lat) - r \gtrsim \sqrt{n}$.
On the other hand, the soundness of our protocol in the CLOSE case follows from the analysis of \cite{GG00}. In particular, if $\dist(\vec{t}, \lat) \leq d \leq r$, then regardless of our choice of $f_W$, Arthur will reject with probability at least
\[
    p_{d,r}/2 \approx (1-d^2/(4r^2))^{n/2} \approx e^{-d^2 n/r^2}
    \; ,
\]
as in our discussion of the $\mathsf{coAM}$ protocol above.
By running this test, say, $n/p_{d,r}$ times, Arthur will reject with high probability in the CLOSE case.

Plugging in numbers, we take $r$ as large as we possibly can without violating completeness, so we take $r \approx \sqrt{\log N}$. Our final protocol then has a witness of size roughly $N$, an approximation factor of roughly $\sqrt{n}/d$, and a running time of roughly $N \cdot e^{d^2 n/r^2} \approx N \cdot e^{d^2 n/\log N}$. The most natural setting of parameters takes $d \approx \log N/\sqrt{n}$, which gives an approximation factor of $n/\log N$ in $\poly(N)$ time. (However, we note that the protocol is also potentially interesting in other parameter settings; e.g., one can obtain non-trivial approximation factors with relatively small communication size $N$ by allowing Arthur to run in more time. In contrast, our other protocols seem to require roughly as much communication as computation.) See \cref{sec:coMA}.

\subsubsection{Worst-case to average-case reductions for SIS}
\label{sec:techniques_SIS}

Our generalization of Micciancio and Regev's worst-case to average-case reduction for SIS comes nearly for free after all the work we did to develop our variant of Aharonov and Regev's $\mathsf{coNP}$ protocol (and our $\mathsf{coMA}$ protocol). In particular, Micciancio and Regev's worst-case to average-case reduction essentially shows how to use a $\SIS$ oracle to sample from $D_{\lat^*}$ (provided that $\lambda_1(\lat)$ is not too small). They then used this to generate the witness for Aharonov and Regev's protocol, allowing them to solve $\gamma$-$\SVP$. Our co-non-deterministic protocol also uses samples from $D_{\lat^*}$ as a witness (as does our $\mathsf{coMA}$ protocol). So, we are more-or-less able to use the exact same idea to obtain our generalization of Micciancio and Regev's result. (In fact we are able to work in the more general setting of Micciancio and Peikert \cite{micciancioHardnessSISLWE2013}, who showed a reduction that works for smaller moduli than~\cite{MR04}.)

However, our co-non-deterministic protocol is a bit more delicate than the protocol in~\cite{aharonovLatticeProblemsNP2005}. Specifically, our reduction really does need to produce samples from $D_{\lat^*}$ in the FAR case. In contrast,~\cite{MR04} (and, to our knowledge, all other worst-case to average-case reductions for $\SIS$) were only able to show how to use a $\SIS$ oracle to produce samples from some mixture of discrete Gaussian distributions with potentially different parameters (i.e., different standard deviations). At a technical level, this issue arises because the $\SIS$ oracle can potentially output vectors with different lengths, resulting in discrete Gaussian samples with different parameters.

We overcome this (annoying!) technical difficulty by showing how to control the parameter of the samples generated by the reduction, showing that a $\SIS$ oracle is in fact sufficient to produce samples from the distribution $D_{\lat^*}$ itself (provided that the smoothing parameter of $\lat^*$ is small enough). Our reduction mostly follows the elegant and well known reduction of Micciancio and Peikert \cite{micciancioHardnessSISLWE2013}. And, though the proof does not require substantial new ideas, we expect that the result will be useful in future work---as a reduction directly from discrete Gaussian sampling should be quite convenient. (See \cref{thm:DGStoSIS}.)

Finally, in order to get the best approximation factor that we can, we actually use our $\mathsf{coMA}$ protocol when the running time is large, rather than our co-non-deterministic protocol. E.g., our $\mathsf{coMA}$ protocol saves a factor of $\sqrt{\log n}$ in the approximation factor over our co-non-deterministic protocol in the important special case when the running time is $2^{\eps n}$. And, our worst-case to average-case reduction inherits this savings. See \cref{sec:SIS}.

\subsubsection{Worst-case to average-case reductions for LWE}

Recall that we show two worst-case to average-case reductions for $\LWE$. One is a \emph{quantum} reduction, following Regev~\cite{regevLatticesLearningErrors2009}. The other is a \emph{classical} reduction, following Peikert~\cite{peikertPublickeyCryptosystemsWorstcase2009}. In both cases, our modifications to prior work are surprisingly simple.

In the quantum case, the \emph{only} difference between our reduction and prior work is in a single step. Specifically, Regev's original quantum reduction is most naturally viewed as a reduction from $\BDD$ to $\LWE$. However, $\BDD$ is not nearly as well studied as $\SVP$. Regev therefore used elegant quantum computing tricks to obtain hardness directly from $\SVP$. However, Peikert~\cite{peikertPublickeyCryptosystemsWorstcase2009} and Lyubashevsky and Micciancio~\cite{conf/crypto/LyubashevskyM09} later showed a simple classical reduction from $\SVP$ to $\BDD$ that is perhaps best viewed as a version of the $\mathsf{coAM}$ protocol from~\cite{GG00} in which the $\BDD$ oracle is used to simulate Merlin. (This reduction was implicit in~\cite{peikertPublickeyCryptosystemsWorstcase2009} and made explicit in~\cite{conf/crypto/LyubashevskyM09}.) 

Using the same ideas that we used to generalize the $\mathsf{coAM}$ protocol from~\cite{GG00}, we show how to generalize this reduction from $\SVP$ to $\BDD$---showing that a better approximation factor is achievable if the reduction is allowed more running time. By composing this reduction with Regev's reduction from $\BDD$ to $\LWE$, we similarly show a time-approximation-factor tradeoff for $\LWE$.

To generalize Peikert's \emph{classical} reduction, we use the above idea and also make one other simple modification to the reduction. Specifically, Peikert showed how to use a sufficiently ``nice'' basis of the dual lattice $\lat^*$ to reduce $\BDD$ to $\LWE$, where the modulus of the $\LWE$ instance depends on how ``nice'' the basis is.\footnote{A larger modulus yields a larger approximation factor. This was fundamentally the reason why Peikert's classical reduction achieved a larger approximation factor than Regev's quantum reduction. Later work~\cite{brakerskiClassicalHardnessLearning2013} showed how to prove classical hardness of $\LWE$ with a smaller modulus. However, the \cite{brakerskiClassicalHardnessLearning2013} reduction works by reducing from the large-modulus case and incurs additional loss in the approximation factor. Therefore, it improves the modulus of Peikert's reduction without improving the approximation factor.} He then used the celebrated LLL algorithm~\cite{lenstraFactoringPolynomialsRational1982} to efficiently find a relatively nice basis. We simply plug in generalizations of \cite{lenstraFactoringPolynomialsRational1982} that obtain better bases in more time~\cite{schnorrHierarchyPolynomialTime1987,gamaFindingShortLattice2008}. (In fact, even in the polynomial-time regime, this improves on Peikert's approximation factor by a small polylogarthmic term See \cref{sec:LWE}.)

To achieve the best parameters, we rely on the ``direct-to-decision'' reduction of~\cite{PRS17} (in both the classical and quantum setting), allowing us to avoid the search-to-decision reductions that were used in work prior to~\cite{PRS17}. (Search-to-decision reductions that do not increase the approximation factor are known for some moduli, but for other moduli, the only known search-to-decision reductions incur a loss in the approximation factor.)

\section{Preliminaries}

We use boldface lower-case letters $\vec{z} \in \Z^n, \vec{x} \in \Q^n, \vec{y} \in \R^n$ to represent \emph{column} vectors. All logarithms are base $e$ unless specified otherwise. We write $\|\vec{x}\|$ for the $\ell_2$ norm of a vector, i.e., $\|\vec{x}\| := (x_1^2 + \cdots + x_n^2)^{1/2}$. When we work with other norms, we explicitly clarify this by writing $\|\vec{x}\|_K$.

\subsection{Lattices}

A lattice $\lat \subset \R^n$ is the set of all integer linear combinations of linearly independent basis vectors $\basis := (\vec{b}_1,\ldots, \vec{b}_n) \in \R^n$,
\[
    \lat = \{ z_1 \vec{b}_1 + \cdots + z_n \vec{b}_n \ : \ z_i \in \Z\}
    \; .
\]
We call $n$ the \emph{rank} of the lattice. When we wish to emphasize that $\lat$ is the lattice generated by $\basis$, we write $\lat = \lat(\basis)$.

We write $\lat_{\neq \vec0}$ for the set of lattice vectors excluding the zero vector, and we define
\[
    \lambda_1(\lat) := \min_{\vec{y} \in \lat_{\neq \vec0}} \|\vec{y}\|
\]
for the length of the shortest non-zero vector in the lattice.

Given a target vector $\vec{t} \in \R^n$, we write
\[
    \dist(\vec{t},\lat) := \min_{\vec{y} \in \lat} \|\vec{y} - \vec{t}\|
\]
for the distance between $\vec{t}$ and the lattice $\lat$.

The \emph{dual lattice} is
\[
    \lat^* := \{\vec{w} \in \R^n \ : \ \forall \vec{y} \in \lat,\ \langle \vec{w}, \vec{y} \rangle \in \Z \}
    \; ,
\]
i.e., the set of all vectors that have integer inner product with \emph{all} lattice vectors. It is not hard to see that $\lat^*$ is itself a lattice, that $(\lat^*)^* = \lat$, and that if $\lat = \lat(\basis)$, then $\lat^* = \lat((\basis^{-1})^T)$. In particular, for any $s > 0$, $(s \lat)^* = s^{-1} \lat^*$.

For a basis $\basis$ for a lattice $\lat \subset \R^n$, we write $\mathcal{P}(\basis) := \{\basis \vec{x} \ : \ \vec{x} \in [0,1)^n\}$ for the \emph{fundamental parallelepiped} defined by the basis.  For a point $\vec{t} \in \R^n$, we write $\vec{t}' := \vec{t} \bmod \mathcal{P}(\basis)$ for the unique element $\vec{t}' \in \mathcal{P}(\basis)$ such that $\vec{t} = \vec{t}' + \vec{y}$ for some lattice vector $\vec{y} \in \lat$. Equivalently, if $\vec{t} = \basis \vec{x}$, then $\vec{t} \bmod \mathcal{P}(\basis) = \basis(\vec{x} - \floor{\vec{x}})$, where $\floor{\cdot}$ represents the coordinate-wise floor function.

\subsection{Volume lower bounds}

\paragraph{Intersections of Euclidean Balls.}
We start by giving a nearly tight lower bound on the volume of the intersection of two unit Euclidean balls. We use calculations very similar to those appearing in~\cite{bcklp/lotsofnorms22}.

Let $\B_2^n := \set{\vec{x} \in \R^n : \norm{\vec{x}} \leq 1}$ denote the closed $n$-dimensional unit Euclidean ball; let $\Vball_n := \vol_n(\B_2^n) = (\pi^{n/2}/\Gamma(n/2 + 1))$, where $\vol_n$ denotes $n$-dimensional volume and $\Gamma$ is the $\Gamma$ function; and let $\Vcap_n(r, \theta)$ denote the volume of a spherical cap of angle $\theta \in [0, \pi/2]$ in an $n$-dimensional Euclidean ball of radius $r$. (The angle of a spherical cap of a Euclidean ball is the angle formed by (1) the line segment connecting the center of the  ball and the center of the cap and (2) a line segment connecting the center of the ball and a point on the boundary of the base of the cap.)
As shown, e.g., in~\cite{Li2010}, 
\begin{equation} \label{eq:Vcap}
\Vcap_n(r, \theta) = \Vball_{n-1} \cdot r^n \cdot \int_{0}^{\theta} \sin^n(\phi) \intd \phi \ \text{.}
\end{equation}
We next use a trick that we learned from Zhou~\cite{zhou} for lower bounding $\int_{0}^{\theta} \sin^n(\phi) \intd \phi$, which is as follows.
Because $\cos(\phi) \in [0, 1]$ for all $\phi \in [0, \pi/2]$, we have that for $\theta \in [0, \pi/2]$,
\begin{equation} \label{eq:Jn-lb}
\int_{0}^{\theta} \sin^n(\phi) \intd \phi \geq \int_{0}^{\theta} \sin^n(\phi) \cos(\phi) \intd \phi = \sin^{n+1}(\theta)/(n+1) \ \text{.}
\end{equation}
Combining \cref{eq:Vcap,eq:Jn-lb}, we get that for $r > 0$ and $\theta \in [0, \pi/2]$,
\begin{equation} \label{eq:Vcap-lb}
\Vcap_n(r, \theta) \geq \Vball_{n-1} \cdot r^n \cdot \sin^{n+1}(\theta)/(n+1) \ \text{.}
\end{equation}

\begin{lemma} \label{lem:normalized-l2-ball-intersection-vol}
For any $n \in \Z^+$, any $d$ satisfying $0 \leq d \leq 2$, and any $\vec{v} \in \R^n$ satisfying $\norm{\vec{v}} = d$,
\[
\frac{\vol_n(\B_2^n \cap (\B_2^n + \vec{v}))}{\vol_n(\B_2^n)} \geq \sqrt{1/(2 \pi n)} \cdot (1 - d^2/4)^{(n+1)/2} \ \text{.}
\]
\end{lemma}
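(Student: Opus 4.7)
The plan is to recognize that the intersection $\B_2^n \cap (\B_2^n + \vec{v})$ is a symmetric ``lens'' consisting of two congruent spherical caps glued together along the hyperplane $\{\vec{x} : \langle \vec{x}, \vec{v}/\norm{\vec{v}}\rangle = d/2\}$ that passes midway between the two centers. Each cap sits inside a unit ball, and the line from the center of the ball to the edge of the cap's base together with the axis forms an angle $\theta$ with $\cos\theta = d/2$, so $\sin\theta = \sqrt{1 - d^2/4}$. This reduces the problem to lower bounding $2\Vcap_n(1,\theta)$.

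Next I would invoke the inequality \cref{eq:Vcap-lb} with $r = 1$ to obtain
\[
\vol_n(\B_2^n \cap (\B_2^n + \vec{v})) \;\geq\; \frac{2 \Vball_{n-1}}{n+1} \cdot \sin^{n+1}(\theta) \;=\; \frac{2 \Vball_{n-1}}{n+1} \cdot (1 - d^2/4)^{(n+1)/2} \ \text{.}
\]
Dividing through by $\Vball_n$, the lemma reduces to showing the purely dimensional inequality
\[
\frac{2 \Vball_{n-1}}{(n+1)\Vball_n} \;\geq\; \sqrt{\frac{1}{2\pi n}} \ \text{.}
\]

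For this last step I would use the explicit formula $\Vball_n = \pi^{n/2}/\Gamma(n/2 + 1)$, which gives
\[
\frac{\Vball_{n-1}}{\Vball_n} \;=\; \frac{\Gamma(n/2 + 1)}{\sqrt{\pi}\,\Gamma((n+1)/2)} \ \text{,}
\]
and then apply a standard Gamma-ratio bound (e.g., Gautschi's inequality $\Gamma(x+1)/\Gamma(x+1/2) \geq \sqrt{x}$, specialized to $x = n/2$) to conclude that $\Gamma(n/2+1)/\Gamma((n+1)/2) \geq \sqrt{n/2}$. Substituting this in and simplifying reduces the target inequality to $2n \geq n+1$, which holds for every $n \geq 1$.

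I do not expect a serious obstacle here: the geometry is elementary, the volume-of-cap lower bound is already established in \cref{eq:Vcap-lb}, and the remaining work is a one-line Gamma-function estimate. The only place where one must be slightly careful is verifying that the Gamma-ratio bound used is valid for all positive integers $n$ (including $n = 1$, where the intersection of two intervals should be checked against the formula directly as a sanity check), and that the ``$+1$'' in the exponent $(n+1)/2$ does not spoil the comparison to $\sqrt{1/(2\pi n)}$ — both of which fall out of the calculation above.
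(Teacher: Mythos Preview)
Your proposal is correct and follows essentially the same route as the paper: decompose the lens into two congruent caps with $\cos\theta = d/2$, apply the cap-volume lower bound \cref{eq:Vcap-lb}, and finish with Gautschi's inequality on $\Vball_{n-1}/\Vball_n$. The only cosmetic difference is that the paper first weakens $2/(n+1)$ to $1/n$ and then applies Gautschi, whereas you carry $2/(n+1)$ through and verify $2n \geq n+1$ at the end.
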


\begin{proof}
Because the balls $\B_2^n$ and $\B_2^n + \vec{v}$ are congruent, the volume $\vol_n(\B_2^n \cap (\B_2^n + \vec{v}))$ is equal to the volume of the union of two disjoint, congruent spherical caps of $\B_2^n$. (See \cref{figure:distributions}.) Using the law of cosines, it is straightforward to show that these caps each have angle $\arccos(d/2)$, and therefore that
\begin{equation} \label{eq:intersection-as-caps}
\vol_n(\B_2^n \cap (\B_2^n + \vec{v})) = 2 \Vcap_n(1, \arccos(d/2)) \ \text{.}
\end{equation}
Combining \cref{eq:Vcap-lb,eq:intersection-as-caps}, and using the trigonometric identity $\sin(\arccos(x)) = \sqrt{1 - x^2}$ and the fact that $n \geq 1$, we additionally have that
\begin{equation} \label{eq:norm-ball-int-lb}
\frac{\vol_n(\B_2^n \cap (\B_2^n + \vec{v}))}{\vol_n(\B_2^n)} \geq \frac{2 \Vball_{n-1} \cdot \sin^{n+1}(\arccos(d/2))}{(n+1) \cdot \Vball_n} 
\geq \frac{\Vball_{n-1} \cdot (1 - d^2/4)^{(n+1)/2}}{n \cdot \Vball_n} \ \text{.}
\end{equation}
Furthermore, using Gautschi's inequality~\cite{gautschis-inequality}, we have that
\begin{equation} \label{eq:Vnm1-Vn-ratio-lb}
\frac{\Vball_{n-1}}{\Vball_{n}}
= \frac{\Gamma(n/2 + 1)}{\sqrt{\pi} \cdot \Gamma(n/2 + 1/2)} \geq \sqrt{n/(2\pi)} \ \text{.}
\end{equation}
The lemma then follows by combining \cref{eq:norm-ball-int-lb,eq:Vnm1-Vn-ratio-lb}.
\end{proof}

\paragraph{Intersections of pairs of arbitrary convex bodies.} A set $K \subseteq \R^n$ is called a \emph{centrally symmetric convex body} if $K$ is compact, convex, and symmetric (i.e., if $K = -K$). There is a one-to-one correspondence between centrally symmetric convex bodies $K$ and the norms $\norm{\cdot}_K$ that they induce. For $\vec{x} \in \R^n$, we define
\[
\norm{\vec{x}}_K := \min \set{r > 0 : \vec{x} \in r K} \ \text{.}
\]

In the following lemma, we prove an analog of \cref{lem:normalized-l2-ball-intersection-vol} for general centrally symmetric convex bodies $K$.
\begin{lemma} \label{lem:normalized-K-intersection-vol}
Let $n \in \Z^+$, let $K \subseteq \R^n$ be a centrally symmetric convex body, let $d$ satisfy $0 \leq d \leq 2$, and let $\vec{v} \in \R^n$ be such that $\norm{\vec{v}}_K = d$. Then
\[
\frac{\vol_n(K \cap (K + \vec{v}))}{\vol_n(K)} \geq (1 - d/2)^n \ \text{.}
\]
\end{lemma}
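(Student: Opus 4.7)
The plan is to prove this by exhibiting an explicit scaled and shifted copy of $K$ that sits inside $K \cap (K + \vec{v})$. Specifically, I would show that the body $(1 - d/2) K + \vec{v}/2$, which is a $(1-d/2)$-scaling of $K$ centered at $\vec{v}/2$, is contained in $K \cap (K + \vec{v})$. Once this containment is established, the lemma follows immediately by taking volumes, since $\vol_n((1-d/2) K + \vec{v}/2) = (1 - d/2)^n \vol_n(K)$.

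To establish the containment, I would take an arbitrary $\vec{x} \in (1 - d/2) K + \vec{v}/2$ and write $\vec{x} = (1-d/2) \vec{y} + \vec{v}/2$ for some $\vec{y} \in K$. Using the hypothesis $\|\vec{v}\|_K = d$, we have $\vec{v}/d \in K$, so I can rewrite $\vec{x} = (1 - d/2) \vec{y} + (d/2)(\vec{v}/d)$, which expresses $\vec{x}$ as a convex combination of two points in $K$ (valid since $0 \leq d \leq 2$ makes both coefficients nonnegative and they sum to $1$). Convexity of $K$ then gives $\vec{x} \in K$. For membership in $K + \vec{v}$, I would write $\vec{x} - \vec{v} = (1 - d/2) \vec{y} + (d/2)(-\vec{v}/d)$; here central symmetry of $K$ ensures $-\vec{v}/d \in K$, so again convexity gives $\vec{x} - \vec{v} \in K$, i.e., $\vec{x} \in K + \vec{v}$.

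Both ingredients of the argument are essential and both hypotheses on $K$ are used: convexity of $K$ to form the convex combinations inside $K$, and central symmetry of $K$ to ensure $-\vec{v}/d \in K$ in the second containment. The condition $d \leq 2$ is exactly what makes the convex combination coefficients lie in $[0,1]$. Since the proof is elementary and the containment is explicit, there is no real obstacle; the only thing to be careful about is the bookkeeping that verifies both convex combinations are genuine convex combinations in $K$. This argument also recovers the intuition that \cref{lem:normalized-l2-ball-intersection-vol} tries to capture (with a somewhat weaker quantitative bound than the Euclidean case, since here we do not get to use the curvature of the $\ell_2$ ball).
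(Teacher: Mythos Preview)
Your proposal is correct and takes essentially the same approach as the paper: both exhibit the body $(1-d/2)K + \vec{v}/2$ inside $K \cap (K+\vec{v})$ and then compare volumes. The paper phrases the containment check via the triangle inequality for $\|\cdot\|_K$ rather than explicit convex combinations, but this is the same argument; just note that your use of $\vec{v}/d$ tacitly assumes $d>0$, with the case $d=0$ being trivial.
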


\begin{proof}
We claim that $(1 - d/2) K + \vec{v}/2 \subseteq K \cap (K + \vec{v})$, from which the lemma follows. Indeed, by shift invariance and scaling properties of volume, $\vol_n((1 - d/2) K + \vec{v}/2) = \vol_n((1 - d/2) K) = (1 - d/2)^n \cdot \vol_n(K)$.

It remains to prove the claim. Let $\vec{x} \in (1 - d/2) K + \vec{v}/2$. By applying triangle inequality twice, we have that 
$\norm{\vec{x}}_K \leq \norm{\vec{x} - \vec{v}/2}_K + \norm{\vec{v}/2}_K = (1 - d/2) + d/2 = 1$ and that $\norm{\vec{x} - \vec{v}} \leq \norm{\vec{x} - \vec{v}/2}_K + \norm{\vec{v}/2 - \vec{v}}_K = (1 - d/2) + d/2 = 1$. The claim follows.
\end{proof}

We remark that although \cref{lem:normalized-K-intersection-vol} is somewhat quantitatively weaker than \cref{lem:normalized-l2-ball-intersection-vol}---it lower bounds the normalized volume of the intersection of two bodies by $(1 - d/2)^n$ instead of roughly $(1 - d^2/4)^{n/2}$---it is in general tight. In particular, it is tight for all $d$ with $0 \leq d \leq 2$ when $K$ is a (hyper)cube, i.e., when $K$ is the $\ell_{\infty}$ unit ball.

\subsection{Gaussians and the discrete Gaussian}

		For $\vec{x} \in \R^n$ and $s > 0$, we write
		\[
			\rho_s(\vec{x}) = \exp(-\pi \|\vec{x}\|^2/s^2)
			\; .
		\]
		For a lattice $\lat \subset \R^n$ and shift $\vec{t} \in \R^n$, we write
		\begin{align*}
		 \rho_s(\lat) &= \sum_{\vec{x} \in \lat} \rho_s(\vec{x})\;\text{, and}\\
		 \rho_s(\lat - \vec{t}) &= \sum_{\vec{x} \in \lat - \vec{t}} \rho_s(\vec{x}) = \sum_{\vec{y} \in \lat} \rho_s(\vec{y} - \vec{t}) \;.
		\end{align*}
		We write
		\[
			f_{\lat, s}(\vec{t}) := \frac{\rho_s(\lat - \vec{t})}{ \rho_s(\lat)}
			\; .
		\]
		Unless there is confusion, we omit the parameter $\lat$ and simply write $f_s(\vec{t})$. Finally, we write $D_{\lat, s}$ for the discrete Gaussian, which is the distribution induced by the measure $\rho_s$ on $\lat$. I.e., for $\vec{y} \in \lat$,
		\[
		\Pr_{\vec{X} \sim D_{\lat, s}}[\vec{X} = \vec{y}] = \frac{\rho_s(\vec{y})}{\rho_s(\lat)}
		\; .
		\]
		
		When $s = 1$, we omit it and simply write $\rho(\vec{x})$, $\rho(\lat - \vec{t})$, $f(\vec{t})$, and $D_{\lat}$ respectively.
		
		For a lattice $\lat \subset \R^n$ and $\eps > 0$, the \emph{smoothing parameter} $\eta_\eps(\lat)$ (introduced in~\cite{MR04}) is the unique parameter satisfying $\rho_{1/\eta_\eps(\lat)}(\lat^*) = 1 + \eps$. Equivalently, $\rho_{1/s}(\lat^*) \leq 1+\eps$ if and only if $s \geq \eta_\eps(\lat)$.

		We will need the following two facts about the Gaussian, both proven by Banaszczyk~\cite{banaszczykNewBoundsTransference1993}.

		\begin{theorem}[{\cite{banaszczykNewBoundsTransference1993}}]
	\label{thm:banaszczyk_shift_big}
			For any lattice $\lat \subset \R^n$, shift $\vec{t} \in \R^n$, and parameter $s > 0$,
			\[
			\rho_s(\dist(\vec{t}, \lat)) \leq f_s(\vec{t})  \leq 1
			\; .
			\]
		\end{theorem}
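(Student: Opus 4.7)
The two bounds require quite different techniques, so I would handle them separately. By scaling (replacing $\lat$ with $\lat/s$ and $\vec{t}$ with $\vec{t}/s$) it suffices to treat the case $s = 1$.

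For the upper bound $f(\vec{t}) \leq 1$, the plan is to apply the Poisson summation formula to the (Schwartz) function $\rho$. Using that $\rho$ is (up to normalization) its own Fourier transform, this yields
\[
\rho(\lat - \vec{t}) = \frac{1}{\det(\lat)} \sum_{\vec{w} \in \lat^*} \rho(\vec{w}) \, e^{2\pi i \langle \vec{w}, \vec{t}\rangle}.
\]
Because the sum is real-valued (the $\vec{w}$ and $-\vec{w}$ terms are complex conjugates), we may replace the exponential by $\cos(2\pi \langle \vec{w}, \vec{t}\rangle)$. The bound $|\cos(\cdot)| \leq 1$ together with positivity of $\rho$ then gives $\rho(\lat - \vec{t}) \leq \rho(\lat^*)/\det(\lat)$, and specializing to $\vec{t} = \vec{0}$ shows that the right-hand side equals $\rho(\lat)$. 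Dividing by $\rho(\lat)$ yields $f(\vec{t}) \leq 1$.

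For the lower bound, I would first use lattice-periodicity of $\rho(\lat - \vec{t})$ to replace $\vec{t}$ by $\vec{t} - \vec{y}^*$, where $\vec{y}^* \in \lat$ is a closest lattice vector to $\vec{t}$, so that without loss of generality $\|\vec{t}\| = \dist(\vec{t}, \lat)$. I would then pair up each $\vec{y} \in \lat$ with $-\vec{y} \in \lat$ (which also lies in $\lat$ since lattices are symmetric). Expanding $\|\pm \vec{y} - \vec{t}\|^2 = \|\vec{y}\|^2 \mp 2\langle \vec{y}, \vec{t}\rangle + \|\vec{t}\|^2$, we obtain
\[
\rho(\vec{y} - \vec{t}) + \rho(-\vec{y} - \vec{t}) = 2 \rho(\vec{t}) \rho(\vec{y}) \cosh(2\pi \langle \vec{y}, \vec{t}\rangle) \geq 2 \rho(\vec{t}) \rho(\vec{y}),
\]
using $\cosh(x) \geq 1$. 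Summing this inequality over $\vec{y} \in \lat$ (with the pair $(\vec{y}, -\vec{y})$ contributing the left-hand side) gives $\rho(\lat - \vec{t}) \geq \rho(\vec{t}) \rho(\lat)$, i.e., $f(\vec{t}) \geq \rho(\|\vec{t}\|) = \rho(\dist(\vec{t}, \lat))$.

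The main conceptual step is the upper bound, which relies on Poisson summation—this is the one place where a nontrivial lattice-theoretic identity enters. The lower bound, by contrast, is essentially the elementary observation that a sum of two Gaussians is log-convex, so symmetrizing around $\vec{0}$ can only increase the total mass relative to the symmetric baseline. I would expect the write-up to spend the bulk of its effort making the Poisson summation step rigorous (justifying the absolute convergence that allows reordering of the sum), while the pairing argument is a few lines.
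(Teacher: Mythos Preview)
Your proof is correct. Note, however, that the paper does not actually prove this theorem---it is stated as a known result and attributed to Banaszczyk~\cite{banaszczykNewBoundsTransference1993} without proof. Your argument is the standard one: the upper bound via Poisson summation (which the paper records as \cref{eq:PSF_Gaussian}) together with $|\cos|\le 1$, and the lower bound via the symmetrization/$\cosh$ trick. One minor stylistic point: in the pairing step it is cleanest to write $\sum_{\vec{y}\in\lat}\rho(\vec{y}-\vec{t}) = \tfrac{1}{2}\sum_{\vec{y}\in\lat}\bigl(\rho(\vec{y}-\vec{t})+\rho(-\vec{y}-\vec{t})\bigr)$ using the bijection $\vec{y}\mapsto -\vec{y}$, which avoids any ambiguity about double-counting the zero vector.
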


		\begin{theorem}[{\cite{banaszczykNewBoundsTransference1993}}]
			\label{thm:banaszczyk_tail}
			For any lattice $\lat \subset \R^n$, shift $\vec{t} \in \R^n$, parameter $s > 0$, and $t \geq 1/\sqrt{2\pi}$,
			\[
			\sum_{\stackrel{\vec{x} \in \lat  - \vec{t}}{\|\vec{x}\| \geq st\sqrt{n}} } \rho_s(\vec{x}) \leq (2\pi e t^2)^{n/2} e^{-\pi t^2 n} \rho_s(\lat)
			\; .
			\]
		\end{theorem}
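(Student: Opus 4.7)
}

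The plan is to prove this via the standard exponential Markov trick, reducing everything to the shift inequality (Theorem~\ref{thm:banaszczyk_shift_big}) and Poisson summation. First, by the substitution $\vec{x} \mapsto s \vec{x}$, which sends $\lat - \vec{t}$ to $s^{-1}(\lat - \vec{t})$ and rescales $\rho_s$ to $\rho$, I can reduce to the case $s = 1$. So the goal becomes
\[
\sum_{\substack{\vec{x} \in \lat - \vec{t} \\ \|\vec{x}\| \geq t\sqrt{n}}} \rho(\vec{x}) \;\leq\; (2\pi e t^2)^{n/2} e^{-\pi t^2 n}\, \rho(\lat).
\]

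Second, I apply the exponential Markov inequality with a parameter $a \in (0, \pi)$ to be optimized: since $\|\vec{x}\|^2 \geq t^2 n$ on the range of summation, we have $1 \leq e^{a(\|\vec{x}\|^2 - t^2 n)}$ there, so
\[
\sum_{\substack{\vec{x} \in \lat - \vec{t} \\ \|\vec{x}\| \geq t\sqrt{n}}} \rho(\vec{x}) \;\leq\; e^{-a t^2 n} \sum_{\vec{x} \in \lat - \vec{t}} e^{a\|\vec{x}\|^2} \rho(\vec{x}) \;=\; e^{-a t^2 n}\, \rho_{s'}(\lat - \vec{t}),
\]
where I have set $s' := \sqrt{\pi/(\pi - a)} > 1$, so that $e^{a\|\vec{x}\|^2} e^{-\pi\|\vec{x}\|^2} = e^{-\pi \|\vec{x}\|^2/s'^2} = \rho_{s'}(\vec{x})$.

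Third, I remove the shift and the rescaling. Theorem~\ref{thm:banaszczyk_shift_big} (with parameter $s'$) gives $\rho_{s'}(\lat - \vec{t}) \leq \rho_{s'}(\lat)$. To relate $\rho_{s'}(\lat)$ to $\rho(\lat)$, I invoke the Poisson summation formula in the form $\rho_{s'}(\lat) = s'^n \det(\lat)^{-1}\, \rho_{1/s'}(\lat^*)$; since $s' > 1$ and each term $e^{-\pi s'^2 \|\vec{w}\|^2}$ is bounded above by $e^{-\pi \|\vec{w}\|^2}$, I get $\rho_{1/s'}(\lat^*) \leq \rho(\lat^*)$, and dividing by the analogous identity at $s = 1$ yields $\rho_{s'}(\lat) \leq s'^n\, \rho(\lat)$. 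Combining the three inequalities:
\[
\sum_{\substack{\vec{x} \in \lat - \vec{t} \\ \|\vec{x}\| \geq t\sqrt{n}}} \rho(\vec{x}) \;\leq\; e^{-a t^2 n}\, s'^n\, \rho(\lat) \;=\; e^{-a t^2 n} \left(\frac{\pi}{\pi - a}\right)^{n/2} \rho(\lat).
\]

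Finally, I optimize over $a \in (0, \pi)$. Taking the derivative of the log of the factor in front of $\rho(\lat)$ with respect to $a$ gives the stationary point $\pi - a = 1/(2t^2)$, i.e., $a = \pi - 1/(2t^2)$. This lies in $(0, \pi)$ precisely when $t > 1/\sqrt{2\pi}$, matching the hypothesis. Plugging in gives $a t^2 n = \pi t^2 n - n/2$ and $(\pi/(\pi - a))^{n/2} = (2\pi t^2)^{n/2}$, so the bound becomes $(2\pi e t^2)^{n/2} e^{-\pi t^2 n} \rho(\lat)$, as desired. The only nontrivial step is the Poisson-summation bound $\rho_{s'}(\lat) \leq s'^n \rho(\lat)$; the rest is bookkeeping and optimization.
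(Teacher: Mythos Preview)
Your proof is correct and is essentially Banaszczyk's original argument. Note, however, that the paper does not give its own proof of this statement: Theorem~\ref{thm:banaszczyk_tail} is simply quoted from~\cite{banaszczykNewBoundsTransference1993} without proof (only the easy Corollary~\ref{cor:banaszczyk_tail} is derived from it). So there is no ``paper's proof'' to compare against; what you have written is the standard derivation. One tiny remark: your optimization yields $a \in (0,\pi)$ only for $t > 1/\sqrt{2\pi}$, whereas the theorem allows $t = 1/\sqrt{2\pi}$; at that boundary the right-hand side equals $\rho(\lat)$ and the inequality reduces to the shift bound $\rho(\lat - \vec{t}) \le \rho(\lat)$, so the endpoint is covered directly (or by continuity).
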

		
		We will actually be content with the following corollary, which is much easier to use.
		
		\begin{corollary}\label{cor:banaszczyk_tail}
		    For any lattice $\lat \subset \R^n$, shift $\vec{t} \in \R^n$, parameter $s > 0$, and radius $r \geq s \sqrt{n/(2\pi)}$,
		    \[
		        \Pr_{\vec{y} \sim D_{\lat, s}}[\|\vec{y}\| \geq r] \leq e^{-\pi x^2}
		        \; ,
		    \]
		    where $x := r/s - \sqrt{n/(2\pi})$.
		    
		    In particular, $\eta_{2^{-n}}(\lat) \leq \sqrt{n}/\lambda_1(\lat^*)$.
		\end{corollary}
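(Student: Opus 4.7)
The plan is to apply Theorem~\ref{thm:banaszczyk_tail} with $t := r/(s\sqrt{n})$, so that $st\sqrt{n} = r$; the hypothesis $r \geq s\sqrt{n/(2\pi)}$ is exactly the condition $t \geq 1/\sqrt{2\pi}$ needed to invoke the theorem. Taking the shift there to be $\vec{0}$ and dividing through by $\rho_s(\lat)$ yields
\[
\Pr_{\vec{y} \sim D_{\lat,s}}[\|\vec{y}\| \geq r] \;\leq\; (2\pi e\, t^2)^{n/2}\, e^{-\pi t^2 n}\,.
\]
(The ``shift $\vec{t}$'' appearing in the corollary's statement plays no role, since the theorem's right-hand side is normalized by $\rho_s(\lat)$ rather than $\rho_s(\lat - \vec{t})$; the shifted case would generally not give a bound of this form.)

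The next step is to show $(2\pi e\, t^2)^{n/2} e^{-\pi t^2 n} \leq e^{-\pi x^2}$ for $x := t\sqrt{n} - \sqrt{n/(2\pi)}$. Taking logarithms, dividing by $n$, and expanding $-\pi(t - 1/\sqrt{2\pi})^2 = -\pi t^2 + t\sqrt{2\pi} - 1/2$, the bound reduces to the one-variable inequality
\[
\tfrac{1}{2}\log(2\pi e\, t^2) \;\leq\; t\sqrt{2\pi} - \tfrac{1}{2} \qquad (t \geq 1/\sqrt{2\pi})\,.
\]
I would verify this by noting that at the endpoint $t_0 = 1/\sqrt{2\pi}$ the two sides agree in value (both equal $\tfrac{1}{2}$) and in derivative (both equal $\sqrt{2\pi}$), and that the left-hand side is concave in $t$ on $(0, \infty)$ while the right-hand side is linear. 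Concavity then forces the inequality globally for $t \geq t_0$.

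For the ``in particular'' statement, I would not route through the corollary itself but instead apply Theorem~\ref{thm:banaszczyk_tail} directly to $\lat^*$, with parameter $s' := 1/s = \lambda_1(\lat^*)/\sqrt{n}$, shift $\vec{0}$, and $t := 1$. Since $s' t \sqrt{n} = \lambda_1(\lat^*)$, the sum on the left-hand side captures \emph{every} nonzero dual vector, giving
\[
\rho_{1/s}(\lat^*) - 1 \;\leq\; c \cdot \rho_{1/s}(\lat^*)\,, \qquad c := (2\pi e)^{n/2} e^{-\pi n}\,.
\]
Rearranging gives $\rho_{1/s}(\lat^*) - 1 \leq c/(1-c)$. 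A direct numerical check shows $c \leq 2^{-(n+1)}$ for all $n \geq 1$ (since $\tfrac{1}{2}\log(2\pi e) - \pi \approx -1.72$, which is more negative than $-(n+1)(\log 2)/n$ for every $n \geq 1$), and this in turn implies $c/(1-c) \leq 2^{-n}$. Therefore $\rho_{1/s}(\lat^*) \leq 1 + 2^{-n}$, so by the definition of the smoothing parameter, $\eta_{2^{-n}}(\lat) \leq s = \sqrt{n}/\lambda_1(\lat^*)$.

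The only piece that requires any real care is the algebraic step $\log(2\pi e\, t^2) \leq 2t\sqrt{2\pi} - 1$, and even that is essentially automatic once one observes that the tight point $t_0 = 1/\sqrt{2\pi}$ of Banaszczyk's bound is simultaneously the endpoint of the hypothesis and the point of tangency between the concave left-hand side and the affine right-hand side.
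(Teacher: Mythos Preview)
Your proof is correct and essentially the same as the paper's. For the main inequality, the paper substitutes $x$ and invokes $(1+y)\le e^y$, whereas you substitute $t$ and invoke concavity of $\log$; these are literally the same inequality in different variables (your $\tfrac12\log(2\pi e t^2)\le t\sqrt{2\pi}-\tfrac12$ reduces to $\log(1+y)\le y$ with $y=\sqrt{2\pi/n}\,x$). You additionally supply a clean proof of the ``in particular'' clause, which the paper leaves implicit, and you correctly note that the stray ``shift $\vec t$'' in the hypotheses is unused.
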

		\begin{proof}
		    From \cref{thm:banaszczyk_tail}, we have
		    \begin{align*}
		        \Pr[[\|\vec{y}\| \geq r] 
		            &\leq (2\pi e r^2/(n s^2))^{n/2} e^{-\pi r^2/s^2} \\
		            &= e^{-\pi x^2 - \sqrt{2\pi n} x} \cdot (1+\sqrt{2\pi/n} \cdot x)^{n}\\
		           &\leq e^{-\pi x^2}
		           \; ,
		    \end{align*}
		    where we have used the simple identity $(2\pi e r^2/(n s^2)) = (1+ \sqrt{2\pi/n} \cdot x)^2$ and the inequality $(1+y) \leq e^y$.
		\end{proof}
		
		We will also need the following result from \cite{micciancioTrapdoorsLatticesSimpler2012}.
		
		\begin{lemma}[{\cite[Lemma 2.8]{micciancioTrapdoorsLatticesSimpler2012}}]
			\label{lem:subgaussianity}
			For any lattice $\lat \subset \R^n$ and parameter $s > 0$, the discrete Gaussian $D_{\lat, s}$ is subgaussian with parameter $s$. I.e., for any unit vector $\vec{v} \in S^{n-1}$ and $r > 0$,
			\[
			\Pr_{\vec{y} \sim D_{\lat, s}}[\inner{\vec{y}, \vec{v}} \geq r] \leq \exp(-\pi r^2/s^2)
			\; .
			\]
			Furthermore,
			\[
			\Pr_{\vec{y} \sim D_{\lat, s}}[|\inner{\vec{y}, \vec{v}}| \geq r] \leq 2\exp(-\pi r^2/s^2)
			\; .
			\]
		\end{lemma}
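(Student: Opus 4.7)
The plan is to prove this via the standard moment generating function (Chernoff-style) approach, using Banaszczyk's shift bound \cref{thm:banaszczyk_shift_big} as the key ingredient.

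First I would fix a unit vector $\vec{v} \in S^{n-1}$ and a parameter $\tau > 0$ to be optimized later, and bound the moment generating function of the one-dimensional projection $\inner{\vec{y}, \vec{v}}$ under $\vec{y} \sim D_{\lat, s}$. Writing the expectation out and completing the square in the exponent,
\[
-\pi \|\vec{y}\|^2/s^2 + \tau \inner{\vec{y}, \vec{v}} = -\frac{\pi}{s^2}\Bigl\|\vec{y} - \tfrac{\tau s^2}{2\pi}\vec{v}\Bigr\|^2 + \frac{\tau^2 s^2}{4\pi} \; ,
\]
the sum over $\vec{y} \in \lat$ factors as $e^{\tau^2 s^2/(4\pi)} \cdot \rho_s(\lat - \tfrac{\tau s^2}{2\pi}\vec{v})$. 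Dividing by $\rho_s(\lat)$ to obtain the expectation gives
\[
\expect_{\vec{y} \sim D_{\lat, s}}\bigl[e^{\tau \inner{\vec{y}, \vec{v}}}\bigr] = e^{\tau^2 s^2/(4\pi)} \cdot f_s\bigl(\tfrac{\tau s^2}{2\pi}\vec{v}\bigr) \leq e^{\tau^2 s^2/(4\pi)} \; ,
\]
where the inequality is exactly the upper bound $f_s(\vec{t}) \leq 1$ from \cref{thm:banaszczyk_shift_big}. This is the crux of the proof and also the only step where something nontrivial happens; everything else is bookkeeping.

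Next I would apply Markov's inequality in the standard Chernoff way: for any $\tau > 0$,
\[
\Pr_{\vec{y} \sim D_{\lat, s}}[\inner{\vec{y}, \vec{v}} \geq r] \leq e^{-\tau r}\cdot \expect\bigl[e^{\tau \inner{\vec{y}, \vec{v}}}\bigr] \leq \exp\bigl(-\tau r + \tau^2 s^2/(4\pi)\bigr) \; .
\]
Optimizing over $\tau$ (setting the derivative to zero gives $\tau = 2\pi r/s^2$) yields exactly the bound $\exp(-\pi r^2/s^2)$, which is the first claim.

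Finally, for the two-sided bound, I would observe that the lattice $\lat$ is symmetric under negation, and hence the distribution $D_{\lat, s}$ is symmetric ($-\vec{y}$ has the same distribution as $\vec{y}$). Therefore applying the one-sided bound to both $\vec{v}$ and $-\vec{v}$ and union-bounding gives
\[
\Pr\bigl[|\inner{\vec{y}, \vec{v}}| \geq r\bigr] \leq \Pr[\inner{\vec{y}, \vec{v}} \geq r] + \Pr[\inner{\vec{y}, -\vec{v}} \geq r] \leq 2\exp(-\pi r^2/s^2) \; ,
\]
as desired. There is no real obstacle here: the whole argument hinges on Banaszczyk's inequality $f_s(\vec{t}) \leq 1$, which lets the completed-square shift be discarded for free.
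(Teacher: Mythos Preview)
The paper does not prove this lemma itself; it is simply cited from \cite{micciancioTrapdoorsLatticesSimpler2012} without proof. Your argument is correct and is essentially the standard proof: complete the square to express the MGF of $\inner{\vec{y},\vec{v}}$ as $e^{\tau^2 s^2/(4\pi)} f_s(\vec{t})$ for a suitable shift $\vec{t}$, invoke $f_s \leq 1$ from \cref{thm:banaszczyk_shift_big}, and optimize the Chernoff bound.
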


The next three results are useful for building algorithms that work with the discrete Gaussian distribution. The first is (a special case of) the convolution theorem of Micciancio and Peikert.

\begin{theorem}[{\cite[Theorem 3.3]{micciancioHardnessSISLWE2013}}]
\label{thm:convolution}
Let $\lat$ be an $n$-dimensional lattice. Fix $\eps > 0$ and  $\vec{z} \in \{-1, 0, 1\}^m$. For $1 \leq i \leq m$, let $s_i \geq \sqrt{2} \| \vec{z} \|_\infty \cdot \eta_\eps(\lat)$, and let $\vec{y_i}$ be independently sampled from $D_{\lat,  s_i}$.
Then the distribution of $\vec{y} := \sum_i z_i \vec{y_i}$ has statistical distance at most $\eps$ from $D_{\lat, s}$, where $s := \sqrt{\sum_i (z_i s_i)^2}$.
\end{theorem}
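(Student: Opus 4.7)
The plan is to show directly that for every $\vec{w} \in \lat$ the probability $\Pr[\vec{y} = \vec{w}]$ equals $(1 \pm O(\eps))$ times $D_{\lat, s}(\vec{w})$; summing the deviations over $\vec{w}$ then yields the claimed statistical distance bound. Since $\rho_{s_i}$ is even and $\lat = -\lat$, the random variable $-\vec{y}_i$ has the same distribution as $\vec{y}_i$, so signs $z_i = -1$ can be absorbed and indices with $z_i = 0$ can be dropped. We may therefore assume every surviving $z_i$ equals $+1$, so $\|\vec{z}\|_\infty = 1$ and the hypothesis becomes $s_i \geq \sqrt{2}\,\eta_\eps(\lat)$ for each $i$, while $s^2 = \sum_i s_i^2$.

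The central algebraic ingredient is the ``Pythagorean'' decomposition of Gaussian exponents: for any $\vec{y}_1, \ldots, \vec{y}_m \in \R^n$ summing to $\vec{w}$,
\[
\sum_{i=1}^m \|\vec{y}_i\|^2/s_i^2 \;=\; \|\vec{w}\|^2/s^2 \;+\; Q(\vec{u}_1, \ldots, \vec{u}_{m-1}),
\]
where $\vec{u}_i := \vec{y}_i - (s_i^2/s^2)\vec{w}$ and $Q$ is a positive-definite quadratic form on $\R^{n(m-1)}$ independent of $\vec{w}$. (The cross-terms vanish because $\vec{y}_i^\star := (s_i^2/s^2)\vec{w}$ minimises the left-hand side subject to the constraint.) Setting $\vec{y}_m = \vec{w} - \sum_{i<m}\vec{y}_i$ and summing over $\vec{y}_1,\ldots,\vec{y}_{m-1} \in \lat$ yields
\[
\Pr[\vec{y} = \vec{w}] \;=\; \frac{\rho_s(\vec{w})}{\prod_i \rho_{s_i}(\lat)} \cdot \sum_{(\vec{y}_1, \ldots, \vec{y}_{m-1}) \in \lat^{m-1}} e^{-\pi Q(\vec{y}_1 - \vec{t}_1,\, \ldots,\, \vec{y}_{m-1} - \vec{t}_{m-1})},
\]
where $\vec{t}_i := (s_i^2/s^2)\vec{w}$.

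The main obstacle is to show that the inner shifted Gaussian sum depends on $\vec{w}$ only through a multiplicative factor $1 \pm O(\eps)$, i.e., that $\lat^{m-1}$ is ``smooth'' with respect to the inner product induced by $Q$. After the linear change of variables that turns $Q$ into the standard inner product, this reduces to bounding the Gaussian mass of the dual of an associated lattice $\mathcal{M} \subset \R^{n(m-1)}$; by the definition of $\eta_\eps$, it suffices to verify $\rho(\mathcal{M}^*) \leq 1 + \eps$. An explicit block-matrix inversion of the Gram matrix of $Q$ identifies $\mathcal{M}^*$ as a specific linear image of $(\lat^*)^{m-1}$ whose directional scalings are controlled by $1/s_i$, and the hypothesis $s_i \geq \sqrt{2}\,\eta_\eps(\lat)$ is calibrated exactly so that every ``effective parameter'' in this dual lattice is at most $1/\eta_\eps(\lat)$. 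The $\sqrt 2$ factor is the one that already appears in the transparent $m=2$ identity $\rho_{s_1}(\vec{v})\,\rho_{s_2}(\vec{w} - \vec{v}) = \rho_s(\vec{w})\,\rho_{s'}(\vec{v} - (s_1^2/s^2)\vec{w})$ with $1/s'^2 = 1/s_1^2 + 1/s_2^2$, where $s' \geq \eta_\eps(\lat)$ iff $s_i \geq \sqrt{2}\,\eta_\eps(\lat)$.

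Combining everything, $\Pr[\vec{y} = \vec{w}] = (1 \pm O(\eps)) \cdot C \cdot \rho_s(\vec{w})$ for a constant $C$ independent of $\vec{w}$; normalising by summing over $\vec{w} \in \lat$ forces $C = (1 \pm O(\eps))/\rho_s(\lat)$, so $\Pr[\vec{y} = \vec{w}] = (1 \pm O(\eps))\, D_{\lat, s}(\vec{w})$ and the statistical distance is $O(\eps)$. A careful tracking of the $(1 \pm \eps)$ factors---noting that the ratio $\Pr[\vec{y} = \vec{w}]/D_{\lat,s}(\vec{w})$ only invokes the smoothing inequality once in the numerator and once in the normalising constant, with the same underlying lattice---tightens the constant to the bound $\eps$ stated in the theorem.
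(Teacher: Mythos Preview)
The paper does not actually prove this theorem: its entire proof reads ``The statement is a special case of \cite[Theorem 3.3]{micciancioHardnessSISLWE2013}, except that the guarantee on statistical distance has been made explicit.'' So there is no argument to compare against; the authors are using the convolution theorem as a black box and only noting that the $\eps$ bound can be read off from the cited proof.

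Your sketch is essentially the standard strategy behind such convolution results (complete the square, then use smoothing to show the residual shifted Gaussian sum is nearly constant), and the $m=2$ identity you wrote is exactly the engine. Two places deserve more than a wave, though. First, the claim that the dual of your lattice $\mathcal{M}$ has all ``effective parameters'' bounded by $1/\eta_\eps(\lat)$ under the hypothesis $s_i \geq \sqrt{2}\,\eta_\eps(\lat)$ is the whole technical content of the theorem, and ``an explicit block-matrix inversion'' does not establish it; in fact Micciancio and Peikert prove the general statement by induction on $m$ (reducing to the two-summand case repeatedly), precisely because the one-shot $(m-1)$-dimensional dual computation is messy. Second, your argument as written yields a pointwise ratio of $(1 \pm O(\eps))$, which gives statistical distance $O(\eps)$; the final sentence asserting that the constant tightens to exactly $\eps$ is not justified by what precedes it and would require tracking the smoothing inequality $\rho_{1/\eta}(\lat^*) \leq 1+\eps$ through the normalisation step more carefully than you have.
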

\begin{proof}
    The statement is a special case of \cite[Theorem 3.3]{micciancioHardnessSISLWE2013}, except that the guarantee on statistical distance has been made explicit. 
\end{proof}

		We will also need the following algorithm for sampling from discrete Gaussians, given a good basis of a lattice. The algorithm itself is originally due to Klein~\cite{kleinFindingClosestLattice2000}, and was first shown to obtain samples from the discrete Gaussian by Gentry, Peikert, and Vaikuntanathan~\cite{gentryTrapdoorsHardLattices2008}. We use the version from~\cite{brakerskiClassicalHardnessLearning2013}, which works for slightly better parameters and does not incur any error. Here, $\|\widetilde{\vec{B}}\|$ means the maximal norm of a vector in the Gram-Schmidt orthogonalization of the basis. 
            \begin{theorem}[{\cite[Lemma 2.3]{brakerskiClassicalHardnessLearning2013}}]\label{thm:BLP_sampler}
                There is a probabilistic polynomial-time algorithm that takes as input a basis $\basis$ for a lattice $\lat \subset \R^n$ and $s > \norm{\widetilde{\basis}}\sqrt{\log n}$ and outputs a vector that is distributed exactly as $D_{\lat ,s}$. 
            \end{theorem}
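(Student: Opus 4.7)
The plan is to analyze the classical Klein-style randomized nearest-plane sampler, which processes the Gram--Schmidt orthogonalization $\widetilde{\basis} = (\widetilde{\vec{b}}_1, \ldots, \widetilde{\vec{b}}_n)$ of $\basis$ in reverse order. Starting from a center $\vec{c}_n := \vec{0}$, for $i = n, n-1, \ldots, 1$ the algorithm lets $c_i := \langle \vec{c}_i, \widetilde{\vec{b}}_i\rangle / \|\widetilde{\vec{b}}_i\|^2$, samples $z_i$ from the one-dimensional discrete Gaussian $D_{\Z, s_i, c_i}$ on $\Z$ with parameter $s_i := s / \|\widetilde{\vec{b}}_i\|$ centered at $c_i$, and updates $\vec{c}_{i-1} := \vec{c}_i - z_i \vec{b}_i$. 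It outputs $\vec{v} := \sum_i z_i \vec{b}_i$. The sampler runs in polynomial time assuming (an idealized) one-dimensional discrete Gaussian oracle, which can be implemented via rejection sampling on the interval $[c_i - s_i \sqrt{n}, c_i + s_i \sqrt{n}]$ since \cref{cor:banaszczyk_tail} cuts off all but negligible mass.

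Next, I would analyze the output distribution. Every $\vec{v} \in \lat$ has a unique representation $\vec{v} = \sum_i z_i \vec{b}_i$ with $z_i \in \Z$, so the probability that the algorithm outputs $\vec{v}$ equals the product $\prod_{i=1}^n \rho_{s_i}(z_i - c_i) / \rho_{s_i}(\Z - c_i)$. The central identity behind Klein's algorithm is that when the centers $c_i$ are defined as above, the GSO decomposition gives $\|\vec{v}\|^2 = \sum_i (z_i - c_i)^2 \|\widetilde{\vec{b}}_i\|^2$, so that
\[
\prod_i \rho_{s_i}(z_i - c_i) = \exp\Bigl(-\pi \sum_i (z_i - c_i)^2 / s_i^2\Bigr) = \exp(-\pi \|\vec{v}\|^2 / s^2) = \rho_s(\vec{v})\,.
\]
Thus the output probability of $\vec{v}$ is proportional to $\rho_s(\vec{v}) / \prod_i \rho_{s_i}(\Z - c_i)$. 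To conclude that the distribution is exactly $D_{\lat, s}$, one must show the denominator is the same for every $\vec{v}$.

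This is the step where the hypothesis $s > \|\widetilde{\basis}\| \sqrt{\log n}$ enters. Standard bounds (e.g., Banaszczyk's tail inequality and Poisson summation) give $\eta_{2^{-n}}(\Z) = O(\sqrt{\log n})$, so that whenever $s_i \geq \eta_\eps(\Z)$ we have $\rho_{s_i}(\Z - c_i) = \rho_{s_i}(\Z) \cdot (1 \pm \eps)$ independently of $c_i$. Choosing constants so that $\eps$ is negligible and propagating through the product at most $n$ times shows that all normalizers agree up to a negligible multiplicative factor, which is the bulk of the argument. The main obstacle is upgrading this ``statistically close'' conclusion to the exact equality in distribution claimed in the statement. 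To eliminate the $\eps$ slack entirely, I would introduce an additional continuous Gaussian randomization step: either use a convolution-theorem trick (\cref{thm:convolution}) to split $s$ as $s^2 = s'^2 + r^2$ and absorb the residual bias into the extra Gaussian noise, or analyze a variant that samples from the correct coset distribution via explicit rejection. Verifying that this continuous randomization, combined with the Klein output, produces a distribution whose marginal over $\lat$ matches $D_{\lat, s}$ with zero statistical distance is the delicate computation, but it reduces in the end to verifying a single identity about the normalizing constants of one-dimensional Gaussians and then invoking the smoothing-parameter bound on $\Z$.
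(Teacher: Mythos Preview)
The paper does not supply its own proof of this statement; it is quoted as \cite[Lemma~2.3]{brakerskiClassicalHardnessLearning2013}, so there is no in-paper argument to compare against. Your Klein/GPV analysis is the standard one and you correctly isolate the obstruction to exactness: the normalizer $\prod_i \rho_{s_i}(\Z - c_i)$ depends on the centers $c_i$, hence on $\vec{v}$, so Klein alone yields only the statistically-close conclusion. The gap is that neither repair you sketch actually removes the error. Appealing to \cref{thm:convolution} cannot help, since that theorem itself carries an additive $\eps$ in statistical distance; you would be trading one slack for another. The ``continuous Gaussian randomization'' step is never specified, and splitting $s^2=s'^2+r^2$ and convolving with a continuous Gaussian produces a \emph{continuous} output rather than a lattice point, so it does not even address the right object. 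As written the argument establishes the well-known approximate version but not the exact claim.

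The missing ingredient is an explicit rejection step on top of Klein: having produced $\vec{v}$ together with the centers $c_1,\dots,c_n$ encountered along the way, accept with probability $\prod_i \rho_{s_i}(\Z - c_i)\big/\prod_i \rho_{s_i}(\Z)$, which lies in $[0,1]$ because $\rho_{s_i}(\Z - c)\le\rho_{s_i}(\Z)$ for every real $c$. Conditioned on acceptance the output law is exactly $D_{\lat,s}$, and when every $s_i=s/\|\widetilde{\vec{b}}_i\|>\sqrt{\log n}$ the acceptance probability is $1-o(1)$, so the expected number of trials is $O(1)$. The threshold is an infinite sum; one handles this by revealing the uniform comparison coin bit by bit and computing the threshold to matching precision on demand, which terminates almost surely in polynomial time. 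You mention ``explicit rejection'' in passing, but you neither identify this correction factor nor argue that it can be evaluated, and that is precisely where the substance of the exact-sampling claim lives.
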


            The next algorithm
            can be obtained by combining the discrete Gaussian sampling algorithm described above with basis reduction algorithms, e.g., from \cite{gamaFindingShortLattice2008}. The particular version that we use is a special case of~\cite[Proposition 2.17]{aggarwalSolvingShortestVector2015}, though tighter versions of this result are possible.
		
		\begin{theorem}[{\cite[Proposition 2.17]{aggarwalSolvingShortestVector2015}}]
		    \label{thm:BKZ_plus_GPV}
		    There is an algorithm that takes as input a (basis for a) lattice $\lat \subset \Q^n$ , $2 \leq \beta \leq n$ (the block size), and a positive integer $N$ such that if $\lambda_1(\lat) \geq \beta^{n/\beta}$, then the algorithm outputs $N$ vectors $\vec{w}_1,\ldots, \vec{w}_N$ that are distributed exactly as independent samples from $D_{\lat^*}$. Furthermore, the algorithm runs in time $(N + 2^{O(\beta)}) \cdot \poly(n)$.
		\end{theorem}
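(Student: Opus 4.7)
The plan is to combine a block basis-reduction procedure on the dual lattice with the Klein/GPV sampler of \cref{thm:BLP_sampler}. The hypothesis $\lambda_1(\lat) \geq \beta^{n/\beta}$ is the ``room'' that lets us reduce a basis of $\lat^*$ well enough that its maximum Gram--Schmidt length drops below $1/\sqrt{\log n}$, which is exactly what \cref{thm:BLP_sampler} needs in order to sample from $D_{\lat^*}$ with parameter $s = 1$.

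Concretely, I would first compute the dual basis $(\basis^{-1})^T$ of $\lat^*$ from the input basis $\basis$. I would then run a BKZ-style block-reduction algorithm with block size $\beta' = \Theta(\beta)$ on this dual basis; standard BKZ-type analyses (e.g., Gama--Nguyen or Schnorr-style slide reduction) show that this can be done in time $2^{O(\beta')} \cdot \poly(n) = 2^{O(\beta)} \cdot \poly(n)$ and returns a basis $\basis^*$ of $\lat^*$ satisfying, roughly, $\|\basis^*_i\| \leq \beta'^{O(n/\beta')} \cdot \lambda_i(\lat^*)$ for each $i$, and in particular $\|\widetilde{\basis^*}\| \leq \beta'^{O(n/\beta')} \cdot \lambda_n(\lat^*)$.

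Next, I would translate the hypothesis $\lambda_1(\lat) \geq \beta^{n/\beta}$ into a bound on $\lambda_n(\lat^*)$ via Banaszczyk's transference inequality $\lambda_1(\lat) \cdot \lambda_n(\lat^*) \leq n$, yielding $\lambda_n(\lat^*) \leq n \cdot \beta^{-n/\beta}$. Plugging this into the BKZ bound gives
\[
\|\widetilde{\basis^*}\| \;\leq\; n \cdot \beta'^{O(n/\beta')} \cdot \beta^{-n/\beta}.
\]
By choosing the constant in $\beta' = C\beta$ sufficiently large, the factor $\beta'^{O(n/\beta')}$ is dominated by $\beta^{n/(2\beta)}$, so $\|\widetilde{\basis^*}\| \leq 1/\sqrt{\log n}$ for all large enough $n$. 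I would then feed $\basis^*$ into \cref{thm:BLP_sampler} with $s = 1$, invoked $N$ times independently, to obtain the $N$ exact samples from $D_{\lat^*}$ in additional time $N \cdot \poly(n)$. Summing the two phases gives the claimed $(N + 2^{O(\beta)}) \cdot \poly(n)$ runtime.

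The main obstacle is the bookkeeping in the second step: one needs the implicit constant in the BKZ bound $\beta'^{O(n/\beta')}$ on Gram--Schmidt lengths to be strictly smaller than the exponent $n/\beta$ coming from the hypothesis, after suitable choice of $\beta' = \Theta(\beta)$. This is exactly what the Gama--Nguyen / slide-reduction analyses buy us, and it is the reason the hypothesis $\lambda_1(\lat) \geq \beta^{n/\beta}$ (rather than something weaker) appears in the statement. Everything else---dualizing a basis, combining transference with Minkowski-type inequalities, and invoking \cref{thm:BLP_sampler}---is essentially routine once this single inequality is in place.
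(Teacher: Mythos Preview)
The paper does not prove this theorem; it simply cites it as a special case of \cite[Proposition~2.17]{aggarwalSolvingShortestVector2015} and remarks that it ``can be obtained by combining the discrete Gaussian sampling algorithm described above with basis reduction algorithms, e.g., from~\cite{gamaFindingShortLattice2008}.'' Your proposal is exactly that combination---block reduction on the dual, Banaszczyk transference to convert the hypothesis on $\lambda_1(\lat)$ into a bound on $\lambda_n(\lat^*)$, and then \cref{thm:BLP_sampler} with $s=1$---so it matches the paper's (one-sentence) treatment and is the intended argument.

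One small caution on the bookkeeping you flag: the inequality $n\cdot \beta^{-n/(2\beta)} \le 1/\sqrt{\log n}$ that you would need after your simplification fails when $\beta$ is close to $n$ (e.g.\ $\beta=n$ gives $\sqrt{n}$ on the left). In that regime the fix is not to push harder on the BKZ exponent but to note that with block size $\Theta(n)$ one can afford to compute a near-optimal basis of $\lat^*$ directly (e.g.\ HKZ-reduce, or run the $2^{O(n)}$-time shortest-basis algorithm), achieving $\|\widetilde{\basis^*}\|$ within a $\poly(n)$ factor of $\lambda_n(\lat^*) \le n/\lambda_1(\lat) \le 1$; the cited source in \cite{aggarwalSolvingShortestVector2015} in fact handles all $\beta$ via an iterative bootstrapping of samples into better bases rather than a single BKZ call. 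This does not change your overall plan, only the mechanism for obtaining the short Gram--Schmidt bound in the large-$\beta$ corner.
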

		
				\subsubsection{Hermite polynomials and moments of the Gaussian}
		
		In general, the Poisson Summation Formula tells us that for any sufficiently nice function $g(\vec{x})$ (e.g., the Schwartz functions suffice for our purposes) with Fourier transform $\widehat{g}(\vec{w}) := \int_{-\infty}^\infty e^{2\pi i \inner{\vec{w}, \vec{x}}} g(\vec{x}) {\rm d} \vec{x}$, we have
		\[
		\sum_{\vec{y} \in \lat} g(\vec{y})  = \frac{1}{\det(\lat)} \sum_{\vec{w} \in \lat^*} \widehat{g}(\vec{w})
		\; .
		\]
		By applying this to $\rho_s(\lat - \vec{t})$ and $\rho_s(\lat)$, we get the identity 
		\begin{equation}
		\label{eq:PSF_Gaussian}
		f_s(\vec{t}) = \expect_{\vec{w} \sim D_{\lat^*,1/s}}[\cos(2\pi \inner{\vec{w}, \vec{t}})]
		\; .
		\end{equation}
		We will also need the ``continuous version'' of this identity
		\begin{equation}
		    \label{eq:Gaussian_fourier}
		    \rho(\vec{u}) = \expect_{\vec{w} \sim D}[\cos(2\pi \inner{\vec{w}, \vec{t}}]
		    \; ,
		\end{equation}
		where $D$ is the continuous Gaussian distribution given by probability density function $\rho$.
		By applying the Poisson Summation Formula to $x_1^{a_1} \cdots x_n^{a_n} \rho_s(\vec{x})$ with $a_i \in \Z_{\geq 0}$, we see that
		\begin{equation}
		\label{eq:Hermite_identity}
			\expect_{\vec{x} \sim D_{\lat}}[x_1^{a_1}\cdots x_n^{a_n}] = \expect_{\vec{w} \sim D_{\lat^*}}[H_{\vec{a}}(\vec{w})]
			\; .
		\end{equation}
		Here, $H_{\vec{a}}(\vec{x})$ is the $\vec{a}$th multivariate Hermite polynomial,\footnote{There are many different definitions of the Hermite polynomials that differ in normalization. We choose the normalization that makes \cref{eq:Hermite_identity} true.} given by
		\[
		    H_{\vec{a}}(\vec{x}) := (2\pi i)^{- (a_1 + \cdots + a_n)} e^{\pi \|\vec{x}\|^2} \cdot \frac{\partial^{a_1}}{\partial x_1^{a_1}} \cdots \frac{\partial^{a_n}}{\partial x_n^{a_n}} e^{-\pi \|\vec{x}\|^2}
		    \; .
		\]
		E.g., $H_{1,0,\ldots,0}(\vec{x}) = i x_1$, and $H_{2,0,\ldots,0}(\vec{x}) = x^2-1/(2\pi)$, etc.
		Notice that, with this definition, \cref{eq:Hermite_identity} follows from the Poisson Summation Formula together with the fact that the Fourier transform of $2\pi i x g(x)$ is $ \frac{\partial}{\partial w} \widehat{g}(w)$. 
		
		We will need some basic properties of the Hermite polynomials.
		
		\begin{fact}
		    \label{fact:hermite}
		    For any $\vec{a} \in \Z_{\geq 0}^n$ with $k := a_1 + \cdots + a_n$, 
		    \begin{enumerate}
		        \item $H_{\vec{a}}(\vec{x}) := \sum_{b_1,\ldots, b_n} c_{\vec{a}, b_1,\ldots, b_n} x_1^{b_1} x_2^{b_2} \cdots x_n^{b_n}$ is a polynomial with degree $k$;
		        \item \label{item:bound_on_hermite_coefficients} the coefficients $c_{\vec{a}, \vec{b}}$ of $H_{\vec{a}}$ satisfy $\sum_{\vec{b}} |c_{\vec{a}, \vec{b}}| \leq (k+1)!$; and
		       \item the constant term of $H_{\vec{a}}$ is \[
		       H_{\vec{a}}(\vec0) = \int_{\R^n} x_1^{a_1} \cdots x_n^{a_n} e^{-\pi \|\vec{x}\|^2}\intd \vec{x} = V_{a_1} \cdots V_{a_n}
		       \; ,
		       \] 
		       where $V_{a} := \int_{-\infty}^\infty x^{a} e^{-\pi \|\vec{x}\|^2} \intd x$.
		     In particular, for all non-negative integers $a$, $V_{2a+1} = 0$ and $V_{2a} =  (2a)!/((4\pi)^a a!)$
		    \end{enumerate}
		\end{fact}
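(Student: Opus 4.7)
The plan is to observe that the multivariate Hermite polynomial factors as a product of univariate Hermite polynomials, and then deduce all three properties from this factorization. Since $e^{-\pi \|\vec{x}\|^2} = \prod_{j=1}^n e^{-\pi x_j^2}$ and the mixed partials commute and act independently on each factor,
\[
\frac{\partial^{a_1}}{\partial x_1^{a_1}} \cdots \frac{\partial^{a_n}}{\partial x_n^{a_n}} e^{-\pi \|\vec{x}\|^2} = \prod_{j=1}^n \frac{d^{a_j}}{d x_j^{a_j}} e^{-\pi x_j^2} = (2\pi i)^k e^{-\pi \|\vec{x}\|^2} \prod_{j=1}^n H_{a_j}(x_j),
\]
where $H_a(x)$ denotes the one-variable Hermite polynomial obtained from the univariate specialization of the definition. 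Substituting into the definition of $H_{\vec{a}}$ and cancelling gives $H_{\vec{a}}(\vec{x}) = \prod_{j=1}^n H_{a_j}(x_j)$. A simple induction on $a$ shows each factor $H_{a}(x)$ is a polynomial of degree $a$, so the product has degree $k$, proving (1).

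For the coefficient bound (2)---which I expect to be the only step requiring real care---I would first derive the univariate recurrence $H_{m+1}(x) = i x H_m(x) + \frac{1}{2\pi i} H_m'(x)$ by differentiating the identity $(2\pi i)^m H_m(x) e^{-\pi x^2} = \frac{d^m}{dx^m} e^{-\pi x^2}$ once more and comparing to the same identity at $m+1$. Letting $S_m := \sum_j |c_j^{(m)}|$ be the sum of the (complex) absolute values of the coefficients of $H_m$, the triangle inequality gives $S_{m+1} \leq S_m + \frac{m}{2\pi} S_m \leq (m+1) S_m$, so by induction with $S_0 = 1$ we get $S_m \leq m!$. The multivariate factorization then yields
\[
\sum_{\vec{b}} |c_{\vec{a}, \vec{b}}| \;=\; \prod_{j=1}^n S_{a_j} \;\leq\; \prod_{j=1}^n a_j! \;\leq\; k! \;\leq\; (k+1)!,
\]
where the second-to-last inequality uses $\binom{k}{a_1,\dots,a_n} \geq 1$.

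For (3), the factorization immediately gives $H_{\vec{a}}(\vec 0) = \prod_j H_{a_j}(0)$, while Fubini's theorem factors the integral as $\int_{\R^n} x_1^{a_1} \cdots x_n^{a_n} e^{-\pi \|\vec{x}\|^2} \intd \vec{x} = \prod_j V_{a_j}$. So it suffices to prove the univariate identity $H_a(0) = V_a$ together with the closed form for $V_{2m}$. From the Taylor series $e^{-\pi x^2} = \sum_{m \geq 0} (-\pi)^m x^{2m}/m!$, the $a$-th derivative at $x=0$ vanishes for odd $a$ and equals $(-\pi)^m (2m)!/m!$ for $a = 2m$; dividing by $(2\pi i)^{2m} = (-1)^m (2\pi)^{2m}$ yields $H_{2m}(0) = (2m)!/((4\pi)^m m!)$. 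The same value follows for $V_{2m}$ by the standard Gaussian-moment computation (e.g., by repeatedly integrating by parts, or by differentiating $\int e^{-s x^2} dx = \sqrt{\pi/s}$ in $s$ and evaluating at $s = \pi$), and $V_{2m+1} = 0$ by odd symmetry. The only mild subtlety throughout is that the $c_{\vec a,\vec b}$ are genuinely complex (they contain powers of $i$ when $k$ is odd), so $|\cdot|$ always denotes complex modulus; this is harmless for the triangle-inequality arguments above.
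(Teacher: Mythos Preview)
Your proof is correct. Both you and the paper prove item~(2) by extracting a coefficient recurrence from the differential definition and inducting, but you take a cleaner route: you first observe the factorization $H_{\vec a}(\vec x)=\prod_j H_{a_j}(x_j)$ and reduce to the univariate case, whereas the paper works directly with the multivariate coefficients, deriving $2\pi i\, c_{\vec a,\vec b}=(b_1+1)c_{\vec a^-,\vec b^+}-2\pi c_{\vec a^-,\vec b^-}$ and inducting on $k=\sum a_j$. Your factorization buys a slightly sharper bound (you actually get $\prod_j a_j!\le k!$, not just $(k+1)!$) and a more modular argument; the paper's version is a touch shorter since it skips the factorization lemma. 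Your treatment of items~(1) and~(3) is also more explicit than the paper's, which simply declares them ``trivial.''
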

		\begin{proof}
		    The only non-trivial statement is \cref{item:bound_on_hermite_coefficients}. To prove this, we assume without loss of generality that $a_1 \geq 1$, and set $\vec{a}^{-} := (a_1 - 1, a_2,\ldots, a_n)$. Let $\vec{b}^+ := (b_1+1,b_2,\ldots, b_n)$ and $\vec{b}^- := (b_1 - 1,b_2,\ldots, b_n)$, where we adopt the convention that $\vec{c}_{\vec{a}', \vec{b}^-} = 0$ if $b_1 = 0$. Then, we notice that 
		    \[
		        2\pi i c_{\vec{a}, \vec{b}} = (b_1 + 1) c_{\vec{a}^-, \vec{b}^+} - 2\pi c_{\vec{a}^-, \vec{b}^-}
		        \; .
		    \]
		    Therefore,
		    \[
		        \sum_{\vec{b}} |c_{\vec{a}, \vec{b}}| \leq \sum_{\vec{b}} \Big(\frac{b_1+1}{2\pi} \cdot |c_{\vec{a}^-,\vec{b}^+}| + |c_{\vec{a}^-, \vec{b}^-}|\Big) \leq \sum_{\vec{b}} \Big(\frac{k}{2\pi} \cdot |c_{\vec{a}^-,\vec{b}^+}| + |c_{\vec{a}^-, \vec{b}^-}|\Big) \leq (k+1) \sum_{\vec{b}} |c_{\vec{a}^-, \vec{b}}|
		        \; .
		    \]
		    The result then follows by induction on $k$ (together with the base case $H_{\vec{0}}(\vec{x}) = 1$).
		\end{proof}
		
		The following lemma shows that if  $\lambda_1(\lat)\geq \sqrt{n}$, then the moments of $D_{\lat^*}$ are very close to the moments of the continuous Gaussian.
		
		\begin{lemma}
			\label{lem:smooth_Hermite_multi-dim}
			For any lattice $\lat \subset \R^n$ with $\lambda_1(\lat) \geq \sqrt{n}$, and any $\vec{a} \in \Z_{\geq 0}^n$ with $k := a_1 + \cdots + a_n \leq n/10$, we have
			\[
			V_{\lat^*, \vec{a}} := \expect_{\vec{w} \sim D_{\lat^*}}[w_1^{a_1} \cdots w_n^{a_n}] = V_{\vec{a}} +\eps
			\; ,
			\]
			where 
			\[
			V_{\vec{a}} := \int_{\R^n} x_1^{a_1} \cdots x_n^{a_n} \exp(-\pi x^2) \intd \vec{x}
			\; ,
			\]
			 and $|\eps | \leq 2^{-n} \cdot n^{2k}$. 
		\end{lemma}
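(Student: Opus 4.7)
The plan is to apply the Hermite-polynomial identity \cref{eq:Hermite_identity} with the roles of $\lat$ and $\lat^*$ swapped (valid since $(\lat^*)^* = \lat$) to convert the target moment into an expectation over the discrete Gaussian on $\lat$:
\[
V_{\lat^*, \vec{a}} = \expect_{\vec{y} \sim D_{\lat}}[H_{\vec{a}}(\vec{y})].
\]
The hypothesis $\lambda_1(\lat) \geq \sqrt{n}$ forces $D_{\lat}$ to be exponentially concentrated at $\vec{0}$, while item~3 of \cref{fact:hermite} tells us $H_{\vec{a}}(\vec{0}) = V_{\vec{a}}$. So I would split off the $\vec{y} = \vec{0}$ contribution and write
\[
\eps = \frac{1}{\rho(\lat)}\sum_{\vec{y} \in \lat \setminus \{\vec{0}\}}\bigl(H_{\vec{a}}(\vec{y}) - V_{\vec{a}}\bigr)\rho(\vec{y}),
\]
which reduces the entire problem to controlling a tail sum over the nonzero lattice vectors.

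For the pointwise bound on the integrand, the polynomial $H_{\vec{a}} - V_{\vec{a}}$ has degree $k$ and no constant term, and by item~2 of \cref{fact:hermite} its coefficients sum in absolute value to at most $(k+1)!$. Since every nonzero $\vec{y} \in \lat$ satisfies $\|\vec{y}\| \geq \sqrt{n} \geq 1$, each monomial $\vec{y}^{\vec{b}}$ with $1 \leq \sum b_i \leq k$ is at most $\|\vec{y}\|^k$, giving $|H_{\vec{a}}(\vec{y}) - V_{\vec{a}}| \leq (k+1)!\,\|\vec{y}\|^k$. Using $\rho(\lat) \geq 1$, this yields
\[
|\eps| \leq (k+1)!\sum_{\vec{y} \in \lat \setminus \{\vec{0}\}}\|\vec{y}\|^k\rho(\vec{y}).
\]

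To handle the remaining sum, the trick is to sacrifice a small amount of Gaussian decay to absorb the polynomial factor. Maximizing $r^k e^{-\pi\beta r^2}$ over $r \geq 0$ for $\beta \in (0, 1 - 1/(2\pi)]$ gives the pointwise inequality $\|\vec{y}\|^k\rho(\vec{y}) \leq (k/(2\pi e\beta))^{k/2}\,\rho_{1/\sqrt{1-\beta}}(\vec{y})$. Summing over $\vec{y} \neq \vec{0}$ and applying \cref{thm:banaszczyk_tail} with parameter $s = 1/\sqrt{1-\beta}$ and scale $t = 1/s$ (using that all nonzero lattice vectors have norm at least $\sqrt{n} = st\sqrt{n}$) bounds $\rho_s(\lat) - 1$ by $O\bigl((2\pi e(1-\beta))^{n/2}e^{-\pi(1-\beta)n}\bigr)$. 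Choosing $\beta = 1/n$ makes $(k/(2\pi e\beta))^{k/2} = (kn/(2\pi e))^{k/2} \leq n^k$ for $k \leq n/10$ while keeping the Banaszczyk factor below $e^{-1.72 n}$.

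The remaining bookkeeping is to check that $(k+1)!\cdot n^k \cdot e^{-1.72n} \leq 2^{-n}n^{2k}$, which after taking logs becomes $(k+1)\log(k+1) \leq k\log n + 1.03n$. For $k \leq n/10$ one has $(k+1)\log(k+1) \leq (k+1)\log n \leq k\log n + \log n$, and $\log n \leq 1.03 n$ for all $n \geq 1$, so the inequality holds across the full range; the case $k=0$ is immediate since $\vec{a} = \vec{0}$ gives $V_{\lat^*, \vec{0}} = 1 = V_{\vec{0}}$. I expect the main obstacle to be the constant bookkeeping in this final tradeoff — finding a single choice of $\beta$ that simultaneously controls the prefactor $(k/(2\pi e\beta))^{k/2}$ for all $1 \leq k \leq n/10$ while producing Banaszczyk decay with enough slack to beat the $2^n$ in the stated bound, all while keeping the polynomial coefficient bound from \cref{fact:hermite} loose enough to be convenient.
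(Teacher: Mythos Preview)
Your proposal is correct and uses the same opening move as the paper (apply \cref{eq:Hermite_identity} to get $\expect_{\vec{y}\sim D_{\lat}}[H_{\vec{a}}(\vec{y})]$, then split off the $\vec{y}=\vec{0}$ term via $H_{\vec{a}}(\vec{0})=V_{\vec{a}}$ and bound the remaining polynomial by $(k+1)!$ times a $k$th-power quantity). The difference is in how the tail is controlled. The paper bounds each monomial by $\max_i|y_i|^k$ rather than $\|\vec{y}\|^k$, then writes $\expect[\max_i|y_i|^k]=k\int_0^\infty r^{k-1}\Pr[\max_i|y_i|>r]\,\mathrm{d}r$, bounds the tail probability by $\min\{2ne^{-\pi r^2},\,2^{-n}\}$ using subgaussianity and $\Pr[\vec{y}\neq\vec{0}]\leq 2^{-n}$, and splits the integral at the crossover point $r_0=\sqrt{\log(n2^{n+1})/\pi}$. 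Your route---sacrifice a $\beta=1/n$ fraction of the Gaussian to absorb the $\|\vec{y}\|^k$ factor pointwise, then invoke \cref{thm:banaszczyk_tail} once---is a cleaner single-shot argument that avoids the integral split, at the cost of slightly looser intermediate constants (your $\|\vec{y}\|^k$ is larger than the paper's $\max_i|y_i|^k$, and you carry an $e^{\pi}$ prefactor from Banaszczyk that you should not drop when writing ``below $e^{-1.72n}$''). Both land comfortably inside the stated $2^{-n}n^{2k}$ bound for $n\geq 10$, which is forced anyway by $k\geq 1$ and $k\leq n/10$.
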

		\begin{proof}
			By \cref{eq:Hermite_identity}, we have
			\[
			V_{\lat^*, \vec{a}} = \expect_{\vec{y} \sim D_{\lat}}[H_{\vec{a}}(\vec{y})] = \expect_{\vec{y} \sim D_{\lat}}[H_{\vec{a}}'(\vec{y})] + V_{\vec{a}}
			\; ,
			\]
			where $H_{\vec{a}}$ is the $\vec{a}$th Hermite polynomial, and where we write $H_{\vec{a}}'$ for the $\vec{a}$th Hermite polynomial without its constant term, which is equal to $V_{\vec{a}}$ by \cref{fact:hermite}.
			Notice that for $\vec{y} \in \lat$, we have \[|y_1^{b_1} \cdots y_n^{b_n}| \leq \max_i | y_i|^{b_1 + \cdots + b_n} \leq \max_i |y_i|^k\] whenever $b_1 + \cdots + b_n \leq k$. (Here, we have used the fact that $\vec{y} \in \lat$ is either the zero vector or $\max_i |y_i| \geq 1$, since $\lambda_1(\lat) \geq \sqrt{n}$.) Therefore, applying \cref{fact:hermite}, we have
			\[
			|\eps| = |V_{\lat^*, \vec{a}} -  V_k| \leq \expect_{\vec{y} \sim D_{\lat}}[|H_{\vec{a}}'(\vec{y})|] \leq (k+1)! \expect_{\vec{y} \sim D_{\lat}}[\max_i |y_i|^k]
			\; .
			\]
			
			It remains to bound this expectation. Indeed, we have
			\[
			    \expect_{\vec{y} \sim D_{\lat}}[\max_i |y_i|^k] = k \int_0^\infty r^{k-1} \Pr_{\vec{y} \sim D_{\lat}}[\max_i |y_i| > r] \intd r 
			    \; .
			\]
			By \cref{lem:subgaussianity}, we have $\Pr[|y_i| > r] \leq 2e^{-\pi r^2}$. Applying union bound, we see that $\Pr[\max_i |y_i| > r] \leq 2n e^{- \pi r^2}$. We also trivially have $\Pr[\max_i |y_i| > r] \leq \Pr[\vec{y} \neq \vec0] \leq 2^{-n}$, where the last step uses \cref{cor:banaszczyk_tail}. Therefore,
			\[
			    \expect_{\vec{y} \sim D_{\lat}}[\max_i |y_i|^k] \leq k\int_0^\infty r^{k-1} \min\{ 2ne^{-\pi r^2}, 2^{-n}\} \intd r = 2^{-n} r_0^k + 2nk\int_{r_0}^{\infty} r^{k-1} e^{-\pi r^2} \intd r
			    \; ,
			\]
			where $r_0 := \sqrt{\log(n 2^{n+1})/\pi }$. Note that
			\begin{align*}
			    \int_{r_0}^{\infty} r^{k-1} e^{-\pi r^2} \intd r 
			        &= (1/\pi)^{k/2}/2 \cdot \int_{\pi r_0^2}^\infty u^{(k-2)/2} e^{-u} \intd u  \\
			        &\leq e^{-\pi r_0^2} r_0^{k-2}/(2\pi) \cdot  \int_0^\infty \exp(-u + u(k-2)/(2\pi r_0^2)) \intd u \\ 
			        &= e^{-\pi r_0^2} \cdot \frac{ r_0^{k-2}}{2\pi (1-(k-2)/(2\pi r_0^2))}\\
			        &=  \frac{r_0^k}{1-(k-2)/(2\pi r_0^2)} \cdot \frac{1}{2^{n+1}n \log(n 2^{n+1})}\\
			        &\leq \frac{2^{-n} r_0^k}{2kn}
			        \; ,
			\end{align*}
			where in the last step we have used the fact that $k \leq n/10$, which in particular implies that $(1-(k-2)/(2\pi r_0^2)) \cdot \log(n 2^{n+1}) \geq k$.
			
			Putting everything together, we see that
			\[
			    |\eps| \leq (k+1)! 2^{1-n} r_0^k \leq 2^{-n} n^{2k}
			    \; ,
			\]
			as claimed.
		\end{proof}

\subsection{Worst-case lattice problems}

The running time of lattice algorithms depends on (1) the rank $n$ of the lattice; and (2) the number of bits $\ell$ required to represent the input (e.g., the individual entries in the basis matrix; the threshold $d$; the target vector $\vec{t} \in \Q^n$; etc.). We adopt the practice, common in the literature on lattices, of completely suppressing factors of $\poly(\ell)$ in our running time. Similarly, we sometimes describe our algorithms as though they work with real numbers, though in reality they must of course work with a suitable discretization of the real numbers.

\begin{definition}
    For any $\gamma = \gamma(n) \geq 1$, $\gamma$-$\gapSVP$ is the promise problem defined as follows. The input is (a basis for) a lattice $\lat \subset \Q^n$ and a distance threshold $d > 0$. It is a YES instance if $\lambda_1(\lat) \leq d$ and a NO instance if $\lambda_1(\lat) > \gamma d$.
\end{definition}

\begin{definition}
    For any $\gamma = \gamma(n) \geq 1$, $\gamma$-$\gapCVP$ is the promise problem defined as follows. The input is (a basis for) a lattice $\lat \subset \Q^n$, a target $\vec{t} \in \Q^n$, and a distance threshold $d > 0$. It is a YES instance if $\dist(\vec{t},\lat) \leq d$ and a NO instance if $\dist(\vec{t},\lat) > \gamma d$.
\end{definition}

Notice that by rescaling the lattice (and target) appropriately, we may always assume without loss of generality that $d$ is any fixed value. E.g., $\gamma$-$\GapSVP$ is equivalent to the problem of distinguishing $\lambda_1(\lat) \leq 1$ from $\lambda_1(\lat) > \gamma$, or distinguishing $\lambda_1(\lat) \leq \sqrt{n}/\gamma$ from $\lambda_1(\lat) > \sqrt{n}$. So, we sometimes implicitly work with a particular choice of $d$ that is convenient.

We will also need the following problem, which is known to be at least as hard as $\gamma$-$\SVP$.
	
	\begin{definition}
			For any $\gamma = \gamma(n) \geq 1$, $\gamma$-$\CVP$' is the promise problem defined as follows. The input is (a basis for) a lattice $\lat \subset \Q^n$ and a target $\vec{t} \in \Q^n$.  It is a YES instance if $\dist(\vec{t}, \lat) \leq \sqrt{n}/\gamma$. It is a NO instance if $\dist(\vec{t}, \lat) > \sqrt{n}$ \emph{and} $\lambda_1(\lat) > \sqrt{n}$.
		\end{definition}
		
				The following reduction due to Goldreich, Micciancio, Safra, and Seifert shows that $\CVP'$ is at least as hard as $\SVP$~\cite{GMSS99}. In~\cite{GMSS99}, they describe this reduction as a reduction to $\CVP$, rather than $\CVP$'. So, we reproduce the proof below to show that in fact it works with $\CVP'$. (The same idea was also used by Micciancio and Regev~\cite{MR04}. They actually observe that the reductions works for an even easier version of $\CVP$ in which $\dist(k\vec{t},\lat)$ is large for all odd $k$ in the NO case.)
		
		\begin{theorem}[\cite{GMSS99}]
			\label{thm:GMSS}
			For any $\gamma \geq 1$, there is an efficient reduction that maps one $\gamma$-$\SVP$ instance in $n$ dimensions into $n$ instances of $\gamma$-$\CVP'$ in $n$ dimensions such that the $\gamma$-$\SVP$ instance is a YES if and only if at least one of the resulting $\gamma$-$\CVP'$ instances is a YES and the $\gamma$-SVP instance is a NO instance if and only if all resulting $\gamma$-$\CVP'$ instances are NO instances.
			\end{theorem}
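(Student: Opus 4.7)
The plan is to use the classical sublattice construction of Goldreich, Micciancio, Safra, and Seifert, and then verify that it additionally gives the stronger NO-case guarantee required by $\gamma$-$\CVP'$.

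First, I would rescale the input $\gamma$-$\SVP$ instance $(\lat, d)$ so that YES means $\lambda_1(\lat) \leq \sqrt{n}/\gamma$ and NO means $\lambda_1(\lat) > \sqrt{n}$; this is allowed since $\gapSVP$ is scale-invariant. Let $\basis = (\vec{b}_1, \ldots, \vec{b}_n)$ be a basis for $\lat$. For each $i \in \{1, \ldots, n\}$, define the sublattice $\lat_i := \lat(\vec{b}_1, \ldots, \vec{b}_{i-1}, 2\vec{b}_i, \vec{b}_{i+1}, \ldots, \vec{b}_n)$ and let $\vec{t}_i := \vec{b}_i$. The output is the $n$ instances $(\lat_i, \vec{t}_i)$ of $\gamma$-$\CVP'$. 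This is clearly efficient.

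The key combinatorial observation is that every nonzero $\vec{v} = \sum_j z_j \vec{b}_j \in \lat$ achieving $\|\vec{v}\| = \lambda_1(\lat)$ must have at least one odd coefficient $z_i$, since otherwise $\vec{v}/2$ would be a strictly shorter nonzero lattice vector. Equivalently, the nonzero vectors of $\lat$ partition as $\bigcup_{i=1}^n (\lat_i - \vec{b}_i + 2\Z\vec{b}_i \cdot \text{shifts})$; more concretely, a vector $\vec{v} \in \lat$ with $z_i$ odd can be written as $\vec{b}_i + \vec{u}$ with $\vec{u} \in \lat_i$ (because $z_i\vec{b}_i - \vec{b}_i \in 2\Z\vec{b}_i \subset \lat_i$ and all other $\vec{b}_j \in \lat_i$), and conversely any $\vec{w} \in \lat_i$ has even $\vec{b}_i$-coefficient, so $\vec{w} - \vec{b}_i \in \lat$ is nonzero.

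For completeness in the YES case, pick a shortest nonzero $\vec{v} \in \lat$ with $\|\vec{v}\| \leq \sqrt{n}/\gamma$, choose an index $i$ with $z_i$ odd, and write $\vec{v} = \vec{b}_i + \vec{u}$ with $\vec{u} \in \lat_i$ as above; then $\dist(\vec{t}_i, \lat_i) \leq \|\vec{t}_i - (-\vec{u})\| = \|\vec{v}\| \leq \sqrt{n}/\gamma$, so the $i$-th instance is a YES. For soundness in the NO case, fix any $i$. Any vector in $\lat_i - \vec{t}_i$ is of the form $\vec{w} - \vec{b}_i$ for $\vec{w} \in \lat_i$, which is a nonzero element of $\lat$ by the observation above, hence has norm $> \sqrt{n}$; thus $\dist(\vec{t}_i, \lat_i) > \sqrt{n}$. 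Moreover $\lat_i \subseteq \lat$, so $\lambda_1(\lat_i) \geq \lambda_1(\lat) > \sqrt{n}$, giving both conditions required by the NO case of $\gamma$-$\CVP'$.

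The only subtle point — and what distinguishes this from the original $\CVP$ reduction — is verifying the extra condition $\lambda_1(\lat_i) > \sqrt{n}$ in the NO case, but this is automatic from $\lat_i \subseteq \lat$. No step should present a real obstacle; the argument is essentially a bookkeeping exercise once the sublattice construction and the odd-coefficient observation are in place.
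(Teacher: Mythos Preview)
Your proposal is correct and follows essentially the same argument as the paper: the same doubled-basis sublattices $\lat_i$ with targets $\vec{t}_i = \vec{b}_i$, the same odd-coefficient observation for the YES case, and the same two-line verification in the NO case (namely that $\vec{w} - \vec{b}_i \in \lat_{\neq \vec{0}}$ for any $\vec{w} \in \lat_i$, and that $\lat_i \subseteq \lat$ gives the $\lambda_1$ bound).
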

		\begin{proof}
			On input a basis $\basis := (\vec{b}_1,\ldots, \vec{b}_n)$ for a lattice $\lat \subset \R^n$ and distance $d > 0$, the reduction behaves as follows. It first rescales the lattice so that we may assume without loss of generality that $d = \sqrt{n}/\gamma$. Let $\basis_i := (\vec{b}_1,\ldots, \vec{b}_{i-1}, 2\vec{b}_i, \vec{b}_{i+1}, \ldots, \vec{b}_n)$ be the original basis ``with $\vec{b}_i$ doubled.'' Let $\vec{t}_i = \vec{b}_i$. The reduction creates the $\gamma$-CVP' instances given by $(\basis_i, \vec{t}_i)$.
			
			It is immediate that the reduction is efficient, so we only need to prove correctness. To that end, suppose that the input instance is a YES. I.e., suppose that $\lambda_1(\lat) \leq \sqrt{n}/\gamma$. Let $\vec{v} = \sum z_i \vec{b}_i \in \lat_{\neq \vec0}$ with $\|\vec{v}\| \leq \sqrt{n}/\gamma$. We may assume without loss of generality that $z_i$ is odd for some $i$. (Otherwise, we may replace $\vec{v}$ by $\vec{v}/2 \in \lat$.) Then, $\dist(\vec{t}_i, \lat(\basis_i)) \leq \|\vec{v}\| \leq \sqrt{n}/\gamma$, and therefore the $i$th $\gamma$-$\CVP'$ instance is a YES instance.
			
			On the other hand, suppose that the input is a NO instance. I.e., suppose that $\lambda_1(\lat) > \sqrt{n}$. Let $\vec{y} \in \lat(\basis_i)$ be a closest vector to $\vec{t}_i = \vec{b}_i$, and notice that $\vec{y} - \vec{t}_i \in \lat_{\neq 0}$. Therefore, $\dist(\vec{t}_i, \lat(\basis_i)) = \|\vec{y} - \vec{t}_i\| \geq \lambda_1(\lat) > \sqrt{n}$. Furthermore, since $\lat(\basis_i) \subset \lat$, we must have $\lambda_1(\lat(\basis_i)) \geq \lambda_1(\lat) > \sqrt{n}$. So, all of the $\gamma$-$\CVP'$ instances must be NO instances, as needed.
		\end{proof}
		
\paragraph{Lattice problems in general norms.}
We will also consider lattice problems in general norms.
For a lattice $\lat \subset \R^m$, target vector $\vec{t} \in \R^m$, and centrally symmetric convex body $K \subset \R^m$, we define
\[
\dist_K(\vec{t}, \lat) := \min_{\vec{x} \in \lat} \norm{\vec{t} - \vec{x}}_K \ \text{.}
\]

\begin{definition} \label{def:cvp-general-norm}
    For positive integers $m \geq n$, $K \subset \R^m$ a centrally symmetric convex body, and $\gamma = \gamma(m, n) \geq 1$, $\gamma$-$\gapCVP_K$ is the promise problem defined as follows. The input is (a basis for) a lattice $\lat \subset \Q^m$ of rank $n$, a target $\vec{t} \in \Q^m$, and a distance threshold $d > 0$. It is a YES instance if $\dist_K(\vec{t},\lat) \leq d$ and a NO instance if $\dist_K(\vec{t},\lat) > \gamma d$.
\end{definition}
We call $m$ the \emph{ambient dimension} of the lattice $\lat$. We note that in the $\ell_2$ norm one may assume that $m = n$ essentially without loss of generality by projecting, but that this is not true in general norms or even for $\ell_p$ norms for $p \neq 2$.

For conciseness, we only define $\CVP_K$ and not $\SVP_K$ formally. Indeed, the primary protocol that we consider in general norms (\cref{thm:generalized-GG}) works for $\CVP_K$, which, because $\SVP_K$ reduces efficiently to $\CVP_K$ for any norm $K$, implies that it also works for $\SVP_K$. Indeed, the reduction from $\SVP$ to $\CVP$ in~\cite{GMSS99} works in arbitrary norms $\norm{\cdot}_K$.

\subsection{Average-case lattice problems}

    We now define the two most important average-case lattice problems. We note that our definition of $\LWE$ is the \emph{decision} version, and we only work with the decision version throughout.
		
		\begin{definition}
		For positive integers $n, m$, and $q > 2$ with $m > n \log_2 q$, the $(n, m, q)$-Short Integer Solutions problem $(\SIS_{n, m, q})$ is the \emph{average-case} computational problem defined as follows. The input is a matrix $A \sim \Z_q^{m \times n}$, and the goal is to output $\vec{z} \in \{-1, 0, 1\}^m$ with $\vec{z} \neq \vec{0}$ such that $A \vec{z} = \vec{0} \bmod q$. 
		\end{definition}
		
		\begin{definition}
		    \label{def:LWE}
		    For positive integers $n$, $m$, and $q > 2$ and noise parameter $\alpha \in (0,1)$, the $(n,m,q,\alpha)$-decision LWE problem ($\LWE_{n,m,q,\alpha}$) is the \emph{average-case} computational decision problem defined as follows. The input consists of a uniformly random matrix $\vec{A} \sim \Z_q^{m \times n}$ and a vector $\vec{b} \in \Z_q^m$. How $\vec{b}$ is distributed depends on a uniformly random bit $\mu \sim \{0,1\}$. If $\mu = 0$, then $\vec{b} = \vec{A} \vec{s} + \vec{e} \bmod q$, where $\vec{s} \sim \Z_q^n$ and $\vec{e} \sim D_{\Z^m,\alpha q}$. Otherwise, $\vec{b} \sim \Z_q^m$ is uniformly random. The goal is to output $\mu$.
		\end{definition}
		
		Regev showed that public-key encryption exists if $\LWE_{n,m,q,\alpha}$ is hard for any $m \geq (1+\eps) n \log_2 q$ and $\alpha < o(1/\sqrt{m \log n})$~\cite[Section 5]{regevLatticesLearningErrors2009}. And, following Ajtai~\cite{ajtaiGeneratingHardInstances1996}, Goldreich, Goldwasser, and Halevi~\cite{GGHCollisionFreeHashingLattice2011} showed that a collision-resistant hash function (and therefore, e.g., secret-key encryption) exists if $\SIS_{n,m,q}$ is hard for any $m \geq n \log_2 q + 1$.

\subsection{Some useful inequalities}

		\begin{claim}
		    \label{clm:cos_series}
		    For integer $k \geq 0$ and $x \in \R$, let
		    \[
		        C_k(x) := \sum_{i=0}^k (-x^2)^i/(2i)!
		    \]
		    be the $2k$th truncation of the Taylor series for $\cos(x)$ around $x = 0$. Then, 
		    \[
		        C_{2k+1}(x) \leq \cos(x) \leq C_{2k}(x)
		        ; .
		    \]
		\end{claim}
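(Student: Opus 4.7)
The plan is a short induction on $k$, using the integral identities $\sin(x) = \int_0^x \cos(t) \intd t$ and $1 - \cos(x) = \int_0^x \sin(t) \intd t$. Since both $C_k$ and $\cos$ are even functions, I may restrict to $x \geq 0$; the general claim then follows by symmetry.

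Let $S_m(x) := \sum_{i=0}^m (-1)^i x^{2i+1}/(2i+1)!$ denote the corresponding truncation of the Taylor series of $\sin$. A direct term-by-term computation yields the polynomial identities $\int_0^x C_m(t) \intd t = S_m(x)$ and $\int_0^x S_m(t) \intd t = 1 - C_{m+1}(x)$. With these in hand, the induction proceeds via the chain of implications
\[
    C_{2k}(x) \geq \cos(x) \ \Longrightarrow\ S_{2k}(x) \geq \sin(x) \ \Longrightarrow\ C_{2k+1}(x) \leq \cos(x) \ \Longrightarrow\ S_{2k+1}(x) \leq \sin(x) \ \Longrightarrow\ C_{2k+2}(x) \geq \cos(x),
\]
valid for all $x \geq 0$. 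Each implication is just ``integrate both sides from $0$ to $x$'', which preserves direction because $x \geq 0$, followed by an application of one of the two integral identities. For instance, integrating $S_{2k}(t) \geq \sin(t)$ over $[0,x]$ gives $1 - C_{2k+1}(x) \geq 1 - \cos(x)$, i.e., $C_{2k+1}(x) \leq \cos(x)$. The base case $C_0(x) = 1 \geq \cos(x)$ is trivial, so the induction simultaneously produces the upper bound $C_{2k}(x) \geq \cos(x)$ and the lower bound $C_{2k+1}(x) \leq \cos(x)$ for every $k$, which is exactly the claim.

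There is no real obstacle here; this is a standard undergraduate exercise. The only bookkeeping subtlety is correctly handling the constant term in the identity $1 - C_{m+1}(x) = \int_0^x S_m(t)\intd t$ and making sure that the direction of the inequality is preserved (rather than flipped) at each integration step, which it is because we integrate over an interval $[0,x]$ with $x \geq 0$.
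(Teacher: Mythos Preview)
Your proof is correct. The approach, however, differs from the paper's. The paper defines the remainder series $C_k'(x) := \cos(x) - C_k(x)$ (up to sign) and shows $C_k'(x) \geq 0$ by induction on $k$, using that $\frac{\partial^2}{\partial x^2} C_k' = C_{k-1}'$ together with $C_k'(0) = (C_k')'(0) = 0$; i.e., a convexity argument with a double differentiation step. You instead work directly with the partial sums and interleave the sine truncations $S_m$, advancing one integration at a time. The two arguments are dual---your two integration steps $C_m \to S_m \to C_{m+1}$ correspond exactly to the paper's two differentiations $C_k' \to C_{k-1}'$---but the packaging is different: your version is slightly more elementary (no appeal to convexity) and also yields the analogous alternating bounds for $\sin$ as a byproduct, while the paper's version avoids introducing the sine truncations altogether.
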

		\begin{proof}
		    Define
		    \[
		        C_k'(x) := \sum_{i=0}^\infty (-1)^i x^{2k+2i+2}/(2k+2i+2)!
		        \; ,
		    \]
		    so that $\cos(x) = C_k(x) + (-1)^{k+1} C_k'(x)$. It suffices to prove that $C_k'(x) \geq 0$. We prove this by induction on $k$. Indeed, for $k= 0$, the result is trivial (as it is simply equivalent to the inequality $\cos(x) \leq 1$). On the other hand, for $k \geq 1$, we have that $\frac{\partial^2}{ \partial x^2} C_k'(x) = C_{k-1}'(x)$. In particular, by induction, this implies that the second derivative of $C_k'(x)$ is non-negative for all $x$. Since $C_k'(0) = 0$ (for $k \geq 1$) and since the first derivative of $C_k'$ is zero at zero, it follows that $C_k'(x) \geq 0$ for all $x$.
		\end{proof}

		\begin{claim}
		    \label{clm:tensor}
		    For any $\vec{w}_1,\ldots, \vec{w}_m \in \R^n$, any $\vec{t} \in \R^n$, and any integer $k \geq 1$,
		    \begin{align*}	        \left|V_{2k}\|\vec{t}\|^{2k} - \frac{1}{m} \sum_{i=1}^m \langle \vec{t}, \vec{w}_i \rangle^{2k}\right|     &\leq \|\vec{t}\|^{2k} \sum_{\alpha_1 + \cdots + \alpha_n = 2k} \left| V_{\alpha_1} \cdots V_{\alpha_n} - \frac{1}{m}\sum_{i=1}^m w_{i,1}^{\alpha_1} \cdots w_{i,n}^{\alpha_n} \right|
		        \; ,
		    \end{align*}
		    where 
		    \[
		        V_i := \int_{-\infty}^\infty x^i \exp(-\pi x^2) \intd x
		        \; .
		    \]
		\end{claim}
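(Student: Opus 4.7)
The plan is to expand both $V_{2k}\|\vec{t}\|^{2k}$ and each $\inner{\vec{t},\vec{w}_i}^{2k}$ as polynomials in the coordinates of $\vec{t}$ via the multinomial theorem, match the coefficients of corresponding monomials, and then apply the triangle inequality together with the pointwise bound $|t^\alpha|\le\|\vec{t}\|^{2k}$. For the inner-product term, the multinomial theorem gives
\[
\inner{\vec{t},\vec{w}_i}^{2k} \;=\; \Big(\sum_{j=1}^n t_j w_{i,j}\Big)^{\!2k} \;=\; \sum_{\alpha_1+\cdots+\alpha_n=2k}\binom{2k}{\alpha_1,\ldots,\alpha_n}\, t^\alpha\, w_i^\alpha,
\]
where I write $t^\alpha := \prod_j t_j^{\alpha_j}$ and $w_i^\alpha := \prod_j w_{i,j}^{\alpha_j}$. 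Averaging over $i$ pulls the $t^\alpha$ factor outside the average and leaves the empirical $\alpha$-moment inside.

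For the Gaussian side, I would recognize $V_{2k}\|\vec{t}\|^{2k}$ as the $2k$th moment $\expect_{\vec{X}}\!\big[\inner{\vec{t},\vec{X}}^{2k}\big]$, where $\vec{X}$ is the continuous Gaussian on $\R^n$ with density $e^{-\pi\|\vec{x}\|^2}$; this equality holds by rotational invariance, since $\inner{\vec{t},\vec{X}}/\|\vec{t}\|$ is distributed as $X_1$. Applying the same multinomial expansion inside the expectation and using the independence of the coordinates of $\vec{X}$ (so that $\expect[X_1^{\alpha_1}\cdots X_n^{\alpha_n}]=V_{\alpha_1}\cdots V_{\alpha_n}$, with $V_a=0$ for odd $a$ by the symmetry of the Gaussian) gives
\[
V_{2k}\|\vec{t}\|^{2k} \;=\; \sum_{\alpha_1+\cdots+\alpha_n=2k}\binom{2k}{\alpha_1,\ldots,\alpha_n}\, t^\alpha\, V_{\alpha_1}\cdots V_{\alpha_n}.
\]

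Subtracting the two expansions rewrites the quantity inside the absolute value on the left-hand side of the claim as a single sum
\[
\sum_{|\alpha|=2k}\binom{2k}{\alpha}\,t^\alpha\,\Big(V_{\alpha_1}\cdots V_{\alpha_n}-\tfrac{1}{m}\sum_{i=1}^m w_{i,1}^{\alpha_1}\cdots w_{i,n}^{\alpha_n}\Big),
\]
so the triangle inequality combined with the coordinate-wise bound $|t^\alpha|=\prod_j|t_j|^{\alpha_j}\le(\max_j|t_j|)^{2k}\le\|\vec{t}\|^{2k}$ factors $\|\vec{t}\|^{2k}$ outside the sum.

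The main point requiring care is the bookkeeping of the multinomial coefficients: a direct triangle-inequality pass as above leaves a factor of $\binom{2k}{\alpha}$ in front of each term in the sum on the right. To match the statement exactly, I would either absorb these coefficients using a sharper estimate on $\binom{2k}{\alpha}|t^\alpha|$ (note in particular that $\sum_\alpha\binom{2k}{\alpha}|t^\alpha|=\|\vec{t}\|_1^{2k}$), or verify that the slightly looser version carrying the $\binom{2k}{\alpha}$ factors is all that is needed when the claim is invoked in the analysis of the co-non-deterministic protocol, where it will be combined with the moment estimates of \cref{lem:smooth_Hermite_multi-dim}.
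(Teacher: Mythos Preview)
Your approach is essentially identical to the paper's: both recognize $V_{2k}\|\vec{t}\|^{2k}$ as $\int_{\R^n}\inner{\vec{t},\vec{x}}^{2k}\rho(\vec{x})\,d\vec{x}$ via rotational invariance, expand both sides as polynomials in the $t_j$, subtract, and bound each $|t^\alpha|$ by $\|\vec{t}\|^{2k}$.

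You have in fact been more careful than the paper on the one point you flag. The paper's expansion silently drops the multinomial coefficients $\binom{2k}{\alpha}$ as well, so the claim as literally stated is not what either argument actually proves; both arguments yield the bound with an extra factor $\binom{2k}{\alpha}$ on each summand. Your instinct that this does not matter downstream is exactly right: when the claim is invoked in \cref{prop:stupid_annoying_thing_about_moments_and_stuff}, each moment deviation is bounded uniformly by $\eps$, and one obtains $\sum_{|\alpha|=2k}\binom{2k}{\alpha}\,\eps = n^{2k}\eps$ with the coefficients, versus $\binom{n+2k-1}{2k}\,\eps \le n^{2k}\eps$ without them. Either way the conclusion $\big|V_{2k}\|\vec{u}\|^{2k}-\tfrac{1}{m}\sum_i\inner{\vec{w}_i,\vec{u}}^{2k}\big|\le \eps\, n^{2k}\|\vec{u}\|^{2k}$ holds, which is all that is used.
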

		\begin{proof}
		    Notice that
		    \[
		        \int_{\R^n} \langle \vec{t}, \vec{x} \rangle^{2k} \rho(\vec{x}) \intd \vec{x} = \|\vec{t}\|^{2k} \cdot \int_{-\infty}^\infty x^{2k} \rho(x) \intd x  = \|\vec{t}\|^{2k} V_{2k}
		    \]
		    because the Gaussian is radially symmetric.
		    
		    On the other hand, by expanding out the inner product, we have
		    \begin{align*}
		        \int_{\R^n} \langle \vec{t}, \vec{x} \rangle^{2k} \rho(\vec{x}) \intd \vec{x} 
		            &= \sum_{\alpha_1 + \cdots + \alpha_n = 2k} \Big(\int_{-\infty}^\infty (t_{1} x_1)^{\alpha_1} \rho(x_1) \intd x_1 \Big) \cdots \Big(\int_{-\infty}^\infty (t_{n} x_n)^{\alpha_n} \rho(x_n) \intd x_n \Big) \\
		            &= \sum_{\alpha_1 + \cdots + \alpha_n = 2k} t_1^{\alpha_1} \cdots t_n^{\alpha_n} \cdot V_{\alpha_1}  \cdots V_{\alpha_n}
		        \; .
		    \end{align*}
		    Similarly, we have
		    \[
		        \sum_{i=1}^m \langle \vec{t}, \vec{w}_i \rangle^{2k} = \sum_{\alpha_1 + \cdots + \alpha_n = 2k} t_1^{\alpha_1} \cdots t_n^{\alpha_n} \sum_{i=1}^m w_{i,1}^{\alpha_1} \cdots w_{i,n}^{\alpha_n}
		         \; .
		    \]
		    Combining everything together, we see that
		    \begin{align*}
		        \left|V_{2k} \|\vec{t}\|^{2k} - \frac{1}{m} \sum_{i=1}^m \langle \vec{t}, \vec{w}_i \rangle^{2k}\right|
		        &=  \sum_{\alpha_1 + \cdots + \alpha_n = 2k}  t_1^{\alpha_1} \cdots t_n^{\alpha_n}\left( V_{\alpha_1} \cdots V_{\alpha_n} - \frac{1}{m}\sum_{i=1}^m w_{i,1}^{\alpha_1} \cdots w_{i,n}^{\alpha_n}   \right) \\
		        &\leq \left( \max_{\alpha_1 + \cdots + \alpha_n = 2k} t_1^{\alpha_1} \cdots t_n^{\alpha_n} \right)\cdot  \sum_{\alpha_1 + \cdots + \alpha_n = 2k}  \left| V_{\alpha_1} \cdots V_{\alpha_n} - \frac{1}{m}\sum_{i=1}^m w_{i,1}^{\alpha_1} \cdots w_{i,n}^{\alpha_n}   \right| \\
		        &\leq \|\vec{t}\|^{2k} \cdot \sum_{\alpha_1 + \cdots + \alpha_n = 2k}  \left| V_{\alpha_1} \cdots V_{\alpha_n} - \frac{1}{m}\sum_{i=1}^m w_{i,1}^{\alpha_1} \cdots w_{i,n}^{\alpha_n}   \right|\; . &&\qedhere
		    \end{align*}
		\end{proof}
		
			\begin{proposition}
	    \label{prop:stupid_annoying_thing_about_moments_and_stuff}
	    Let $k_{\max}$ be a positive odd integer. Suppose that $\vec{w}_1,\ldots, \vec{w}_m$ satisfy the inequality
	    \[
	        \max_{\alpha_1 + \cdots + \alpha_n = 2k}\left| V_{\alpha_1}\cdots V_{\alpha_n} - \frac{1}{m}\sum_{i=1}^m w_{i,1}^{\alpha_1} \cdots w_{i,n}^{\alpha_n}   \right| \leq \eps
	    \]
	    for some $\eps > 0$ for all $0 \le k \leq k_{\max}$.
	    Then, for every $\vec{u} \in \R^n$ with $\|\vec{u}\| \leq r$, we have
	    \[
	        f_{W}(\vec{u}) \geq e^{-\pi r^2} - (10 r^2/k_{\max})^{k_{\max}+1} - \eps n^{2 k_{\max}} e^{2\pi r}
	        \; ,
	    \]
	    where
	    \[
	        f_W(\vec{u}) := \frac{1}{m} \cdot \sum_{i=1}^m \cos(2\pi \inner{\vec{w}_i, \vec{u}})
	        \; .
	    \]
	\end{proposition}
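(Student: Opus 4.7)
The plan is to bound $\cos$ from below pointwise by its degree-$2k_{\max}$ Taylor polynomial (exploiting the oddness of $k_{\max}$), express the resulting lower bound in terms of the sample moments $M_k := \frac{1}{m}\sum_{i=1}^m \inner{\vec{w}_i, \vec{u}}^{2k}$, then use Claim~\ref{clm:tensor} together with the hypothesis to approximate each $M_k$ by $V_{2k}\|\vec{u}\|^{2k}$, and finally recognize the resulting main term as a truncated Taylor series of $e^{-\pi\|\vec{u}\|^2}$.

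Concretely, first I would invoke Claim~\ref{clm:cos_series} pointwise at $x = 2\pi\inner{\vec{w}_i, \vec{u}}$. Since $k_{\max}$ is odd, this gives $\cos(x) \geq C_{k_{\max}}(x)$, and averaging over $i$ yields
\[
f_W(\vec{u}) \geq \sum_{k=0}^{k_{\max}} \frac{(-1)^k (2\pi)^{2k}}{(2k)!}\, M_k.
\]
Next, I would apply Claim~\ref{clm:tensor} together with the moment hypothesis, using the (crude but sufficient) bound that the number of monomials of total degree $2k$ in $n$ variables is at most $n^{2k}$, to conclude that $|M_k - V_{2k}\|\vec{u}\|^{2k}| \leq r^{2k} n^{2k} \eps$ for every $0 \leq k \leq k_{\max}$. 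Substituting this splits the lower bound into a main term $\sum_{k=0}^{k_{\max}} \frac{(-1)^k (2\pi)^{2k}}{(2k)!} V_{2k}\|\vec{u}\|^{2k}$ and an absolute error bounded by $\eps \sum_{k=0}^{k_{\max}} \frac{(2\pi rn)^{2k}}{(2k)!} \leq \eps n^{2k_{\max}} \sum_{k\geq 0} \frac{(2\pi r)^{2k}}{(2k)!} \leq \eps n^{2k_{\max}} e^{2\pi r}$, where I pulled out $n^{2k_{\max}}$ using $n^{2k} \leq n^{2k_{\max}}$ and used $\cosh(2\pi r) \leq e^{2\pi r}$.

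For the main term, I would use the explicit formula $V_{2k} = (2k)!/((4\pi)^k k!)$ from Fact~\ref{fact:hermite} to simplify the coefficients: $\frac{(2\pi)^{2k} V_{2k}}{(2k)!} = \frac{\pi^k}{k!}$. Hence the main term is precisely the $k_{\max}$th partial sum of the Taylor series of $e^{-\pi\|\vec{u}\|^2}$. Taylor's theorem with Lagrange remainder (on $e^{-x}$, whose derivatives are bounded by $1$ on $[0,\pi r^2]$) bounds the truncation error by $\frac{(\pi r^2)^{k_{\max}+1}}{(k_{\max}+1)!}$, and Stirling gives $\frac{(\pi r^2)^{k_{\max}+1}}{(k_{\max}+1)!} \leq \bigl(\frac{e\pi r^2}{k_{\max}+1}\bigr)^{k_{\max}+1} \leq \bigl(\frac{10 r^2}{k_{\max}}\bigr)^{k_{\max}+1}$, since $e\pi < 10$. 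Combining this with $e^{-\pi\|\vec{u}\|^2} \geq e^{-\pi r^2}$ yields the first two terms of the inequality.

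I do not expect any genuine obstacle here: each step is a direct application of a previously stated lemma or a textbook Taylor-remainder bound. The only small subtleties to keep in mind are (i) the parity requirement on $k_{\max}$, without which Claim~\ref{clm:cos_series} only yields an upper bound, and (ii) the algebraic identity $\frac{(2\pi)^{2k} V_{2k}}{(2k)!} = \frac{\pi^k}{k!}$, which is what reveals the main term as a truncation of $e^{-\pi\|\vec{u}\|^2}$ and drives the whole argument.
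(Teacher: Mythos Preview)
Your proposal is correct and follows essentially the same route as the paper: apply Claim~\ref{clm:cos_series} (using that $k_{\max}$ is odd), then Claim~\ref{clm:tensor} with the moment hypothesis to produce the $\eps n^{2k_{\max}}e^{2\pi r}$ error, and finally compare the remaining main term to $e^{-\pi r^2}$. The only cosmetic difference is in that last step: the paper re-invokes Claim~\ref{clm:cos_series} on the continuous-Gaussian identity $\rho(\vec{u})=\expect_{\vec{w}\sim D}[\cos(2\pi\langle\vec{w},\vec{u}\rangle)]$ to bound the truncation error by $\frac{(2\pi r)^{2k_{\max}+2}}{(2k_{\max}+2)!}V_{2(k_{\max}+1)}$, whereas you use the identity $\frac{(2\pi)^{2k}V_{2k}}{(2k)!}=\frac{\pi^k}{k!}$ to recognize the main term directly as a partial sum of $e^{-\pi\|\vec{u}\|^2}$ and apply Lagrange's remainder---these two bounds are in fact identical after simplification.
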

	\begin{proof}
	    By \cref{clm:cos_series}, we have 
		\begin{align*}
		f_W(\vec{u}) &:= \frac{1}{m} \cdot \sum_{i=1}^m \cos(2\pi \inner{\vec{w}_i, \vec{u}}) \\
		&\geq  \sum_{k=0}^{k_{\max}} \frac{(-4\pi^2)^k}{(2k)!} \cdot \frac{1}{m} \sum_{i=1}^m \inner{\vec{w}_i, \vec{u}}^{2k} 
		\; .
		\end{align*}
		Applying \cref{clm:tensor}, we see that 
		\[
		    \Big| V_{2k}\|\vec{u}\|^{2k} - \frac{1}{m} \sum_{i=1}^m \langle \vec{w}_i, \vec{u} \rangle^{2k} \Big|
		    \leq \eps n^{2k}\|\vec{u}\|^{2k} \leq \eps n^{2k} r^{2k}
		    \; .
		\]
		Therefore, 
		\begin{align*}
		   f_W(\vec{u}) 
		   \geq \sum_{k=0}^{k_{\max}} \frac{(-4\pi^2 \|\vec{u}\|^2)^k}{(2k)!} V_{2k} - \eps n^{2k_{\max}} \cdot \sum_{k=0}^{k_{\max}} \frac{(2\pi r)^{2k}}{(2k)!} \geq \sum_{k=0}^{k_{\max}} \frac{(-4\pi^2\|\vec{u}\|^2)^k}{(2k)!} V_{2k}  - \eps n^{2k_{\max}}  \exp(2 \pi r)
		   \; .
		\end{align*}
		
		On the other hand, using \cref{eq:Gaussian_fourier} and applying \cref{clm:cos_series} again, we have
		\[
		\rho(\vec{u}) = \expect_{\vec{w} \sim D}[\cos(2\pi \langle \vec{w}, \vec{u} \rangle)] \leq \sum_{k=0}^{k_{\max}+1} \frac{(-4\pi^2 \|\vec{u}\|^2)^k}{(2k)!} \cdot V_{2k} \leq \sum_{k=0}^{k_{\max}} \frac{(-4\pi^2\|\vec{u}\|^2)^k}{(2k)!} V_{2k} + \frac{(2\pi r)^{2k_{\max}+2}}{(2k_{\max}+2)!}\cdot V_{2(k_{\max}+1)}
		\; .
		\]
		Combining the two inequalities gives
		\[
		f_W(\vec{u})  \geq \rho(\vec{u}) - \frac{(2\pi r)^{2k_{\max}+2}}{(2k_{\max}+2)!}\cdot V_{2(k_{\max}+1)} - \eps n^{2k_{\max}}e^{2\pi r} \geq e^{-\pi r^2} - (10 r^2/k_{\max})^{k_{\max} + 1} - \eps n^{2k_{\max}} e^{2\pi r}
		\; ,
		\]
		as needed.
	\end{proof}
		
		\subsection{The Chernoff-Hoeffding bound}

\begin{lemma}
\label{lemma:chernoff}
    If $X_1,\ldots, X_N$ are independent identically distributed random variables with $|X_i| \leq r$, then for any $\delta \geq 0$,
    \[
        \Pr\left[ \Big|\frac{1}{N} \sum_{i=1}^N X_i - \expect[X_1]\Big| \geq \delta \right] \leq 2 \exp(-\delta^2 N/(2r^2))
        \; .
    \]
\end{lemma}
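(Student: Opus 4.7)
The plan is to prove this by the standard Chernoff/Cram\'er moment-generating function method, combined with Hoeffding's lemma to control the MGF of each bounded centered variable. First I would reduce to the centered case by setting $Y_i := X_i - \expect[X_i]$, noting that $\expect[Y_i] = 0$ and that $Y_i$ takes values in an interval of length at most $2r$ (since $X_i \in [-r, r]$ and so $Y_i \in [-r - \mu, r - \mu]$ where $\mu := \expect[X_i]$).

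Next I would apply Hoeffding's lemma, which states that for a zero-mean random variable supported in an interval of length $L$, the moment generating function satisfies $\expect[e^{\lambda Y}] \leq \exp(\lambda^2 L^2 / 8)$ for all $\lambda \in \R$. In our setting $L \leq 2r$, giving $\expect[e^{\lambda Y_i}] \leq \exp(\lambda^2 r^2/2)$. Then for any $\lambda > 0$, Markov's inequality applied to the nonnegative random variable $\exp(\lambda \sum_i Y_i)$ yields
\[
    \Pr\left[\sum_{i=1}^N Y_i \geq N\delta\right] \leq e^{-\lambda N \delta} \prod_{i=1}^N \expect[e^{\lambda Y_i}] \leq \exp(-\lambda N \delta + N \lambda^2 r^2 / 2)
    \; .
\]
Optimizing over $\lambda$ by setting $\lambda = \delta/r^2$ gives the one-sided bound $\exp(-N\delta^2/(2r^2))$. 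Applying the same argument to $-Y_i$ and taking a union bound over the two tails yields the factor of $2$ in the claimed inequality.

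The only real technical ingredient is Hoeffding's lemma itself, whose proof goes via convexity: for $Y \in [a,b]$ with $\expect[Y] = 0$, write $Y$ as a convex combination of $a$ and $b$ to get $e^{\lambda Y} \leq \frac{b-Y}{b-a} e^{\lambda a} + \frac{Y-a}{b-a} e^{\lambda b}$, take expectations, and show that the resulting function of $\lambda$ has second derivative bounded by $(b-a)^2/4$, so that a second-order Taylor expansion around $\lambda = 0$ gives the stated bound. I would not expect any step to present a genuine obstacle here, as this is a textbook result; the one place to be slightly careful is the constant in Hoeffding's lemma, which must be $(b-a)^2/8$ (not $(b-a)^2/2$) in order to yield exactly the factor $1/(2r^2)$ in the exponent of the final bound.
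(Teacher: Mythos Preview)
Your proof is correct and is the standard textbook argument. The paper does not actually prove this lemma; it is stated without proof as a well-known result (the Chernoff--Hoeffding bound), so there is nothing to compare against beyond noting that your derivation supplies exactly the classical justification the paper omits.
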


\subsection{Exponential Time Hypotheses}
\label{sec:prelims_ETH_stuff}

In the $k$-SAT problem, given a $k$-CNF formula~$\phi$, the task is to check if~$\phi$ has a satisfying assignment. The complement of $k$-SAT, the $k$-TAUT problem, is to decide if all assignments to the variables of a given $k$-DNF formula satisfy~it. Impagliazzo, Paturi, and Zane~\cite{IPZ98,IP99} introduced the following two hypotheses on the complexity of $k$-SAT, which now lie at the heart of the field of fine-grained complexity. (See~\cite{VW18} for a survery.)
\begin{definition}[Exponential time hypothesis 
    (ETH)~\cite{IPZ98,IP99}]
        There exists a constant $\eps>0$ such that no algorithm solves $3$-SAT on formulas with $n$~variables in time $2^{\eps n}$.
\end{definition}
\begin{definition}[Strong exponential time hypothesis 
    (SETH)~\cite{IPZ98,IP99}]
        For every constant $\eps>0$, there exists a constant~$k \geq 3$ such that no algorithm solves $k$-SAT on formulas with $n$~variables in time $2^{(1-\eps)n}$.
\end{definition}

Impagliazzo, Paturi, and Zane~\cite{IPZ98} proved the following sparsification lemma which, in particular, is used to show that the Strong exponential time hypothesis implies the Exponential time hypothesis.
\begin{theorem}[Sparsification lemma~\cite{IPZ98}]\label{thm:sparsification}
For every $k \geq 3$ and $\eps > 0$ there exists a constant $c = c(k, \eps)$ such that every $k$-SAT formula $\phi$ with $n$ variables can be expressed as $\phi = \lor_{i=1}^r \psi_{i}$ where $r \leq 2^{\eps n}$ and each $\psi_i$ is a $k$-SAT formula with at most $cn$ clauses. Moreover, all $\psi_i$ can be computed in $2^{\eps n}$-time.
\label{prop:sparsification}
\end{theorem}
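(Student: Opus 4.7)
The plan is to prove the sparsification lemma by a recursive branching algorithm that, on input a $k$-CNF formula $\phi$, produces a decomposition $\phi \equiv \bigvee_{i=1}^{r} \psi_i$ into logically equivalent sub-formulas, each with bounded clause-to-variable ratio. Each $\psi_i$ will be obtained from $\phi$ by conjoining some partial assignment's worth of unit clauses, so that the resulting formulas are simply $\phi$ restricted along one branch of the search tree; the two quantities to control are the number of leaves (at most $2^{\eps n}$) and the clause count per leaf (at most $c(k, \eps) \cdot n$).

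The branching step uses the Erd\H{o}s--Rado sunflower lemma: any family of more than $i! \cdot (a-1)^i$ sets of size $i$ contains a sunflower of size $a$, i.e., $a$ sets sharing a common pairwise intersection $H$ (the \emph{heart}) with pairwise disjoint petals. I would fix, for each clause size $i \leq k$, a threshold $a_i$ depending on $\eps$ and $k$, and whenever $\phi$ contains more than $i! (a_i-1)^i$ clauses of size exactly $i$, the algorithm extracts a sunflower of $a_i$ such clauses with heart $H$ and branches into two cases. In case (A), at least one literal in $H$ is true, so all $a_i$ sunflower clauses are simultaneously satisfied and can be replaced by the single strictly shorter clause $H$. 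In case (B), every literal in $H$ is false, which we encode by adding the $|H|$ unit clauses corresponding to $\lnot H$ and simplifying. In case (A) the number of size-$i$ clauses drops by $a_i$ at the cost of introducing one clause of size $|H| < i$; in case (B) we fix $|H|$ variables and potentially shrink many other clauses. The algorithm terminates when no sunflower rule fires, so each leaf has at most $i! (a_i - 1)^i = O_{k, \eps}(1)$ clauses of each size, giving a total clause count of $c(k, \eps) \cdot n$ (after including the at most $n$ unit clauses accrued on the path).

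To bound the number of leaves by $2^{\eps n}$, I would assign a potential $\Phi(\phi) = \sum_{i=1}^{k} \mu_i \cdot m_i(\phi)$, where $m_i(\phi)$ counts the size-$i$ clauses of $\phi$ and the weights $\mu_1 < \mu_2 < \cdots < \mu_k$ grow geometrically in $i$; the weights and the thresholds $a_i$ are tuned together so that every branching step decreases $\Phi$ by a definite amount $\Delta$ in both branches. A standard amortization argument then bounds the leaf count by roughly $2^{\Phi(\phi_0) / \Delta}$, which, starting from $\Phi(\phi_0) = O(n)$, can be made smaller than $2^{\eps n}$ by taking the $a_i$ sufficiently large as a function of $1/\eps$. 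The main technical obstacle is precisely this simultaneous parameter tuning: case (A) is delicate because the heart clause $H$ it produces has size strictly less than $i$ and may itself trigger sunflower branching at a lower level, so the weights $\mu_i$ must grow rapidly enough in $i$ that such cascades pay for themselves, while the thresholds $a_i$ must be calibrated so that each branching contributes at most $\eps$ to the per-variable exponent. Getting these two sequences to balance---and simultaneously verifying that $\Phi(\psi_i)$ at each leaf is bounded independent of $n$, yielding the $cn$ clause bound---is where the core calculation lies; the algorithmic moreover claim then follows essentially for free, since the sunflower extraction and branching are both polynomial-time operations performed at most $2^{\eps n}$ times.
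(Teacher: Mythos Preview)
The paper does not prove this theorem; it is quoted as a known result with a citation to~\cite{IPZ98} and used as a black box (in the proof of \cref{cor:NETH}). So there is no ``paper's own proof'' to compare against, and your proposal should be judged on its own merits as a reconstruction of the Impagliazzo--Paturi--Zane argument.

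Your high-level plan---branch via sunflowers, with case (A) replacing the sunflower by its heart and case (B) fixing the heart literals---is indeed the backbone of the IPZ proof. But the potential-function bookkeeping you propose has a real gap. You define $\Phi(\phi)=\sum_i \mu_i\, m_i(\phi)$ where $m_i$ counts size-$i$ clauses, and then assert that ``starting from $\Phi(\phi_0)=O(n)$'' the leaf count is at most $2^{\Phi(\phi_0)/\Delta}\le 2^{\eps n}$. The input $k$-CNF, however, can have up to $\Theta(n^k)$ distinct clauses, so $\Phi(\phi_0)$ is $O_{k,\eps}(n^k)$, not $O(n)$; your amortization would then only give $2^{O(n^k)}$ leaves. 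This is not a cosmetic issue: any potential that is a positive linear combination of clause counts will have this problem, because the clause count is the thing you are trying to prove is eventually $O(n)$, not something you may assume at the outset.

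The actual IPZ analysis avoids this by measuring progress against the \emph{variable} count rather than the clause count. Along any root-to-leaf path there can be at most $n$ type-(B) branchings, since each fixes at least one variable; the work is to show that between two consecutive (B)-branchings the number of (A)-branchings is bounded by a function of $k$ and $\eps$ only (because each (A)-step trades many size-$i$ clauses for one strictly shorter clause, and a cascade through decreasing sizes terminates quickly). Choosing the thresholds $a_i$ large enough then makes the branching factor per variable at most $2^{\eps}$. If you want to salvage a potential-function argument, the potential should be of the form $\alpha\cdot(\text{free variables}) + \text{(bounded correction)}$, not a clause-weighted sum.
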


Given the lack of progress on refuting these conjectures, one may propose even stronger hypotheses by considering more powerful classes of algorithms such as $\coNP, \coMA, \coAM, \IP$. \cite{CGIMPS16} studied co-non-deterministic versions of the above conjectures. 
\begin{definition}[Non-deterministic ETH
    (NETH)~\cite{CGIMPS16}]
        There exists an $\eps>0$ such that no non-deterministic algorithm solves $3$-TAUT on formulas with $n$~variables in time $2^{\eps n}$.
\end{definition}
\begin{definition}[Non-deterministic SETH (NSETH)~\cite{CGIMPS16}]
For every $\eps>0$, there exists~$k$ such that no non-deterministic algorithm solves $k$-TAUT on formulas with $n$~variables in time~$2^{(1-\eps)n}$.
\end{definition}
Jahanjou, Miles, and Viola~\cite{JMV15} showed that a refutation of ETH or SETH would imply super-linear lower bounds against general or series-parallel Boolean circuits for a problem in $\Eclass^{\NP}$, respectively. \cite{CGIMPS16} extended this result by showing that refuting NETH or NSETH is also sufficient for super-linear lower bounds against general or series-parallel Boolean circuits.

\paragraph{Probabilistic Proof Systems for $k$-TAUT.}
One can further strengthen the above conjectures by allowing the algorithms to use both non-determinism and randomness. We will consider private-coin and public-coin probabilistic proof systems for $k$-TAUT. 

By a constant-round $\IP$ protocol we denote a protocol where a computationally bounded probabilistic verifier and a computationally unbounded prover exchange a constant number of messages, and the verifier makes a decision by running a probabilistic algorithm on the input and transcript of their communication. In this work, we only consider constant-round $\IP$ protocols. (Note that the classical definition of the complexity class $\IP$ allows polynomially many rounds.)
By an $\MA$ protocol we mean a protocol where a computationally unbounded prover (Merlin) sends a message to a computationally bounded verifier (Arthur), and Arthur makes a decision by running a probabilistic algorithm on the input and Merlin's message. We will also consider the public-coin version of $\IP$ protocols---$\AM$ protocols---where all Arthur's messages are sequences of random bits, and Arthur is not allowed to use other random bits except for the ones revealed in his messages. By a constant-round $\AM$ protocol we mean a protocol where a computationally bounded probabilistic Arthur and computationally unbounded Merlin exchange a constant number of messages, and then Arthur makes a decision by running a deterministic algorithm on the input, and Arthur's and Merlin's messages.

By the complexity of an $\IP$, $\MA$, or $\AM$ protocol we will mean the maximum of (i) the total communication between the prover and verifier and (ii) the total running time of the verifier. We say that an ($\IP$, $\MA$, or $\AM$) protocol is a protocol for a given language if the prover can cause the verifier to accept every input in the language with probability at least $2/3$, while even if the prover behaves maliciously, the verifier still rejects every input not in the language with probability at least $2/3$. We write $\mathsf{MATIME}[T]$,  $\mathsf{AMTIME}[T]$, and  $\mathsf{IPTIME}[T]$ for the set of languages with such (constant-round) protocols with complexity bounded by $T$.

In the case of (constant-round) protocols with \emph{polynomial complexity}, it is known that $\MATIME[\poly] \subseteq \AMTIME[\poly] = \IPTIME[\poly]$, and the total number of messages sent by the prover and verifier can be reduced to two for $\AM$ and $\IP$ protocols~\cite{GS86,BM88}. But these results are not known to extend to the case of fine-grained complexity as both the public-coin simulation of private coins and the round-reduction procedure incur polynomial overhead in complexity (i.e., if the complexity of the original protocol is $T$, the complexity of the protocol after the transformation will be $\poly(T)$). For example, it is not known whether $\MATIME[T]\subseteq\AMTIME[T]$ or $\IPTIME[T]\subseteq\AMTIME[T]$, or whether $\AMTIME[T]$ is equivalent to its two-round variant.

Below we state two versions of ETH: one for $\MA$ protocols, and one for $\AM$ and $\IP$ protocols.
\begin{definition}[MAETH]
 There exists an $\eps>0$ such that no \MA{} protocol of complexity $2^{\eps n}$ solves $3$-TAUT on formulas with $n$~variables.
\end{definition}
\begin{definition}[AMETH]
 There exists an $\eps>0$ such that no constant-round \AM{} protocol of complexity $2^{\eps n}$ solves $3$-TAUT on formulas with $n$~variables.
\end{definition}
 By the results mentioned above, we can replace the constant-round $\AM$ protocols in the definition of AMETH by two-round \AM{} protocols or two-round $\IP$ protocols  without changing the definition. (The transformation only affects the unspecified constant $\eps$.)

Regarding the strong versions of the above hypotheses, Williams~\cite{W16} gave an \MA{} protocol of complexity $2^{n/2}$ solving $k$-TAUT (this result was later improved in~\cite{ACJRW22} to a protocol of complexity $2^{n/2-n/O(k)}$). 
Therefore, a natural version of SETH for the case of probabilistic protocols is whether there exist constant-round probabilistic protocols for $k$-TAUT of complexity significantly less than $2^{n/2}$. In this work, we will be using the two-round private-coin version of this conjecture.

\begin{definition}[IPSETH]\label{def:ipseth}
For every $\eps>0$, there exists~$k$ such that no two-round \IP{} protocol of complexity $2^{(1/2-\eps)n}$ solves $k$-TAUT on formulas with $n$~variables.
\end{definition}
We remark that while \cite{W16} and \cite{ACJRW22} give an $\MA$ protocol for $k$-TAUT of complexity roughly $2^{n/2}$, the known transformations of such a protocol into a \emph{two-round} \AM{} protocol~\cite{BM88} suffer a quadratic overhead in complexity, resulting in a protocol of trivial complexity roughly $2^n$.\footnote{By a two-round \AM{} protocol we mean a protocol where Arthur sends a sequence of public coins, Merlin responds with a message, and then Arthur makes a decision by running a \emph{deterministic} algorithm on the input and the transcript. The fact that Arthur's final verification procedure must be deterministic is the reason why the $\mathsf{MA}$ protocol in \cite{W16} does not trivially imply a two-round $\mathsf{AM}$ protocol with the same complexity.} Therefore, in the public-coin version of the above conjecture, one may also hypothesize that there is no two-round \AM{} protocol for $k$-TAUT with complexity significantly smaller than $2^n$. Using the standard techniques of simulating randomness by non-uniformity, a two-round \AM{} protocol of complexity $2^{(1-\eps)n}$ would also refute NUNSETH (Non-uniform non-deterministic SETH, see \cite[Definition~4]{CGIMPS16}). 
While faster than known probabilistic proof systems (or non-uniform non-deterministic algorithms) for TAUT are not known to imply circuit lower bounds, such protocols would still greatly improve the current state of the art, and some of these hypotheses are used as complexity assumptions (see, e.g., \cite{CGIMPS16,BGKMS23}).

\section{A \texorpdfstring{$\mathsf{coAM}$}{coAM} protocol}
\label{sec:coAM}

We next present a generalization of the Goldreich-Goldwasser $\coAM$ protocol for $\gamma$-$\GapCVP$~\cite{GG00} with $\gamma = O(\sqrt{n/\log n})$. 
Our protocol generalizes theirs in that we consider a general time-approximation-factor tradeoff beyond polynomial running times. In particular, we get a $\coAM$ protocol for constant-factor approximate $\GapCVP$ in any norm $\norm{\cdot}_K$ that runs in $2^{\eps n}$ time. This implies a barrier to proving fine-grained hardness of approximation results for $\GapCVP$, as we show in \cref{sec:limitations}.

In fact, we will give our protocol as a private-coin protocol and then apply a general transformation to convert it into a public-coin, $\coAM$ protocol.

\begin{theorem} \label{thm:generalized-GG}
Let $K \subseteq \R^n$ be a centrally symmetric convex body represented by a membership oracle, and let $\gamma = \gamma(n) \geq 1 + 1/n$. Then there is a two-round private-coin interactive proof (in fact, an honest-verifier perfect zero knowledge proof) for the complement of $\gamma$-$\GapCVP_K$ that runs in time $N \cdot \poly(n)$ for
\[
N := 10 n (1 - 1/\gamma)^{-n} \leq (1 - 1/\gamma)^{-n} \cdot \poly(n) \ \text{.}
\]
Furthermore, for $K = \B_2^n$ (i.e., in the $\ell_2$ norm) there is such a protocol that runs in time $N_2 \cdot \poly(n)$ for
\[
N_2 := 10 n^{3/2} \cdot (1 - 1/\gamma^2)^{-(n+1)/2} \leq (1 - 1/\gamma^2)^{-n/2} \cdot \poly(n) \ \text{.}
\]
\end{theorem}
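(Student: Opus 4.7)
The plan is to generalize the Goldreich-Goldwasser protocol by using a scaled copy of the convex body $K$ as the ``ball'' around each lattice point, and to track the intersection-volume bounds from \cref{lem:normalized-K-intersection-vol,lem:normalized-l2-ball-intersection-vol} so that the number of parallel repetitions needed to amplify soundness matches $N$ (respectively $N_2$).

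Concretely, given an instance $(\lat, \vec{t}, d)$, I would rescale so that $d = 1$ and set $r := \gamma/2$. The basic one-shot protocol has the verifier sample a uniform bit $b \in \{0,1\}$ and a uniform $\vec{x}' \in rK$ (via the membership oracle), then send $\vec{x} := (\vec{x}' + b \vec{t}) \bmod \mathcal{P}(\basis)$ to Merlin and accept iff Merlin returns $b$. Completeness in the YES case ($\dist_K(\vec{t}, \lat) > \gamma d = 2r$) is perfect because the two distributions on $\mathcal{P}(\basis)$ indexed by $b$ have disjoint support---any overlap would give $\vec{t} - \vec{y} \in rK - rK = 2rK$ for some $\vec{y} \in \lat$, contradicting the FAR condition. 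For soundness in the NO case, I would let $\vec{y}_0 \in \lat$ be a $K$-closest lattice vector to $\vec{t}$, set $\vec{v} := \vec{t} - \vec{y}_0$, and note that the $b = 1$ distribution on $\mathcal{P}(\basis)$ equals the mod-$\mathcal{P}(\basis)$ reduction of the uniform distribution on $rK + \vec{v}$. A coupling that equates the two samples whenever both land in $rK \cap (rK + \vec{v})$ then bounds their total variation distance by $1 - p$, where
\[
p \ := \ \frac{\vol_n(rK \cap (rK + \vec{v}))}{\vol_n(rK)} \ \geq \ (1 - 1/\gamma)^n
\]
by \cref{lem:normalized-K-intersection-vol} (using $\|\vec{v}\|_K/(2r) \leq 1/\gamma$), and $p \geq \sqrt{1/(2\pi n)}\,(1 - 1/\gamma^2)^{(n+1)/2}$ by \cref{lem:normalized-l2-ball-intersection-vol} when $K = \B_2^n$. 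Hence Merlin's single-shot winning probability is at most $1 - p/2$.

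To finish, I would amplify via $\lceil 10n/p\rceil$-fold parallel repetition, keeping the protocol two-round by having the verifier send all first messages at once. Soundness drops to $(1-p/2)^{10n/p} \leq e^{-5n}$ because the pairs $(\vec{x}_i, b_i)$ are jointly independent across coordinates, reducing the prover's task to independent hypothesis-testing problems for which coordinate-wise guessing is optimal. Completeness remains $1$, and honest-verifier perfect zero knowledge is immediate: since an honest Merlin in the YES case outputs $b$ with probability $1$, the simulator can sample the private coins itself and produce the full transcript without any interaction. Substituting the two bounds on $p$ gives verifier running time $\lceil 10n/p\rceil \cdot \poly(n) = N \cdot \poly(n)$ in general, and $N_2 \cdot \poly(n)$ for $K = \B_2^n$. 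The main technical nuisance to nail down is uniformly sampling from $rK$ given only the membership oracle, which a standard $\poly(n)$-time random walk handles up to exponentially small error that can be absorbed into the soundness gap.
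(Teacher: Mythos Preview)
Your proposal is correct and follows essentially the same approach as the paper's proof: the same ``perturb by $rK$ around $\vec{0}$ or $\vec{t}$, reduce mod the lattice, ask Merlin to guess the bit'' protocol, with completeness from disjoint supports in the FAR case and soundness from the intersection-volume bounds of \cref{lem:normalized-K-intersection-vol,lem:normalized-l2-ball-intersection-vol} in the CLOSE case, amplified by parallel repetition. One small imprecision worth fixing: you write the number of repetitions as $\lceil 10n/p\rceil$, but $p$ is instance-dependent and unknown to the verifier, so the repetition count must be fixed in advance as $N = 10n(1-1/\gamma)^{-n}$ (respectively $N_2$); your later sentence ``substituting the two bounds on $p$'' makes clear this is what you intend, and then $(1-p/2)^N \le e^{-pN/2} \le e^{-5n}$ follows since $pN \ge 10n$.
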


\begin{proof}
Let $(\basis, \vec{t}, d)$ be the input instance of the complement of $\gamma$-$\GapCVP_K$, and let $\lat = \lat(\basis)$.
The interactive proof works by performing the following procedure $N$ times in parallel:
\begin{itemize}
    \item Arthur samples a uniformly random bit $b \in \bit$ and a uniformly random vector $\vec{s} \in r K + b \vec{t}$ for $r := \gamma d/2$.
He then sends $\vec{v} := \vec{s} \bmod \Par(\basis)$ to Merlin, and asks Merlin for the value of $b$.
    \item Merlin then sends Arthur a bit $b' \in \bit$ in response.

\end{itemize}
Arthur accepts if $b = b'$ for all of the $N$ trials, and rejects otherwise.

To sample a uniformly random point from a convex body in polynomial time, Arthur can use, e.g.,~\cite{DFKRandomPolynomialtimeAlgorithm1991}.
So, it is clear that the protocol runs in $N \cdot \poly(n)$ time, and it remains to show correctness. 

First, assume that the input instance is a YES instance, i.e., that $\dist_K(\vec{t}, \lat) > \gamma d$.
If $b = 0$, then
\[
\dist_K(\vec{v}, \lat) = \dist_K(\vec{s}, \lat) \leq \gamma d/2 \ \text{,}
\]
and if $b = 1$,
\[
\dist_K(\vec{v}, \lat) = \dist_K(\vec{s}, \lat) \geq \dist_K(\vec{t}, \lat) - \norm{\vec{s}-\vec{t}}_K > \gamma d - \gamma d/2 = \gamma d/ 2 \ \text{,}
\]
where the first inequality holds by triangle inequality. So, in either case Merlin can determine the value of $b$ with probability $1$, and will therefore send the correct bit $b' = b$ to Arthur in all $N$ trials, as needed.

Next, assume that the input instance is a NO instance, and let $\vec{u} \in \lat$ be a closest vector to $\vec{t}$ so that $\norm{\vec{t} - \vec{u}}_K = \dist_K(\vec{t}, \lat) \leq d$.
Notice that the distributions $(\vec{u} + \vec{s}) \bmod \Par(\basis)$ and $\vec{s} \bmod \Par(\basis)$ for $\vec{s} \sim r K$ are identical, and so for analysis assume that Arthur samples $\vec{s} \sim r K + \vec{u}$ (as opposed to $\vec{s} \sim r K$) if $b = 0$ and $\vec{s} \sim r K + \vec{t}$ if $b = 1$. 
Additionally notice that if $\vec{s} \in (rK + \vec{u}) \cap (rK + \vec{t})$ then (information theoretically) Merlin cannot correctly guess the value of $b$ given $\vec{v} = \vec{s} \bmod \Par(\basis)$ with probability greater than $1/2$.
Furthermore, Arthur samples such a vector $\vec{s} \in (rK + \vec{u}) \cap (rK + \vec{t})$ with probability
\begin{align}
p &:= \frac{\vol_n((rK + \vec{u}) \cap (rK + \vec{t}))}{\vol_n(rK + \vec{u})} \nonumber \\
&= \frac{\vol_n((K + \frac{1}{r} \vec{u}) \cap (K + \frac{1}{r} \vec{t}))}{\vol_n(K)} \nonumber \\
&= \frac{\vol_n(K \cap (K + \frac{2}{\gamma d} (\vec{t} - \vec{u})))}{\vol_n(K)} \nonumber \\
&\geq \frac{\vol_n(K \cap (K + \frac{2}{\gamma \norm{\vec{t} - \vec{u}}_K} (\vec{t} - \vec{u})))}{\vol_n(K)} \nonumber \\ 
&\geq (1 - 1/\gamma)^n  \label{eq:p-lb-gg-proof} \ \text{,}
\end{align}
where the first inequality uses the fact that $\norm{\vec{t} - \vec{u}}_K \leq d$ and the second inequality (i.e., \cref{eq:p-lb-gg-proof}) uses \cref{lem:normalized-K-intersection-vol}.
Therefore, the probability that Merlin answers correctly on all $N$ trials is less than
\[
(1 - p/2)^N \leq \exp(-pN/2) \leq \exp(-n) \ \text{,}
\]
as needed.

Now, consider the case where $K = \B_2^n$ and run the same protocol as for general $K$ except $N_2$ times in parallel instead of $N$.
Applying the stronger $\ell_2$-ball intersection lower bound from \cref{lem:normalized-l2-ball-intersection-vol} to the left-hand side of \cref{eq:p-lb-gg-proof}, we get
\[
p \geq \sqrt{1/(2 \pi n)} \cdot (1 - 1/\gamma^2)^{(n + 1)/2} \ \text{.}
\]
We then similarly get that the probability that Merlin answers correctly on all $N_2$ trials is less than
\[
(1 - p/2)^{N_2} \leq \exp(-p N_2/2) \leq \exp(-n) \ \text{,}
\]
as needed.

Finally, we note that the protocol is honest verifier perfect zero knowledge, as all bits sent by Merlin in the YES case are known to Arthur in advance.
\end{proof}

\begin{corollary}
\label{cor:ip2protocol}
Let $K \subseteq \R^n$ be a centrally symmetric convex body represented by a membership oracle. Then for every $\delta>0$ there is $\eps>0$ and a two-round $\IP$ protocol of complexity $2^{n(1-\eps)/2}$ for the complement of $\gamma$-$\GapCVP_K$ for $\gamma=2+\sqrt{2}+\delta$.
Furthermore, in the $\ell_2$ norm there is such a protocol for the complement of $\gamma$-$\GapCVP_2$ for $\gamma=\sqrt{2}+\delta$.
\end{corollary}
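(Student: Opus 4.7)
The plan is to simply invoke Theorem \ref{thm:generalized-GG} and verify the claimed time bound by computing the threshold approximation factor in each norm setting. The theorem already supplies a two-round private-coin interactive proof (hence a two-round $\IP$ protocol) for the complement of $\gamma$-$\GapCVP_K$, so nothing needs to be constructed; the only work is to identify the smallest $\gamma$ for which the protocol's running time drops below $2^{n(1-\eps)/2}$ for some $\eps > 0$.

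For the general-norm case, the theorem gives complexity $N \cdot \poly(n)$ with $N \leq (1 - 1/\gamma)^{-n} \cdot \poly(n)$. The inequality $(1 - 1/\gamma)^{-n} \leq 2^{n(1-\eps)/2}$ is equivalent, after taking logarithms and rearranging, to $\gamma \geq (1 - 2^{-(1-\eps)/2})^{-1}$. As $\eps \to 0^+$, the right-hand side tends to $(1 - 1/\sqrt{2})^{-1} = \sqrt{2}/(\sqrt{2}-1) = 2 + \sqrt{2}$. Hence for any fixed $\delta > 0$, setting $\gamma := 2 + \sqrt{2} + \delta$ makes the inequality strict at $\eps = 0$, and by continuity there exists some $\eps = \eps(\delta) > 0$ for which the desired time bound holds.

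For the $\ell_2$ case, the sharper bound $N_2 \leq (1 - 1/\gamma^2)^{-n/2} \cdot \poly(n)$ applies. The corresponding condition is $(1 - 1/\gamma^2)^{-1} \leq 2^{1 - \eps}$, equivalently $\gamma^2 \geq (1 - 2^{-(1-\eps)})^{-1}$. At $\eps = 0$ this demands $\gamma^2 \geq 2$, so any $\gamma > \sqrt{2}$ suffices; taking $\gamma := \sqrt{2} + \delta$ and invoking continuity again yields $\eps = \eps(\delta) > 0$ with $N_2 \cdot \poly(n) \leq 2^{n(1-\eps)/2}$.

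The argument is a routine algebraic specialization of Theorem \ref{thm:generalized-GG}, and I do not foresee any genuine obstacle beyond correctly identifying the two critical constants $2 + \sqrt{2}$ and $\sqrt{2}$ arising from the two volume lower bounds (\cref{lem:normalized-K-intersection-vol} and \cref{lem:normalized-l2-ball-intersection-vol}, respectively). The polynomial factors in $\poly(n)$ are absorbed into the $2^{-\eps n/2}$ slack afforded by choosing $\delta > 0$.
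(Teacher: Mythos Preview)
Your proposal is correct and is exactly the intended argument: the paper states this corollary without proof immediately after Theorem~\ref{thm:generalized-GG}, and your computation of the two threshold constants $2+\sqrt{2}=(1-2^{-1/2})^{-1}$ and $\sqrt{2}=(1-2^{-1})^{-1/2}$ from the running-time bounds $(1-1/\gamma)^{-n}$ and $(1-1/\gamma^2)^{-n/2}$ is precisely what makes it a corollary. The continuity argument to trade the strict inequality at $\eps=0$ for a positive $\eps$ (absorbing the $\poly(n)$ factors) is standard and correct.
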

\begin{corollary}
Let $K \subseteq \R^n$ be a centrally symmetric convex body represented by a membership oracle. Then for every $\gamma=\gamma(n)\geq 1$, there is a two-round $\coAM$ protocol for $\gamma$-$\GapCVP_K$ running in time $2^{O(n/\gamma)}$.
Furthermore, in the $\ell_2$ norm there is such a protocol running in time $2^{O(n/\gamma^2)}$.
\end{corollary}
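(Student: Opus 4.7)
The plan is to obtain the corollary as an immediate consequence of \cref{thm:generalized-GG} by combining two routine steps: simplifying the stated complexity bounds, and performing a standard transformation from private-coin to public-coin. All the substantive work has already been done in \cref{thm:generalized-GG}; only accounting and a well-known complexity-theoretic simulation remain.

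For the first step, I use the elementary inequality $\ln(1/(1-x)) \leq 2x$, valid for $x \in [0, 1/2]$. Applied with $x = 1/\gamma$ and $\gamma \geq 2$, this gives $(1-1/\gamma)^{-n} \leq e^{2n/\gamma}$, and therefore $(1-1/\gamma)^{-n} \cdot \poly(n) = 2^{O(n/\gamma)}$. For the remaining range $1 \leq \gamma < 2$, the target bound $2^{O(n/\gamma)}$ already subsumes the trivial $2^{O(n)}$ bound achievable by deterministically solving $\gamma$-$\GapCVP_K$ in single-exponential time (yielding a trivial $\coAM$ protocol with zero communication). An entirely analogous calculation using $\ln(1/(1-1/\gamma^2)) \leq 2/\gamma^2$ for $\gamma \geq \sqrt{2}$ handles the $\ell_2$ case, converting the bound of \cref{thm:generalized-GG} into $2^{O(n/\gamma^2)}$.

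For the second step, I convert the two-round private-coin interactive proof of \cref{thm:generalized-GG} (which is a protocol for the complement of $\gamma$-$\GapCVP_K$) into a two-round public-coin $\AM$ protocol for the same language; by definition this yields a two-round $\coAM$ protocol for $\gamma$-$\GapCVP_K$. This conversion is a direct application of the Goldwasser-Sipser simulation of private coins by public coins \cite{GS86} together with Babai-Moran round reduction \cite{BM88}. As explicitly noted in \cref{sec:prelims_ETH_stuff}, both transformations incur at most polynomial overhead in the verifier's running time and the communication. Since $\poly(2^{O(n/\gamma)}) = 2^{O(n/\gamma)}$ after absorbing the polynomial into the (unspecified) constant in the exponent, the complexity is preserved up to the implicit constant, and the $\ell_2$ case is identical.

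The main potential concern is the polynomial blowup from the private-to-public-coin simulation, but precisely because we allow an unspecified constant in the exponent of the final bound, this blowup is harmless. There is no substantive obstacle to overcome.
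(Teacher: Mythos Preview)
Your proposal is correct and follows essentially the same approach as the paper's proof: apply \cref{thm:generalized-GG} and then the Goldwasser--Sipser \cite{GS86} and Babai--Moran \cite{BM88} transformations, absorbing the $\poly(T)$ overhead into the unspecified constant in the exponent. You are slightly more careful than the paper in explicitly simplifying the complexity bound and handling the small-$\gamma$ regime via a direct $2^{O(n)}$ algorithm, but the argument is the same.
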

\begin{proof}
This result follows from \cref{thm:generalized-GG} and the general transformation of a two-round $\IP$ protocol of complexity $T$ into an $\cc{MAMAM}$ protocol of complexity $\poly(T)$~\cite{GS86}, and the latter protocol into a two-round $\AM$ protocol of complexity $\poly(T)$~\cite{BM88}.
\end{proof}

\section{A reduction from \texorpdfstring{$\GapSVP$}{GapSVP} to \texorpdfstring{$\BDD$}{BDD}}
\label{sec:gapsvp-bdd}

We next give a generalized reduction from $\gamma$-$\GapSVP$ to $\alpha$-$\BDD$ with an explicit tradeoff between $\gamma$, $\alpha$, and the running time of the reduction. The reduction itself is a straightforward adaptation of the proof of~\cite[Theorem 7.1]{conf/crypto/LyubashevskyM09}---which in turn adapts a reduction implicit in~\cite{peikertPublickeyCryptosystemsWorstcase2009}---but instantiated with the more fine-grained volume lower bound in \cref{lem:normalized-l2-ball-intersection-vol} and allowed to run in super-polynomial time. The reduction uses similar ideas to those in the Goldreich-Goldwasser protocol in \cref{thm:generalized-GG}.

\begin{theorem} \label{thm:gapsvp-to-bdd}
For any $\gamma = \gamma(n) \geq 1$ and $\alpha = \alpha(n) \in (0, 1/2)$ satisfying $\alpha \gamma \geq 1/2 + 1/n$, there is a randomized, dimension-preserving Turing reduction from $\gamma$-$\GapSVP$ on lattices of dimension $n$ to $\alpha$-$\BDD$ that makes at most
\[
N := 10 n^{3/2} \cdot \Big(1 - \frac{1}{(2 \alpha \gamma)^2}\Big)^{-(n+1)/2} \leq \Big(1 - \frac{1}{(2 \alpha \gamma)^2}\Big)^{-n/2} \cdot \poly(n)
\]
queries to its $\alpha$-$\BDD$ oracle and runs in $N \cdot \poly(n)$ time overall.
\end{theorem}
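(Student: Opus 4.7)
The plan is to combine the classical $\SVP$-to-$\CVP'$ reduction from \cref{thm:GMSS} with a $\BDD$-oracle implementation of the Goldreich--Goldwasser protocol from \cref{thm:generalized-GG}. Given a $\gamma$-$\GapSVP$ instance $(\basis, d)$ on a rank-$n$ lattice, I would first rescale so that $d = \sqrt{n}/\gamma$ and apply \cref{thm:GMSS} to produce $n$ instances $(\basis_i, \vec{t}_i)$ of $\gamma$-$\CVP'$, with $\vec{t}_i = \vec{b}_i$ and $\lat_i := \lat(\basis_i)$ obtained from $\basis$ by doubling the $i$-th vector. The input is YES iff some $(\basis_i, \vec{t}_i)$ is a YES $\CVP'$ instance, and in every NO case both $\lambda_1(\lat_i) > \sqrt{n}$ and $\dist(\vec{t}_i, \lat_i) > \sqrt{n}$. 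I would then decide each sub-instance using a randomized subroutine in which the $\alpha$-$\BDD$ oracle plays the role of Merlin in a Goldreich--Goldwasser-style distinguishing game.

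For each $i$, I would set $r := \alpha\sqrt{n}$ and repeat the following $N_0 := \lceil 3/p \rceil$ times independently, where
\[
p := \sqrt{\tfrac{1}{2\pi n}} \cdot \bigl(1 - \tfrac{1}{(2\alpha\gamma)^2}\bigr)^{(n+1)/2}
\]
is our lower bound on the overlap fraction. Sample a uniform bit $b \in \{0,1\}$ and a uniform vector $\vec{s} \in r\ball$, compute $\vec{v} := (b\vec{t}_i + \vec{s}) \bmod \Par(\basis_i)$, query the $\alpha$-$\BDD$ oracle on $(\lat_i, \vec{v})$ to obtain some $\vec{w} \in \lat_i$, and set $b' := 0$ if $\|\vec{v} - \vec{w}\| \leq r$ and $b' := 1$ otherwise. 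Declare the $i$-th sub-instance CLOSE if $b' \neq b$ in at least one trial, and FAR otherwise. Output YES for the $\GapSVP$ instance if any sub-instance was declared CLOSE, and NO otherwise. The total query count is $nN_0 = O(n/p)$, and each trial runs in $\poly(n)$ time using a standard sampler for uniform points in $r\ball$.

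The NO case is handled as follows: when $b = 0$ we have $\dist(\vec{v}, \lat_i) \leq \|\vec{s}\| \leq r < \alpha\lambda_1(\lat_i)$, so the $\BDD$ promise holds and the oracle returns $\vec{w}$ with $\|\vec{v} - \vec{w}\| \leq r$, giving $b' = 0$; when $b = 1$, the triangle inequality gives $\dist(\vec{v}, \lat_i) \geq \sqrt{n} - r = (1-\alpha)\sqrt{n} > \alpha\sqrt{n} = r$ (using $\alpha < 1/2$), so any $\vec{w}$ the oracle returns satisfies $\|\vec{v} - \vec{w}\| > r$, forcing $b' = 1$. Every sub-instance is therefore declared FAR with certainty. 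For the YES case, I would pick $i^*$ with $\|\vec{u}^*\| := \dist(\vec{t}_{i^*}, \lat_{i^*}) \leq \sqrt{n}/\gamma$, so that $\|\vec{u}^*\|/r \leq 1/(\alpha\gamma) \leq 2$ by the hypothesis $\alpha\gamma \geq 1/2 + 1/n$, and apply \cref{lem:normalized-l2-ball-intersection-vol} (rescaled to radius-$r$ balls) to get $\vol_n(r\ball \cap (r\ball + \vec{u}^*))/\vol_n(r\ball) \geq p$. Exactly as in the proof of \cref{thm:generalized-GG}, this implies that the distributions of $\vec{v}$ on the torus $\R^n/\lat_{i^*}$ for $b=0$ and $b=1$ have total variation at most $1-p$, so any deterministic guess $b'$ derived from $(\vec{v}, \vec{w})$ matches $b$ with probability at most $1 - p/2$ per trial. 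Hence $i^*$ is mistakenly declared FAR with probability at most $(1 - p/2)^{N_0} \leq e^{-3/2} < 1/3$, and we output YES with probability at least $2/3$. Plugging in $1/p \leq \sqrt{2\pi n}(1 - 1/(2\alpha\gamma)^2)^{-(n+1)/2}$ yields $nN_0 \leq 10 n^{3/2}(1 - 1/(2\alpha\gamma)^2)^{-(n+1)/2} = N$, as claimed.

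The main technical point to pin down is the three-way compatibility of (i) the $\BDD$ promise $r < \alpha\lambda_1(\lat_i)$ needed in the NO-$b=0$ branch, (ii) the separation $\sqrt{n} - r > r$ needed in the NO-$b=1$ branch, and (iii) the hypothesis $\|\vec{u}^*\|/r \leq 2$ of \cref{lem:normalized-l2-ball-intersection-vol} needed in the YES case; these all collapse to the single inequality $\alpha\gamma \geq 1/2 + 1/n$ precisely because \cref{thm:GMSS} forces $\lambda_1(\lat_i) > \sqrt{n}$ in every NO sub-instance, after which the symmetric choice $r = \alpha\sqrt{n}$ balances the three constraints. A minor subtlety is carrying out the total-variation computation on the torus $\R^n/\lat_{i^*}$ rather than in $\R^n$---and observing that even an adversarial $\BDD$ oracle sees only $\vec{v}$, not its preimage---but this is precisely the computation already performed inside \cref{thm:generalized-GG}, so no new ideas are required.
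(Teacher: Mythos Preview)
Your argument is correct, but it takes a genuinely different route from the paper's proof. The paper works \emph{directly} on the $\GapSVP$ instance $(\basis,d)$ without passing through \cref{thm:GMSS} or $\CVP'$, and without any coin flip $b$. It simply sets $r:=\alpha\gamma d$, samples $\vec{s}\sim r\B_2^n$, sets $\vec{t}:=\vec{s}\bmod\Par(\basis)$, queries the $\alpha$-$\BDD$ oracle on $(\basis,\vec{t})$, and checks whether the oracle returned the specific lattice vector $\vec{t}-\vec{s}$. In a NO instance $\lambda_1(\lat)>\gamma d$, so $\dist(\vec{t},\lat)\le r<\alpha\lambda_1(\lat)$ and the unique $\BDD$ answer is $\vec{t}-\vec{s}$; in a YES instance the shortest vector $\vec{u}$ itself (with $\|\vec{u}\|\le d$) creates the ambiguity, since whenever $\vec{s}$ and $\vec{s}-\vec{u}$ both lie in $r\B_2^n$ the oracle cannot know which preimage was sampled. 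The overlap bound from \cref{lem:normalized-l2-ball-intersection-vol} then gives exactly the same $p$ you obtain.

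The trade-offs: the paper's direct approach is shorter, avoids the factor-$n$ fan-out from GMSS, and with the same $N$ queries achieves failure probability $e^{-\Omega(n)}$ rather than the constant $e^{-3/2}$ you get per sub-instance. Your route, on the other hand, makes the analogy with the Goldreich--Goldwasser protocol in \cref{thm:generalized-GG} completely explicit (the $\BDD$ oracle literally plays Merlin's role in the coin-guessing game), and it cleanly separates the ``distance-to-target'' soundness argument from the ``short-vector'' ambiguity by routing through $\CVP'$. Both arrive at the same query bound $N$ up to the constant.
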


\begin{proof}
Let $(\basis, d)$ for $\basis \in \Q^{n \times n}$ and $d > 0$ be the input instance of $\gamma$-$\GapSVP$, and let $\lat = \lat(\basis)$.
The reduction repeats the following procedure $N$ times. First, it samples a uniformly random vector $\vec{s}$ in $r \B_2^n$ for $r := \alpha \gamma d$ and outputs $\vec{t} := \vec{s} \bmod \Par(\basis)$. It then calls its $\alpha$-$\BDD$ oracle on ($\basis$, $\vec{t}$), receiving as output a vector $\vec{v}$.
If $\vec{v} \neq \vec{t} - \vec{s}$ for some trial, the reduction outputs YES.
Otherwise, if $\vec{v} = \vec{t} - \vec{s}$ for all $N$ trials, the reduction outputs NO.

It is clear that each trial is efficient, and calls its $\alpha$-$\BDD$ oracle once on a lattice of rank $n$ (i.e., on the lattice $\lat$ in the input $\gamma$-$\GapSVP$ instance). 
So, it is clear that the reduction is dimension-preserving, makes $N$ oracle calls, and runs in $N \cdot \poly(n)$ time overall, as needed.

It remains to show correctness. First, suppose that the input is a NO instance of $\GapSVP$. Then $\vec{t} - \vec{s} \in \lat$ and
\[
\dist(\vec{t}, \lat) \leq \norm{(\vec{t} - \vec{s}) - \vec{t}} = \norm{\vec{s}} \leq \alpha \gamma d < \alpha \lambda_1(\lat) \ \text{.}
\]
Therefore, ($\basis$, $\vec{t}$) is a valid $\alpha$-$\BDD$ instance, and, because $\alpha < 1/2$, $\vec{t} - \vec{s}$ must be the only lattice vector within distance $\alpha \lambda_1(\lat)$ of $\vec{t}$.
So, on input ($\basis$, $\vec{t}$) the $\alpha$-$\BDD$ oracle must output $\vec{v} = \vec{t} - \vec{s}$.

Now, suppose that the input is a YES instance of $\GapSVP$. Notice that for the $\alpha$-$\BDD$ oracle to return $\vec{v} = \vec{t} - \vec{s}$ with probability $1$ it is information theoretically necessary for $\vec{s}$ to be the unique preimage of $\vec{t}$ according to the map $f : r \B_2^n \to \Par(\basis)$, $f: \vec{s} \mapsto \vec{s} \bmod \Par(\basis)$.
Otherwise, if there exist distinct $\vec{s}, \vec{s}' \in r \B_2^n$ with $\vec{s} \bmod \Par(\basis) = \vec{s}' \bmod \Par(\basis)$, then the probability that the $\alpha$-$\BDD$ oracle returns $\vec{v} = \vec{t} - \vec{s}$ is at most $1/2$.

Let $\vec{u}$ be a shortest non-zero vector in $\lat$. Because the input is a YES instance of $\GapSVP$, $\norm{\vec{u}} \leq d$. Moreover, note that if $\norm{\vec{s}} \leq r$ and $\norm{\vec{s} - \vec{u}} \leq r$ then $\vec{s}, \vec{s} - \vec{u}$ are both preimages of $f(\vec{s}) = \vec{s} \bmod \Par(\basis)$.

Let 
\[
p := \Pr_{\vec{s} \sim r \B_2^n}[\text{there exists } \vec{s}' \in r \B_2^n \setminus \set{\vec{s}} \text{ such that } \vec{s} \bmod \Par(\basis) = \vec{s}' \bmod \Par(\basis)]\ \text{.}
\]
Then
\begin{align*}
p &\geq \frac{\vol_n(r \B_2^n \cap (r \B_2^n + \vec{u}))}{\vol_n(r \B_2^n)} \\
&= \frac{\vol_n(\B_2^n \cap (\B_2^n + \frac{1}{\alpha \gamma d} \vec{u}))}{\vol_n(\B_2^n)} \\
&\geq \frac{\vol_n(\B_2^n \cap (\B_2^n + \frac{1}{\alpha \gamma \norm{\vec{u}}} \vec{u}))}{\vol_n(\B_2^n)} \\
&\geq \sqrt{1/(2 \pi n)} \cdot (1 - 1/(2 \alpha \gamma)^2)^{(n + 1)/2} \ \text{,}
\end{align*}
where the last inequality uses \cref{lem:normalized-l2-ball-intersection-vol}.
So, the probability that the $\alpha$-$\BDD$ oracle succeeds and outputs $\vec{v} = \vec{t} - \vec{s}$ on all $N$ trials is at most
\[
(1 - p/2)^N \leq \exp(-pN/2) \leq \exp(-n) \ \text{,}
\]
as needed.
\end{proof}\section{Worst-case to average-case reductions for LWE}
\label{sec:LWE}

We now show how to obtain better approximation factors in Regev's~\cite{regevLatticesLearningErrors2009} quantum worst-case to average-case reduction and Peikert's~\cite{peikertPublickeyCryptosystemsWorstcase2009} classical worst-case to average-case reductions for $\LWE$ by allowing the reduction to run in more time. Since hardness of $\LWE_{n,m,q,\alpha}$ implies secure public-key encryption for $\alpha < o(1/\sqrt{m \log n})$~\cite{regevLatticesLearningErrors2009}, our reductions immediately imply the existence of secure public-key cryptography from various forms of hardness of $\SVP$.

\subsection{A classical reduction}

For our classical reduction, we first recall the following result, which can be viewed as one of the main technical results in~\cite{regevLatticesLearningErrors2009} and~\cite{peikertPublickeyCryptosystemsWorstcase2009}. We will actually use a strengthening from~\cite{PRS17}, which allows us to work directly with decision LWE.  (I.e., we work with LWE as we have defined it in \cref{def:LWE}, whose hardness is known to directly imply public-key encryption. Work prior to~\cite{PRS17} used instead the associated search version of the problem and then relied on delicate search-to-decision reductions to prove hardness of the decision problem, and thus to prove security of public-key encryption.)

\begin{theorem}[{\cite[Lemma 5.4]{PRS17}}]
    \label{thm:PRS_classical}
    For any positive integers $n$, $m$, and $q \geq 2$ with $m > n \log_2 q$, and any noise parameter $\alpha \in (0,1)$, there is a polynomial-time (classical) algorithm with access to a (decision) $\LWE_{n,m,q,\alpha}$ oracle that takes as input a $\alpha'$-$\BDD$ instance over a lattice $\lat \subset \Q^n$ with the promise that $\sqrt{2n} q \leq \lambda_1(\lat) \leq 2\sqrt{2n} q$ and polynomially many samples from $D_{\lat^*}$, and solves the input $\alpha'$-$\BDD$ instance (with high probability), where 
    $\alpha' := \alpha/(4\sqrt{n})$.
\end{theorem}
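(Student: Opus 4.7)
The argument is a classical analog of Regev's reduction from $\BDD$ to $\LWE$, where the given samples from $D_{\lat^*}$ play the role of the output of the quantum step in~\cite{regevLatticesLearningErrors2009}. Given the input $(\lat,\vec{t})$ with $\vec{t}=\vec{y}+\vec{e}$, $\vec{y}\in\lat$, $\|\vec{e}\|\le\alpha'\lambda_1(\lat)$, and independent samples $\vec{w}_1,\dots,\vec{w}_N\sim D_{\lat^*}$ for a sufficiently large polynomial $N$, I would fix a basis $\basis$ of $\lat$, let $\vec{D}:=(\basis^{-1})^T$ be the dual basis matrix, and write $\vec{y}=\basis\vec{z}$ for the unknown $\vec{z}\in\Z^n$. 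The LWE secret will be $\vec{s}:=\vec{z}\bmod q\in\Z_q^n$.

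\textbf{Forming the LWE samples.} For each sample, write $\vec{w}_i=\vec{D}\vec{v}_i$ with $\vec{v}_i\in\Z^n$, and output
\[
\vec{a}_i:=\vec{v}_i\bmod q,\qquad b_i:=\langle\vec{w}_i,\vec{t}\rangle/q\bmod 1.
\]
Using $\vec{D}^T\basis=\vec{I}$, we get $\langle\vec{w}_i,\vec{y}\rangle=\vec{v}_i^T\vec{z}$, so
\[
b_i\equiv\frac{\langle\vec{a}_i,\vec{s}\rangle}{q}+\frac{\langle\vec{w}_i,\vec{e}\rangle}{q}\pmod 1.
\]
Two checks are then needed: (i) the $\vec{a}_i$ are statistically close to uniform on $\Z_q^n$, and (ii) the error $\langle\vec{w}_i,\vec{e}\rangle/q$ has the correct Gaussian parameter. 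For (i), the hypothesis $\lambda_1(\lat)\ge\sqrt{2n}\,q$ combined with \cref{cor:banaszczyk_tail} gives $\eta_{2^{-n}}(\lat^*)\le 1/(q\sqrt 2)$, hence $\eta_{2^{-n}}(q\lat^*)=q\cdot\eta_{2^{-n}}(\lat^*)\le 1/\sqrt 2<1$, so $\vec{w}_i\bmod q\lat^*$---and therefore $\vec{a}_i$---is $2^{-n}$-close to uniform in $\lat^*/q\lat^*\cong\Z_q^n$. For (ii), \cref{lem:subgaussianity} bounds the error parameter by $\|\vec{e}\|/q\le 2\sqrt{2n}\,\alpha'=\alpha/\sqrt 2$, which matches the $\LWE$ noise after a routine rounding-discretization step that contributes only $o(\alpha)$ additional error.

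\textbf{Recovering the lattice point.} It remains to use the decision $\LWE_{n,m,q,\alpha}$ oracle to recover $\vec{s}$ and then to lift $\vec{s}$ to $\vec{y}$. For the former, I would invoke a search-to-decision reduction: coordinate $s_j$ is pinned down by, for each candidate $v\in\Z_q$, adding a uniformly random multiple of the standard basis vector to coordinate $j$ of each $\vec{a}_i$ together with a corresponding shift to $b_i$; the modified samples remain $\LWE$ iff $s_j=v$, which the decision oracle decides. (For general composite $q$ this naive sketch is replaced by the ``direct-to-decision'' argument of~\cite{PRS17}, which avoids the usual primality assumption.) Knowing $\vec{s}$ reveals $\vec{y}\bmod q\lat$; then setting $\vec{t}':=\vec{t}-\basis\vec{s}$ yields a BDD instance on the sparser lattice $q\lat$ with the \emph{same} error $\vec{e}$ but $q$-fold larger minimum distance, so the relative error drops below $1/2$ and $\vec{e}$ (hence $\vec{y}$) can be read off by, e.g., Babai's nearest-plane algorithm on a sufficiently reduced basis of $q\lat$.

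\textbf{Main obstacle.} All but the last step amount to routine applications of the Gaussian/smoothing-parameter tools already collected in \cref{cor:banaszczyk_tail,lem:subgaussianity}. The technically delicate point is converting the \emph{decision} $\LWE$ oracle into a full recovery of the secret without loss in modulus or approximation factor, especially when $q$ has small prime factors; this is precisely the content of the direct reduction of~\cite{PRS17} and is the one place where a substantive argument beyond the bookkeeping above is required.
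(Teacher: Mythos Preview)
The paper does not prove this statement; it is quoted verbatim from \cite[Lemma 5.4]{PRS17} and used as a black box. So there is no ``paper's own proof'' to compare against, and the question is whether your sketch stands on its own as a proof of the cited lemma.

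Your first two steps are correct and match the standard Regev-style transformation: writing $\vec{w}_i=\vec{D}\vec{v}_i$, the pair $(\vec{a}_i,b_i)$ is indeed an LWE-like sample with secret $\vec{s}=\vec{z}\bmod q$, the smoothing-parameter estimate $\eta_{2^{-n}}(q\lat^*)\le 1/\sqrt{2}$ is right, and the error bound $\|\vec{e}\|/q\le\alpha/\sqrt{2}$ is right. (You do still owe a short argument turning the subgaussian inner product $\langle\vec{w}_i,\vec{e}\rangle$ into an honest discrete Gaussian on $\Z$ of parameter $\alpha q$, but that is routine noise-flooding and discretization.)

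There are, however, two genuine gaps in the ``Recovering the lattice point'' step. First, your lifting argument does not close: after learning $\vec{s}$ you have a $\BDD$ instance on $q\lat$ with error $\vec{e}$, but ``$\alpha'/q<1/2$'' is not enough for Babai's algorithm. Babai needs $\|\vec{e}\|$ to be small relative to the Gram--Schmidt lengths of your basis, and an LLL-reduced basis of $q\lat$ only guarantees $\min_i\|\widetilde{\vec{b}}_i\|\gtrsim 2^{-n/2}\lambda_1(q\lat)$. So a single step works only when $q\gtrsim 2^{n/2}$, which is not assumed. The usual fix is to iterate the whole reduction on $(q\lat,\vec{t}-\basis\vec{s})$, but then you need $D_{(q\lat)^*}=D_{\lat^*/q}$ samples with parameter $1$, which is not what you were given; disentangling this is exactly the work you have not done.

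Second, and more importantly, you mischaracterize what \cite{PRS17} actually does. Their Lemma 5.4 is \emph{not} a search-to-decision reduction for general modulus applied after your sample-generation step. Rather, they bypass the recovery of $\vec{s}$ entirely via the Oracle Hidden Center Problem framework: one perturbs the $\BDD$ target along carefully chosen directions, feeds the resulting samples to the decision oracle, and reads off the coordinates of the error vector $\vec{e}$ directly from the oracle's acceptance probabilities. This is what makes the result hold for arbitrary $q$ without loss and is the substantive content of the cited lemma. Your sketch identifies the right obstacle but then waves at \cite{PRS17} as though it supplies a drop-in replacement for the coordinate-by-coordinate secret recovery you describe; it does not---it replaces that whole pipeline with a different mechanism.
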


This next theorem then follows from \cref{thm:PRS_classical} together with \cref{thm:BKZ_plus_GPV}, which shows us how to generate the relevant samples. (The parameter $\beta$ corresponds roughly to the block size in a basis-reduction algorithm.)

\begin{theorem}
    \label{thm:BDD_to_LWE}
    For any $2 \leq \beta \leq n/3$, any positive integers $n$, $m \leq \poly(n)$, and $q \geq \beta^{n/\beta}$ with $m > n \log_2 q$, and any noise parameter $\alpha \in (0,1)$, there is a (classical) reduction from $\alpha'$-$\BDD$ on a lattice with rank $n$ to (decision) $\LWE_{n,m,q,\alpha}$ that runs in time $2^{O(\beta)} \cdot \poly(n)$
    where $\alpha' := \alpha/(4\sqrt{n})$. Furthermore, the reduction makes only polynomially many calls to the $\LWE_{n,m,q,\alpha}$ oracle.
    
    Therefore, public-key cryptography exists if $\alpha'$-$\BDD$ requires classical time
    \[
        ((\alpha')^2 n^{3} \log n )^{\omega((\alpha')^2 n^3 \log n)}
        \; 
    \]
    for $1/(n^{3/2} \sqrt{\log \log n}) \leq \alpha' \leq 1/(n \log n)$ and in particular, if $o(1/(n \log n))$-$\BDD$ is $2^{\Omega(n)}$ hard.
\end{theorem}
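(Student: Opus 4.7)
The plan is to chain \cref{thm:PRS_classical} with the discrete Gaussian sampler of \cref{thm:BKZ_plus_GPV}. Given an $\alpha'$-$\BDD$ instance $(\lat, \vec{t})$ on a lattice of rank $n$, first observe that $\alpha'$-$\BDD$ is scale-invariant: multiplying $\lat$ and $\vec{t}$ by any $s > 0$ preserves both the ratio $\dist(\vec{t},\lat)/\lambda_1(\lat)$ and the rank. So the first step is to rescale $\lat$ so that $\lambda_1(\lat) \in [\sqrt{2n}q,\, 2\sqrt{2n}q]$, as required by the promise of \cref{thm:PRS_classical}. Of course $\lambda_1(\lat)$ is not known exactly; but a BKZ-type basis reduction algorithm (as in~\cite{gamaFindingShortLattice2008}) with block size $\beta$ produces, in time $2^{O(\beta)} \cdot \poly(n)$, an estimate of $\lambda_1(\lat)$ accurate to within a factor of $2^{O(\beta)}$. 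We then enumerate the $O(\beta + \log n)$ candidate rescalings that are powers of two, run the remainder of the reduction for each, and accept if any succeeds. This adds only a $\poly(n) \cdot 2^{O(\beta)}$ multiplicative overhead.

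Once $\lat$ is rescaled appropriately, the hypothesis $q \geq \beta^{n/\beta}$ ensures $\lambda_1(\lat) \geq \sqrt{2n}\, q \geq \beta^{n/\beta}$, so \cref{thm:BKZ_plus_GPV} (with the same block size $\beta$) produces any polynomial number $N$ of samples from $D_{\lat^*}$ in time $(N + 2^{O(\beta)}) \cdot \poly(n)$. Feeding these samples and the $\LWE_{n,m,q,\alpha}$ oracle into the algorithm of \cref{thm:PRS_classical} solves the rescaled (hence original) $\alpha'$-$\BDD$ instance using polynomially many oracle calls. This establishes the reduction, with overall running time $2^{O(\beta)} \cdot \poly(n)$.

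For the public-key cryptography consequence, the plan is to instantiate this reduction together with Regev's implication that hardness of $\LWE_{n,m,q,\alpha}$ for $m \geq (1+\eps)n\log_2 q$ and $\alpha < o(1/\sqrt{m \log n})$ yields secure public-key encryption. Given $\alpha'$ in the stated range, set $\alpha := 4\sqrt{n}\,\alpha'$, so that $\alpha = \Theta(\sqrt{n}\,\alpha')$. Then pick the modulus $q$ and block size $\beta$ so that simultaneously (i) $q \geq \beta^{n/\beta}$, (ii) $m = \Theta(n \log q)$ satisfies Regev's noise condition $\alpha \cdot \sqrt{m \log n} = o(1)$, which forces $\log q = O(1/((\alpha')^2 n^2 \log n))$, and (iii) the reduction's running time $2^{O(\beta)} \cdot \poly(n)$ is smaller than the assumed $\BDD$ hardness. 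Balancing (i) and (ii) gives $\beta = \Theta((\alpha')^2 n^3 \log n)$, making the reduction run in time $((\alpha')^2 n^3 \log n)^{O((\alpha')^2 n^3 \log n)}$, so any superpolynomial factor more hardness for $\alpha'$-$\BDD$ than this suffices. At the extreme $\alpha' = o(1/(n \log n))$ the balance gives $\beta = \Theta(n)$ and $q = \poly(n)$, so $2^{\Omega(n)}$-hardness of $\alpha'$-$\BDD$ is enough.

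The main obstacle is the last paragraph's parameter balancing: the reduction pushes us to take $\beta$ large (to afford a small modulus $q$ and thereby satisfy Regev's noise constraint), while the hardness assumption demands $\beta$ small (so that the reduction runs faster than the assumed $\BDD$ lower bound). The constraint window closes exactly at $\alpha' \approx 1/(n^{3/2}\sqrt{\log \log n})$, which is why the range in the statement has that lower endpoint. The rescaling step is technically necessary but fully standard, and the rest is essentially a plug-and-play composition of \cref{thm:PRS_classical} and \cref{thm:BKZ_plus_GPV}.
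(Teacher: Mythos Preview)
Your proposal is correct and follows essentially the same route as the paper: rescale so that $\lambda_1(\lat)\in[\sqrt{2n}q,2\sqrt{2n}q]$, invoke \cref{thm:BKZ_plus_GPV} to produce polynomially many $D_{\lat^*}$ samples (legitimate since $\lambda_1(\lat)\geq q\geq\beta^{n/\beta}$), then feed these into \cref{thm:PRS_classical}; the cryptographic consequence is derived by balancing the constraints $q\geq\beta^{n/\beta}$ and $\alpha<o(1/\sqrt{m\log n})$ against Regev's encryption scheme.

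Two minor slips worth fixing. First, BKZ with block size $\beta$ approximates $\lambda_1$ to within $\beta^{O(n/\beta)}$, not $2^{O(\beta)}$; this does not hurt you, since even LLL gives a $2^{O(n)}$ estimate, and hence $O(n)$ power-of-two rescalings already suffice (and one can verify a candidate BDD answer directly). Second, in the parameter balance, the constraint $(n/\beta)\log\beta\leq\log q=o\!\big(1/((\alpha')^2 n^2\log n)\big)$ forces $\beta=\omega(\kappa\log\kappa)$ with $\kappa:=(\alpha')^2 n^3\log n$, not $\beta=\Theta(\kappa)$; your stated running time $\kappa^{O(\kappa)}$ is nonetheless the right expression, since $2^{O(\kappa\log\kappa)}=\kappa^{O(\kappa)}$, and the $\omega$ in the statement absorbs the required slack.
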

\begin{proof}
    By trying many different rescalings of the input lattice $\lat$, we may assume that $\lat$ has been scaled so that
    \[
        \sqrt{2n} q \leq \lambda_1(\lat) \leq 2 \sqrt{2n} q
        \; .
    \]
    Notice that this also implies that $\lambda_1(\lat) \geq \beta^{n/\beta}$.
     The reduction first runs the procedure from \cref{thm:BKZ_plus_GPV} to generate samples $\vec{w}_1,\ldots, \vec{w}_N \sim D_{\lat^*}$ for $N := \poly(n)$ in time $2^{O(\beta)} \cdot \poly(n)$. The reduction then uses our LWE oracle to run the procedure from \cref{thm:PRS_classical} to try to solve the input $\BDD$ instance and outputs the result.
    
    It is clear that the reduction runs in the claimed time.
    And, by \cref{thm:BKZ_plus_GPV}, the $\vec{w}_i$ are in fact distributed as independent samples from $D_{\lat^*}$. So, by \cref{thm:PRS_classical}, the reduction will succeed.
    
    The result about public-key cryptography follows by recalling that Regev proved that public-key encryption exists if $\LWE_{n,m,q,\alpha}$ is hard for $m \geq (1+\eps) n \log_2 q$ and any $\alpha < o(1/\sqrt{m \log n})$. Setting $q := \ceil{2^{\eta/(n^2 (\alpha')^2 \log n)}}$ for some $1/(\log \log n) < \eta < o(1)$ and $m := 2\ceil{n \log q} = O(\eta / (n (\alpha')^2 \log n))$, we see that $\alpha = 4\sqrt{n} \alpha' < o(1/\sqrt{m \log n})$. So, to obtain secure public-key cryptography, it suffices to prove that $\LWE_{n,m,q,\alpha}$ is hard for these parameters. Indeed, setting $\beta := \Theta(\kappa \log \kappa)$, where $\kappa := (\alpha')^2 n^3 \log(n)/\eta^2$ gives a reduction from $\alpha'$-$\BDD$ that runs in time $2^{O(\beta)} \cdot \poly(n) = ((\alpha')^2 n^3 \log n)^{\Omega((\alpha') n^3 \log n /\eta^2)}$, where we may take $\eta < o(1)$ to be an arbitrarily large subconstant function. The result follows.
\end{proof}

The main classical result of this section then follows as an immediate corollary of \cref{thm:BDD_to_LWE} and \cref{thm:gapsvp-to-bdd}.

\begin{theorem}
    \label{thm:LWE_classical}
    For any positive integers $n$, $m \leq \poly(n)$, and $q$ with $m > n \log_2 q$, noise parameter $\alpha \in (0,1)$, and approximation factor $10\sqrt{n} /\alpha \leq  \gamma \leq n/(10\alpha)$, if $q \geq (n/(\alpha \gamma))^{(\alpha \gamma)^2/n}$, then there is a (classical) reduction from $\gamma$-$\gapSVP$ on a lattice with rank $n$ to (decision) $\LWE_{n,m,q,\alpha}$ that runs in time
    \[
        2^{O(n^2/(\alpha \gamma)^2)}\cdot \poly(n)
         \; .
    \]
    Therefore, public-key encryption exists if $\gamma$-$\gapSVP$ requires time \[
    2^{\omega((n^2/\gamma) \cdot \sqrt{\log(n) \log(n^2 \sqrt{\log n}/\gamma)})}
    \; 
    \]
    to solve with a classical computer for $ \omega(n \log n) < \gamma \leq O(n^2 \sqrt{\log \log n/\log n})$.
    
    In particular, for every constant $\eps > 0$, there exists $q = q(n) \leq \poly_\eps(n)$, such that there is a $2^{\eps n}$-time (classical) reduction from $O_\eps(\sqrt{n}/\alpha)$-$\gapSVP$ to $\LWE_{n,m,q,\alpha}$. Therefore, (exponentially secure) public-key cryptography exists if $\omega(n \log n)$-$\gapSVP$ is $2^{\Omega(n)}$ hard.
\end{theorem}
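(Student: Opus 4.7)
The plan is to obtain \cref{thm:LWE_classical} by composing the $\gamma$-$\gapSVP$ to $\alpha_{\BDD}$-$\BDD$ reduction of \cref{thm:gapsvp-to-bdd} with the $\alpha_{\BDD}$-$\BDD$ to $\LWE_{n,m,q,\alpha}$ reduction of \cref{thm:BDD_to_LWE}, where I choose $\alpha_{\BDD} := \alpha/(4\sqrt{n})$ to match the parameter demanded by the latter. Since $\gamma \geq 10\sqrt{n}/\alpha$, I have $\alpha_{\BDD}\gamma = \alpha\gamma/(4\sqrt{n}) \geq 5/2$, so the hypothesis $\alpha_{\BDD}\gamma \geq 1/2 + 1/n$ of \cref{thm:gapsvp-to-bdd} is satisfied. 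The running time of the first reduction is
\[
    T_1 = \bigl(1 - 1/(2\alpha_{\BDD}\gamma)^2\bigr)^{-n/2} \cdot \poly(n) = \bigl(1 - 4n/(\alpha\gamma)^2\bigr)^{-n/2} \cdot \poly(n),
\]
and because $4n/(\alpha\gamma)^2 \leq 1/25$ in the allowed parameter range, this is bounded by $2^{O(n^2/(\alpha\gamma)^2)}$ via $(1-x)^{-n/2} \leq \exp(nx)$ for $x \leq 1/2$.

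Next I will apply \cref{thm:BDD_to_LWE}, whose running time is $2^{O(\beta)}\cdot\poly(n)$ for any block size $\beta$ satisfying $q \geq \beta^{n/\beta}$. The natural choice is $\beta := \Theta(n^2/(\alpha\gamma)^2)$, so that the second reduction matches the time bound of the first. The key parameter check is that the hypothesis $q \geq (n/(\alpha\gamma))^{(\alpha\gamma)^2/n}$ is strong enough to permit this $\beta$; this amounts to verifying $(n/\beta)\log\beta \lesssim ((\alpha\gamma)^2/n)\log(n/(\alpha\gamma))$, which holds since $n/\beta = \Theta((\alpha\gamma)^2/n)$ and, because $\alpha\gamma \leq n/10$, $\log\beta = O(\log(n/(\alpha\gamma)))$. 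Composing the two reductions gives the total claimed time $2^{O(n^2/(\alpha\gamma)^2)}\cdot\poly(n)$.

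For the public-key encryption consequence, I would invoke Regev's result that $\LWE_{n,m,q,\alpha}$ hardness for $m \geq (1+\eps)n\log_2 q$ and $\alpha < o(1/\sqrt{m\log n})$ yields secure PKE. Setting $m := \lceil 2n\log_2 q\rceil$ and $\alpha$ just below the Regev threshold, I will solve for the smallest admissible $q$ given $\gamma$, substitute into the reduction running time $2^{O(n^2/(\alpha\gamma)^2)}$, and simplify to obtain the stated superpolynomial lower bound on $\gamma$-$\gapSVP$ in the regime $\omega(n\log n) < \gamma \leq O(n^2\sqrt{\log\log n/\log n})$. The $2^{\eps n}$-time special case is then immediate: taking $\gamma = C_\eps\sqrt{n}/\alpha$ for sufficiently large $C_\eps$ gives $\alpha\gamma = C_\eps\sqrt{n}$ and hence time $2^{O(n/C_\eps^2)}\cdot\poly(n) \leq 2^{\eps n}$, while one can check that there is some $q \leq \poly_\eps(n)$ meeting the modulus requirement.

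I expect the main obstacle to be the parameter bookkeeping---particularly verifying that the modulus lower bound in the hypothesis is matched exactly to the block size $\beta$ needed in \cref{thm:BDD_to_LWE}, and then threading through the various inequalities so that the final PKE-barrier statement comes out clean. The conceptual work is modest: each component reduction is already in hand, and the novelty lies entirely in choosing $\beta$ and $\alpha_{\BDD}$ jointly so that the dominant time terms from the two steps are of the same order.
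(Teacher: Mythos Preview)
Your proposal is correct and follows essentially the same approach as the paper: compose \cref{thm:gapsvp-to-bdd} with \cref{thm:BDD_to_LWE} using $\alpha_{\BDD}=\alpha/(4\sqrt{n})$ and block size $\beta=\Theta(n^2/(\alpha\gamma)^2)$ (the paper takes $\beta=n^2/(2\alpha\gamma)^2$), then instantiate parameters for Regev's PKE criterion. The only part you leave sketchy is the explicit choice of $\alpha$, $q$, $m$ for the PKE consequence, where the paper sets $\alpha^{-2}=\eta\gamma\sqrt{\log(n)\log(n^2\sqrt{\log n}/\gamma)}$ for an arbitrary $\omega(1)<\eta<\log\log n$ and then reads off the running time; your plan to ``solve for the smallest admissible $q$'' amounts to the same calculation.
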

\begin{proof}
    The reduction works by composing the reduction from $\gamma$-$\SVP$ to $\alpha'$-$\BDD$ in \cref{thm:gapsvp-to-bdd} with the reduction from $\alpha'$-$\BDD$ to $\LWE_{n,m,q,\alpha}$ in \cref{thm:BDD_to_LWE}, where $\alpha' := \alpha/(4\sqrt{n})$ and $\beta := n^2/(2\alpha \gamma)^2$. Specifically, it runs the reduction from $\gamma$-$\SVP$ to $\alpha'$-$\BDD$ but simulates all queries to the $\alpha'$-$\BDD$ oracle using the reduction from $\alpha'$-$\BDD$ to $\LWE_{n,m,q,\alpha}$. The running time is as claimed because the first reduction runs in time $2^{O(\beta)}\cdot \poly(n) = 2^{O(n^2/(\alpha \gamma)^2)} \cdot \poly(n)$ time and makes $\poly(n)$ calls to the $\alpha'$-$\BDD$ oracle, each of which is simulated by the $\gamma$-$\SVP$ to $\alpha'$-$\BDD$ reduction, which runs in time
    \[
        \Big( 1- \frac{1}{(2\alpha' \gamma)^2} \Big)^{-n/2} \cdot \poly(n) = 2^{O(n^2/(\alpha \gamma)^2)} \cdot \poly(n)
        \; ,
    \]
    as needed.
    
    For the public-key cryptography result, we set $\alpha$ such that
    \[\alpha^{-2} = \eta\gamma \cdot \sqrt{\log (n) \log(n^2 \sqrt{\log n}/\gamma)}
    \; ,
    \]
    for \emph{any} $\eta = \eta(n)$ with $\omega(1) < \eta < \log \log n$.
    We then set
    \[
        q := \ceil{(n/(\alpha \gamma))^{\eta \cdot (\alpha \gamma)^2/n}}
         = 2^{O((\gamma/n) \cdot \sqrt{\log(n^2 \sqrt{\log n}/\gamma)/\log n})}
         \; ,
    \]
    and
    \[
        m := \ceil{2n \log_2 q} = O\big(\,\gamma \cdot \sqrt{\log(n^2 \sqrt{\log n}/\gamma)/\log n}\, \big)
        \; .
    \] 
    Notice that
    \[
        \alpha^{-2} > \omega\big(\,\gamma \cdot \sqrt{\log n \log(n^2 \sqrt{\log n}/\gamma)}\,\big) = \omega(m \log n)
        \; .
    \]
    Therefore, hardness of $\LWE_{n,m,q,\alpha}$ with these parameters implies secure public-key encryption. And, with these parameters, our reduction runs in time
    \[
        2^{O(n^2/(\alpha \gamma)^2)} \cdot \poly(n) = 2^{O( \eta \cdot (n^2/\gamma) \cdot \sqrt{\log(n) \log(n^2 \sqrt{\log n}/\gamma)} )}
        \; ,
    \]
    where we may take $\eta > \omega(1)$ to be an arbitrarily small superconstant function. The result follows.
\end{proof}

\subsection{A quantum reduction}

Our quantum reduction follows by combining Regev's worst-case to average-case reduction from $\BDD$ to $\LWE$ with our reduction from $\gapSVP$ to $\BDD$. Again, we use the version from \cite{PRS17}, which allows us to work directly with the decision version of the problem. 

\begin{theorem}[{\cite[Section 5]{PRS17}}]
    \label{thm:PRS_quantum}
    For any positive integers $n$, $m$, and $q$ with $m > n \log_2 q$ and noise parameter $\alpha \in (0,1)$ with $\alpha q \geq 2\sqrt{n}$, there is a polynomial-time quantum reduction from $\alpha'$-$\BDD$ to (decision) $\LWE_{n,m,q,\alpha}$, where $\alpha' := (\alpha/(4\sqrt{n}))$.
\end{theorem}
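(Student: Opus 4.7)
The plan is to follow Regev's iterative quantum worst-case to average-case reduction~\cite{regevLatticesLearningErrors2009}, with the decision-version strengthening of~\cite{PRS17}. The overall strategy maintains the invariant that the reduction can produce polynomially many samples from the dual discrete Gaussian $D_{\lat^*, r}$ for a parameter $r$ that is driven geometrically downward, until $r$ is small enough that combining these samples with the input target $\vec{t}$ yields a valid $\LWE_{n,m,q,\alpha}$ distribution whose secret encodes a lattice vector close to $\vec{t}$.

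First I would establish the base case by applying LLL to a basis of $\lat^*$ and invoking \cref{thm:BLP_sampler}, producing samples from $D_{\lat^*, r_0}$ for some $r_0 = 2^{O(n)}/\lambda_1(\lat)$. The core iteration then alternates a classical and a quantum subroutine. The classical subroutine takes samples $\vec{w}_1, \ldots, \vec{w}_m \sim D_{\lat^*, r}$ together with a $\BDD$ target $\vec{t} = \vec{v} + \vec{e}$ on $\lat$ with $\|\vec{e}\| \lesssim \alpha q /(\sqrt{n} r)$, and constructs $\LWE$-style pairs $(\vec{a}_i, b_i)$ where $\vec{a}_i \in \Z_q^n$ encodes $\vec{w}_i$ in a fixed basis of $\lat^*$ reduced modulo $q$, while $b_i \in \Z_q$ is the nearest integer to $q \langle \vec{w}_i, \vec{t} \rangle$ taken modulo $q$. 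Applying \cref{thm:banaszczyk_tail} and \cref{lem:subgaussianity} shows that these pairs are statistically close to the $\LWE_{n,m,q,\alpha}$ distribution whose secret is determined by the coordinates of $\vec{v}$. The decision oracle is then leveraged coordinate by coordinate---via the hybrid argument of~\cite{PRS17}, which crucially reuses the same Gaussian samples across hybrids---to recover $\vec{v}$, yielding an approximate $\BDD$ solver on $\lat$.

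The quantum subroutine converts this $\BDD$ solver into samples from $D_{\lat^*, r'}$ with $r' \approx \sqrt{n}/r \ll r$. Concretely, one prepares a quantum superposition of lattice points weighted by $\rho_{1/r'}$ tensored with a finely discretized continuous Gaussian perturbation; invokes the $\BDD$ solver to uncompute the lattice register, leaving essentially a Gaussian superposition over cosets $\R^n/\lat$; and then applies the quantum Fourier transform and measures, obtaining a sample distributed close to $D_{\lat^*, r'}$. Iterating the classical-then-quantum pair $O(\log(r_0 \lambda_1(\lat)/\sqrt{n}))$ times drives $r$ down to $r \approx \sqrt{n}/\lambda_1(\lat)$, at which point the $\BDD$ promise $\|\vec{e}\| \leq \alpha' \lambda_1(\lat) = \alpha \lambda_1(\lat)/(4\sqrt{n})$ is exactly what is needed for one final classical invocation to solve the original $\BDD$ instance.

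The hard part is the quantum step: one must show that the coset Gaussian superposition is prepared and uncomputed with only negligible trace-distance error even though the $\BDD$ solver has nonzero failure probability, and that the QFT followed by measurement produces a distribution statistically close to $D_{\lat^*, r'}$ rather than a merely noisy approximation. The key technical tools here are a careful discretization of the continuous Gaussian onto a finite quantum register, so that arithmetic and truncation errors do not compound across the $\poly(n)$ iterations, and Banaszczyk-style tail bounds (\cref{thm:banaszczyk_tail}, \cref{cor:banaszczyk_tail}) to control the induced discretization error. Upgrading the reduction to use a decision $\LWE$ oracle as in~\cite{PRS17} is then largely orthogonal, requiring only an outer hybrid argument whose per-step cost is polynomial because the (expensive) Gaussian samples are reusable.
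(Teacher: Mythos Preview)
Your proposal is essentially correct and follows the same route the paper invokes: the paper's proof is a two-line citation to~\cite{PRS17}, combining their Theorem~5.1 (quantum Gaussian sampling from an $\LWE$ oracle) with their Lemma~5.4 (classical $\BDD$ solver from Gaussian samples and a decision-$\LWE$ oracle), and your sketch faithfully unpacks Regev's iterative framework underlying those results. One small parametric slip: in the quantum step the new width is $r' \approx r \cdot \sqrt{n}/(\alpha q) \leq r/2$ (a multiplicative decrease, using $\alpha q \geq 2\sqrt{n}$), not $r' \approx \sqrt{n}/r$; the latter formula does not even guarantee $r' < r$ for small $r$.
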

\begin{proof}
    This follows by combining \cite[Theorem 5.1]{PRS17} (which shows how to generate discrete Gaussian samples using a quantum computer with access to an $\LWE$ oracle) with \cite[Lemma 5.4]{PRS17} (which is our \cref{thm:PRS_classical}).
\end{proof}

Our main quantum reduction then follows immediately by combining the above with \cref{thm:gapsvp-to-bdd}.

\begin{theorem}
    \label{thm:LWE_quantum}
    For any positive integers $n$, $m < \poly(n)$, and $q$ with $m > n \log_2 q$, noise parameter $\alpha \in (0,1)$ with $\alpha q \geq 2\sqrt{n}$, and approximation factor $\gamma \geq 10 \sqrt{n}/\alpha$, there is a quantum reduction from $\gamma$-$\gapSVP$ to (decision) $\LWE_{n,m,q,\alpha}$, that runs in time
    \[
        \Big( 1 - \frac{4n}{(\alpha \gamma)^2}\Big)^{-n/2} \cdot \poly(n)
        \; .
    \]
    Therefore, public-key encryption exists if $\gamma$-$\gapSVP$ requires time
    \[
        2^{\omega(n^3 \log^2 n/\gamma^2)} \cdot \poly(n)
    \]
    to solve with a quantum computer for any $\gamma > \omega(n \log n)$.
    
    In particular, there is a $2^{\eps n}$-time quantum reduction from $O_\eps(\sqrt{n}/\alpha)$-$\gapSVP$ to $\LWE_{n,m,q,\alpha}$, and therefore (exponentially secure) public-key encryption exists if $\omega(n \log n)$-$\gapSVP$ requires $2^{\Omega(n)}$ time to solve with a quantum computer.
\end{theorem}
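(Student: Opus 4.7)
The plan is to follow the same composition strategy as in the proof of the classical \cref{thm:LWE_classical}, except now we use the quantum \cref{thm:PRS_quantum} in place of the classical chain (\cref{thm:PRS_classical} combined with \cref{thm:BKZ_plus_GPV}). Specifically, I would set $\alpha' := \alpha/(4\sqrt{n})$ and compose our generalized $\gamma$-$\gapSVP$-to-$\alpha'$-$\BDD$ reduction (\cref{thm:gapsvp-to-bdd}) with the $\alpha'$-$\BDD$-to-$\LWE_{n,m,q,\alpha}$ quantum reduction of \cref{thm:PRS_quantum}. All calls to the $\alpha'$-$\BDD$ oracle made by the first reduction are simulated by the quantum reduction.

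Before composing, I would verify the parameter hypotheses of the two building blocks. \Cref{thm:gapsvp-to-bdd} requires $\alpha' \gamma \geq 1/2 + 1/n$, which is immediate from $\gamma \geq 10\sqrt{n}/\alpha$ (so $\alpha'\gamma = \alpha\gamma/(4\sqrt{n}) \geq 10/4$). \Cref{thm:PRS_quantum} requires $m > n \log_2 q$ and $\alpha q \geq 2\sqrt{n}$, both of which are assumed in the theorem statement. For the running time, the quantum reduction is polynomial time, so the total cost is dominated by \cref{thm:gapsvp-to-bdd}, which runs in time $(1 - 1/(2\alpha'\gamma)^2)^{-n/2} \cdot \poly(n)$. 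Substituting $\alpha' = \alpha/(4\sqrt{n})$ gives $1/(2\alpha'\gamma)^2 = 4n/(\alpha\gamma)^2$, matching the claimed bound.

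For the public-key encryption consequence, I would mimic the parameter-setting in the proof of \cref{thm:LWE_classical}, but somewhat more simply since the only constraint is to ensure that $\alpha q \geq 2\sqrt{n}$ and $\alpha < o(1/\sqrt{m \log n})$ (so that Regev's PKE construction applies), while keeping $m > n\log_2 q$. Concretely, I would set $\alpha = 1/\sqrt{\eta m \log n}$ for some $\eta = \omega(1)$ with $\eta < \log\log n$, choose $q$ just large enough so that $\alpha q \geq 2\sqrt{n}$ (say, $q = \lceil 2\sqrt{n}/\alpha \rceil$), take $m = \lceil 2n\log_2 q \rceil$, and check that with $\gamma > \omega(n\log n)$ the exponent in the running time becomes $(1 - 4n/(\alpha\gamma)^2)^{-n/2} = 2^{O(n^2/(\alpha\gamma)^2)}$. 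Unwinding the relationship between $\alpha$, $m$, and $\gamma$ then gives the claimed hardness threshold $2^{\omega(n^3 \log^2 n/\gamma^2)}$. Finally, the ``in particular'' clause follows by observing that $\gamma = O_\eps(\sqrt{n}/\alpha)$ makes $4n/(\alpha\gamma)^2$ a constant strictly less than $1$ (controlled by $\eps$), so the running time is $2^{O_\eps(n)} \leq 2^{\eps n}$ for a suitable choice of constants.

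The substance of the proof is really just parameter juggling; there is no genuine obstacle, since the two input reductions are essentially black boxes. The main care required is in the second paragraph (PKE parameter setting), where one must simultaneously satisfy $\alpha q \geq 2\sqrt{n}$, $\alpha < o(1/\sqrt{m \log n})$, and $m > n \log_2 q$, while also picking $q$ small enough that the total running time stays within the claimed $2^{\omega(n^3 \log^2 n/\gamma^2)}$ budget as a function of $\gamma$.
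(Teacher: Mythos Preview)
Your proposal is correct and takes essentially the same approach as the paper: compose \cref{thm:gapsvp-to-bdd} with \cref{thm:PRS_quantum} at $\alpha' = \alpha/(4\sqrt{n})$ and read off the running time. The only difference is in the PKE parameter setting: the paper avoids your circular system by simply taking $q = n^2$, $m = \lceil 2n\log_2 q\rceil = O(n\log n)$, and any $\alpha < o(1/(\sqrt{n}\log n))$, which is possible here because the quantum reduction has no lower-bound constraint on $q$ beyond $\alpha q \geq 2\sqrt{n}$ (unlike the classical case, where $q \geq \beta^{n/\beta}$ forced the more delicate choice you imitated).
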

\begin{proof}
    The reduction follows immediately from \cref{thm:gapsvp-to-bdd,thm:PRS_quantum}. To obtain the public-key encryption result, we can take, e.g., $q = n^2$, $m = \ceil{2 n \log_2 q} = O(n \log n)$, and any $\alpha < o(1/\sqrt{m \log q}) = o(1/\sqrt{n} \log n))$.
\end{proof}\section{A co-non-deterministic Protocol}
\label{sec:coNP}

Our main technical result in this section is the following. Notice that one can view this as a verifier for (co-)$(4\sqrt{n/k})$-$\GapCVP'$. This result will immediately imply the existence of our co-non-deterministic protocol, and it will also be useful in \cref{sec:SIS}, where we use it to prove worst-case hardness of $\SIS$. (We did not attempt to optimize the constants in \cref{thm:GapCVP'_verifier,thm:coNP}.)

\begin{theorem}
    \label{thm:GapCVP'_verifier}
    There exists a deterministic algorithm that takes as input a (basis for a) lattice $\lat \subset \Q^n$ and target $\vec{t} \in \Q^n$, together with an odd integer $100 \leq k \leq n/(10\log n)$, and a witness $W := (\vec{w}_1,\ldots, \vec{w}_N) \in (\lat^*)^N$ with $N := (20k^2 n^2 \log n)^{2k+1} \leq n^{8k+4}$ and behaves as follows. 
    \begin{enumerate}
        \item If
        $\dist(\vec{t}, \lat) \leq \sqrt{k}/4$,
        then the algorithm always outputs CLOSE. 
        \item \label{item:correctness} If (1) the $\vec{w}_i$ are sampled independently from $D_{\lat^*}$; (2) $\dist(\vec{t},\lat) > \sqrt{n}$; and (3) $\lambda_1(\lat) > \sqrt{n}$, then the algorithm outputs FAR  with probability at least $1-3/N$.
        \item The running time of the algorithm
        is 
        \[\poly(n) \cdot (5n/k)^{2k} \cdot N = (100 k n^3 \log n)^{2k + O(1)} \leq (10n)^{8k + O(1)} 
        \; .
        \]
        \end{enumerate}
\end{theorem}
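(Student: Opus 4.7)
The plan is to describe a deterministic verifier performing three checks and then to argue each of the three clauses in turn. The verifier will: \textbf{(a)} check that every $\vec{w}_i \in \lat^*$ (rejecting, i.e., outputting CLOSE, otherwise), so that $f_W(\vec{x}) := \tfrac{1}{N}\sum_i \cos(2\pi \langle \vec{w}_i, \vec{x}\rangle)$ is $\lat$-periodic; \textbf{(b)} for every multi-index $\vec{a} \in \Z_{\geq 0}^n$ with $a_1 + \cdots + a_n \leq 2k$, check that the sample moment $\frac{1}{N}\sum_i w_{i,1}^{a_1}\cdots w_{i,n}^{a_n}$ is within a tolerance $\eps$ of the Gaussian moment $V_{\vec{a}} := V_{a_1}\cdots V_{a_n}$ from \cref{fact:hermite}; and \textbf{(c)} compute $f_W(\vec{t})$ and output FAR if $f_W(\vec{t}) < \tau$ for $\tau := e^{-\pi k/16}/2$, else output CLOSE. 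The tolerance $\eps$ will be chosen polynomially small in $1/N$, which is why $N$ has to be as large as it is.

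For the first clause, suppose $\dist(\vec{t},\lat) \leq \sqrt{k}/4$. If check (a) fails, the verifier outputs CLOSE. Otherwise, $\vec{w}_i \in \lat^*$ implies $f_W(\vec{t}) = f_W(\vec{u})$, where $\vec{u} := \vec{t} - \vec{y}_0$ for a closest lattice vector $\vec{y}_0$ to $\vec{t}$, so $\|\vec{u}\| \leq \sqrt{k}/4$. If check (b) also passes, I apply \cref{prop:stupid_annoying_thing_about_moments_and_stuff} with $r = \sqrt{k}/4$ and $k_{\max} = k$ (odd and $\geq 100$ by hypothesis) to conclude
\[
f_W(\vec{t}) \;=\; f_W(\vec{u}) \;\geq\; e^{-\pi k/16} \;-\; (5/8)^{k+1} \;-\; \eps\, n^{2k}\, e^{\pi\sqrt{k}/2} \;>\; \tau,
\]
using that $(5/8)^{k+1} = e^{-\Theta(k)}$ decays strictly faster than $e^{-\pi k/16}$ for $k \geq 100$, and that $\eps$ will be chosen small enough to kill the last term. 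Thus the verifier outputs CLOSE regardless of the witness, establishing perfect soundness.

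For the second clause, with honestly sampled $W$ and the FAR hypotheses $\dist(\vec{t},\lat), \lambda_1(\lat) > \sqrt{n}$: check (a) holds trivially. For check (b), \cref{lem:smooth_Hermite_multi-dim} (applicable since $\lambda_1(\lat) > \sqrt{n}$ and $2k \leq n/(5\log n) \leq n/10$) guarantees that the true moments of $D_{\lat^*}$ are within $2^{-n} n^{2k}$ of $V_{\vec{a}}$. I then use \cref{lem:subgaussianity} with a union bound to show that every entry $|w_{i,j}| \leq R := O(\sqrt{k \log n})$ except with probability $\leq 1/N$, so that each monomial $|w_{i,1}^{a_1}\cdots w_{i,n}^{a_n}| \leq R^{2k}$. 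Applying Chernoff--Hoeffding (\cref{lemma:chernoff}) to each of the at most $\binom{n+2k}{2k} \leq (5n/k)^{2k}$ multi-indices and union bounding shows that every sample moment is within $\eps$ of its true expectation with total failure probability $\leq 1/N$, provided $N \gtrsim R^{4k} \log((5n/k)^{2k} N)/\eps^2$, which our choice $N = (20 k^2 n^2 \log n)^{2k+1}$ comfortably satisfies. For check (c), \cref{thm:banaszczyk_tail} (applied to $\lat - \vec{t}$ with $t = 1$) gives $f(\vec{t}) \leq (2\pi e)^{n/2}e^{-\pi n} \ll \tau$, and a final Chernoff--Hoeffding on $f_W(\vec{t})$ (whose summands lie in $[-1,1]$) gives $|f_W(\vec{t}) - f(\vec{t})| < \tau/2$ with failure probability $\leq 1/N$. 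Combining, the verifier outputs FAR with probability at least $1 - 3/N$.

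The running time is dominated by the moment checks: evaluating each of the $(5n/k)^{2k}$ monomials across all $N$ samples costs $\poly(n) \cdot N$, yielding total time $\poly(n) \cdot (5n/k)^{2k} \cdot N$ as claimed. The chief obstacle is a parameter-balancing act: the per-monomial Hoeffding bound forces $\eps$ to be extremely small (since $\eps n^{2k} e^{\pi \sqrt{k}/2}$ in \cref{prop:stupid_annoying_thing_about_moments_and_stuff} must be substantially smaller than the signal $e^{-\pi k/16}$), while each factor of $1/\eps$ pushes $N$ upward; when combined with the $R^{4k}$ factor coming from bounding the monomials and the union bound over roughly $(n/k)^{2k}$ multi-indices, one is pushed up to essentially $N \approx n^{8k}$, matching the claimed bound.
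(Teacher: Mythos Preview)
Your proposal is correct and follows essentially the same approach as the paper's own proof: the same three-check verifier, the same use of \cref{prop:stupid_annoying_thing_about_moments_and_stuff} for soundness in the CLOSE case, and the same combination of \cref{lem:smooth_Hermite_multi-dim}, subgaussianity, and Hoeffding for completeness in the FAR case. The only cosmetic difference is your choice of threshold $\tau = e^{-\pi k/16}/2$ (the value forced by the CLOSE analysis) versus the paper's $20\sqrt{\log(N)/N}$ (the value forced by the FAR analysis); since the CLOSE lower bound and the FAR upper bound on $f_W(\vec{t})$ are separated by an enormous gap, either choice works, and the parameter balancing you sketch is the same as the paper's.
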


Before we prove the above, we observe that it immediately implies a co-non-deterministic protocol.

\begin{theorem}
    \label{thm:coNP}
    For any $100 \leq k \leq n/(10 \log n)$,
    $4 \sqrt{n/k}$-$\GapSVP$
    is in $\mathsf{coNTIME}[T]$ for running time
$T := (100 kn^3 \log n)^{2k +O(1)} \leq (10 n)^{8k + O(1)}$.
    In particular, for any constant 
    $C > 4 \sqrt{10}$,
    $(C\sqrt{\log n})$-$\GapSVP$ is in 
    $\mathsf{coNTIME}[e^{128 n/C^2 + o(n)}]$.
\end{theorem}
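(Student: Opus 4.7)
The plan is to derive this theorem as a straightforward combination of the deterministic verifier in \cref{thm:GapCVP'_verifier} with the reduction from $\gamma$-$\gapSVP$ to $\gamma$-$\gapCVP'$ in \cref{thm:GMSS}. Set $\gamma := 4\sqrt{n/k}$. The co-non-deterministic protocol for $\gamma$-$\gapSVP$ proceeds as follows: given the input $\gapSVP$ instance, Arthur first runs the efficient GMSS reduction to produce $n$ instances $(\basis_i, \vec{t}_i)$ of $\gamma$-$\gapCVP'$. By rescaling each instance, we may assume that the close/far thresholds coincide exactly with those in \cref{thm:GapCVP'_verifier}, namely $\sqrt{k}/4$ and $\sqrt{n}$ (so that $\gamma = \sqrt{n}/(\sqrt{k}/4) = 4\sqrt{n/k}$ as required). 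Merlin supplies $n$ witnesses $W^{(1)},\ldots, W^{(n)}$, one per $\gapCVP'$ instance, each of the form prescribed by \cref{thm:GapCVP'_verifier}. Arthur runs the verifier of \cref{thm:GapCVP'_verifier} on each $(\basis_i, \vec{t}_i, W^{(i)})$ and accepts (outputs ``NO $\gapSVP$ instance'') if and only if all $n$ invocations return FAR.

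For soundness, suppose the input $\gamma$-$\gapSVP$ instance is a YES instance. By \cref{thm:GMSS}, at least one of the $\gamma$-$\gapCVP'$ instances is a YES instance, i.e., satisfies $\dist(\vec{t}_i, \lat(\basis_i)) \leq \sqrt{k}/4$. By property (1) of \cref{thm:GapCVP'_verifier}, the verifier outputs CLOSE on that instance \emph{regardless} of Merlin's choice of witness $W^{(i)}$, so Arthur always rejects. For completeness, suppose the input is a NO instance. Then by \cref{thm:GMSS} all $n$ $\gamma$-$\gapCVP'$ instances are NO instances, meaning that each $(\basis_i, \vec{t}_i)$ satisfies both $\dist(\vec{t}_i, \lat(\basis_i)) > \sqrt{n}$ and $\lambda_1(\lat(\basis_i)) > \sqrt{n}$. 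By property (2) of \cref{thm:GapCVP'_verifier}, if Merlin samples each $W^{(i)}$ independently from $D_{\lat(\basis_i)^*}^N$, then the verifier outputs FAR with probability at least $1 - 3/N$ on each instance. A union bound over the $n$ instances shows that such a collection of honestly-sampled witnesses causes Arthur to accept with overwhelming probability, hence in particular such witnesses \emph{exist}, which is all that co-non-determinism requires. (Since the protocol is co-non-deterministic rather than a $\coMA$ protocol, Arthur's verification procedure itself is deterministic; the randomness appears only when arguing existence of a good witness.)

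The running time of Arthur is $n$ times the running time of the verifier in \cref{thm:GapCVP'_verifier}, which is $n \cdot (100kn^3\log n)^{2k+O(1)} = (100kn^3\log n)^{2k+O(1)}$, establishing the first claim. For the particular case, I would choose $k$ so that $4\sqrt{n/k} = C\sqrt{\log n}$, i.e., $k = 16n/(C^2 \log n)$. The constraint $k \leq n/(10 \log n)$ becomes $16/C^2 \leq 1/10$, i.e., $C \geq 4\sqrt{10}$, which matches the hypothesis, and the constraint $k \geq 100$ is satisfied for $n$ large enough. Plugging in, $2k \log(100 k n^3 \log n) = \frac{32 n}{C^2 \log n} \cdot (4 \log n + O(\log \log n)) = \frac{128 n}{C^2} + o(n)$, giving the stated bound $e^{128 n/C^2 + o(n)}$.

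The proof is essentially a bookkeeping exercise; the only mildly subtle point is verifying that the soundness and completeness cases align correctly through the GMSS reduction. Specifically, one must confirm that $\gapCVP'$ as defined in the preliminaries (requiring \emph{both} $\dist(\vec{t}, \lat) > \sqrt{n}$ \emph{and} $\lambda_1(\lat) > \sqrt{n}$ in the NO case) is precisely what \cref{thm:GapCVP'_verifier} needs in its completeness condition, and that the GMSS reduction indeed produces such instances in the SVP-NO case --- both of which hold. Beyond this alignment, no new ideas are required.
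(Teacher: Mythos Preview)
Your proposal is correct and follows essentially the same approach as the paper: invoke \cref{thm:GapCVP'_verifier} to get a co-non-deterministic protocol for $4\sqrt{n/k}$-$\gapCVP'$ (with discrete Gaussian samples serving as the witness in the NO case), then compose with the GMSS reduction of \cref{thm:GMSS}. The only small point you omit is that \cref{thm:GapCVP'_verifier} requires $k$ to be an odd integer, which the paper handles by noting that one may assume this without loss of generality; this does not affect the asymptotics and your calculation for the ``in particular'' clause is correct.
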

\begin{proof} 
    We may assume without loss of generality that $k$ is an odd integer. It follows from \cref{thm:GapCVP'_verifier} that
     $4 \sqrt{ n/k }$-$\gapCVP$'
    is in
    $\mathsf{coNTIME}[T]$.
    Specifically, vectors sampled from a discrete Gaussian $D_{\lat^*}$ yield a witness with high probability. The result then follows immediately from \cref{thm:GMSS}, which reduces $\gamma$-$\gapSVP$ to $\gamma$-$\gapCVP'$.
\end{proof}

\subsection{Proof of \texorpdfstring{\cref{thm:GapCVP'_verifier}}{the theorem}}

We now present the algorithm claimed in \cref{thm:GapCVP'_verifier} and prove that it satisfies the desired properties. On input $(\lat, \vec{t})$ together with an odd integer $k \geq 1$, and a witness $W := (\vec{w}_1,\ldots, \vec{w}_N) \in (\lat^*)^N$, the algorithm performs the following three checks. 
\begin{enumerate}
    \item It checks that $\vec{w}_i \in \lat^*$ for all $i$.
    \item It computes 
\[
    f_W(\vec{t}) := \frac{1}{N} \sum_{i=1}^N \cos(2\pi \langle \vec{w}_i, \vec{t} \rangle)
    \; ,
\]
and checks that $f_W(\vec{t}) < 20\sqrt{\log (N)/N}$. 
\item Finally, it checks that
\begin{equation}
		        \label{eq:moment_check_alg}
		         \Big|V_{\alpha_1} \cdots V_{\alpha_n} - \frac{1}{N} \sum_{i=1}^N w_{i,1}^{\alpha_1} \cdots w_{i,n}^{\alpha_n} \Big| \leq \eps
		        \; .
\end{equation}
for all $(\alpha_1,\ldots, \alpha_n) \in \Z_{\geq 0}^n$ with $\alpha_1 + \cdots + \alpha_n \leq 2k$, where $V_{2a-1} = 0$ for all integers $a$,
\[
    V_{2a} := \int_{-\infty}^\infty x^{2a} e^{-
    \pi x^2} \intd x = \frac{(2a)!}{(4\pi)^a \cdot a!}
    \; ,
\]
and
\[
    \eps := 20\log^{k}(2nN) \cdot \sqrt{k \log(n)/N}
    \; .
\]
\end{enumerate}
If all three checks pass, then it outputs FAR. Otherwise, it outputs CLOSE.

Notice that \cref{eq:moment_check_alg} consists of
\begin{align*}
    |\{(\alpha_1,\ldots \alpha_n) \in \Z_{\geq 0}^n \ : \ \alpha_1 + \cdots + \alpha_n \leq 2k\}| 
    &= |\{(\alpha_1,\ldots, \alpha_n,\alpha_{n+1}) \in \Z_{\geq 0}^n \ : \ \alpha_1 + \cdots + \alpha_{n+1} = 2k\}| \\
    &= \binom{n+2k}{2k} \\
    &\leq (e \cdot (n + 2k)/(2k))^{2k} \leq (5n/k)^{2k}.
\end{align*}
inequalities, each of which can be checked in time $\poly(n) \cdot N$. The other two checks can be performed far more efficiently.
It follows that the running time is as claimed. Of course, the hard part is proving correctness and soundness.

\subsubsection{Correctness of the algorithm in the FAR case}

In this section, we prove that the algorithm is correct in the FAR case. We assume that the $\vec{w}_i$ are sampled independently from $D_{\lat^*}$, and that $\min\{\dist(\vec{t},\lat), \lambda_1(\lat)\}\geq \sqrt{n}$, and we wish to show that the algorithm outputs FAR with probability at least $1-3/N$.

Indeed, it is immediate that the $\vec{w}_i$ are in fact in the dual lattice. Furthermore, by \cref{eq:PSF_Gaussian}, we have
\[
    \expect_{\vec{w} \sim D_{\lat^*}}[\cos(2\pi \langle \vec{w}, \vec{t} \rangle)]  = f(\vec{t}) \leq 2^{-n}
    \; ,
\]
where the inequality is by \cref{thm:banaszczyk_tail}.
By the Chernoff-Hoeffding bound (\cref{lemma:chernoff}), it follows that
\[
    f_W(\vec{t}) := \frac{1}{N} \sum_{i=1}^N \cos(2\pi \langle \vec{w}_i, \vec{t} \rangle) \leq 2^{-n} + 10\sqrt{\log(N)/N} < 20 \sqrt{\log (N)/N}
\]
except with probability at most $1/N$. Therefore, the second test passes with high probability.

It remains to show that the $\vec{w}_i$ satisfy \cref{eq:moment_check_alg} except with probability at most $2/N$. To that end, first notice that by \cref{lem:subgaussianity} and a union bound, we have that $|w_{i,j}| \leq \sqrt{2\log(2nN)/\pi}$ for all $i,j$, except with probability at most $1/N$. Conditioned on this, we have by the Chernoff-Hoeffding bound (\cref{lemma:chernoff}) that
\[
    \Pr\Big[ \Big|\frac{1}{N} \sum_{i=1}^N w_{i,1}^{\alpha_1} \cdots w_{i,n}^{\alpha_n} - \expect_{\vec{w} \sim D_{\lat^*}}[w_{1}^{\alpha_1} \cdots w_{n}^{\alpha_n}]  \Big| \geq \delta \Big] \leq \exp(- \delta^2 N/ (2\log^{2k}(2nN)))
    \; .
\]
In particular, if we take 
$ \delta := 10\log^k(2nN) \sqrt{k \log(n)/N}$, then we see that 
\[
    \Big|\frac{1}{N} \sum_{i=1}^N w_{i,1}^{\alpha_1} \cdots w_{i,n}^{\alpha_n} - \expect_{\vec{w} \sim D_{\lat^*}}[w_{1}^{\alpha_1} \cdots w_{n}^{\alpha_n}]  \Big| < \delta
\]
for any fixed choice of $(\alpha_1,\ldots,\alpha_n)$ with $\alpha_1 + \cdots + \alpha_n \leq 2k$ except with probability at most $n^{-10k} < n^{-2k}/N$. After taking a union bound over all $\binom{n+2k}{2k} \leq n^{2k}$ choices of $(\alpha_1,\ldots, \alpha_n)$, we see that this holds simultaneously for all such choices except with probability at most $1/N$.

Finally, by \cref{lem:smooth_Hermite_multi-dim}, we have that
\[
    \Big|\expect_{\vec{w} \sim D_{\lat^*}}[w_{1}^{\alpha_1} \cdots w_{n}^{\alpha_n}] - V_{\alpha_1} \cdots V_{\alpha_n}\Big| \leq n^{2k} 2^{-n}
    \; .
\]
Putting everything together, we see that except with probability at most $2/N$, we have
\[
    \Big|\frac{1}{N} \sum_{i=1}^N w_{i,1}^{\alpha_1} \cdots w_{i,n}^{\alpha_n} - V_{\alpha_1} \cdots V_{\alpha_n}  \Big| < 10 \log^k(2nN) \sqrt{k \log(n)/N} + n^{2k} 2^{-n} \leq \eps \; ,
\]
as needed.

\subsubsection{Correctness of the algorithm in the CLOSE case}
It remains to prove that the algorithm is sound in the CLOSE case. To that end, we assume that $\dist(\vec{t},\lat) \leq r$ where 
$r := \frac{\sqrt{k}}{4}$
and that $\vec{w}_i \in \lat^*$ are dual vectors satisfying \cref{eq:moment_check_alg}, and we prove that this implies that $f_W(\vec{t}) \geq 20 \sqrt{\log(N)/N}$.

Indeed, since the $\vec{w}_i$ are dual vectors, notice that $f_W(\vec{t}) = f_W(\vec{t} - \vec{y})$ for any $\vec{y} \in \lat$. Taking $\vec{y} \in \lat$ with $\|\vec{y} - \vec{t}\| \leq r$ and defining $\vec{u} := \vec{t} - \vec{y}$, we see that it suffices to prove that
\[
    f_W(\vec{u}) \geq 20 \sqrt{\log(N)/N}
\]
for all $\vec{u} \in \R^n$ with $\|\vec{u}\| \leq r$. In fact, we will show that $f_W(\vec{u}) \geq e^{-\pi r^2}/2 \gg 20 \sqrt{\log N/N}$.
Indeed, by \cref{prop:stupid_annoying_thing_about_moments_and_stuff}, we see that
\begin{align*}
    f_W(\vec{u}) 
        &\ge e^{-\pi r^2} - (10 r^2/k)^{k+1} - \eps n^{2k} e^{2\pi r}\\
        &= e^{-\pi r^2} - (10r^2/k)^{k+1} - 20 \log^{k}(2nN) \sqrt{k\log (n)/N} \cdot n^{2k} \cdot e^{2\pi r} \;.
\end{align*}
Notice that we have chosen our parameters so that 
\[
    20 \log^k(2nN) \sqrt{k \log(n)/N} \cdot n^{2k} \cdot  e^{2\pi r} \leq (10 k \log n)^{k+1/2}/(20 k^2 \log n)^{k+1/2} \leq k^{-k}/4 < e^{-\pi r^2}/4\; ,
\]
and 
\[
    (10 r^2/k)^{k+1} < (16/10)^{-k} < (e^{\pi / 16})^{-k} / 4 = e^{- \pi r^2} / 4
\]
The result follows.

\section{A \texorpdfstring{$\mathsf{coMA}$}{coMA} Protocol}
\label{sec:coMA}

Our main technical result in this section is the following. This can be viewed as a randomized verifier for $\gamma$-$\coGapCVP$ with approximation factor $\gamma = (1+\alpha)/(\alpha \beta) \leq 2/(\alpha \beta)$. From this, we will immediately derive \cref{thm:coMA}, which gives a a $\mathsf{coMA}$ protocol for $\gamma$-$\CVP$. In \cref{sec:SIS}, we will also use \cref{thm:GapCVP_randomized_verifier} to give a worst-case to average-case reduction for $\SIS$. (We did not attempt to optimize the constants in \cref{thm:GapCVP_randomized_verifier,thm:coMA}.)

\begin{theorem}
    \label{thm:GapCVP_randomized_verifier}
    For any $\beta \le \alpha < 1/3$, there exists a randomized algorithm that takes as input a (basis for a) lattice $\lat \subset \Q^n$ and target $\vec{t} \in \Q^n$, and a witness $W := (\vec{w}_1,\ldots, \vec{w}_N) \in (\lat^*)^N$, where $N := 2^{10\alpha^2 n}$, and behaves as follows. 
    \begin{enumerate}
        \item If
        $\dist(\vec{t}, \lat) \leq \alpha \beta \sqrt{n}$,
        then the algorithm outputs CLOSE with probability at least $1 - 2^{-{\beta^2 n}}$. 
        \item \label{item:correctness_MA} If the $\vec{w}_i$ are sampled independently from $D_{\lat^*}$ and  $\dist(\vec{t},\lat) > (1+\alpha) \sqrt{n}$, then the algorithm outputs FAR with probability at least  $1 - 2^{-{\alpha^2 n}}$.
        \item The running time of the algorithm
        is $2^{ n(10 \alpha^2 + 2\beta^2)} \cdot \poly(n)$.
        \end{enumerate}
\end{theorem}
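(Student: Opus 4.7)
The plan is to construct a randomized verifier that couples the Goldreich--Goldwasser ball-overlap test with the Aharonov--Regev periodic-Gaussian witness $W$. Set $r := \alpha\sqrt{n}$, $d := \alpha\beta\sqrt{n}$, $\theta_L := e^{-\pi r^2}/2$, and $\theta_S := \theta_L/2$, and write $p := (1-\beta^2)^{(n+1)/2}/(2\sqrt{2\pi n})$ for the ball-overlap fraction from \cref{lem:normalized-l2-ball-intersection-vol} applied at $d/r = \beta$, so $p \geq 2^{-O(\beta^2 n)}$. The algorithm first rejects (outputs CLOSE) if any $\vec{w}_i \notin \lat^*$, then performs $T := \lceil 4\beta^2 n / p\rceil$ independent trials, in each of which Arthur samples $\vec{x}_0 \sim \mathrm{Unif}(r\ball)$ and $\vec{x}_1 \sim \mathrm{Unif}(r\ball + \vec{t})$, evaluates
\[
    f_W(\vec{u}) := \frac{1}{N}\sum_{j=1}^N \cos\bigl(2\pi\inner{\vec{w}_j, \vec{u}}\bigr)
\]
at both points, and ``passes'' the trial iff $f_W(\vec{x}_0) \geq \theta_L$ and $f_W(\vec{x}_1) \leq \theta_S$. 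The algorithm outputs FAR iff every trial passes.

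For \textbf{FAR completeness}, assume $W \sim D_{\lat^*}^N$ and $\dist(\vec{t},\lat) > (1+\alpha)\sqrt{n}$. For any fixed $\vec{x}_0 \in r\ball$, \cref{thm:banaszczyk_shift_big} gives $f(\vec{x}_0) \geq e^{-\pi r^2} = 2\theta_L$, and since $\expect_W[f_W(\vec{x}_0)] = f(\vec{x}_0)$ with summands bounded by $1$, \cref{lemma:chernoff} yields $\Pr_W[f_W(\vec{x}_0) < \theta_L] \leq 2\exp(-\theta_L^2 N/2)$. With $N = 2^{10\alpha^2 n}$ and $\theta_L^2 \geq 2^{-2\pi\alpha^2 n/\ln 2 - 2}$, the exponent is $2^{\Omega(\alpha^2 n)}$ using the key slack $10 > 2\pi/\ln 2 \approx 9.06$, so this failure probability is doubly-exponentially small. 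In parallel, $\dist(\vec{x}_1, \lat) > (1+\alpha)\sqrt{n} - r = \sqrt{n}$, so \cref{thm:banaszczyk_tail} with $t=1$ gives $f(\vec{x}_1) \leq (2\pi e)^{n/2} e^{-\pi n} \leq e^{-1.7 n} \ll \theta_S$, and a Chernoff bound again swamps any deviation. A union bound over both events across the $T$ trials yields the claimed completeness error $\leq 2^{-\alpha^2 n}$.

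The core of the argument is \textbf{CLOSE soundness}, which is where the two protocols mesh. Assume $\dist(\vec{t}, \lat) \leq d$ and let $\vec{y} \in \lat$ be a closest lattice point to $\vec{t}$, so $\vec{v} := \vec{t} - \vec{y}$ satisfies $\|\vec{v}\| \leq d \leq r$. The crucial observation is that $\vec{w}_j \in \lat^*$ forces $f_W$ to be $\lat$-periodic, so $f_W(\vec{x}_1) = f_W(\vec{x}_1 - \vec{y})$ where $\vec{x}_1 - \vec{y}$ is uniform on $r\ball + \vec{v}$. Let $L := \{\vec{z} : f_W(\vec{z}) \geq \theta_L\}$ and $M := \{\vec{z} : f_W(\vec{z}) \leq \theta_S\}$; both are $\lat$-periodic and disjoint since $\theta_L > \theta_S$. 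Writing $I := r\ball \cap (r\ball + \vec{v})$, $a := \vol_n(L \cap I)/\vol_n(I)$, and $b := \vol_n(M \cap I)/\vol_n(I)$, disjointness forces $a + b \leq 1$, and by independence of $\vec{x}_0$ and $\vec{x}_1 - \vec{y}$ together with the elementary bounds $\vol_n(L \cap r\ball) \leq a\vol_n(I) + (\vol_n(r\ball)-\vol_n(I))$ and similarly for $M$,
\[
    \Pr[\text{trial passes}] \leq (ap + 1 - p)(bp + 1 - p) \leq (1 - p/2)^2 \leq 1 - 3p/4 ,
\]
where the middle inequality is AM--GM under $a + b \leq 1$. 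Across $T \geq 4\beta^2 n/p$ trials, the algorithm thus outputs FAR with probability at most $(1 - 3p/4)^T \leq e^{-3\beta^2 n} \leq 2^{-\beta^2 n}$. The total running time $T \cdot N \cdot \poly(n) = 2^{10\alpha^2 n + O(\beta^2 n)} \poly(n)$ fits within $2^{n(10\alpha^2 + 2\beta^2)} \poly(n)$.

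The \textbf{main obstacle}, and the reason this coMA verifier works at all, is that a cheating $W$ can shape $f_W$ arbitrarily at any single point; the \emph{only} efficiently checkable structural constraint we impose is dual-lattice membership of the $\vec{w}_j$. This alone suffices because $\lat$-periodicity of $f_W$ forces its values on $r\ball + \vec{t}$ to coincide with its values on $r\ball + \vec{v}$, which substantially overlaps $r\ball$, so no adversarial $W$ can assign both ``large'' and ``small'' values consistently across that overlap. The quantitative bookkeeping--choosing $r$ large enough that $r \geq d$ (for the Goldreich--Goldwasser overlap to be nonnegligible), small enough that $r < \alpha\sqrt{n}$ (so that $\vec{x}_1$ is at distance $>\sqrt{n}$ from $\lat$ in the FAR case and Banaszczyk's tail bound applies), and balancing $\log_2 N = 10\alpha^2 n$ comfortably above $\pi r^2/\ln 2$ so that Chernoff noise is negligible compared to $\theta_L$--is routine but tight, and drives the final exponent $10\alpha^2 + 2\beta^2$.
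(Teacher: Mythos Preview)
Your proposal is correct and follows essentially the same approach as the paper: the paper's verifier also checks dual-lattice membership, then in each of $2^{2\beta^2 n}$ trials samples a single $\vec{v}$ uniformly from the radius-$\alpha\sqrt{n}$ ball and tests both $f_W(\vec{v}) > e^{-\pi\alpha^2 n}/2$ and $f_W(\vec{v}+\vec{t}) \le e^{-\pi\alpha^2 n}/2$ against a \emph{single} threshold; in the CLOSE case it observes that, conditioned on $\vec{v}$ landing in the intersection $S$, the fraction of $S$ on which $f_W$ exceeds the threshold is some $p$, so one of the two checks fails with probability at least $\delta\max\{p,1-p\}\ge \delta/2$. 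Your variant---two independent samples, two separated thresholds $\theta_L>\theta_S$, and the AM--GM bound $(ap+1-p)(bp+1-p)\le(1-p/2)^2$---is a harmless repackaging of the same overlap argument (note a small slip: \cref{lem:normalized-l2-ball-intersection-vol} gives $(1-\beta^2/4)^{(n+1)/2}$, not $(1-\beta^2)^{(n+1)/2}$, though this does not affect the $2^{-O(\beta^2 n)}$ conclusion).
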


Before we prove \cref{thm:GapCVP_randomized_verifier}, we observe that it immediately implies the following result.

\begin{theorem}
    \label{thm:coMA}
    For any $\gamma = \gamma(n) \geq 20$,
    $\gamma$-$\GapCVP$
    is in $\mathsf{coMATIME}[2^{O(n/\gamma)} \cdot \poly(n)]$.\footnote{The way that we have described the protocol gives (very good but) imperfect completeness. This can be converted generically into a protocol with perfect completeness (at the expense of an additional constant in the exponent), by first repeating the protocol roughly $2^{\beta^2 n}$ times to decrease the completeness error so that it is small enough to allow us to apply a union bound over Arthur's coins---i.e., the coins used to sample $\vec{v}$ as described below. (This does require us to specify a finite grid from which $\vec{v}$ is sampled, but we ignore such details.) One can also observe that the function $f_W$ is Lipschitz and use this to argue directly that it suffices for Merlin to consider a finite set of choices for $\vec{v}$---formally, an appropriately sized net of an appropriately sized ball---as was done in~\cite[Section 6.2]{aharonovLatticeProblemsNP2005}. However, we will simply content ourselves with imperfect completeness.}
\end{theorem}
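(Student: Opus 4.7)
The plan is to derive \cref{thm:coMA} as an essentially immediate consequence of \cref{thm:GapCVP_randomized_verifier}, by choosing the parameters $\alpha, \beta$ in the verifier so that its approximation factor matches the target $\gamma$ while minimizing the running-time exponent. Given $\gamma \geq 20$, I would set $\alpha = \beta$ equal to the unique positive root of $\gamma \alpha^2 = 1 + \alpha$. This yields $\alpha = \Theta(1/\sqrt{\gamma})$, with $\alpha = 1/4$ at $\gamma = 20$ and $\alpha$ shrinking as $\gamma$ grows, so the constraint $\beta \leq \alpha < 1/3$ required by \cref{thm:GapCVP_randomized_verifier} is satisfied. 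Moreover $(1 + \alpha)/(\alpha \beta) = \gamma$ exactly, while the running-time exponent simplifies to $10 \alpha^2 + 2 \beta^2 = 12 \alpha^2 = 12(1+\alpha)/\gamma \leq 16/\gamma$. (Subject to the hard constraint $\beta \leq \alpha$, setting $\alpha = \beta$ is within a constant factor of optimal for minimizing $10\alpha^2 + 2\beta^2$ at the desired product $\alpha \beta = (1+\alpha)/\gamma$, so nothing is lost by this choice.)

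Having fixed $\alpha$ and $\beta$, I would rescale the input $\gamma$-$\GapCVP$ instance $(\basis, \vec{t}, d)$ by the scalar $\alpha \beta \sqrt{n}/d$ so that the distance threshold in the rescaled instance becomes $\alpha \beta \sqrt{n}$. A YES instance then satisfies $\dist(\vec{t}, \lat) \leq \alpha \beta \sqrt{n}$ and a NO instance satisfies $\dist(\vec{t}, \lat) > \gamma \cdot \alpha \beta \sqrt{n} = (1 + \alpha) \sqrt{n}$, exactly matching the hypotheses of \cref{thm:GapCVP_randomized_verifier}. The $\coMA$ protocol for $\gamma$-$\GapCVP$ is then the obvious one: Merlin sends a witness $W = (\vec{w}_1, \ldots, \vec{w}_N) \in (\lat^*)^N$ with $N = 2^{10 \alpha^2 n}$, Arthur invokes the randomized verifier of \cref{thm:GapCVP_randomized_verifier} on $(\lat, \vec{t}, W)$, and accepts (reports ``NO instance of $\gamma$-$\GapCVP$'') iff the verifier outputs FAR. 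In the NO (FAR) case, the honest Merlin samples each $\vec{w}_i$ independently from $D_{\lat^*}$ and by item~\ref{item:correctness_MA} of the theorem Arthur accepts with probability at least $1 - 2^{-\alpha^2 n}$; in the YES (CLOSE) case, the soundness clause of the theorem guarantees that Arthur rejects with probability at least $1 - 2^{-\beta^2 n}$ regardless of Merlin's choice of $W$. The communication and verifier running time are both bounded by $2^{O(n(\alpha^2 + \beta^2))} \cdot \poly(n) = 2^{O(n/\gamma)} \cdot \poly(n)$, giving the claimed complexity.

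I do not anticipate any real obstacle. Both error probabilities sit well below the $1/3$ constant-error threshold of the $\mathsf{coMATIME}$ definition in \cref{sec:prelims_ETH_stuff} as soon as $n$ exceeds an absolute constant (smaller $n$ can be handled by brute force within the $\poly(n)$ slack), so the imperfect completeness is not actually a problem for the statement as given. If perfect completeness were desired, one could apply the amplification-plus-net argument sketched in the footnote to the theorem (using the Lipschitz property of $f_W$ as in \cite{aharonovLatticeProblemsNP2005}), but that is not needed here. The entire substance of the proof lies in \cref{thm:GapCVP_randomized_verifier}; the current theorem is essentially a repackaging of it.
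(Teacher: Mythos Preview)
Your proposal is correct and follows essentially the same approach as the paper: both derive the result directly from \cref{thm:GapCVP_randomized_verifier} by setting $\alpha=\beta=\Theta(1/\sqrt{\gamma})$, rescaling the instance, and having Merlin supply discrete Gaussian samples as the witness. The only cosmetic difference is that the paper takes the simpler choice $\alpha=\beta=\sqrt{2/\gamma}$ and bounds the gap by $(1+\alpha)/(\alpha\beta)\le 2/(\alpha\beta)=\gamma$, whereas you solve $\gamma\alpha^2=1+\alpha$ to hit the approximation factor exactly; both give running time $2^{O(n/\gamma)}\cdot\poly(n)$.
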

\begin{proof} 
Let $\alpha = \beta = \sqrt{2/\gamma}$. By possibly rescaling the input lattice $\lat$ and target $\vec{t}$, we may assume that the lattice is scaled so that in the YES case $\dist(\vec{t},\lat) > (1+\alpha)n$ and in the NO case $\dist(\vec{t},\lat) \leq \alpha \beta \sqrt{n}$. Then, in the protocol, Merlin chooses the witness $W$ that maximizes the probability in \cref{item:correctness_MA} of~\cref{thm:GapCVP_randomized_verifier}. Arthur simply runs the algorithm in~\cref{thm:GapCVP_randomized_verifier}. It is immediate that the protocol is sound and complete and that the running time is $2^{12\alpha^2 n} \cdot \poly(n) = 2^{24n/\gamma} \cdot \poly(n)$, and the gap between the YES and NO instance is $\frac{1+\alpha}{\alpha \beta} \le \frac{2}{\alpha \beta} = \gamma$.
\end{proof}

\subsection{Proof of\texorpdfstring{~\cref{thm:GapCVP_randomized_verifier}}{the theorem}}

We now present the algorithm claimed in \cref{thm:GapCVP_randomized_verifier} and prove that it satisfies the desired properties. On input $(\lat, \vec{t})$ and a witness $W := (\vec{w}_1,\ldots, \vec{w}_N) \in (\lat^*)^N$, the algorithm does the following. 
\begin{enumerate}
    \item It checks that $\vec{w}_i \in \lat^*$ for all $i$.
    \item It repeats the following steps $2^{2\beta^2 n}$ times. 
    \begin{enumerate}
    \item It samples $\vec{v}$ uniformly from $\B(0, \alpha \sqrt{n})$.
    \item It computes  \label{item:check_v}
\[
    f_W(\vec{v}) := \frac{1}{N} \sum_{i=1}^N \cos(2\pi \langle \vec{w}_i, \vec{t} \rangle)
    \; ,
\]
and checks that $f_W(\vec{v}) > e^{-\pi \alpha^2 n}/2$. 
 \item It computes \label{item:check_v_plus_t}
\[
    f_W(\vec{v}+\vec{t}) := \frac{1}{N} \sum_{i=1}^N \cos(2\pi \langle \vec{w}_i, \vec{t} + \vec{v}\rangle)
    \; ,
\]
and checks that $f_W(\vec{v}+\vec{t}) \le e^{-\pi \alpha^2 n}/2$. 
\end{enumerate}
\item If all checks pass, the algorithm outputs FAR. Otherwise, it outputs CLOSE. 
\end{enumerate}
The running time of the algorithm is immediate. 

\subsubsection{Correctness of the algorithm in the FAR case}

We now prove correctness in the case when $\dist(\vec{t},\lat) > (1+\alpha) \sqrt{n}$. 
Indeed, since $\|\vec{v}\|< \alpha \sqrt{n}$, we have by triangle inequality that $\dist(\vec{t} + \vec{v},\lat) >  \sqrt{n}$.  By \cref{eq:PSF_Gaussian}, we have that
\[
    \expect_{\vec{w} \sim D_{\lat^*}}[\cos(2\pi \langle \vec{w}, \vec{t}+\vec{v} \rangle)]  = f(\vec{t}+\vec{v}) \leq 2^{-n}
    \; ,
\]
where the inequality is by \cref{cor:banaszczyk_tail}.
By the Chernoff-Hoeffding bound (\cref{lemma:chernoff}), it follows that
\[
    f_W(\vec{t}+\vec{v}) := \frac{1}{N} \sum_{i=1}^N \cos(2\pi \langle \vec{w}_i, \vec{t}+\vec{v} \rangle) \leq 2^{-n} + 10\sqrt{\log(N)/N} < e^{-\pi \alpha^2 n}/2 \; .
\]
except with probability at most $1/N$.

Similarly, we have that
\[
    \expect_{\vec{w} \sim D_{\lat^*}}[\cos(2\pi \langle \vec{w}, \vec{v} \rangle)]  = f(\vec{v}) \geq e^{-\pi \alpha^2 n}
    \; ,
\]
where the inequality is \cref{thm:banaszczyk_shift_big}. By the Chernoff-Hoeffding bound again (\cref{lemma:chernoff}), it follows that
\[
    f_W(\vec{v}) := \frac{1}{N} \sum_{i=1}^N \cos(2\pi \langle \vec{w}_i, \vec{v} \rangle) \geq e^{-\pi\alpha^2 n} - 10\sqrt{\log(N)/N} > e^{-\pi\alpha^2 n}/2 \;. 
\]
except with probability at most $1/N$. 

By a union bound over all $2^{2\beta^2 n+1}$ checks made by the algorithm, all checks succeed except with probability at most $\frac{2^{2\beta^2 n+1}}{N} < 2^{-\alpha^2 n}$.

\subsubsection{Correctness of the algorithm in the CLOSE case} 

We now prove correctness in the case when $\dist(\vec{t},\lat) \leq \alpha \beta \sqrt{n}$. We may assume that the $\vec{w}_i$ are dual lattice vectors, since otherwise correctness is immediate. Notice that this immediately implies that $f_W(\vec{t} + \vec{v}) = f_W(\vec{u} + \vec{v})$, where $\vec{u} := \vec{t} - \vec{y}$ for any $\vec{y} \in \lat$. In particular, taking $\vec{y}$ to be a closest lattice vector to $\vec{t}$, we see that $\|\vec{u}\| = \dist(\vec{t},\lat) \leq \alpha \beta \sqrt{n}$.

Define $S:= \B(\vec{0}, \alpha \sqrt{n})\cap \B(\vec{u}, \alpha \sqrt{n})$. By \cref{lem:normalized-l2-ball-intersection-vol}, the ratio of the volume of $S$, and that of $\B(\vec{0}, \alpha \sqrt{n})$ is greater than $\delta := \sqrt{1/(2 \pi n)} \cdot (1 - \beta^2/4)^{(n+1)/2} \geq 2^{-\beta^2 n}$. 

Let $p$ be the probability that for $f_W(\vec{v}') > e^{-\pi \alpha^2}/2$ for $\vec{v}'\sim S$. Notice that the probability that \cref{item:check_v} fails (which is what we want to happen!) is 
\[
    \Pr_{\vec{v} \sim \B(\vec{0},\alpha \sqrt{n})}[f_W(\vec{v}) \leq e^{-\pi \alpha^2}/2] \geq \delta \Pr_{\vec{v}' \sim S}[f_W(\vec{v}') \leq e^{-\pi \alpha^2}/2] = \delta (1-p)
    \; .
\]
On the other hand, the probability that \cref{item:check_v_plus_t} fails is
\[
    \Pr_{\vec{v} \sim \B(\vec{0},\alpha \sqrt{n})}[f_W(\vec{v} + \vec{t}) > e^{-\pi \alpha^2}/2] = \Pr_{\vec{v} \sim \B(\vec{0},\alpha \sqrt{n})}[f_W(\vec{v} + \vec{u}) > e^{-\pi \alpha^2}/2] \geq \delta \Pr_{\vec{v}' \sim S}[f_W(\vec{v}') > e^{-\pi \alpha^2}/2] = \delta p
    \; .
\]
Therefore, the probability that either \cref{item:check_v} or \cref{item:check_v_plus_t} fails is at least $\delta \cdot \max\{p,1-p\} \geq \delta/2$. After iterating the test $2^{2\beta^2n}$ times, the probability that the algorithm correctly outputs CLOSE is at least
\[
    1 - (1-\delta/2)^{2^{2\beta^2 n}} \geq 1-e^{-\delta 2^{2\beta^2 n-1}} \gg 1-2^{-\beta^2 n}
    \; ,
\]
as needed.

\section{Worst-case to average-case reductions for SIS}
\label{sec:SIS}

The main result of this section is as follows.

\begin{theorem}
    \label{thm:SIS}
    For any constant $\alpha > 0$, any positive integers $q = q(n)$ and $m = m(n)$ satisfying $q \geq n^\alpha \sqrt{m}$ and $m > n \log_2 q$, there is a reduction from $\gamma$-$\gapSVP$ to $\SIS_{n,m,q}$ that runs in time
    \[
        T := 2^{O(\min \{\sqrt{m} n/\gamma, nm \log n/\gamma^2 \} )}
        \; .
    \]
    Therefore, collision-resistant hash functions exist if $\gamma$-$\SVP$ is $2^{\Omega(\min\{ n^{1.5} \sqrt{\log n}/\gamma, n \log^{3/2} n/\gamma^2 \})}$.
    
    In particular, there is a $2^{\eps n}$-time reduction from $O_\eps(\sqrt{m})$-$\gapSVP$ to $\SIS_{n,m,q}$, and collision-resistant hash functions exist if $O(\sqrt{n \log n})$-$\gapSVP$ is $2^{\Omega(n)}$ hard.
\end{theorem}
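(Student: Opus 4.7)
The plan is to decompose the proof into two main steps. First, reduce $\gamma$-$\gapSVP$ to $\gamma$-$\gapCVP'$ via \cref{thm:GMSS}. Second, use the $\SIS_{n,m,q}$ oracle to solve $\gamma$-$\gapCVP'$, by using the oracle to produce the witness required by one of our two verifiers: the $\mathsf{coMA}$ verifier of \cref{thm:GapCVP_randomized_verifier} in the parameter regime corresponding to the $\sqrt{m}\,n/\gamma$ term in the running time, and the co-non-deterministic verifier of \cref{thm:GapCVP'_verifier} in the regime corresponding to the $nm\log n/\gamma^2$ term. Combining these two instantiations naturally yields the minimum appearing in the theorem.

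The core technical step is to use the $\SIS$ oracle to produce samples that are statistically close to $D_{\lat^*, s}$ for a suitable parameter $s$. I will closely follow the framework of Micciancio and Peikert~\cite{micciancioHardnessSISLWE2013}: start with (many) coarse discrete Gaussian samples on $\lat^*$ obtained via the Klein/GPV sampler \cref{thm:BLP_sampler} after basis reduction; feed these samples as columns of a matrix $\vec{A} \in \Z_q^{m \times n}$ to the $\SIS$ oracle to obtain short $\{-1,0,1\}$-combinations; and apply the convolution theorem \cref{thm:convolution} to show that the resulting combination is distributed as a discrete Gaussian with a \emph{smaller} parameter. Iterating this step halves the parameter each time, ultimately yielding samples from $D_{\lat^*, s}$ for any $s$ above the smoothing parameter $\eta_{\eps}(\lat^*)$, at the cost of roughly $\poly(n,m)$ $\SIS$ queries per sample. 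The precondition $q \geq n^{\alpha} \sqrt{m}$ ensures that the SIS modulus is large enough, relative to the $\sqrt{m}$ norm bound on SIS solutions, both for the iteration to terminate above the smoothing parameter and for the rescaled lattice to have the right geometry.

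The main obstacle—and the one new ingredient we need beyond \cite{micciancioHardnessSISLWE2013}—is that our co-non-deterministic verifier is genuinely sensitive to the \emph{exact} Gaussian parameter of the witness samples, because the moment check in \cref{eq:moment_check_alg} uses the specific moments $V_{\alpha_1}\cdots V_{\alpha_n}$ of the continuous Gaussian of parameter $1$. Prior $\SIS$-to-discrete-Gaussian reductions only produce samples from a \emph{mixture} of Gaussians with varying parameters, because the $\SIS$ oracle is free to return vectors of varying norms; this is good enough for~\cite{aharonovLatticeProblemsNP2005} (and for our $\mathsf{coMA}$ verifier, which tolerates such deviations) but not for our co-non-deterministic verifier. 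I will fix this by establishing \cref{thm:DGStoSIS}, a slight strengthening of the Micciancio--Peikert reduction that produces samples from $D_{\lat^*, s}$ itself, by constraining the Hamming weight of the $\SIS$ solutions used at each convolution step (so that the resulting Gaussian parameter is deterministic) and rejection-sampling on the final output to eliminate residual parameter drift. Once this is in place, plugging the resulting $\SIS$-based DGS sampler into the two verifiers gives the two regimes of the running-time bound directly, with the $\sqrt{m}$ factor arising from the $\ell_2$-norm of the SIS solutions and the $\log n$ factor from the $n^{O(n/\gamma^2)}$ witness size demanded by the co-non-deterministic verifier.
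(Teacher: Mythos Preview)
Your proposal is correct and follows essentially the same approach as the paper: reduce $\gamma$-$\gapSVP$ to $\gamma$-$\gapCVP'$ via \cref{thm:GMSS}, use a $\SIS$ oracle to produce genuine $D_{\lat^*,s}$ samples (the paper packages this as \cref{thm:DGStoSIS}, proved by fixing the most common $\SIS$-solution norm and repeating), and feed those samples as the witness to either the $\mathsf{coMA}$ verifier of \cref{thm:GapCVP_randomized_verifier} or the co-non-deterministic verifier of \cref{thm:GapCVP'_verifier}, yielding respectively the $\sqrt{m}\,n/\gamma$ and $nm\log n/\gamma^2$ exponents. The only cosmetic difference is that the paper isolates the two ``witness-generation'' steps as separate intermediate reductions from $\gapSVP$ to $\sDGS$ (\cref{thm:SVP_to_DGS_NP,thm:SVP_to_DGS_MA}) before composing with \cref{thm:DGStoSIS}, whereas you describe the composition directly.
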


\subsection{From discrete Gaussian sampling to \texorpdfstring{$\SIS$}{SIS}}

To prove our main theorem, we first show how to use a $\SIS$ oracle to sample from the discrete Gaussian distribution (above the smoothing parameter). In fact, it is well known that a $\SIS$ oracle can in some sense be used to sample from a discrete Gaussian. Indeed, this is implicit already in the reduction of Micciancio and Regev~\cite{MR04} (who introduced the use of the discrete Gaussian in this context).  However, to our knowledge, no prior work actually directly proves that there is a reduction from $\sDGS$ as defined below to $\SIS$.
		
		\begin{definition}
		    For any approximation factor $\gamma = \gamma(n) \geq 1$, $\eps = \eps(n) > 0$, and error parameter $\delta = \delta(n) \in (0,1)$ the $(\gamma,\eps, \delta)$-approximate smooth Discrete Gaussian Sampling problem ($\sDGS_{\gamma, \eps, \delta}$) is a sampling problem defined as follows. The input is (a basis for) a lattice $\lat \subset \Q^n$ and a parameter $s \geq \gamma \eta_\eps(\lat)$. The goal is to output $\vec{y} \in \lat$ whose distribution is within statistical distance $\delta$ of $D_{\lat,s}$.
		\end{definition}
		
		Micciancio and Peikert come closest to reducing $\sDGS$ to $\SIS$~\cite{micciancioHardnessSISLWE2013}, but their (very elegant) reduction only works for a variant of $\sDGS$ in which the algorithm can output a single sample from $D_{\lat,s'}$ (together with the parameter $s'$ itself) for any $s'$ in some range $s/\gamma \leq s' \leq s$. 
	    As far as we know, this was sufficient for all known use cases prior to this work. (For example, in \cite{MR04}, they implicitly work with this version of discrete Gaussian sampling.) However, our reductions in the next section seem to inherently need many samples $\vec{y}_1,\ldots, \vec{y}_N$ from $D_{\lat,s'}$, all with \emph{the same} parameter $s'$ (because we need much more precise estimates on the function $f_W(\vec{t})$).
		
		At a technical level, this lack of control over the parameter $s$ in prior work arises because of a lack of control over the length of the vector output by the $\SIS$ oracle. We could resolve this issue by observing that there are only polynomially many possible values for $s'$ in the~\cite{micciancioHardnessSISLWE2013} reduction (since there are only polynomially many possible values for the length of the vector returned by the $\SIS$ oracle) and then use a pigeonhole argument to show that we can find $N$ such samples by running the~\cite{micciancioHardnessSISLWE2013} reduction $\poly(n) \cdot N$ times.

		Instead, we show how to modify the~\cite{micciancioHardnessSISLWE2013} reduction to output a sample with a fixed chosen parameter $s >0$ (provided that $s$ is not too small).
		At a technical level, the only new idea in our proof is the observation that the reduction can find a parameter $r$ such that with non-negligible probability the $\SIS$ oracle outputs a valid solution $\vec{z} \in \{-1,0,1\}^m$ that has length exactly $r$. 
		Since the proof of \cref{thm:DGStoSIS} follows \cite{micciancioHardnessSISLWE2013} quite closely, we defer it to \cref{app:DGStoSIS}. Nevertheless, we expect that this new reduction will be useful in other contexts as well.

		\begin{theorem}
                \label{thm:DGStoSIS}
		    For any constant $\alpha > 0$, any positive integers $q = q(n)$ and $m = m(n)$ satisfying $q \geq n^\alpha \sqrt{m}$ and $m > n \log_2 q$, and any $2^{-m} < \eps = \eps(n) < m^{-\omega(1)}$, there is a reduction from $\sDGS_{2\sqrt{m}, \eps, \delta}$ to $\SIS_{n, m, q}$ that runs in time $\poly(m,\log q)$, where $\delta \leq \poly(m) \cdot \eps$.
		\end{theorem}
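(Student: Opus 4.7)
The plan is to follow the Micciancio--Peikert reduction structure~\cite{micciancioHardnessSISLWE2013} for the bulk of the argument, and then to add one new twist at the final step that pins the output parameter down to the prescribed value $s$ rather than leaving it as an arbitrary function of what the SIS oracle happens to return. Concretely, the core subroutine takes $m$ fresh samples $\vec{y}_1,\ldots,\vec{y}_m \sim D_{\lat,s_0}$, expresses each as $\vec{y}_i = \basis\vec{x}_i$ with $\vec{x}_i \in \Z^n$, forms the matrix $A \in \Z_q^{n \times m}$ whose $i$th column is $\vec{x}_i \bmod q$, feeds $A$ to the SIS oracle to obtain $\vec{z} \in \{-1,0,1\}^m$ with $A\vec{z} \equiv \vec{0} \pmod q$, and outputs $\vec{y} := \tfrac{1}{q}\sum_i z_i \vec{y}_i \in \lat$. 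Regularity (using $q \ge n^\alpha \sqrt{m}$) guarantees that $A$ is $\eps$-close to uniform so that the oracle succeeds, and \cref{thm:convolution} gives that $\vec{y}$ is within statistical distance $\eps$ of $D_{\lat, s_0\|\vec{z}\|/q}$ provided $s_0 \geq \sqrt{2}\, q\, \eta_\eps(\lat)$.

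To bootstrap a sampleable starting parameter, the reduction first uses LLL to compute a basis of $\lat$ with $\|\widetilde{\basis}\| \leq 2^{O(n)}\eta_\eps(\lat)$, then uses Klein's sampler (\cref{thm:BLP_sampler}) to draw initial samples at scale $s_0^{(0)} = 2^{O(n)}\eta_\eps(\lat)$. Applying the core subroutine repeatedly, each iteration shrinks the scale by a factor of at least $q/\sqrt{m} \ge n^\alpha$ (since $\|\vec{z}\|\le \sqrt{m}$ and we divide by $q$), so after $O(n/(\alpha\log n))$ iterations the scale reaches the ``final-round'' regime $\Theta(q\,\eta_\eps(\lat))$. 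Throughout this chain we only need that the scale stays above $\sqrt{2}\,q\,\eta_\eps(\lat)$ for the convolution theorem to apply, which the invariant $s \ge 2\sqrt{m}\,\eta_\eps(\lat)$ at the target ensures is maintained for all intermediate scales.

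The new ingredient is for the final round, where we must output at \emph{exactly} $s$ rather than an uncontrolled value. Since $\|\vec{z}\|^2 \in \{1,\ldots,m\}$, by pigeonhole there is some $r \in \{1,\ldots,\sqrt{m}\}$ such that over the randomness of the input samples the oracle returns $\vec{z}$ with $\|\vec{z}\|=r$ with probability at least $1/m$. The reduction spends $\poly(m)$ oracle calls to estimate such an $r$, then sets $s_0 := sq/r$ (which is $\ge \sqrt{2}\,q\,\eta_\eps(\lat)$ by the hypothesis on $s$), draws fresh samples from $D_{\lat, s_0}$, and runs the core subroutine, \emph{rejecting and retrying} whenever the oracle returns $\vec{z}$ with $\|\vec{z}\| \ne r$. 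Each attempt succeeds with probability $\ge 1/m - \eps$, so $\poly(m)$ attempts suffice in expectation. The crucial point is that conditioning on $\|\vec{z}\|=r$ does not corrupt the output: for every fixed $A$ with $\|\vec{z}(A)\| = r$, the conditional law of $\vec{y}_i$ given $A$ is still $D_{\basis A_i + q\lat,\, s_0}$, so \cref{thm:convolution} yields that $\vec{y}\mid A$ is $\eps$-close to $D_{\lat,\, s_0 r/q} = D_{\lat, s}$ \emph{independent of which $A$ we condition on}; marginalizing over the conditioned $A$'s preserves this distribution.

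The main obstacle is the bookkeeping of smoothing/regularity at each of the $\poly(n)$ intermediate rounds and the $\poly(m)$ rejection attempts in the last round, because the quality of the $\vec{y}_i$ distribution at round $i$ feeds into the regularity claim at round $i+1$ through $\eps$-dependent slacks. Each round introduces $\eps$ statistical error from convolution and $\eps$ from regularity, and the total over $\poly(m)$ steps accumulates to the claimed $\delta \le \poly(m)\cdot\eps$. A secondary subtlety is verifying that the estimate of $r$ is reliable: a simple Chernoff bound on $\poly(m)$ independent invocations of the core subroutine (run at an arbitrary sufficiently-large scale for calibration) identifies some $r$ whose empirical frequency is $\ge 1/(2m)$ except with probability $2^{-\Omega(m)}$, which is absorbed into $\delta$.
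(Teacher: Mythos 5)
Your core subroutine and your ``fix a norm value $r$ by pigeonhole, then reject-and-resample until the SIS oracle returns $\vec{z}$ with $\|\vec{z}\|=r$'' idea are exactly the paper's new ingredients (\cref{lemma:pigeonhole-version-3.6}), so the last mile of your argument is on target. But there is a genuine gap in how you get from the LLL-quality basis down to the target scale.

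Each invocation of the core subroutine consumes $m$ \emph{fresh} samples from $D_{\lat,s_0}$ to produce a \emph{single} output vector at the smaller scale. Chaining $k$ levels therefore requires $m^k$ samples at the starting scale (and $m^{k-1}$ oracle calls), so your plan of descending from $2^{O(n)}\eta_\eps(\lat)$ to $\Theta(q\,\eta_\eps(\lat))$ via $k=O(n/(\alpha\log n))$ iterations costs $m^{\Omega(n/\log n)} = 2^{\Omega(n)}$ time for $m=\poly(n)$ --- not the claimed $\poly(m,\log q)$. Relatedly, your intermediate rounds do not condition on $\|\vec{z}\|=r$, so the samples arriving at the ``final round'' sit at uncontrolled, non-identical parameters, and the prescribed starting point $s_0=sq/r$ is not actually available; the rejection trick has to be applied at \emph{every} level, not just the last. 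The paper's proof closes both holes with a basis-refinement loop: it fixes $r$ once, takes $c=\lceil\log_{q/r}n\rceil\le 1/\alpha+1$ (a constant, using $q/r\ge q/\sqrt m\ge n^\alpha$) so that each descent chain has only $c$ levels and costs $m^{O(c)}=\poly(m)$ samples drawn directly by Klein's sampler at scale $\|\widetilde{\basis}_j\|\sqrt{\log n}$; the resulting $n^2$ vectors at scale $\|\widetilde{\basis}_j\|\sqrt{\log n}\cdot(r/q)^c\le\|\widetilde{\basis}_j\|/2$ are used to halve $\|\widetilde{\basis}_j\|$, and after at most $n$ such phases Klein's sampler can be run at $s(q/r)^c$ and a final controlled $c$-level chain lands exactly at $s$. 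Without this (or some equivalent way to shortcut the descent), your reduction does not run in polynomial time.
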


\subsection{From \texorpdfstring{$\SVP$}{SVP} to discrete Gaussian sampling (two ways)}

\cref{thm:SIS} follows immediately from \cref{thm:DGStoSIS} together with the following two results. These results are themselves relatively straightforward corollaries of the main results in \cref{sec:coNP,sec:coMA}. In particular, we observe that Arthur could generate the witness himself in the protocols from \cref{sec:coNP,sec:coMA} if he could generate discrete Gaussian samples from the dual lattice. (This idea originally appeared in~\cite{MR04}.)

\begin{theorem}
    \label{thm:SVP_to_DGS_NP}
    For any approximation factors, $\gamma' = \gamma'(n) \geq 1$ and $\gamma = \gamma(n) \geq 1$ with $\Omega(\sqrt{\log n}) \leq \gamma/\gamma' \leq O(\sqrt{n})$, there is a reduction from $\gamma$-$\gapSVP$ on a lattice with rank $n$ to $\sDGS_{\gamma', 2^{-n}, \delta}$ on a lattice with rank $n$ that runs in time $n^{O(n (\gamma'/\gamma)^2)}$, for $\delta \leq n ^{-O(n (\gamma'/\gamma)^2)}$.
\end{theorem}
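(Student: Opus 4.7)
The plan is to delegate Merlin's role in the co-non-deterministic verifier of \cref{thm:GapCVP'_verifier} to the $\sDGS$ oracle, following the paper's hint that ``Arthur could generate the witness himself if he could generate discrete Gaussian samples from the dual lattice.'' I would first apply \cref{thm:GMSS} to reduce the input $\gamma$-$\gapSVP$ instance to $n$ instances $(\lat_i, \vec{t}_i)$ of $\gamma$-$\gapCVP'$ normalized so that the distance threshold is $\sqrt{n}$. I would then further rescale each instance by $\gamma'$, setting $\lat'_i := \gamma' \lat_i$ and $\vec{t}'_i := \gamma' \vec{t}_i$. This second rescaling is the key step: in the NO case of the original $\gapSVP$ each $\lambda_1(\lat'_i) > \gamma' \sqrt{n}$, so by \cref{cor:banaszczyk_tail} we have $\eta_{2^{-n}}((\lat'_i)^*) \leq \sqrt{n}/\lambda_1(\lat'_i) < 1/\gamma'$, and $\sDGS_{\gamma', 2^{-n}, \delta}$ can legitimately be invoked on the dual lattice $(\lat'_i)^*$ with parameter $s = 1$ to produce samples within total variation distance $\delta$ of $D_{(\lat'_i)^*}$.

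I would then set $k := \lceil 16 (\gamma'/\gamma)^2 n \rceil$. The hypothesis $\Omega(\sqrt{\log n}) \leq \gamma/\gamma' \leq O(\sqrt{n})$ is precisely what is needed to guarantee $100 \leq k \leq n/(10 \log n)$, the admissible range for \cref{thm:GapCVP'_verifier}. With this $k$, the verifier's CLOSE threshold $\sqrt{k}/4 = \gamma' \sqrt{n}/\gamma$ matches the rescaled YES bound (at least one $\dist(\vec{t}'_i, \lat'_i) \leq \gamma' \sqrt{n}/\gamma$, by \cref{thm:GMSS}), while the FAR preconditions $\dist > \sqrt{n}$ and $\lambda_1 > \sqrt{n}$ are trivially met in the rescaled NO case because $\gamma' \geq 1$. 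For each of the $n$ instances I would draw $N = (20 k^2 n^2 \log n)^{2k+1}$ samples from the oracle, feed them to the verifier as the witness, and accept the original $\gapSVP$ instance iff at least one of the $n$ verifier runs returns CLOSE; correctness at this final step is exactly what \cref{thm:GMSS} provides.

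The total running time is dominated by the verifier's $(10 n)^{8k + O(1)} = n^{O(n (\gamma'/\gamma)^2)}$, which also bounds $N$ and the $nN$ oracle calls. The remaining technical care is handling the oracle's imperfect sampling: by a hybrid argument, the joint distribution of the $N$ witness samples is within total variation distance $N \delta$ of an honest witness, so choosing $\delta \leq n^{-c \cdot n (\gamma'/\gamma)^2}$ for a sufficiently large constant $c$ makes $N \delta$ negligible, and combined with the verifier's native $3/N$ soundness error from item 2 of \cref{thm:GapCVP'_verifier} and a union bound over the $n$ reduced instances, the overall failure probability in the NO case stays inverse-polynomial. The main obstacle I anticipate is not conceptual but bookkeeping: juggling the two rescalings, verifying that the constraint on $k$ lines up exactly with the hypothesized range of $\gamma/\gamma'$, and threading $\delta$ cleanly through to the final soundness bound. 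Beyond this, the reduction is just the composition of \cref{thm:GMSS}, \cref{thm:GapCVP'_verifier}, and \cref{cor:banaszczyk_tail}.
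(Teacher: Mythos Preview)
Your proposal is correct and follows essentially the same route as the paper: reduce to $\gamma$-$\gapCVP'$ via \cref{thm:GMSS}, rescale by $\gamma'$, invoke the $\sDGS$ oracle on the dual to manufacture the witness, and feed it to the verifier of \cref{thm:GapCVP'_verifier} (the paper equivalently calls the oracle on the unscaled dual $\lat^*$ with parameter $s=\gamma'$ and then divides the returned vectors by $\gamma'$, which is the same distribution). The only nitpick is that \cref{thm:GapCVP'_verifier} requires $k$ to be an \emph{odd} integer, so you should round to the nearest odd integer rather than take the ceiling; the paper uses the nearest odd integer to $20n(\gamma'/\gamma)^2$.
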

\begin{proof}
    Let $k$ be the nearest odd integer to $20n (\gamma'/\gamma)^2$. By \cref{thm:GMSS}, it suffices to show a reduction from $\gamma$-$\gapCVP'$. To that end, we use the procedure from \cref{thm:GapCVP'_verifier}. Specifically, on input a (basis for a) lattice $\lat \subset \Q^n$ and target $\vec{t} \in \Q^n$, our reduction first sets $N := (20k^2 n^2 \log n)^{2k+1}$ and uses its $\sDGS_{\gamma, \eps, \delta}$ oracle on the dual lattice $\lat^*$ to generate $N$ independent samples from the dual lattice $\vec{w}_1,\ldots, \vec{w}_N \in \lat^*$ with parameter $s := \gamma$.
    
    Let $\lat' := \gamma' \lat$, $\vec{t}' := \gamma' \vec{t}$, and $\vec{w}_i' := \vec{w}_i/\gamma'$, and note that $\vec{w}_i' \in (\lat')^*$. Finally, the reduction runs the verification procedure from \cref{thm:GapCVP'_verifier} on input $\lat'$, $\vec{t}'$, $k$, and $\vec{w}_1',\ldots, \vec{w}_N'$ and outputs YES if and only if the procedure outputs CLOSE.
    
    It is clear that the reduction runs in the claimed time. To prove correctness, we first notice that in the YES case in which $\dist(\vec{t},\lat) \leq \sqrt{n}/\gamma$, we have 
    \[\dist(\vec{t}',\lat') = \gamma' \dist(\vec{t},\lat) \leq \gamma' \sqrt{n}/\gamma \leq \sqrt{k}/4
    \; .
    \] It follows immediately from \cref{thm:GapCVP'_verifier} that the reduction always outputs YES in this case. 
    
    We now turn to the NO case when $\lambda_1(\lat) > \sqrt{n}$ and $\dist(\vec{t},\lat) > \sqrt{n}$. By \cref{cor:banaszczyk_tail}, we have that $\eta_{2^{-n}}(\lat^*) \leq \sqrt{n}/\lambda_1(\lat) < 1$. Therefore, since we took $s = \gamma'$, the input to the $\sDGS$ instance satisfies the promise that $s \geq \gamma' \eta_\eps(\lat^*)$ for $\eps = 2^{-n}$. It follows that the joint distribution of $\vec{w}_1,\ldots, \vec{w}_N \in \lat^*$ is within statistical distance $\delta N$ of $D_{\lat^*,s}^N$. Notice that this also implies that the $\vec{w}_i'$ are statistically close to independent samples from $D_{(\lat')^*}$. Furthermore, since $\gamma' \geq 1$, we clearly have $\lambda_1(\lat') = \gamma' \lambda_1(\lat) > \sqrt{n}$ and similarly $\dist(\vec{t}',\lat') = \gamma' \dist(\vec{t},\lat) > \sqrt{n}$. It follows from \cref{thm:GapCVP'_verifier} that the reduction outputs NO with probability at least $1-3/N - \delta N \gg 1/3$. This probability can then be boosted to, say, $1-2^{-n}$ by running the reduction $O(n)$ times.
\end{proof}

Our next result uses the same idea, but replaces the protocol from co-non-deterministic protocol from \cref{sec:coNP} with the $\mathsf{coMA}$ protocol from \cref{sec:coMA}. The same idea works because the witness for both protocols is the same. This achieves better approximation factors for running times larger than roughly $2^{n/\log n}$, and in particular in the important special case when the running time is $2^{\eps n}$.

\begin{theorem}
    \label{thm:SVP_to_DGS_MA}
    For any $\gamma' = \gamma'(n)$ and $\gamma = \gamma(n)$ with $1 \leq \gamma' \leq \gamma/20$, there is a reduction from $\gamma$-$\gapSVP$ on a lattice with rank $n$ to $\sDGS_{\gamma', 2^{-n},\delta}$ on a lattice with rank $n$ that runs in time $2^{24  n\gamma' /\gamma } \cdot \poly(n)$, and $\delta := 2^{-20n\gamma' /\gamma}$.
\end{theorem}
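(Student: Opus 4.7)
The plan is to follow the proof of Theorem~\ref{thm:SVP_to_DGS_NP} almost verbatim, swapping in the randomized verifier from Theorem~\ref{thm:GapCVP_randomized_verifier} in place of the co-non-deterministic verifier from Theorem~\ref{thm:GapCVP'_verifier}. First I would invoke Theorem~\ref{thm:GMSS} to reduce $\gamma$-$\gapSVP$ to $\gamma$-$\gapCVP'$, since we will need the additional NO-case promise $\lambda_1(\lat) > \sqrt{n}$ in order for the $\sDGS$ promise on $\lat^*$ to hold.

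Given a $\gamma$-$\gapCVP'$ instance $(\lat, \vec{t})$, I would set $\alpha := \beta := \sqrt{2\gamma'/\gamma}$ (so $\alpha < 1/3$ by the hypothesis $\gamma' \leq \gamma/20$) and rescale by a factor $c := \max(1+\alpha,\gamma')$, forming $\lat' := c\lat$ and $\vec{t}' := c\vec{t}$. This scaling is chosen so that the CVP$'$ YES threshold $\sqrt{n}/\gamma$ and NO threshold $\sqrt{n}$ map to $\leq \alpha\beta\sqrt{n}$ and $>(1+\alpha)\sqrt{n}$ respectively, which are exactly the CLOSE/FAR thresholds of Theorem~\ref{thm:GapCVP_randomized_verifier}. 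The reduction then calls its $\sDGS_{\gamma', 2^{-n}, \delta}$ oracle on $\lat^*$ with parameter $s := c$ to obtain $N = 2^{10\alpha^2 n}$ samples $\vec{w}_i$, divides each by $c$ so that the resulting $\vec{w}_i'$ lie in $(\lat')^* = \lat^*/c$ and are distributed (up to statistical distance $\delta$) as $D_{(\lat')^*}$, and feeds the result to the randomized verifier. It outputs YES if and only if the verifier outputs CLOSE, boosting the success probability by repetition if needed.

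Correctness in the YES case is automatic from item~1 of Theorem~\ref{thm:GapCVP_randomized_verifier}, which holds for every $W \in (\lat^*)^N$. In the NO case I need two observations: that the $\sDGS$ promise $s \geq \gamma' \eta_{2^{-n}}(\lat^*)$ holds (which follows from $\lambda_1(\lat) > \sqrt{n}$ via \cref{cor:banaszczyk_tail}, giving $\eta_{2^{-n}}(\lat^*) \leq 1$, combined with $c \geq \gamma'$), and that the total statistical-distance slack $N\delta$ is small enough to apply item~2 of Theorem~\ref{thm:GapCVP_randomized_verifier}, which is arranged by the choice $\delta = 2^{-20 n\gamma'/\gamma}$. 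The running time is the $N$ oracle calls plus the verifier's $2^{n(10\alpha^2 + 2\beta^2)} \cdot \poly(n) = 2^{24n\gamma'/\gamma} \cdot \poly(n)$. The main (minor) obstacle is simultaneously satisfying the three constraints on $c$ -- $c \leq \alpha\beta\gamma$ (CLOSE threshold), $c \geq 1+\alpha$ (FAR threshold), and $c \geq \gamma'$ ($\sDGS$ promise) -- while hitting the target running time, but the choice $c = \max(1+\alpha,\gamma')$ with $\alpha^2 \asymp \gamma'/\gamma$ threads this needle across the entire range $1 \leq \gamma' \leq \gamma/20$.
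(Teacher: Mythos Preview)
Your proposal is correct and follows the same approach as the paper's proof. In fact your scaling constant $c = \max(1+\alpha,\gamma')$ is more careful than the paper's choice $c = \gamma'$: the paper only checks $\dist(\vec{t}',\lat') > \sqrt{n}$ in the NO case, whereas item~\ref{item:correctness_MA} of Theorem~\ref{thm:GapCVP_randomized_verifier} literally requires $\dist(\vec{t}',\lat') > (1+\alpha)\sqrt{n}$, a condition your scaling guarantees across the full range $\gamma' \geq 1$.
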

\begin{proof}
    The reduction is quite similar to the reduction in~\cref{thm:SVP_to_DGS_NP}.  By \cref{thm:GMSS}, it suffices to show a reduction from $\gamma$-$\gapCVP'$. To that end, we use the procedure from \cref{thm:GapCVP_randomized_verifier}. Let $\alpha := \beta := \sqrt{2 \gamma'/\gamma}$. Then, on input a (basis for a) lattice $\lat \subset \Q^n$ and target $\vec{t} \in \Q^n$, our reduction first sets $N := 2^{10 \alpha^2 n}$ and uses its $\sDGS_{\gamma', \eps, \delta}$ oracle on the dual lattice $\lat^*$ to generate $N$ independent samples from the dual lattice $\vec{w}_1,\ldots, \vec{w}_N \in \lat^*$ with parameter $s := \gamma'$.
    
    As in the proof of \cref{thm:SVP_to_DGS_NP}, let $\lat' := \gamma' \lat$, $\vec{t}' := \gamma' \vec{t}$, and $\vec{w}_i' := \vec{w}_i/\gamma'$, and note that $\vec{w}_i' \in (\lat')^*$. Finally, the reduction runs the verification procedure from \cref{thm:GapCVP'_verifier} on input $\lat'$, $\vec{t}'$, and $\vec{w}_1',\ldots, \vec{w}_N'$ and outputs YES if and only if the procedure outputs CLOSE.
    
    The running time of the reduction is clearly $2^{12 \alpha^2 n } \cdot \poly(n) = 2^{24 n \gamma'/\gamma} \cdot \poly(n)$ by \cref{thm:GapCVP_randomized_verifier}.
    Similarly to the proof in \cref{thm:SVP_to_DGS_NP}, in the YES case we have 
    \[
        \dist(\vec{t}',\lat') = \gamma' \dist(\vec{t},\lat) \leq \gamma' \sqrt{n}/\gamma = \alpha \beta \sqrt{n}
        \; ,
    \]
    and our algorithm therefore outputs YES correctly with high probability, regardless of the distribution of the $\vec{w}_i$.
    
    On the other hand, in the NO case, $\lambda_1(\lat) > \sqrt{n}$ and $\dist(\vec{t},\lat) > \sqrt{n}$. By \cref{cor:banaszczyk_tail}, we have that $\eta_{2^{-n}}(\lat^*) \leq \sqrt{n}/\lambda_1(\lat) < 1$. Therefore, since we took $s = \gamma'$, the input to the $\sDGS$ instance satisfies the promise that $s \geq \gamma' \eta_\eps(\lat^*)$ for $\eps = 2^{-n}$. It follows that the joint distribution of $\vec{w}_1',\ldots, \vec{w}_N' \in (\lat')^*$ is within statistical distance $\delta N$ of $D_{(\lat')^*}^N$. Furthermore, since $\gamma' \geq 1$, we clearly have $\lambda_1(\lat') = \gamma' \lambda_1(\lat) > \sqrt{n}$ and similarly $\dist(\vec{t}',\lat') = \gamma' \dist(\vec{t},\lat) > \sqrt{n}$. It follows from \cref{thm:GapCVP'_verifier} that the reduction outputs NO with probability at least $1-3/N - \delta N \gg 9/10$. This probability can then be boosted to, say, $1-2^{-n}$ by running the reduction $O(n)$ times.
\end{proof}

Finally, we provide the proof of \cref{thm:SIS}, which simply works by combining the above with \cref{thm:DGStoSIS}.

\subsection{Finishing the proof}

\begin{proof}[Proof of \cref{thm:SIS}]
    By \cref{thm:DGStoSIS}, there is an efficient reduction from $\sDGS_{\gamma', 2^{-n}, \delta}$ on a lattice with rank $n$ to $\SIS_{n,m,q}$ for $\gamma' = 2\sqrt{m}$ and $\delta \leq \poly(m) 2^{-n}$. And, by \cref{thm:SVP_to_DGS_NP,thm:SVP_to_DGS_MA}, we can reduce $\gamma$-$\gapSVP$ to $\sDGS_{\gamma', 2^{-n},\delta}$ in time 
    \[
        \min\{ 2^{O(n \gamma'/\gamma)}, n^{O(n (\gamma'/\gamma)^2)} \}  = \min\{ 2^{O(\sqrt{m} n/\gamma)}, n^{O(n m /\gamma^2)}\} 
        \; .
    \]
    Combining the reductions gives the main result. The result about collision-resistant hash functions follows by recalling that collision-resistant hash functions exist if $\SIS_{n,m,q}$ is hard for $m \geq n \log_2 q + 1$. So, we can set, e.g., $q := n^2$ and $m := \ceil{4 n \log_2 n}$.
\end{proof}\newcommand{\YES}{\mathsf{YES}}
\newcommand{\NO}{\mathsf{NO}}
\newcommand{\MAYBE}{\mathsf{MAYBE}}
\newcommand{\cA}{\mathcal{A}}

\section{Limitations to fine-grained hardness of \texorpdfstring{$\gapCVP$}{gapCVP} and \texorpdfstring{$\gapSVP$}{gapSVP}}
\label{sec:limitations}

In this section, we show that our various protocols do in fact imply barriers against proving hardness of $\gapCVP$ and $\gapSVP$. E.g., we show that if $\gamma$-$\gapCVP$ has a $2^{\eps n}$-time (co-non-deterministic, $\mathsf{coMA}$ or $\mathsf{coAM}$) protocol for some $\gamma$, then a suitably fine-grained reduction from $k$-SAT to $\gamma$-$\gapCVP$ implies a $2^{\eps n}$-time protocol for $k$-SAT. This might sound obvious, since one might simply think that we can run the protocol for $\gamma$-$\gapCVP$ on the instance(s) generated by the reduction, but it is not immediate because (1) $\gamma$-$\gapCVP$ and $\gamma$-$\gapSVP$ are promise problems; and (2) we need to account for Turing reductions, i.e., reductions that make (possibly a large number of) oracle queries. In particular, the reduction might produce many instances of $\gamma$-$\gapCVP$, some of which are $\YES$ instances, some $\NO$ instances, and some $\MAYBE$ instances (i.e., neither $\YES$ nor $\NO$), and it is not immediately clear what to do with the $\YES$ and $\MAYBE$ instances. Furthermore, the instances might be generated adaptively, \emph{and} additional subtleties could arise if the reduction could be caused to fail if the oracle's responses to $\MAYBE$ instances can depend on the content of previous oracle queries.

Aharonov and Regev encountered the same problem in~\cite{aharonovLatticeProblemsNP2005}, but showed in~\cite[Lemma B.1]{aharonovLatticeProblemsNP2005} that if a promise problem $\Pi=(\Pi_{\YES}, \Pi_{\NO})$ is in $\coNP$, and the problem $\Pi'=(\Pi_{\YES} \cup \Pi_{\MAYBE}, \Pi_{\NO})$ is in $\NP$ (where $\Pi_{\MAYBE}$ is the set of all instances that are neither in $\Pi_{\YES}$ nor in $\Pi_{\NO}$), then $\Pi$ cannot be $\NP$-hard (even) under Cook reductions unless the polynomial hierarchy collapses. In this section, we use the same idea to generalize this result to show the impossibility of fine-grained hardness of $\gapCVP$ and $\gapSVP$ under plausible conjectures. In particular, we account for Turing reductions that may run in super-polynomial time (recall that a Cook reduction is a polynomial-time Turing reduction). 

Here, we work with deterministic reductions (i.e., not randomized reductions). Clearly, if the reduction uses randomness then we will not, e.g., be able to use it to get a {co-non-deterministic} protocol, since these by definition cannot be randomized. A similar issue arises with $\mathsf{MA}$ protocols. However, we could work with randomized reductions for both $\mathsf{coAM}$ and $\mathsf{coIP}$ protocols, by having Arthur send the random coins used for the reduction in his first message.

In the following, we suppress factors that are polynomial in the input size, as is common in the literature.

\begin{lemma}\label{lem:arnp}
Let $g: \mathbb{N}\times\mathbb{N} \to \mathbb{N}$ and $f: \mathbb{N} \times \mathbb{N}\to \mathbb{N}$ and $\gamma = \gamma(n) \geq 1$. If there is a co-non-deterministic protocol for $\gamma$-$\gapCVP$ running in time $T(n)$ on lattices with rank $n$ \emph{and} a $g(n',m')$-time Turing reduction from $k$-$\mathrm{SAT}$ on $n'$ variables and $m'$ clauses to $\gamma$-$\gapCVP$ on lattices with rank at most $n=f(n',m')$, then $k$-$\mathrm{SAT}$ is in $\coNTIME[g(n',m')\cdot T(f(n',m'))]$.

The same result holds when $\gapCVP$ is replaced by $\gapSVP$.
\end{lemma}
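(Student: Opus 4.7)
The difficulty is that the Turing reduction $M$ may issue adaptively chosen queries to $\gamma$-$\gapCVP$, and some of those queries may land in the $\MAYBE$ region $d < \dist(\vec t, \lat) \leq \gamma d$ on which the oracle is free to answer either $\YES$ or $\NO$. The plan is to simulate $M$ nondeterministically, guessing each oracle response and certifying it on the fly, following the pattern of Aharonov--Regev's Lemma~B.1~\cite{aharonovLatticeProblemsNP2005} adapted to the time-bounded setting.

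The key observation is that $\gamma$-$\gapCVP$ sits in \emph{both} directions of nondeterminism: the given $\coNTIME[T(n)]$ protocol produces a witness for $\Pi_{\NO}$, and there is a trivial $\NTIME[\poly(n)]$ witness for ``not $\NO$'', namely any lattice vector $\vec v \in \lat$ with $\|\vec v - \vec t\| \leq \gamma d$, which certifies $\dist(\vec t, \lat) \leq \gamma d$ and hence that the instance lies in $\Pi_{\YES} \cup \Pi_{\MAYBE}$. The analogous ``not $\NO$'' witness for $\gamma$-$\gapSVP$ is a nonzero lattice vector $\vec v$ with $\|\vec v\| \leq \gamma d$. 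Crucially, for each query $x_i$ the pairs $(a_i, \pi_i)$ that pass verification correspond exactly to answers some valid oracle for the promise problem could give on $x_i$: a $\YES$ certificate exists iff $x_i \in \Pi_{\YES} \cup \Pi_{\MAYBE}$, and a passing $\NO$ certificate rules out $x_i \in \Pi_{\YES}$ by soundness of the $\coNTIME$ protocol.

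To show $k$-$\mathrm{SAT} \in \coNTIME[g(n',m') \cdot T(f(n',m'))]$ we build an $\NTIME$-algorithm for $k$-UNSAT. On input $\phi$, the nondeterministic witness $w$ is a sequence $(a_i, \pi_i)_i$ — one pair per oracle query that $M$ will make — consisting of a guessed oracle response $a_i \in \{\YES, \NO\}$ and a certificate $\pi_i$ of the appropriate type. The deterministic verifier simulates $M^\phi$ step by step: at the $i$-th oracle query $x_i$ it uses $a_i$ as the response (adaptively determining $x_{i+1}$, etc.) and it verifies $\pi_i$ either in polynomial time (for $a_i=\YES$, by checking that $\pi_i \in \lat_i$ and $\|\pi_i - \vec t_i\| \leq \gamma d_i$) or in time $T(f(n',m'))$ by invoking the $\coNTIME$ verifier on $x_i$ with witness $\pi_i$ (for $a_i=\NO$). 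If any certificate fails, the verifier rejects; otherwise it accepts iff $M$ halts with output claiming $\phi \in {}$UNSAT.

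The correctness argument is then a dichotomy. If $\phi$ is unsatisfiable, let $O$ be the valid oracle that answers $\YES$ on every query in $\Pi_{\YES} \cup \Pi_{\MAYBE}$ and $\NO$ on every query in $\Pi_{\NO}$; the required certificate exists in either case, so some witness $w$ passes every check, and since $O$ is a valid oracle $M^O$ outputs the correct (UNSAT) answer, causing the verifier to accept. If $\phi$ is satisfiable, then for any $w$, either some check fails (and the verifier rejects directly) or every check passes (in which case the simulated answers are consistent with some valid oracle by the previous paragraph, so $M$ outputs the correct SAT answer and the verifier rejects). The running time is $g(n',m')$ steps of simulation plus at most $g(n',m')$ certificate checks each costing $T(f(n',m'))$, matching the claimed bound, and the $\gapSVP$ case is identical using its own trivial $\NP$ witness. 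The main obstacle is exactly this point about $\Pi_{\MAYBE}$: a naive simulation by calling only the $\coNTIME$ verifier on each query fails to certify $\YES$ responses, and the whole argument hinges on the (easy but crucial) existence of the complementary $\NP$ witness that makes the Aharonov--Regev-style reduction go through in the fine-grained regime.
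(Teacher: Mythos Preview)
Your proposal is correct and follows essentially the same approach as the paper: simulate the Turing reduction while certifying each oracle answer using the given $\coNTIME$ verifier for $\NO$ responses and the trivial $\NP$ witness (a nearby lattice vector) for $\YES$ responses, exactly as in Aharonov--Regev's Lemma~B.1 adapted to the fine-grained setting. The completeness and soundness arguments, the treatment of $\MAYBE$ instances, and the running-time accounting all match the paper's proof.
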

\begin{proof}
We give the proof for $\CVP$ only. The proof for $\SVP$ is nearly identical.

The reduction, say $\cA$, makes at most $g(n',m')$ calls to an oracle for $\gamma(n)$-$\gapCVP$. %
The reduction is guaranteed to output the correct answer provided that the oracle gives the correct answer on all $\YES$ or $\NO$ instances, regardless of the output of the oracle on any $\MAYBE$ instances.

Let $V_1$ be the given $\coNTIME[T(n)]$ verifier for $\gamma$-$\gapCVP$, i.e., for every $\NO$ instance there exists a valid witness causing $V_1$ to output $\YES$ and for every $\YES$ instance, there is no witness that will cause $V_1$ to output $\YES$. Let $V_2$ be the natural $\mathsf{coNP}$ verifier for $\CVP$ that given a $\gamma$-$\gapCVP$ instance $(\vec{B}, \vec{t}, d)$  and a witness $\vec{v}$, accepts if $\vec{v} \in \lat(\vec{B})$ and $\|\vec{v} - \vec{t}\| \le \gamma(n) \cdot d$, i.e., there exists a witness if and only if the given instance is a $\YES$ instance \emph{or} a $\MAYBE$ instance of $\gamma$-$\gapCVP$. 

We will show that there is a verifier $V$ such that for any instance $\phi$ of $k$-SAT that is a $\NO$ instance, there is a witness such that the verifier accepts, and for any $\YES$ instance and for any witness, the verifier rejects. 

For a given input $k$-SAT instance $\phi$, let $x_1,\ldots, x_\ell$ be the sequence of queries to the $\gamma$-$\gapCVP$ oracle made by $\cA$ if (1) on all $\YES$ or $\NO$ instances the oracle answers correctly; and (2) on all $\MAYBE$ instances, the oracle answers $\YES$. (Notice that the sequence of queries made by the reduction might in general depend on the oracle's responses. The sequence $x_i$ is well defined because we have fixed the oracle's output on all instances---including $\MAYBE$ instances.)

An honest witness consists of $\ell$ pairs, one for each $x_i$. If $x_i$ is a $\NO$ instance, then we have the pair (``NO", $w_i$) where $w_i$ is a valid witness for $x_i$ for which $V_1$ accepts. If $x_i$ is a $\YES$, or $\MAYBE$ instance, then we have the pair (``YES", $w_i$), where $w_i$ is a valid witness for $x_i$ for which $V_2$ accepts. 

Finally, the verifier $V$ does the following. It simulates the reduction $\cA$. Whenever the reduction makes a call to $x_i$, if the $i$-th witness pair is (``NO", $w_i$), then the verifier checks if $V_1$ accepts on input $x_i, w_i$, and if it does, then the verifier responds the oracle query with $\NO$. If the $i$-th witness pair is (``YES", $w_i$), then the verifier checks if $V_2$ accepts on input $x_i, w_i$, and if so, then the verifier responds the oracle query with a $\YES$. Finally, the verifier $V$ accepts if all calls to $V_1$ and $V_2$ accept \emph{and} the simulated reduction $\cA$ outputs NO. Otherwise, the verifier $V$ rejects. 

Notice that the protocol clearly has the desired complexity. Completeness of the above protocol follows from the fact that if $\phi$ is a $\NO$ instance, then for each oracle call $x_i$, whether to a $\YES$ instance, a $\NO$ instance, or a $\MAYBE$ instance, there exists a valid witness $w_i$ that will cause $V$ to treat the oracle call as a $\YES$ when it is a $\YES$, a $\NO$ when it is a $\NO$, and a $\YES$ when it is a $\MAYBE$. Since the reduction is guaranteed to output the correct answer in such cases, $\cA$ will always reject in this case, causing $V$ to accept as needed.

For soundness, we see that if $\phi$ is a $\YES$ instance, then for $V$ to accept, we must have a valid witness corresponding to each oracle call $x_i$. This implies that $\cA$ receives a correct response for each oracle query, and hence must accept since $\cA$ is a correct reduction and $\phi$ is a $\YES$ instance. Thus, $V$ must reject. 
\end{proof}

Now we use \cref{thm:coNP} to obtain a barrier against proving $2^{\eps n}$-time hardness of $\gamma$-$\gapSVP$ for $\gamma \gtrsim \sqrt{\log n}$.
\begin{corollary}
    \label{cor:NETH}
    Unless NETH is false, for every constant $C_1\geq1$ there exist constants $C_2 \geq 1$ and $\eps > 0$ such that for any $\gamma \geq C_2 \sqrt{\log n}$ there is no $2^{\eps( n+m)}$-time Turing reduction from $3$-$\mathrm{SAT}$ on $n$ variables and $m$ clauses to $\gamma$-$\gapSVP$ on lattices with rank at most $C_1 (n+m)$.
\end{corollary}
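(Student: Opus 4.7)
My plan is to prove the contrapositive: assuming for contradiction that for every $C_2, \eps > 0$ such a reduction exists, I will build a co-non-deterministic algorithm for $3$-SAT whose running time contradicts NETH. The three ingredients are the $\coNTIME$ protocol for $\gapSVP$ from \cref{thm:coNP}, the Turing-reduction-to-non-deterministic-protocol transformation of \cref{lem:arnp}, and the sparsification lemma \cref{thm:sparsification}.

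Let $\eps_N > 0$ be the constant from NETH and fix $C_1 \geq 1$. Given an arbitrary $3$-SAT instance $\phi$ on $n$ variables, I first apply \cref{thm:sparsification} with parameter $\delta_0 := \eps_N/4$ to produce in time $2^{\delta_0 n}$ a list of $r \leq 2^{\delta_0 n}$ 3-SAT formulas $\psi_1,\ldots,\psi_r$, each on $n$ variables with at most $c \cdot n$ clauses for a constant $c = c(\delta_0)$, such that $\phi$ is satisfiable iff some $\psi_i$ is. Equivalently, $\phi$ is unsatisfiable iff every $\psi_i$ is, so a co-non-deterministic algorithm for $3$-SAT on $\phi$ can be built by running a co-non-deterministic algorithm for $3$-SAT on each $\psi_i$ and accepting iff all subruns accept (the overall witness is just the concatenation of the subwitnesses).

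For each $\psi_i$, I apply the hypothesized $2^{\eps(n+m)}$-time Turing reduction, producing queries to $\gamma$-$\gapSVP$ on lattices of rank $n_L \leq C_1(n+m) \leq C_1(1+c) n$, and then invoke \cref{lem:arnp} with the $\coNTIME$ protocol of \cref{thm:coNP}: for any constant $C_2 > 4\sqrt{10}$, that protocol runs in time $e^{128 n_L/C_2^2 + o(n_L)} \leq 2^{O(C_1(1+c)/C_2^2)\cdot n + o(n)}$. Combining the reduction cost, the protocol cost, and the sparsification overhead, the total co-non-deterministic time for $3$-SAT on $\phi$ is
\[
2^{\delta_0 n} \cdot 2^{\eps(1+c) n} \cdot 2^{O(C_1(1+c)/C_2^2)\cdot n + o(n)} \;=\; 2^{\bigl(\delta_0 + \eps(1+c) + O(C_1(1+c)/C_2^2) + o(1)\bigr) n}.
\]
I then choose $C_2$ large enough, as a function of $C_1$, $c$, and $\eps_N$, so that $O(C_1(1+c)/C_2^2) \leq \eps_N/4$, and then $\eps$ small enough so that $\eps(1+c) \leq \eps_N/4$. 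Since $c$ depends only on $\eps_N$, this produces constants $C_2$ and $\eps$ depending only on $C_1$ and $\eps_N$, and the total exponent is at most $3 \eps_N/4 + o(1) < \eps_N$ for large $n$, contradicting NETH.

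The main subtlety is already handled by \cref{lem:arnp}: converting a deterministic Turing reduction, which may produce adaptively chosen $\MAYBE$-instance queries to $\gamma$-$\gapSVP$, into a genuine co-non-deterministic protocol by asking the prover to commit in advance to $\YES$/$\NO$ responses for every query and supply the appropriate $\NP$ or $\coNP$ witness for each. With that lemma in hand, the corollary reduces to the bookkeeping above, balancing the sparsification blowup against the time-approximation tradeoff of \cref{thm:coNP}.
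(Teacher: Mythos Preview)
Your proposal is correct and follows essentially the same approach as the paper: sparsify to get linearly many clauses, apply the hypothetical Turing reduction, compose with the $\coNTIME$ protocol of \cref{thm:coNP} via \cref{lem:arnp}, and balance the constants to contradict NETH. The only differences are cosmetic (you split the exponent budget into quarters rather than thirds, and you state the argument as a contrapositive), and your handling of the concatenated witnesses across the sparsified subinstances is exactly the standard closure of $\coNTIME$ under conjunction that the paper leaves implicit.
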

\begin{proof}
Assuming NETH, there exists a constant $\eps_1>0$ such $3$-SAT on $n$ variables is not in $\coNTIME[2^{\eps_1 n}]$. Given an instance of $3$-SAT on $n$ variables, we use the sparsification lemma  (\cref{thm:sparsification}) to deterministically and efficiently reduce $3$-SAT on $n$ variables to $2^{\eps_1 n/3}$ instances of $3$-SAT on $n$ variables and $m=cn$ clauses for a constant $c=c(\eps_1)>0$. 

Setting $\eps=\eps_1/(3(c+1))>0$, by \cref{thm:coNP} there exists $C_2=O(\sqrt{C_1/\eps})$, such that $\gamma$-$\gapSVP$ for lattices of rank $C_1(n+m)$ is in $\coNTIME[2^{\eps (n+m)}]$ for $\gamma=C_2\sqrt{\log(n+m)}$. By \cref{lem:arnp}, a $2^{\eps (n+m)}$-time reduction from $3$-SAT on $n$ variables and $m$ clauses to $\gamma$-$\gapSVP$ on a lattice with rank $C_1 (n+m)$, implies that $3$-SAT on $n$ variables and $m$ clauses is in $\coNTIME[2^{2\eps (n+m)}]$.

Given the above deterministic reduction from $3$-SAT on $n$ variables to $2^{\eps_1 n/3}$ instances of $3$-SAT on $n$ variables and $cn$ clauses, and that $3$-SAT on $n$ variables and $cn$ clauses is in $\coNTIME[2^{2\eps (c+1)n)}]=\coNTIME[2^{2\eps_1/3}]$, we conclude that $3$-SAT on $n$ variables is in  $\coNTIME[2^{\eps_1 n}]$, which leads to a contradiction.
\end{proof}

A simple argument shows that this can be extended to $\coAMTIME$, $\coMATIME$, and $\coIPTIME$.

\begin{lemma}
\label{lem:aram}
Let $g: \mathbb{N} \times \mathbb{N} \to \mathbb{N}$ and $f: \mathbb{N} \times \mathbb{N} \to \mathbb{N}$, and let $\gamma = \gamma(n) \geq 1$. Suppose that $\gamma$-$\gapCVP$ can be solved in $\coAMTIME[T(n)]$ (respectively, $\coMATIME[T(n)]$, $\coIPTIME[T(n)]$), where $n$ is the rank of the lattice. If there is a $g(n',m')$-time Turing reduction from $k$-$\mathrm{SAT}$ on $n'$ variables and $m'$ clauses to $\gamma$-$\gapCVP$ on lattices with rank at most $n=f(n',m')$, then $k$-$\mathrm{SAT}$ is in $\coAMTIME[T(f(n',m')) \cdot g(n',m')]$ (respectively, $\coMATIME[T(f(n',m')) \cdot g(n',m')]$, $\coIPTIME[T(f(n',m')) \cdot g(n',m')]$). Furthermore, this transformation preserves the number of rounds in the protocol.

The same result holds when $\gapCVP$ is replaced by $\gapSVP$.
\end{lemma}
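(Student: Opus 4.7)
The plan is to mimic the proof of Lemma \ref{lem:arnp}, replacing the non-interactive coNP verifier $V_1$ with the given coAM (respectively coMA, coIP) verifier $V_{\mathrm{CVP}}$ for $\gamma$-$\gapCVP$, while keeping the NP-style verifier $V_2$ (on input $(\basis,\vec{t},d)$ and lattice vector $\vec{v}$, accepting iff $\vec{v}\in\lat(\basis)$ and $\|\vec{v}-\vec{t}\|\le\gamma d$) unchanged, since it certifies precisely the complement of a NO instance. As before, given an input $k$-$\mathrm{SAT}$ instance $\phi$, I will fix the canonical query sequence $x_1,\ldots,x_\ell$ that $\cA$ produces when YES and NO $\gamma$-$\gapCVP$ instances are answered correctly and MAYBE instances are answered YES; this pins down the $x_i$'s as a deterministic function of $\phi$ and of any claimed sequence of labels consistent with this convention.

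The protocol proceeds as follows. Merlin commits (as part of his first message) to labels $b_1,\ldots,b_\ell\in\{\text{YES},\text{NO}\}$, and supplies, in parallel with the remaining rounds, either (i) an amplified $V_{\mathrm{CVP}}$ transcript for each $i$ with $b_i=\text{NO}$, or (ii) a $V_2$ witness $\vec{v}_i$ for each $i$ with $b_i=\text{YES}$. Arthur recomputes the $x_i$'s himself by simulating $\cA$ with the claimed answers, verifies every subprotocol and every witness, and accepts iff all checks pass and the simulated $\cA$ outputs NO on $\phi$. Completeness (when $\phi$ is unsatisfiable) holds because $\cA$ then correctly outputs NO under the honest labeling, and each subprotocol succeeds with probability at least $1-1/(10\ell)$ after amplification. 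Soundness (when $\phi$ is satisfiable) follows from the fact that if the simulated $\cA$ outputs NO, then at least one $b_i$ must misrepresent a YES or NO $x_i$ (MAYBE mislabels cannot flip $\cA$'s output, since $\cA$ is a correct reduction); the corresponding subprotocol rejects with probability $\ge 1-1/(10\ell)$, or with probability $1$ when $b_i=\text{YES}$ is falsely claimed for a NO $x_i$, since no valid $V_2$ witness exists. A union bound over the $\ell$ positions then yields the required $2/3$ completeness/soundness gap.

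The main obstacle beyond the coNP case is preserving the round count under parallel composition while incurring only a multiplicative $T(f(n',m'))\cdot g(n',m')$ blowup in complexity. Fortunately, parallel composition of $r$-round coAM/coMA/coIP protocols remains $r$-round: in each round Arthur sends all $\ell$ coin blocks at once (enough for the $O(\log\ell)$-fold parallel repetition needed to drive each subprotocol's error below $1/(10\ell)$), and Merlin responds with all $\ell$ transcript segments at once. Since Arthur's coins are chosen obliviously of the $x_i$'s and Merlin can compute the $x_i$'s from the labels he himself commits to, no adaptive-ordering issues arise. The resulting total complexity is $T(f(n',m'))\cdot g(n',m')$ (up to the suppressed polynomial factor in the input size), and the number of rounds equals that of the input $\gamma$-$\gapCVP$ protocol, as required. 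The argument for $\gamma$-$\gapSVP$ is identical, invoking the corresponding protocol for $\gapSVP$ and using any short nonzero lattice vector as the $V_2$ witness certifying "not a NO instance".
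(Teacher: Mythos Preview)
Your proposal is correct and takes essentially the same approach as the paper: replace the deterministic $V_1$ from Lemma~\ref{lem:arnp} by the randomized (co)AM/MA/IP verifier, keep $V_2$ unchanged, amplify each subprotocol's error below $1/\Theta(\ell)$, and take a union bound over the at most $g(n',m')$ oracle calls. Your discussion of parallel composition and obliviousness of Arthur's coins to the $x_i$'s is exactly the mechanism needed to justify the round-preservation clause, which the paper's proof leaves implicit.
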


\begin{proof}
We only sketch where the proof is different. Now, the verifier $V_1$ is allowed randomness, and for each $\NO$ instance, there exists a witness such that the verifier $V_1$ accepts with probability $2/3$, and for a $\YES$ instance, no witness can make the verifier accept with probability more than $1/3$. Note here that the witness may or may not depend on the choice of randomness depending on whether we have that $\gapCVP$ is in $\coAMTIME[T(n)]$, $\coMATIME[T(n)]$, or $\coIPTIME[T(n)]$. The proof remains the same in either case. 

At the cost of an additional $\poly(n')$ factor in the running time, the failure probability can be amplified to $\frac{1}{3 g(n',m')}$. The result then follows by the same argument as the previous lemma, and a union bound. 
\end{proof}

Similarly to \cref{cor:NETH}, \cref{lem:aram} and \cref{thm:coMA} imply the following barrier against proving hardness of $\gamma$-$\gapCVP$ for a constant $\gamma$.
\begin{corollary}
    \label{cor:MAETH}
    Unless MAETH is false, for every constant $C_1\geq1$ there exist constants $C_2 > 1$ and $\eps > 0$ such that there is no $2^{\eps (n+m)}$-time Turing reduction from $3$-$\mathrm{SAT}$ on $n$ variables and $m$ clauses to $\gamma$-$\gapCVP$ on lattices with rank at most $C_1 (n+m)$ for any $\gamma \geq C_2$.
\end{corollary}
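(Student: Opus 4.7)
The plan is to mirror the proof of \cref{cor:NETH} almost verbatim, substituting \cref{thm:coMA} for \cref{thm:coNP} and \cref{lem:aram} (in its $\coMATIME$ instantiation) for \cref{lem:arnp}. The conceptual point is that \cref{thm:coMA} gives a $\coMATIME[2^{O(n/\gamma)}]$ protocol for $\gamma$-$\gapCVP$, so for a sufficiently large constant $\gamma$ we can make the verifier run in time $2^{\eps(n+m)}$ on lattices of rank at most $C_1(n+m)$; combining this with an alleged fast reduction from $3$-SAT and applying sparsification yields a coMA protocol fast enough to contradict MAETH.

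First, I would invoke MAETH to fix a constant $\eps_1 > 0$ such that $3$-TAUT on $n$ variables is not in $\coMATIME[2^{\eps_1 n}]$. Next, I would apply the sparsification lemma (\cref{thm:sparsification}) to deterministically reduce any $3$-SAT instance on $n$ variables to $2^{\eps_1 n/3}$ instances on $n$ variables and $m \leq cn$ clauses, where $c = c(\eps_1)$ is a constant. Set $\eps := \eps_1/(3(c+1)) > 0$; by \cref{thm:coMA}, $\gamma$-$\gapCVP$ on lattices of rank at most $C_1(n+m)$ lies in $\coMATIME[2^{\eps(n+m)}]$ provided $\gamma \geq C_2$ for a constant $C_2 = O(C_1/\eps)$.

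Now suppose for contradiction that there is a $2^{\eps(n+m)}$-time Turing reduction from $3$-SAT on $n$ variables and $m$ clauses to $\gamma$-$\gapCVP$ on lattices of rank at most $C_1(n+m)$, for some $\gamma \geq C_2$. By \cref{lem:aram}, composing this reduction with the $\coMATIME[2^{\eps(n+m)}]$ protocol for $\gamma$-$\gapCVP$ places $3$-SAT on $n$ variables and $m$ clauses in $\coMATIME[2^{2\eps(n+m)}]$. Applying this to the $2^{\eps_1 n/3}$ sparsified sub-instances (each with $m \leq cn$ clauses) and taking a union bound over their polynomially amplified soundness errors gives a coMA protocol for $3$-SAT on $n$ variables of total complexity $2^{\eps_1 n/3} \cdot 2^{2\eps(c+1)n} = 2^{\eps_1 n/3 + 2\eps_1/3 \cdot n} = 2^{\eps_1 n}$, contradicting our choice of $\eps_1$.

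The only non-routine step is the composition via \cref{lem:aram}: one has to verify that the $\coMA$ protocol for $\gamma$-$\gapCVP$ can be invoked inside the reduction despite the $\gapCVP$ oracle calls landing on $\mathsf{MAYBE}$ instances and despite the reduction being adaptive. But this is exactly what \cref{lem:aram} handles (following the Aharonov--Regev promise-problem trick), so no new technical work is needed beyond plugging in the protocol from \cref{thm:coMA}.
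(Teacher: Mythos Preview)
Your proposal is correct and is exactly the approach the paper intends: the paper does not spell out a proof of \cref{cor:MAETH} but simply says it follows ``similarly to \cref{cor:NETH}'' from \cref{lem:aram} and \cref{thm:coMA}, which is precisely the substitution you carry out. One small slip: from MAETH you should conclude that $3$-SAT is not in $\coMATIME[2^{\eps_1 n}]$ (equivalently, $3$-TAUT is not in $\mathsf{MATIME}[2^{\eps_1 n}]$), not that $3$-TAUT is not in $\coMATIME$; this is just a naming issue and does not affect the argument.
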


Finally, \cref{lem:aram}, together with \cref{cor:ip2protocol}, implies a barrier against proving hardness of $\gamma$-$\gapCVP_K$ for an \emph{explicit} constant $\gamma$.
\begin{corollary}
    \label{cor:IPSETH}
    Unless IPSETH is false, for every constant $\delta > 0$, there exist constants $\eps > 0$ and $k \geq 3$ such that for all (families of) centrally symmetric convex bodies $K$ there is no $2^{\eps n}$-time Turing reduction from $k$-SAT on $n$ variables to $\gamma$-$\gapCVP_K$ on lattices in $n$ dimensions for $\gamma = 2+\sqrt{2} + \delta$. Furthermore, in the case of the $\ell_2$ norm, $\gamma = \sqrt{2} +\delta$.
\end{corollary}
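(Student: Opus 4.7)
} The plan is to derive the corollary directly from the combination of \cref{cor:ip2protocol} (which gives a two-round private-coin $\IP$ protocol for the complement of $\gamma$-$\gapCVP_K$ of complexity $2^{(1-\eps_1)n/2}$) and \cref{lem:aram} (which lifts such a protocol through a Turing reduction while preserving the number of rounds), arguing by contradiction against IPSETH. This follows the same template used to derive \cref{cor:NETH} and \cref{cor:MAETH}, but the one-sided exponent $1/2$ in the protocol's complexity is exactly what matches the $1/2$ appearing in the definition of IPSETH, so no sparsification is needed.

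Concretely, fix $\delta > 0$ and let $K$ be an arbitrary centrally symmetric convex body in $\R^n$ (or take $K = \B_2^n$ for the $\ell_2$ case). First I invoke \cref{cor:ip2protocol} to obtain a constant $\eps_1 = \eps_1(\delta) > 0$ and a two-round $\IP$ protocol of complexity $2^{(1-\eps_1)n/2}$ for the complement of $\gamma$-$\gapCVP_K$ with $\gamma = 2 + \sqrt{2} + \delta$ (respectively $\gamma = \sqrt{2} + \delta$ in the $\ell_2$ case). In other words, $\gamma$-$\gapCVP_K$ is in $\coIPTIME[2^{(1-\eps_1)n/2}]$ via a two-round protocol. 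Set $\eps := \eps_1/3$ and suppose toward contradiction that for every $k \geq 3$ there exists a $2^{\eps n}$-time Turing reduction from $k$-SAT on $n$ variables to $\gamma$-$\gapCVP_K$ on lattices in $n$ dimensions.

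Applying \cref{lem:aram} (with $f(n',m') = n'$ and $g(n',m') = 2^{\eps n'}$, noting that the lemma is stated for $\gapCVP$ but its proof uses only the structure of the promise problem and so applies verbatim to $\gapCVP_K$), I conclude that $k$-SAT on $n$ variables lies in $\coIPTIME[2^{\eps n} \cdot 2^{(1-\eps_1)n/2}] = \coIPTIME[2^{(1/2 - \eps_1/6)n}]$ via a two-round protocol, because \cref{lem:aram} preserves the number of rounds. Equivalently, $k$-TAUT has a two-round $\IP$ protocol of complexity $2^{(1/2 - \eps_1/6)n}$. Taking $\eps' := \eps_1/6 > 0$ and choosing $k$ large enough that IPSETH rules out a two-round $\IP$ protocol for $k$-TAUT of complexity $2^{(1/2 - \eps')n}$ yields the desired contradiction. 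The same argument with the $\ell_2$-specific branch of \cref{cor:ip2protocol} gives the sharper constant $\gamma = \sqrt{2} + \delta$ in the $\ell_2$ case.

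I expect the only delicate point to be bookkeeping with \cref{lem:aram}: the lemma is written for $\gapCVP$, and I need to observe that its proof treats the oracle queries generically (handling $\YES$, $\NO$, and $\MAYBE$ instances via the dual witnesses from the protocol for the complement and from the trivial $\NP$ verifier for the ``$\YES$ or $\MAYBE$'' side), so it transfers without modification to $\gapCVP_K$. The rest of the argument is a straightforward composition-of-complexities calculation, and crucially no round reduction or private-to-public coin transformation is needed, since both IPSETH and \cref{cor:ip2protocol} are already phrased in terms of two-round $\IP$ protocols.
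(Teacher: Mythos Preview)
Your proposal is correct and follows essentially the same approach as the paper: invoke \cref{cor:ip2protocol} to get the two-round $\IP$ protocol with exponent $(1-\eps_1)/2$, compose with the hypothetical reduction via \cref{lem:aram}, and contradict IPSETH. The only cosmetic differences are that the paper takes $\eps = \eps_1/4$ rather than your $\eps_1/3$ (both choices work), and the paper applies \cref{lem:aram} to $\gapCVP_K$ without explicitly remarking, as you do, that the lemma's proof carries over because the natural $\NP$ verifier for the ``$\YES$ or $\MAYBE$'' side is available for any norm.
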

\begin{proof}
By \cref{cor:ip2protocol}, for every $\delta>0$, there exists $\eps_1>0$ such that $\gamma$-$\gapCVP_K$ is in $\coIPTIME[2^{(1-\eps_1)n/2}]$. Let $\eps=\eps_1/4>0$.
By \cref{lem:aram}, a $2^{\eps n}$-time reduction from $k$-SAT to $\gamma$-$\gapCVP_K$ implies that $k$-SAT is in $\coIPTIME[2^{n/2-\eps n}]$. Assuming IPSETH, there exists a constant $k\geq 3$ such that $k$-SAT is not in $\coIPTIME[2^{n/2-\eps n}]$, which concludes the proof of the corollary.
\end{proof}

\subsection{Comparison with existing fine-grained hardness results}
Here we note that \cref{cor:MAETH,cor:IPSETH} in particular present barriers to substantially improving the fine-grained hardness of approximation results for $\SVP_p$ and $\CVP_p$ (i.e., $\SVP$ and $\CVP$ in $\ell_p$ norms) shown in~\cite{bennettQuantitativeHardnessCVP2017,ASGapETHHardness2018,ABGSFinegrainedHardnessCVP2021,bennettImprovedHardnessBDD2022}.
Those papers rule out algorithms for approximating $\SVP_p$ and $\CVP_p$ with various values of $p$ to within small, non-explicit constant factors $\gamma = \gamma(p) > 1$ under variants of the Gap-ETH and Gap-SETH assumptions.
Somewhat more specifically, the results in~\cite{bennettQuantitativeHardnessCVP2017,ASGapETHHardness2018} rule out $2^{o(n)}$-time algorithms for $\gamma$-$\SVP_p$ and $\gamma$-$\CVP_p$ on lattices of rank $n$ for every $p \geq 1$ (including the important Euclidean case of $p = 2$) with some non-explicit constant $\gamma$ slightly greater than $1$ assuming (variants of) Gap-ETH. Moreover, the results in~\cite{bennettQuantitativeHardnessCVP2017,ABGSFinegrainedHardnessCVP2021} rule out $2^{(1-\eps)n}$-time algorithms for $\gamma$-$\CVP_p$ for any constant $\eps > 0$ and any $p \notin 2\Z$ under Gap-SETH, again for a constant $\gamma$ slightly greater than $1$. (See \cite{ASGapETHHardness2018,bennettImprovedHardnessBDD2022} for similar results for $\SVP_p$.)

We note that, by \cref{cor:MAETH}, assuming MAETH one cannot hope to strengthen the results of~\cite{bennettQuantitativeHardnessCVP2017,ASGapETHHardness2018} to rule out $2^{o(n)}$-time algorithms for $\gamma$-$\SVP_2$ or $\gamma$-$\CVP_2$ for arbitrarily large constant $\gamma \geq 1$ under (variants of) Gap-ETH.
Moreover, by \cref{cor:IPSETH}, assuming IPSETH one cannot hope to rule out $2^{(1-\eps)n}$-time algorithms for $\gamma$-$\SVP_K$ or $\gamma$-$\CVP_K$ for any norm $\norm{\cdot}_K$ and any constant $\eps > 0$ for constant $\gamma > 2 + \sqrt{2}$ under Gap-SETH.

\cref{cor:IPSETH} in particular implies that assuming IPSETH one cannot hope to strengthen the results of~\cite{ABGSFinegrainedHardnessCVP2021,bennettImprovedHardnessBDD2022} and rule out $2^{(1-\eps)n}$-time algorithms for $\gamma$-$\SVP_p$ or $\gamma$-$\CVP_p$ for such $\gamma$ under Gap-SETH. There \emph{is} a subtlety with this: the lower bounds shown for $\SVP_p$ and $\CVP_p$ in~\cite{ABGSFinegrainedHardnessCVP2021,bennettImprovedHardnessBDD2022} for general $\ell_p$ norms rule out algorithms whose running time depends on the \emph{rank}  $n$ of the underlying lattice, whereas the IP protocol in \cref{cor:ip2protocol} has running time depending on the \emph{ambient dimension} $m \geq n$ of the underlying lattice (and so the barrier result in \cref{cor:IPSETH} rules out algorithms whose running time depends on the ambient dimension of the underlying lattices). However, because \cref{cor:IPSETH} holds for general norms, we can assume essentially without loss of generality that our lattices $\lat$ have full rank (or nearly full rank) by working in the $K$-norm for $K := \B_p^m \cap \lspan(\lat)$ in the case of $\SVP$ and $K := \B_p^m \cap \lspan(\lat, \vec{t})$ in the case of $\CVP$, where $\B_p^m$ is the unit $\ell_p$ ball in $m$ dimensions. 
(We additionally note that~\cite{ABGSFinegrainedHardnessCVP2021,bennettImprovedHardnessBDD2022} do not show SETH-type, $2^{Cn}$-hardness with explicit $C > 0$ for $\SVP$ and $\CVP$ in the $\ell_2$ norm.)

\section*{Acknowledgements} 
ZB was supported by the Israel Science Foundation (Grant No.\ 3426/21), and by the European Union Horizon 2020 Research and Innovation Program via ERC Project REACT (Grant 756482).
NSD and SP were supported in part by the NSF under Grant No.~CCF-2122230.
VV was supported by DARPA under Agreement No. HR00112020023, NSF CNS-2154149, MIT-IBM Watson AI, Analog Devices, a
Microsoft Trustworthy AI grant and a Thornton Family Faculty Research Innovation Fellowship. Any opinions, findings and conclusions or recommendations expressed in this material are those of the author(s) and do not necessarily reflect the views of the United States Government or DARPA.

NSD thanks Chris Peikert for responding to oracle queries in less than unit time. DA, HB, AG, and NSD would like to thank the organizers of the Bertinoro \href{https://www.wisdom.weizmann.ac.il/\~robi/Bertinoro2019_FineGrained/index.html}{program on Fine Grained Approximation Algorithms and Complexity} at which some of this work was completed. DA, HB, RK, SP, NSD, and VV would like to thank the Simons Institute and the organizers of the \href{https://simons.berkeley.edu/programs/extended-reunion-lattices2022}{Lattices and Beyond Summer Cluster} at which some of this work was done. Finally, HB, SP, and NSD would like to thank the \href{https://www.quantumlah.org/}{National University of Singapore's Centre for Quantum Technologies} for graciously supporting their visit, at which, again, some of this work was done.

\newcommand{\etalchar}[1]{$^{#1}$}

\appendix

	   \section{A reduction from discrete Gaussian sampling to \texorpdfstring{$\SIS$}{SIS}}
	   \label{app:DGStoSIS}
	   
	   We now prove \cref{thm:DGStoSIS}.
	    To prove the theorem, we need the following lemma. The proof is a modification of the proof of \cite[Lemma 3.7]{micciancioHardnessSISLWE2013}.
	    
        \begin{lemma}
        \label{lemma:pigeonhole-version-3.6}
            Let $m = m(n)$ and $q = q(n) > 2$
            be positive integers with $m > n \log_2 q$. 
            Let $0 < \eps = \eps(n) < 1/m^{\omega(1)}$.
            There is an algorithm with access to an $\SIS_{n, m, q}$ oracle $\mathcal{O}_{\SIS}$, that on input (1) a basis $\basis$ of a lattice $\lat \subset \Q^n$, (2) a number $r > 0$ with $r^2 \in [m]$ such that 
            \[\Pr_{A \sim \Z_q^{m \times n}}[\vec{z} \gets \mathcal{O}_\SIS(A);\ \|\vec{z}\| = r,\ A\vec{z} = \vec{0} \bmod q] \geq 1/\poly(m)
            \]
            and (3) $\poly(m)$ samples $\vec{y} \sim D_{\lat, s}$ for some $s \geq \sqrt{2} q \cdot \eta_\eps(\lat)$,
            runs in time $\poly(m, \log q)$ 
            and outputs a vector
            $\vec{w} \in \lat$ 
            whose distribution is within statistical distance 
            $2 \eps$
            of $D_{\lat, sr / q}$.
        \end{lemma}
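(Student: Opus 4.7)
The plan is to follow the standard reduction from discrete Gaussian sampling to $\SIS$ developed in~\cite{MR04} and refined in~\cite{micciancioHardnessSISLWE2013}, modifying it only so as to exploit the promised non-negligible probability mass on length-$r$ outputs and thereby produce samples with a single prescribed parameter $sr/q$ rather than a mixture over parameters.

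First, I would turn the input discrete Gaussian samples $\vec{y}_1, \ldots, \vec{y}_m \sim D_{\lat, s}$ into a $\SIS$ instance. Fixing a basis of $\lat$, represent each $\vec{y}_i$ in coordinates and reduce modulo $q$ to form a column of a matrix $A \in \Z_q^{n \times m}$. Because $s \geq \sqrt{2} q \, \eta_\eps(\lat) \geq \eta_\eps(q\lat)$, the smoothing lemma implies that each coset $\vec{y}_i \bmod q\lat$ is within statistical distance $\eps$ of uniform on $\lat/q\lat \cong \Z_q^n$, so $A$ is within $m\eps$ of uniform on $\Z_q^{n \times m}$. I would then invoke the $\SIS$ oracle on $A$ and check whether the returned vector $\vec{z} \in \{-1,0,1\}^m$ satisfies $A \vec{z} \equiv \vec{0} \pmod{q}$ and $\|\vec{z}\| = r$; by the hypothesis, this happens with probability at least $1/\poly(m) - m\eps = 1/\poly(m)$, so repeating the whole experiment (with fresh samples) $\poly(m)$ times suffices to find such a $\vec{z}$ except with negligible probability. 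Once we have such a pair $(A, \vec{z})$, the algorithm outputs $\vec{w} := \frac{1}{q} \sum_{i=1}^m z_i \vec{y}_i$, which lies in $\lat$ because $A \vec{z} \equiv \vec{0} \pmod{q}$ means exactly that $\sum z_i \vec{y}_i \in q\lat$.

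For the distribution analysis, condition on the cosets $\vec{a}_i := \vec{y}_i \bmod q\lat$ (equivalently, on $A$) and on the oracle response $\vec{z}$. The conditional law of $\vec{y}_i$ is then exactly $D_{q\lat + \vec{a}_i, s}$, so $\vec{y}_i/q$ is distributed as $D_{\lat + \vec{a}_i/q,\, s/q}$. The hypothesis $s \geq \sqrt{2} q \,\eta_\eps(\lat)$ yields $s/q \geq \sqrt{2}\,\|\vec{z}\|_\infty \cdot \eta_\eps(\lat)$, so the convolution theorem (\cref{thm:convolution}) implies that $\vec{w} = \sum_i z_i (\vec{y}_i/q)$ is within statistical distance $\eps$ of $D_{\lat + \vec{b},\, s\|\vec{z}\|/q}$ for $\vec{b} := \frac{1}{q}\sum_i z_i \vec{a}_i$; but $\vec{b} \in \lat$ since $A\vec{z} \equiv \vec{0} \pmod q$, so the target reduces to $D_{\lat, sr/q}$ whenever $\|\vec{z}\| = r$. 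Crucially, this target distribution is independent of the conditioning data, so averaging over $A$ and $\vec{z}$ (subject to the success event $\|\vec{z}\| = r$) preserves closeness to $D_{\lat, sr/q}$. Combining the $\eps$ error from the convolution theorem with the total variation error from the non-uniformity of $A$ (which is absorbed using the hypothesis $\eps < m^{-\omega(1)}$) yields the claimed $2\eps$ bound.

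The main subtlety is verifying that conditioning on the event $\{\|\vec{z}\| = r\}$, which couples the oracle's behavior and the cosets $\vec{a}_i$ in a nontrivial way, does not distort the output distribution. The resolution is that the convolution theorem produces the \emph{same} target law $D_{\lat, sr/q}$ for every admissible pair $(A, \vec{z})$ with $\|\vec{z}\| = r$, so conditioning re-weights which pairs we see but never shifts the conditional distribution of $\vec{w}$ away from its common target. This is precisely the feature that lets us upgrade the mixture-over-parameters guarantee of~\cite{micciancioHardnessSISLWE2013} to an honest-to-goodness sample from $D_{\lat, sr/q}$.
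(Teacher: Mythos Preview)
Your proposal is correct and follows essentially the same approach as the paper: generate a near-uniform $\SIS$ instance from the cosets of the Gaussian samples, repeat until the oracle returns a solution of length exactly $r$, output $\vec{w}=\tfrac{1}{q}\sum_i z_i\vec{y}_i$, and invoke the convolution theorem (\cref{thm:convolution}) on the coset-conditional Gaussians to conclude closeness to $D_{\lat,sr/q}$. Your explicit treatment of why conditioning on the success event $\{\|\vec{z}\|=r\}$ does not distort the output (because the convolution theorem yields the \emph{same} target $D_{\lat,sr/q}$ for every admissible $(A,\vec{z})$) is a point the paper's proof leaves implicit.
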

        \begin{proof}
            The algorithm repeats the following steps until it succeeds, or until it has tried $\poly(m)$ times.
            First, it takes $m$ unused samples of its input, $\vec{y}_1,\ldots, \vec{y}_m$, from $D_{\lat, s}$ and writes each $\vec{y}_i$ as $\vec{y}_i = \basis \vec{a}_i \bmod q \lat$ for some $\vec{a}_i \in \Z_q^n$. 
            It sets $A := (\vec{a}_1,\ldots, \vec{a}_m)$.
            Since $s \geq \sqrt{2} q \cdot \eta_\eps(\lat)$, and 
            $\eps$ is negligible,
            the distribution of $\vec{a_i}$ is within negligible statistical distance of uniform over $\Z_q^n$, 
            and so the distribution of the full matrix $A$ is also within negligible statistical distance of the uniform over $\Z_q^{m \times n}$. 
            Finally, the algorithm calls $\mathcal{O}_\SIS$ on input $A$. By hypothesis, with probability at least $1/\poly(m)$, it receives as output
            a nonzero $\vec{z} \in \{-1, 0, 1\}^m$ with $A\vec{z} = \vec{0} \bmod q$ and $\|\vec{z}\| = r$. 
    
            Assuming it succeeds, it outputs $\vec{w}$, where $\vec{w} = \sum_i z_i \vec{y_i} / q$. 
            Notice that after conditioning on any fixed $\vec{a}_i$, we have that $\vec{y}_i \sim D_{q \lat + \basis \vec{a_i}, s}$ is distributed as a discrete Gaussian over a coset of $q \lat$.
            It is clear that $\sum_i z_i \vec{a_i}$ is divisible by $q$, so $\basis (\sum_i z_i \vec{a_i}) \in q \lat$. Thus, $\vec{w} \in \lat$, and
            by \cref{thm:convolution}, 
            the distribution of $\vec{w}$ has statistical distance at most $\eps$ from $D_{\lat, s'}$, where $s' := \sqrt{\sum_i (s z_i)^2}/q = (s / q) \| \vec{z} \| = sr / q$. 
            
            In the unlikely case that the reduction did not succeed in obtaining $\vec{z}$, which occurs only with probability $\exp(-\Theta(m)) < \eps$, it outputs $\vec{0}$. This costs it only an additive $\eps$ in statistical distance.
        \end{proof}
            \begin{corollary}\label{cor:reduce_s_sqrt_m}
            Let $m = m(n)$ and $q = q(n) > 2$ be positive integers with $m > n \log_2 q$ and $q \geq \sqrt{m}$,
            and let $c \geq 1$ be a constant integer. Let $0 < \eps = \eps(n) < 1 / m^{\omega(1)}$. 
            There is an algorithm with access to a $\SIS_{n, m, q}$ oracle $\mathcal{O}_{\SIS}$, that on input (1) any basis $\basis$ of a lattice $\lat \subset \Q^n$,  (2) a number $r > 0$ with $r^2 \in [m]$ such that 
            \[\Pr_{A \sim \Z_q^{m \times n}}[\vec{z} \gets \mathcal{O}_{\SIS}(A);\ \|\vec{z}\| = r,\ A\vec{z} = \vec{0} \bmod q] \geq 1 / \poly(m)
            \; ,
            \]
            and (3) $\poly(m)$ samples $\vec{y} \sim D_{\lat, s}$ for some $s > 0$ such that $s(r / q)^{c-1} > \sqrt{2} q \cdot \eta_\eps(\lat)$,
            outputs a vector
            $\vec{w} \in \lat$ 
            whose distribution is within statistical distance
            $\poly(m) \cdot \eps $ of $D_{\lat, s(r / q)^c}$ and runs in time $\poly(m, \log q)$.
            \end{corollary}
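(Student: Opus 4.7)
\smallskip

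My plan is to prove \cref{cor:reduce_s_sqrt_m} by iterating \cref{lemma:pigeonhole-version-3.6} exactly $c$ times, using the observation that one application of the lemma takes samples at parameter $s'$ (with $s' \geq \sqrt{2} q \cdot \eta_\eps(\lat)$) and produces a sample at the reduced parameter $s' \cdot r/q$. Since the hypothesis $q \geq \sqrt{m}$ and $r^2 \in [m]$ together force $r/q \leq 1$, iterating the lemma strictly decreases (or keeps constant) the Gaussian parameter at each step, which is exactly what we want in order to reach parameter $s (r/q)^c$ after $c$ rounds.

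Concretely, I would set $s_0 := s$ and $s_i := s (r/q)^i$ for $i = 1, \ldots, c$. The algorithm proceeds in $c$ stages. In stage $i \in \{1, \ldots, c\}$ it takes $p(m)$ samples from (an approximation of) $D_{\lat, s_{i-1}}$ and invokes \cref{lemma:pigeonhole-version-3.6} on them to produce a single sample from (an approximation of) $D_{\lat, s_i}$, where $p(m) = \poly(m)$ is the polynomial bound on sample usage implicit in the lemma. To produce $N_c$ samples at the top level we therefore need $N_c \cdot p(m)^c$ samples at level $0$, which is still $\poly(m)$ since $c$ is a constant. The single output sample $\vec{w}$ then corresponds to the terminal sample from $D_{\lat, s (r/q)^c}$. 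To apply \cref{lemma:pigeonhole-version-3.6} at stage $i$, we need $s_{i-1} \geq \sqrt{2} q \cdot \eta_\eps(\lat)$. Since $(r/q)^{i-1} \geq (r/q)^{c-1}$ whenever $r/q \leq 1$ and $i \leq c$, we have $s_{i-1} \geq s_{c-1} = s (r/q)^{c-1}$, and the corollary's hypothesis $s (r/q)^{c-1} > \sqrt{2} q \cdot \eta_\eps(\lat)$ handles all stages uniformly.

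The main thing to check is that the statistical distance does not blow up through the $c$ rounds of composition. I will bound it by a straightforward hybrid argument: if the $p(m)$ inputs to stage $i$ are jointly within statistical distance $\delta_{i-1}$ of $D_{\lat, s_{i-1}}^{p(m)}$, then by the data-processing inequality and the $2\eps$ error introduced by \cref{lemma:pigeonhole-version-3.6}, the single stage-$i$ output is within $\delta_{i-1} + 2\eps$ of $D_{\lat, s_i}$. Producing $N$ independent stage-$i$ outputs then gives joint distance at most $N(\delta_{i-1} + 2\eps)$ from $D_{\lat, s_i}^N$. Unrolling over $c$ stages with $N \leq p(m)^c = \poly(m)$ gives a total statistical distance of $\poly(m) \cdot \eps$, matching what the corollary claims. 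The running time is $c \cdot p(m)^c \cdot \poly(m, \log q) = \poly(m, \log q)$ since $c$ is a constant.

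I do not anticipate any real obstacle: the structural content is entirely in \cref{lemma:pigeonhole-version-3.6}, and the corollary simply packages $c$ invocations of it, so the only things to be careful about are (i) verifying the smoothing condition at each stage (which reduces to the hypothesis on $s (r/q)^{c-1}$, exploiting $r/q \leq 1$) and (ii) tracking the multiplicative accumulation of statistical error across stages, which is under control because $\eps$ is sub-polynomial and $c$ is constant. The fact that the hypothesis on the SIS oracle's success probability is stated in terms of a uniformly random $A$—and thus is preserved across stages since each stage draws a fresh, near-uniform matrix—is what makes the iterative scheme clean.
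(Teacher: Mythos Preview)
Your proposal is correct and follows exactly the same approach as the paper: iterate \cref{lemma:pigeonhole-version-3.6} for $c$ rounds, each reducing the parameter by a factor of $r/q$, and bound the accumulated statistical distance by $\poly(m)\cdot\eps$. In fact you supply considerably more detail than the paper's two-sentence proof---in particular the explicit verification that $r/q\le 1$ (so the smoothing hypothesis $s(r/q)^{c-1}>\sqrt{2}q\cdot\eta_\eps(\lat)$ covers every intermediate stage) and the sample-count/hybrid bookkeeping---none of which the paper spells out.
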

            \begin{proof}
              This follows from running \cref{lemma:pigeonhole-version-3.6} 
              in $c$ rounds; each round reduces the parameter of its input by a factor of $r/q$.
              The accumulated statistical distance is upper bounded by $\poly(m) \cdot \eps$.
            \end{proof}
        
		\begin{proof}[Proof of \cref{thm:DGStoSIS}]
	       Let $\lat:= \lat(\basis)$ and $s$ be the input of the $\sDGS$ instance. 
		   Without loss of generality, we may assume that 
            $\norm{\widetilde{\basis}} \leq 2^n \eta_\eps(\lat)$ by running the LLL algorithm on the input basis.
            
            The reduction starts by making calls to $\mathcal{O}_\SIS$ with independently and uniformly generated inputs $A$ until the oracle has succeeded $m^3$ times. (A success is when $\vec{z} \gets \mathcal{O}_\SIS(A)$ is non-zero with $\vec{z} \in \{-1, 0, 1\}^m$ and $A \vec{z} = \vec{0} \bmod q$.) The reduction sets $r$ as the most common value of $\| \vec{z} \|$ for the successful outputs $\vec{z}$ (breaking ties arbitrarily).
            By the Chernoff-Hoeffding bound (\cref{lemma:chernoff}), 
            \[\Pr_{A \sim \Z_q^{m \times n}}[\vec{z} \gets \mathcal{O}_{\SIS(A)}; \|\vec{z}\| = r, A\vec{z} = \vec{0} \bmod q] \geq 1 / \poly(m)
            \; ,
            \]
            except with probability, say, $2^{-m} \leq \eps$.
            
            The reduction then runs in $j := 0, 1, \ldots, n$ phases. 
            In phase $j$, the input is a basis $\basis_j$ of $\lat$ where initially $\basis_0 = \basis$. Let $c := \ceil{\log_{q/r}(n)} \leq 1/\alpha + 1.$
            If $s(q/r)^c \geq \norm{\widetilde{\basis}_j}\cdot \sqrt{\log n}$, then the reduction samples $\poly(m)$ samples from $D_{\lat, s(q/r)^c}$ using the discrete Gaussian sampling algorithm from \cref{thm:BLP_sampler} and applies \cref{cor:reduce_s_sqrt_m} to obtain a sample of $D_{\lat, s}$.

            Otherwise, let $s' := \norm{\widetilde{\basis}_j}\cdot \sqrt{\log n}$. The reduction samples $\poly(m)$ samples from $D_{\lat, s'}$ using the discrete Gaussian sampling algorithm \cref{thm:BLP_sampler} and applies \cref{cor:reduce_s_sqrt_m} $n^2$ times to obtain $n^2$ independent samples of $D_{\lat, s' \cdot (r/q)^c}$. Using these sampled vectors, the algorithm computes a better basis $\basis_{j+1}$ just as \cite[Lemma 3.7]{micciancioHardnessSISLWE2013} did and proceeds to the next phase.

            We show that the reduction terminates after at most $n$ phases. In particular, consider the $n^2$ independent samples of $D_{\lat, s' \cdot (r/q)^c}$ we obtain from the $j$th phase. With overwhelming probability these sampled vectors all have length bounded by 
            $s' \cdot (r/q)^c \cdot \sqrt{n} \leq \| \widetilde{\basis}_j \| 
            \sqrt{\log n} / \sqrt{n} \leq \| \widetilde{\basis}_j \| / 2$ and contain $n$ linearly independent vectors. 
            These vectors can be transformed into the new basis $\basis_{j+1}$ with the guarantee that $\norm{\widetilde{\basis}_{j+1}} \leq \norm{\widetilde{\basis}_j}/2$. Hence, the reduction must terminate after at most $n$ phases given that $\norm{\widetilde{\basis}_0} \leq 2^n\eta_\eps(\lat)$ (since $s > \eta_\eps(\lat)$ by definition of $\sDGS)$. The guarantee of the statistical distance follows directly from \cref{cor:reduce_s_sqrt_m}.
		\end{proof}
\end{document}